\crefname{algocf}{Algorithm}{Algorithms}
\pgfplotsset{compat=1.13}
\renewcommand{\backref}[1]{}
\renewcommand{\backrefalt}[4]{%
\ifcase #1 %
\or
[p.\ #2]%
\else
[pp.\ #2]%
\fi}
\newtheorem{theorem}{Theorem}
\newtheorem{claim}[theorem]{Claim}
\newtheorem{conjecture}[theorem]{Conjecture}
\newtheorem{corollary}[theorem]{Corollary}
\newtheorem{definition}[theorem]{Definition}
\newtheorem{fact}[theorem]{Fact}
\newtheorem{lemma}[theorem]{Lemma}
\newtheorem{problem}[theorem]{Problem}
\newtheorem{proposition}[theorem]{Proposition}
\newcommand{\eps}{\varepsilon}
\newcommand{\E}{\mathop{\mathbb{E}}}
\newcommand{\Naturals}{\mathbb{N}}
\newcommand{\poly}{\mathrm{poly}}
\newcommand{\polylog}{\mathrm{polylog}}
\newcommand{\quasipoly}{\mathrm{quasipoly}}
\newcommand{\Sipser}{\normalfont\textsc{Sipser}}
\newcommand{\AND}{\normalfont\textsc{AND}} 
\newcommand{\OR}{\normalfont\textsc{OR}}
\newcommand{\NOT}{\normalfont\textsc{NOT}}
\newcommand{\PARITY}{\normalfont\textsc{Parity}}
\newcommand{\XOR}{\normalfont\textsc{XOR}}
\newcommand{\Forrelation}{\normalfont\textsc{Forrelation}}
\newcommand{\Dom}{\mathrm{Dom}}
\DeclareMathOperator{\disagr}{\mathrm{disagr}}
\DeclareMathOperator{\D}{\mathsf{D}}
\DeclareMathOperator{\C}{\mathsf{C}}
\DeclareMathOperator{\Q}{\mathsf{Q}}
\DeclareMathOperator{\s}{\mathsf{s}}
\DeclareMathOperator{\bs}{\mathsf{bs}}
\DeclareMathOperator{\QMA}{\mathsf{QMA}}
\DeclareMathOperator{\QC}{\mathsf{QC}}
\begin{document}

\title{The Acrobatics of $\mathsf{BQP}$}
\author{Scott Aaronson\thanks{University of Texas at Austin. \ Email:
\texttt{scott@scottaaronson.com}. \ Supported by a Vannevar Bush Fellowship from the US
Department of Defense, a Simons Investigator Award, and the Simons
\textquotedblleft It from Qubit\textquotedblright\ collaboration.}
\and DeVon Ingram\thanks{University of Chicago. Email: \texttt{dingram@uchicago.edu}. \ Supported by an NSF Graduate Research Fellowship.}
\and William Kretschmer\thanks{University of Texas at Austin. \ Email:
\texttt{kretsch@cs.utexas.edu}. \ Supported by an NDSEG Fellowship.}}
\date{}
\maketitle

\begin{abstract}
One can fix the randomness used by a randomized algorithm, but there is no analogous notion of fixing the quantumness used by a quantum algorithm.  Underscoring this fundamental difference, we show that, in the black-box setting, the behavior of quantum
polynomial-time ($\mathsf{BQP}$) can be remarkably decoupled from that of
classical complexity classes like $\mathsf{NP}$. \ Specifically:

\begin{itemize}
\item There exists an oracle relative to
which $\mathsf{NP}^{\mathsf{BQP}}\not \subset \mathsf{BQP}^{\mathsf{PH}}$, resolving a 2005 problem of Fortnow. 
As a corollary, there exists an oracle relative to which $\mathsf{P} = \mathsf{NP}$ but $\mathsf{BQP} \neq \mathsf{QCMA}$.
\item Conversely, there exists an oracle relative to
which $\mathsf{BQP}^{\mathsf{NP}}\not \subset \mathsf{PH}^{\mathsf{BQP}}$.
\item Relative to a random oracle, $\mathsf{PP} = \mathsf{PostBQP}$ is not contained in the ``$\mathsf{QMA}$ hierarchy'' $\mathsf{QMA}^{\mathsf{QMA}^{\mathsf{QMA}^{\cdots}}}$. 
\item Relative to a random oracle, $\mathsf{\Sigma}_{k+1}^\mathsf{P} \not\subset \mathsf{BQP}^{\mathsf{\Sigma}_{k}^\mathsf{P}}$ for every $k$.
\item There exists an oracle relative to which $\mathsf{BQP} = \mathsf{P^{\# P}}$\ and yet $\mathsf{PH}$\ is infinite. \ (By contrast, relative to all oracles, if
$\mathsf{NP}\subseteq\mathsf{BPP}$, then $\mathsf{PH}$\ collapses.)
\item There exists an oracle relative to which $\mathsf{P}=\mathsf{NP} \neq \mathsf{BQP}=\mathsf{P}^{\mathsf{\#P}}$.
\end{itemize}

To achieve these results, we build on the 2018 achievement by Raz and Tal of an oracle relative to which $\mathsf{BQP}\not \subset \mathsf{PH}$, and associated results about the \textsc{Forrelation} problem.  We also introduce new tools that might be of independent interest. These include a ``quantum-aware'' version of the random restriction method, a concentration theorem for the block sensitivity of $\mathsf{AC^0}$ circuits, and a (provable) analogue of the Aaronson-Ambainis Conjecture for sparse oracles.
\end{abstract}
\newpage
\tableofcontents
\newpage

\section{Introduction\label{INTRO}}

The complexity-theoretic study of quantum computation is often dated from 1993, when Bernstein and Vazirani \cite{BV97} defined $\mathsf{BQP}$, or Bounded-Error Quantum Polynomial-Time: the class of languages that admit efficient quantum algorithms.  Then as now, a central concern was how $\mathsf{BQP}$ relates to classical complexity classes, such as $\mathsf{P}$, $\mathsf{NP}$, and $\mathsf{PH}$.  Among the countless questions that one could raise here, let us single out three as especially fundamental:
\begin{enumerate}[(1)]
\item Can quantum computers efficiently solve any problems that classical computers cannot?  In other words, does $\mathsf{BPP}= \mathsf{BQP}$?
\item Can quantum computers solve $\mathsf{NP}$-complete problems in polynomial time?  In other words, is $\mathsf{NP} \subseteq \mathsf{BQP}$?
\item What is the best classical upper bound on the power of quantum computation?  Is $\mathsf{BQP} \subseteq \mathsf{NP}$?  Is $\mathsf{BQP} \subseteq \mathsf{PH}$?
\end{enumerate}

Three decades later, all three of these still stand as defining
questions of the field.  Nevertheless, from the early 2000s onwards, it became rare for work in quantum computing theory to address any of these questions directly, perhaps simply because it became too hard to say anything new about them.  A major recent exception was the seminal work of Raz and Tal \cite{RT19}, who gave an oracle relative to which $\mathsf{BQP}\not\subset\mathsf{PH}$, by completing a program proposed by one of us \cite{Aar10}.  In this paper, we take the Raz-Tal breakthrough as a starting point.  Using it, together with new tools that we develop, we manage to prove many new theorems about the power of $\mathsf{BQP}$---at least in the black-box setting where much of our knowledge of quantum
algorithms resides.

Before discussing the black-box setting or Raz-Tal, though, let's
start by reviewing what is known in general about $\mathsf{BQP}$. Bernstein and Vazirani \cite{BV97} showed that $\mathsf{BPP} \subseteq \mathsf{BQP} \subseteq \mathsf{P^{\# P}}$, and Adleman, DeMarrais, and Huang \cite{ADH97} improved the upper bound to $\mathsf{BQP} \subseteq \mathsf{PP}$, giving us the following chain of inclusions:
\[
\mathsf{P} \subseteq \mathsf{BPP} \subseteq \mathsf{BQP} \subseteq \mathsf{PP} \subseteq \mathsf{P^{\# P}} \subseteq \mathsf{PSPACE} \subseteq \mathsf{EXP}.
\]

Fortnow and Rogers \cite{FR98} slightly strengthened the inclusion $\mathsf{BQP} \subseteq \mathsf{PP}$, to show for example that $\mathsf{PP^{BQP}} = \mathsf{PP}$.  This complemented the result of Bennett, Bernstein, Brassard, and Vazirani \cite{BBBV97} that $\mathsf{BQP^{BQP}} = \mathsf{BQP}$: that is, $\mathsf{BQP}$ is ``self-low,'' or ``the $\mathsf{BQP}$ hierarchy collapses to $\mathsf{BQP}$.''

\subsection{The Contrast with \texorpdfstring{$\mathsf{BPP}$}{BPP}}
\label{sec:contrast_with_bpp}

Meanwhile, though, the relationships between $\mathsf{BQP}$ and complexity classes like $\mathsf{NP}$, $\mathsf{PH}$, and $\mathsf{P/poly}$ have remained mysterious.  Besides the
fundamental questions mentioned above---is $\mathsf{NP} \subseteq \mathsf{BQP}$? is $\mathsf{BQP} \subseteq \mathsf{NP}$? is $\mathsf{BQP} \subseteq \mathsf{PH}$?---one could ask other questions:

\begin{enumerate}[(i)]
\item In a 2005 blog post, Fortnow \cite{For05} raised the question of whether $\mathsf{NP^{BQP}} \subseteq \mathsf{BQP^{NP}}$. Do we even have $\mathsf{NP^{BQP}} \subseteq \mathsf{BQP^{PH}}$? I.e., when quantum computation is combined with classical nondeterminism, how does the order of combination matter?
\item What about the converse: is $\mathsf{BQP^{NP}} \subseteq \mathsf{PH^{BQP}}$? 

\item Suppose $\mathsf{NP} \subseteq \mathsf{BQP}$.  Does it follow that $\mathsf{PH} \subseteq \mathsf{BQP}$ as well?

\item Suppose $\mathsf{NP} \subseteq \mathsf{BQP}$.  Does it follow that $\mathsf{PH}$ collapses?

\item Is $\mathsf{BQP} \subset \mathsf{P/poly}$?

\item Suppose $\mathsf{P} = \mathsf{NP}$.  Does it follow that $\mathsf{BQP}$ is ``small'' (say, not equal to $\mathsf{EXP}$)?

\item Suppose $\mathsf{P} = \mathsf{NP}$.  Does it follow that $\mathsf{BQP} = \mathsf{QCMA}$, where $\mathsf{QCMA}$ (Quantum Classical Merlin Arthur) is the analogue of $\mathsf{NP}$ with a $\mathsf{BQP}$ verifier?
\end{enumerate}

What is particularly noteworthy about the questions above is that, if we replace $\mathsf{BQP}$ by $\mathsf{BPP}$, then positive answers are known to all of them:

\begin{enumerate}[(i)]
\item $\mathsf{NP}^\mathsf{BPP} \subseteq \mathsf{AM} \subseteq \mathsf{BPP}^\mathsf{NP}$.
\item $\mathsf{BPP}^\mathsf{NP} \subseteq \mathsf{PH} = \mathsf{PH}^\mathsf{BPP}$.
\item If $\mathsf{NP} \subseteq \mathsf{BPP}$, then $\mathsf{PH} = \mathsf{BPP}$---this is sometimes given as a homework exercise in complexity theory courses, and also follows from (i).

\item If $\mathsf{NP} \subseteq \mathsf{BPP}$, then $\mathsf{PH} = \mathsf{\Sigma}_2^\mathsf{P}$---this follows from (iii) and the Sipser-Lautemann Theorem \cite{Sip83,Lau83}.

\item $\mathsf{BPP} \subset \mathsf{P/poly}$ is Adleman's Theorem \cite{Adl78}.

\item If $\mathsf{P}=\mathsf{NP}$, then $\mathsf{P}=\mathsf{BPP}$ and hence $\mathsf{BPP}\neq\mathsf{EXP}$, by the time hierarchy theorem.

\item If $\mathsf{P}=\mathsf{NP}$, then of course $\mathsf{BPP} = \mathsf{MA}$.
\end{enumerate}

So what is it that distinguishes $\mathsf{BPP}$ from $\mathsf{BQP}$ in these cases?  In all of the above examples, the answer turns out to be one of the fundamental properties of classical randomized algorithms: namely, that one can always ``pull the randomness out'' from such algorithms, viewing them as simply deterministic algorithms that take a uniform random string $r$ as an auxiliary input, in addition to their ``main'' input $x$.  This, in turn, enables one to play all sorts of tricks with such an algorithm $M(x,r)$---from using approximate counting to estimate the fraction of $r$'s that cause $M(x,r)$ to accept, to moving $r$ from inside to outside a quantifier, to hardwiring $r$ as advice.  By contrast, there is no analogous notion of ``pulling the randomness (or quantumness) out of a quantum algorithm.''  In quantum computation, randomness is just an intrinsic part of the model that rears its head at the \textit{end} (rather than the beginning) of a computation, when we take the squared absolute values of amplitudes to get probabilities.

This difference between randomized and quantum algorithms is crucial to the analysis of the so-called ``sampling-based quantum supremacy experiments''---for example, those recently carried out by Google \cite{AAB+19} and USTC \cite{ZWD+20}.  The theoretical foundations of these experiments were laid a decade ago, in the work of Aaronson and Arkhipov \cite{AA13} on $\textsc{BosonSampling}$, and (independently) Bremner, Jozsa, and Shepherd \cite{BJS10} on the commuting Hamiltonians or IQP model. Roughly speaking, the idea is that, by using a quantum computer, one can efficiently sample a probability distribution $\mathcal{D}$ over $n$-bit strings such that even \textit{estimating} the probabilities of the outcomes is a $\mathsf{\# P}$-hard problem.  Meanwhile, though, if there were a polynomial-time classical randomized algorithm $M(x,r)$ to sample from the same distribution $\mathcal{D}$, then one could use the ``pulling out $r$'' trick to estimate the probabilities of $M$'s outcomes in $\mathsf{PH}$.  But this would put $\mathsf{P}^\mathsf{\# P}$ into $\mathsf{PH}$, thereby collapsing $\mathsf{PH}$ by Toda's Theorem \cite{Tod91}.

More generally, with any of the apparent differences between quantum algorithms and classical randomized algorithms, the question is: how can we prove that the difference is genuine, that no trick will ever be discovered that makes $\mathsf{BQP}$ behave more like $\mathsf{BPP}$? For questions like whether $\mathsf{NP} \subseteq \mathsf{BQP}$ or whether $\mathsf{BQP} \subseteq \mathsf{NP}$, the hard truth here is that not only have we been unable to resolve these questions in the unrelativized world, we've been able to say little more about them than certain ``obvious'' implications.  For example, suppose $\mathsf{NP} \subseteq \mathsf{BQP}$ \textit{and} $\mathsf{BQP} \subseteq \mathsf{AM}$.  Then since $\mathsf{BQP}$ is closed under complement, we would also have $\mathsf{coNP} \subseteq \mathsf{BQP}$, and hence $\mathsf{coNP} \subseteq \mathsf{AM}$, which is known to imply a collapse of $\mathsf{PH}$ \cite{BHZ87}.  And thus, if $\mathsf{PH}$ is infinite, then either $\mathsf{NP} \not\subset \mathsf{BQP}$ \textit{or} $\mathsf{BQP} \not\subset \mathsf{AM}$.  How can we say anything more interesting and nontrivial?

\subsection{Relativization}
Since the work of Baker, Gill, and Solovay \cite{BGS75}, whenever complexity theorists were faced with an impasse like the one above, a central tool has been \textit{relativized} or \textit{black-box} complexity: in other words, studying what happens when all the complexity classes one cares about are fed some specially-constructed oracle.  Much like perturbation theory in physics, relativization lets us make well-defined progress even when the original questions we wanted to answer are out of reach.  It is well-known that relativization is an imperfect tool---the $\mathsf{IP}=\mathsf{PSPACE}$ \cite{Sha92}, $\mathsf{MIP}=\mathsf{NEXP}$ \cite{BFL91}, and more recently, $\mathsf{MIP}^*=\mathsf{RE}$ \cite{JNVWY20} theorems provide famous examples where complexity classes turned out to be equal, even in the teeth of oracles relative to which they were unequal.  On the other hand, so far, almost all such examples have originated from a single source: namely, the use of algebraic techniques in interactive proof systems.  And if, for example, we want to understand the consequences of $\mathsf{NP} \subseteq \mathsf{BQP}$, then arguably it makes little sense to search for nonrelativizing consequences if we don't even understand yet what the relativizing consequences (that is, the consequences that hold relative to all oracles) are or are not.

In quantum complexity theory, even more than in classical complexity theory, relativization has been an inextricable part of progress from the very beginning.  The likely explanation is that, even when we just count queries to an oracle, in the quantum setting we need to consider algorithms that query all oracle bits in superposition---so that even in the most basic scenarios, it is already unintuitive what can and cannot be done, and so oracle results must do much more than formalize the obvious.

More concretely, Bernstein and Vazirani \cite{BV97} introduced some of the basic techniques of quantum algorithms in order to prove, for the first time, that there exists an oracle $A$ such that $\mathsf{BPP}^A \neq \mathsf{BQP}^A$. Shortly afterward, Simon \cite{Sim97} gave a quantitatively stronger oracle
separation between $\mathsf{BPP}$ and $\mathsf{BQP}$, and then Shor \cite{Sho99} gave a still stronger separation, along the way to his famous discovery that $\textsc{Factoring}$ is in $\mathsf{BQP}$.

On the negative side, Bennett, Bernstein, Brassard, and Vazirani \cite{BBBV97} showed that there exists an oracle relative to which $\mathsf{NP} \not\subset \mathsf{BQP}$: indeed, relative to which there are problems that take $n$ time for an $\mathsf{NP}$ machine but $\Omega\left(2^{n/2}\right)$ time for a $\mathsf{BQP}$ machine.  Following the discovery of Grover's algorithm \cite{Gro96}, which quantumly searches any list of $N$ items in $O\left(\sqrt{N}\right)$ queries, the result of Bennett, Bernstein, Brassard, and Vazirani gained the interpretation that \textit{Grover's algorithm is optimal}.  In other words, any quantum algorithm for $\mathsf{NP}$-complete problems that gets more than the square-root speedup of Grover's algorithm must be ``non-black-box.''  It must exploit the structure of a particular $\mathsf{NP}$-complete problem much like a classical algorithm would have to, rather than treating the problem as just an abstract space of $2^n$ possible solutions.

Meanwhile, clearly there are oracles relative to which $\mathsf{P}=\mathsf{BQP}$---for example, a $\mathsf{PSPACE}$-complete oracle.  But we can ask: would such oracles necessarily collapse the hierarchy of classical complexity classes as well?  In a prescient result that provided an early example of the sort of thing we do in this paper, Fortnow and Rogers \cite{FR98} showed that there exists an oracle relative to which $\mathsf{P}=\mathsf{BQP}$ and yet $\mathsf{PH}$ is infinite.  In other words, if $\mathsf{P}=\mathsf{BQP}$ would imply a collapse of the polynomial hierarchy, then it cannot be for a relativizing reason.  Aaronson and Chen \cite{AC17} extended this to show that there exists an oracle relative to which \textit{sampling-based quantum supremacy is impossible}---i.e., any probability distribution approximately samplable in quantum polynomial time is also approximately samplable in classical polynomial time---and yet $\mathsf{PH}$ is infinite.  In other words, if it is possible to prove the central theoretical conjecture of
quantum supremacy---namely, that there are noisy quantum sampling
experiments that cannot be simulated in classical polynomial time
unless $\mathsf{PH}$ collapses---then nonrelativizing techniques will be needed there as well.

What about showing the power of $\mathsf{BQP}$, by giving oracle obstructions to containments like $\mathsf{BQP}\subseteq\mathsf{NP}$, or $\mathsf{BQP}\subseteq\mathsf{PH}$?  There, until recently, the progress was much more limited.  Watrous \cite{Wat00} showed that there exists an oracle relative to which $\mathsf{BQP}\not\subset\mathsf{NP}$ and even $\mathsf{BQP}\not\subset\mathsf{MA}$ (these separations could also have been shown using the $\textsc{Recursive Fourier Sampling}$ problem, introduced by Bernstein and Vazirani \cite{BV97}).  But extending this further, to get an oracle relative to which $\mathsf{BQP}\not\subset\mathsf{PH}$ or even $\mathsf{BQP}\not\subset\mathsf{AM}$, remained an open problem for two decades.  Aaronson \cite{Aar10} proposed a program for proving an oracle separation between $\mathsf{BQP}$ and $\mathsf{PH}$, involving a new problem he introduced called $\Forrelation$:

\begin{problem}[$\Forrelation$]
Given black-box access to two Boolean functions $f,g:\{0,1\}^n \to \{1,-1\}$, and promised that either
\begin{enumerate}[(i)]
\item $f$ and $g$ are uniformly random and independent, or
\item $f$ and $g$ are uniformly random individually, but $g$ has $\Omega(1)$ correlation with $\hat{f}$, the Boolean Fourier transform of $f$ (i.e., $f$ and $g$ are ``Forrelated''),
\end{enumerate}
decide which.
\end{problem}

Aaronson \cite{Aar10} showed that $\Forrelation$ is solvable, with constant bias, using only a single quantum query to $f$ and $g$ (and $O(n)$ time).  By contrast, he showed that any classical randomized algorithm for the problem needs $\Omega\left(2^{n/4}\right)$ queries---improved by Aaronson and Ambainis \cite{AA18} to $\Omega\left(\frac{2^{n/2}}{n}\right)$ queries, which is essentially tight.  The central conjecture, which Aaronson left open, said that $\Forrelation \not\in \mathsf{PH}$---or equivalently, by the connection between $\mathsf{PH}$ machines and $\mathsf{AC^0}$ circuits \cite{FSS84}, that there are no $\mathsf{AC^0}$ circuits for $\Forrelation$ of constant depth and $2^{\poly(n)}$ size.

Finally, Raz and Tal \cite{RT19} managed to prove Aaronson's conjecture, and thereby obtain the long-sought oracle separation between $\mathsf{BQP}$ and $\mathsf{PH}$.\footnote{Strictly speaking, they did this for a variant of $\Forrelation$ where the correlation between $g$ and $\hat{f}$ is only $\sim \frac{1}{n}$, and thus a quantum algorithm needs $\sim n$ queries to solve the problem, but this will not affect anything that follows.}  Raz and Tal achieved this by introducing new techniques for constant-depth circuit lower bounds, involving Brownian motion and the $L_1$-weight of the low-order
Fourier coefficients of $\mathsf{AC^0}$ functions.  Relevantly for us, Raz and Tal actually proved the following stronger result:

\begin{theorem}[\cite{RT19}]
\label{thm:raz-tal-informal-intro}
A $\mathsf{PH}$ machine can guess whether $f$ and $g$ are uniform or Forrelated with bias at most $2^{-\Omega(n)}$.
\end{theorem}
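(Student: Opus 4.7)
The plan is to follow a Raz--Tal-style interpolation argument: reduce a $\mathsf{PH}$ machine to an $\mathsf{AC^0}$ circuit, lift the $\{-1,1\}$-valued $\Forrelation$ problem to a Gaussian problem, and bound the bias by a smooth hybrid between the uniform and Forrelated Gaussian measures, with the error controlled term by term using Tal's theorem on Fourier $L_1$ weights of $\mathsf{AC^0}$.

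First I would unfold a polynomial-time $\mathsf{PH}$ computation with $2N = 2^{n+1}$ oracle bits into an $\mathsf{AC^0}$ circuit $F\colon\{-1,1\}^{2N}\to\{-1,1\}$ of constant depth $d$ and quasi-polynomial size $s=2^{\poly(n)}$; it then suffices to prove $|\E_{\mathrm{Forr}}[F] - \E_{U}[F]| \leq 2^{-\Omega(n)}$. Next, I would couple the two input distributions through a $2N$-dimensional Gaussian vector $G=(G_1,G_2)$: under the uniform distribution $G_1,G_2$ are i.i.d.\ $\mathcal{N}(0,1/N)$; under the Forrelated distribution $G_1\sim \mathcal{N}(0,1/N)$ and $G_2 = HG_1$, where $H$ is the normalized $N\times N$ Hadamard matrix. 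Rounding each coordinate to its sign recovers (up to a negligible truncation event of probability $2^{-\Omega(N)}$) the original $\Forrelation$ distributions, and $F$ is replaced by its multilinear extension $\tilde F$ so that $\tilde F(G)$ is well defined on Gaussians.

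The heart of the argument is the interpolation: run an Ornstein--Uhlenbeck or Brownian-bridge process $G_t$ with $G_0$ uniform and $G_1$ Forrelated. Applying It\^o's formula to $\tilde F(G_t)$ and integrating in $t$, the difference $\E[\tilde F(G_1)] - \E[\tilde F(G_0)]$ decomposes as a sum over Fourier levels $k$ of terms of the form $\sum_{|S|=k}\hat F(S)\, M_S$, where $M_S$ is a joint moment of the Hadamard-structured covariance. The level-$1$ contribution vanishes because the two distributions have identical marginals, and each $M_S$ at level $k$ is a product of $k$ Hadamard entries of size $1/\sqrt N$, so $|M_S|\leq N^{-k/2}$. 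Combining this with Tal's $L_1$ bound $\sum_{|S|=k}|\hat F(S)| \leq O(\log s)^{d(k-1)}\sqrt{k}$ for $\mathsf{AC^0}$ of size $s$ and depth $d$, the total bias is at most $\sum_{k\geq 2}\bigl(O(\log s)^{d}/\sqrt{N}\bigr)^k$, dominated by the $k=2$ term $O(\log s)^{2d}/N$; for $s=2^{\poly(n)}$, $d = O(1)$, and $N=2^n$, this is $\poly(n)/2^{n} = 2^{-\Omega(n)}$.

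The main obstacle is arranging the hybrid so that each Fourier level's contribution factors cleanly through Tal's $L_1$ bound, with only the structured Hadamard correlations (and not the unstructured $\mathcal{N}(0,1/N)$ variances) appearing in the level-$k$ moment factor; this is what supplies the crucial $N^{-k/2}$, and it requires some care in choosing the stochastic coupling and controlling the derivatives of $\tilde F$ on the Gaussian support, along with routine control of the Gaussian-to-sign truncation. Tal's Fourier $L_1$ theorem is itself the deep input that makes the geometric sum converge fast: without a bound of the form $\poly(\log s)^k$ on level-$k$ weights, only a polynomially small, not exponentially small, bias would follow.
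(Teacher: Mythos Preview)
The paper does not prove this statement: \Cref{thm:raz-tal-informal-intro} is quoted from Raz and Tal \cite{RT19} and used as a black box, with the precise form the paper actually relies on restated later as \Cref{thm:raz-tal} (giving distinguishing advantage $\frac{\polylog(N)}{\sqrt{N}}$ for $N=2^n$). There is thus no ``paper's own proof'' to compare against.

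Your sketch is a recognizable outline of the actual Raz--Tal argument: the $\mathsf{PH}$-to-$\mathsf{AC^0}$ reduction, the Gaussian lift of $\Forrelation$, the stochastic interpolation between the uncorrelated and Hadamard-correlated Gaussians, and Tal's level-$k$ $L_1$ Fourier bound for $\mathsf{AC^0}$ as the key analytic input. Some of the bookkeeping is off, however. With per-coordinate variance $1/N$ and $H$ the normalized Hadamard as you write, the cross-covariances $\mathrm{Cov}(G_{1,i},G_{2,j})=H_{ji}/N$ are of order $N^{-3/2}$, not $N^{-1/2}$; and a degree-$k$ Gaussian moment is a sum over pairings of products of $k/2$ covariances, not a product of $k$ Hadamard entries, so the bound $|M_S|\le N^{-k/2}$ does not follow from what you wrote. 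In Raz--Tal the interpolation is run as a sequence of small Brownian steps with a carefully tuned scale $\eps=1/\polylog(N)$, and the passage from Gaussians to $\{\pm 1\}$ is exactly where this small-$\eps$ scaling is essential --- it is more than ``routine.'' The upshot of the correct accounting is a bound of $\polylog(N)/\sqrt{N}$ rather than $\polylog(N)/N$; both are $2^{-\Omega(n)}$, so your headline conclusion survives, but the level-by-level analysis as written would not go through.
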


Recall that before Raz and Tal, we did not even have an oracle
relative to which $\mathsf{BQP} \not\subset \mathsf{AM}$.  Notice that, if $\mathsf{BQP} \subseteq \mathsf{AM}$, then many other conclusions would follow in a relativizing way.  For example, we would have:

\begin{itemize}
\item $\mathsf{P} = \mathsf{NP}$ implies $\mathsf{P} = \mathsf{BQP}$,
\item $\mathsf{NP}^\mathsf{BQP} \subseteq \mathsf{NP}^{\mathsf{AM} \cap \mathsf{coAM}} \subseteq \mathsf{BPP}^\mathsf{NP} \subseteq \mathsf{BQP}^\mathsf{NP}$,
\item If $\mathsf{NP} \subseteq \mathsf{BQP}$, then $\mathsf{NP}^\mathsf{NP} \subseteq \mathsf{NP}^\mathsf{BQP} \subseteq \mathsf{BQP}^\mathsf{NP} = \mathsf{BQP}^\mathsf{BQP} = \mathsf{BQP}$, and
\item If $\mathsf{NP} \subseteq \mathsf{BQP}$, then $\mathsf{NP} \subseteq \mathsf{coAM}$, which implies that $\mathsf{PH}$ collapses.
\end{itemize}

Looking at it a different way, our inability even to separate $\mathsf{BQP}$ from $\mathsf{AM}$ by an oracle served as an obstruction to numerous other oracle separations.

The starting point of this paper was the following question: in a ``post-Raz-Tal world,'' can we at last completely ``unshackle'' $\mathsf{BQP}$ from $\mathsf{P}$, $\mathsf{NP}$, and $\mathsf{PH}$, by showing that there are no relativizing obstructions to any possible answers to questions like the ones we asked in \Cref{sec:contrast_with_bpp}?

\subsection{Our Results}
\label{sec:results}
We achieve new oracle separations that show an astonishing range of possible behaviors for $\mathsf{BQP}$ and related complexity classes---in at least one case, resolving a longstanding open problem in this topic.  Our title, ``The Acrobatics of $\mathsf{BQP}$,'' comes from a unifying theme of the new results being ``freedom.'' We will show that, as far as relativizing techniques can detect, collapses and separations of classical complexity classes place surprisingly few constraints on the power of quantum computation. In most cases, this can be understood as ultimately stemming from the fact that one cannot ``fix the randomness'' (or quantumness) used by a quantum algorithm, similarly to how one fixes the randomness used by a randomized algorithm in many complexity-theoretic arguments.

As we alluded to earlier, many of our new results would not have been possible without Raz and Tal's analysis of $\Forrelation$ \cite{RT19}, which we rely on extensively.  We will treat $\Forrelation$ no longer as just an isolated problem, but as a sort of cryptographic code, by which an oracle can systematically make certain information available to $\mathsf{BQP}$ machines while keeping the information hidden from classical machines.

Having said that, very few of our results will follow from Raz-Tal in any straightforward way.  Most often we need to develop other lower bound tools, in addition to or instead of Raz-Tal.  Our new tools, which seem likely to be of independent interest, include a random restriction lemma for quantum query algorithms, a concentration theorem for the block sensitivity of $\mathsf{AC^0}$ functions, and a provable analogue of the Aaronson-Ambainis conjecture \cite{AA14} for certain sparse oracles.

Perhaps our single most interesting result is the following.

\begin{theorem}[\Cref{cor:np^bqp_not_in_bqp^ph}, restated]
\label{thm:np^bqp_not_in_bqp^np_intro}
There exists an oracle relative to which $\mathsf{NP}^\mathsf{BQP} \not\subset \mathsf{BQP}^\mathsf{NP}$, and indeed $\mathsf{NP}^\mathsf{BQP} \not\subset \mathsf{BQP}^\mathsf{PH}$.
\end{theorem}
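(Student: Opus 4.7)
The plan is to build an oracle $\mathcal{O}$ that encodes, at each input length $n$, a family of $N = 2^n$ independent \Forrelation\ instances $\{(f_y, g_y)\}_{y \in \{0,1\}^n}$, each on $\poly(n)$-bit inputs with parameters chosen so that \Cref{thm:raz-tal-informal-intro} gives bias only $2^{-\Omega(n)}$ against any $\mathsf{AC^0}$ circuit of constant depth and quasi-polynomial size. Set $L^{\mathcal{O}} = \{1^n : \exists\, y \in \{0,1\}^n,\ (f_y, g_y) \text{ is forrelated}\}$. The $\mathsf{NP}^\mathsf{BQP}$ upper bound is immediate: the $\mathsf{NP}$ layer guesses $y$, and the $\mathsf{BQP}$ verifier decides forrelation at slot $y$ via (an amplified version of) Aaronson's constant-query \Forrelation\ algorithm. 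For the lower bound I compare $D_0$ (every slot independently uniform) against $D_1$ (a uniformly chosen slot $y^*$ forrelated, the rest uniform); $1^n \notin L$ w.h.p.\ under $D_0$ and $1^n \in L$ under $D_1$. It suffices to show that any $\mathsf{BQP}^\mathsf{PH}$ algorithm distinguishes $D_0$ from $D_1$ with bias $2^{-\Omega(n)}$, since a standard enumeration-and-diagonalization argument then fixes a single oracle witnessing $\mathsf{NP}^\mathsf{BQP} \not\subset \mathsf{BQP}^\mathsf{PH}$.

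Fix an algorithm $\mathcal{A}$ making $T = \poly(n)$ quantum queries to the raw oracle and $S = \poly(n)$ classical queries to a $\mathsf{PH}$ oracle $\Phi$, so each $\mathsf{PH}$ query evaluates an $\mathsf{AC^0}$ predicate of depth $O(1)$ and size $2^{\poly(n)}$ on $\mathcal{O}$. Couple $D_0$ and $D_1$ so that the two oracles agree on every slot except the random slot $y^*$. For the \emph{$\mathsf{PH}$ contribution}, condition on the contents of the other slots: the restriction of any $\mathsf{PH}$ predicate to slot $y^*$'s inputs is again an $\mathsf{AC^0}$ circuit of the same depth and size, so Raz--Tal gives bias at most $2^{-\Omega(n)}$ between a uniform and a forrelated slot. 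Chaining this bound across the $S$ adaptively chosen queries controls the joint transcript shift by $S \cdot 2^{-\Omega(n)}$.

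For the \emph{quantum contribution}, once the $\mathsf{PH}$ transcript is replaced by a sample drawn from its $D_0$-distribution, $\mathcal{A}$'s quantum actions form a pure $T$-query quantum algorithm on $\mathcal{O}$. Let $q_y$ denote its total quantum query mass at slot $y$, so $\sum_y q_y \le T$. The BBBV hybrid lemma says that toggling slot $y^*$ from uniform to forrelated changes the output distribution by at most $O(\sqrt{T \cdot q_{y^*}})$. Averaging over a uniform $y^*$ and applying Cauchy--Schwarz gives $\mathbb{E}_{y^*}[\sqrt{T \cdot q_{y^*}}] \le T/2^{n/2}$. Summing with the $\mathsf{PH}$ contribution yields the desired $2^{-\Omega(n)}$ overall bias.

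The delicate step is the $\mathsf{PH}$ hybrid: a $\mathsf{PH}$ predicate depends globally on all $N$ slots, and toggling slot $y^*$ can flip many answers even while each individual answer's \emph{bias} is tiny, so one cannot naively invoke BBBV on the $\mathsf{PH}$ oracle. The remedy is to couple $D_0$ and $D_1$ carefully and replace the real $\mathsf{PH}$ transcript by one sampled from its $D_0$-distribution, paying only the Raz--Tal bias at each swap; making this rigorous under adaptivity (where the query $q_i$ depends on earlier fake responses) is the main technical calculation, and appears to need the quantum-aware random restriction lemma advertised in the introduction. Once the two executions see essentially identical transcripts, the BBBV step closes the argument.
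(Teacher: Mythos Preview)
Your high-level setup (the $\OR \circ \Forrelation$ problem, the $D_0$/$D_1$ coupling, and the final BBBV step) matches the paper, and you correctly put your finger on the crux: controlling how often a $\mathsf{PH}$ predicate flips when one random slot toggles from uniform to forrelated. But your handling of that step has real gaps.

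First, a modeling issue: you separate ``$T$ quantum queries to the raw oracle'' from ``$S$ classical $\mathsf{PH}$ queries.'' In $\mathsf{BQP}^\mathsf{PH}$ the $\mathsf{PH}$ oracle is queried in superposition; the paper models the algorithm as a $T$-query quantum algorithm whose individual queries hit one of $s = \quasipoly(N)$ many $\mathsf{AC^0}$ circuits of the oracle. There is no classical transcript to ``replace.''

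Second, and more fundamentally, the transcript-replacement idea cannot work even in your model. Raz--Tal bounds the \emph{bias} $\lvert\Pr_{D_0}[C=1]-\Pr_{D_1}[C=1]\rvert$, but the $\mathsf{PH}$ answers are deterministic functions of the oracle, and the quantum part subsequently queries that same oracle. Decoupling the answers from the oracle can change behavior arbitrarily (e.g., if a $\mathsf{PH}$ query just reads a raw oracle bit that a later quantum query also reads). What the BBBV step actually needs is a \emph{pointwise} statement: for most fixed $x \sim D_0$, the probability over the random slot $y^*$ that a given $\mathsf{AC^0}$ circuit $C$ changes value is small. A small bias is an average-case statement and does not give this concentration; this is exactly the ``homomorphic encryption'' obstacle the paper flags.

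The paper fills the gap with a different tool than you guessed: not the quantum-aware random restriction lemma (that one is for the $\mathsf{QMAH}$ result), but a \emph{block-sensitivity tail bound} for $\mathsf{AC^0}$ (\Cref{thm:ac0_block_sensitivity_intro}). The actual argument is: (i) by this concentration bound, for most uniform $x$ the fraction of slots whose \emph{uniform} resampling flips $C$ is at most $\polylog(N)/\sqrt{N}$; (ii) Raz--Tal, applied to an auxiliary circuit, shows that switching from uniform to forrelated resampling of the chosen slot changes that fraction by only $2^{-\Omega(n)}$; (iii) now, for good $x$, each $\mathsf{AC^0}$ gate flips with small probability over $y^*$, so \Cref{cor:average_case_bbbv_function} on the top quantum gate finishes. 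Step (i) is the missing ingredient in your sketch.
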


As mentioned earlier, \Cref{thm:np^bqp_not_in_bqp^np_intro} resolves an open problem of Fortnow \cite{For05}, and demonstrates a clear difference between $\mathsf{BPP}$ and $\mathsf{BQP}$ that exemplifies the impossibility of pulling the randomness out of a quantum algorithm. Indeed, \Cref{thm:np^bqp_not_in_bqp^np_intro} shows that there is no general, black-box way to move quantumness past an $\mathsf{NP}$ quantifier, like we can do for classical randomness.

As a straightforward byproduct of \Cref{thm:np^bqp_not_in_bqp^np_intro}, we are also able to prove the following:

\begin{theorem}[\Cref{cor:p=np_bqp!=qcma}, restated]
\label{thm:p=np_bqp!=qcma_intro}
There exists an oracle relative to which $\mathsf{P} = \mathsf{NP}$ but $\mathsf{BQP} \neq \mathsf{QCMA}$.
\end{theorem}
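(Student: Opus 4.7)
The plan is to deduce the theorem as a short corollary of \Cref{thm:np^bqp_not_in_bqp^np_intro} by augmenting the separating oracle with a ``$\mathsf{PH}$-collapsing'' join of complete languages. First I would record the folklore identity $\mathsf{QCMA}^A = \mathsf{NP}^{\mathsf{BQP}^A}$, which holds relative to every oracle $A$: an $\mathsf{NP}^{\mathsf{BQP}^A}$-machine guesses a classical witness $w$ and then runs a $\mathsf{P}^{\mathsf{BQP}^A}$ verifier on $(x,w)$, and $\mathsf{P}^{\mathsf{BQP}^A} = \mathsf{BQP}^A$ by the Bennett--Bernstein--Brassard--Vazirani self-low property $\mathsf{BQP}^{\mathsf{BQP}} = \mathsf{BQP}$ relativized to $A$. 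Consequently, the oracle $A$ supplied by \Cref{thm:np^bqp_not_in_bqp^np_intro} already separates $\mathsf{QCMA}^A$ from $\mathsf{BQP}^{\mathsf{PH}^A}$; fix a witnessing language $L \in \mathsf{QCMA}^A \setminus \mathsf{BQP}^{\mathsf{PH}^A}$.

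Next I would set $B = A \oplus C$, where $C = \bigoplus_{k \geq 1} C_k$ and each $C_k$ is a language complete for $\mathsf{\Sigma}_k^{\mathsf{P},A}$ under polynomial-time many-one reductions (e.g.\ the relativized $\mathsf{\Sigma}_k$-$\mathrm{SAT}$). Any single query into $C$ is indexed by a pair $(k,x)$ and can be answered inside $\mathsf{\Sigma}_k^{\mathsf{P},A} \subseteq \mathsf{PH}^A$, so a polynomial-time machine with oracle $B$ can be simulated by one with oracles $A$ and $\mathsf{PH}^A$; conversely, every $\mathsf{\Sigma}_k^{\mathsf{P},A}$ language reduces to $C_k$ and hence to $B$. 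These two observations give $\mathsf{P}^B = \mathsf{PH}^A$, and therefore $\mathsf{NP}^B \subseteq \mathsf{NP}^{\mathsf{PH}^A} \subseteq \mathsf{PH}^A = \mathsf{P}^B$, so $\mathsf{P}^B = \mathsf{NP}^B$.

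For the quantum side, the $\mathsf{QCMA}^A$-verifier for $L$ simply ignores the $C$-part of $B$, giving $L \in \mathsf{QCMA}^B$. For the lower bound, any hypothetical $\mathsf{BQP}^B$-algorithm for $L$ translates into a $\mathsf{BQP}^{\mathsf{PH}^A}$-algorithm by replacing each query into the $C$-portion of $B$ with the corresponding $\mathsf{PH}^A$-query; this would contradict $L \notin \mathsf{BQP}^{\mathsf{PH}^A}$. Hence $L \in \mathsf{QCMA}^B \setminus \mathsf{BQP}^B$, so $B$ is the desired oracle.

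I do not expect a genuine obstacle once \Cref{thm:np^bqp_not_in_bqp^np_intro} is in hand. The argument is the standard ``glue on a tower of complete languages'' padding, combined with the elementary identification $\mathsf{QCMA} = \mathsf{NP}^{\mathsf{BQP}}$. The only fact worth double-checking is that $C$ gives $\mathsf{BQP}$ no more than one fixed level of $\mathsf{PH}^A$ per query: this holds because the level $k$ is read off the query string, so each individual query sits inside $\mathsf{\Sigma}_k^{\mathsf{P},A}$ for a specific finite $k$, rather than requiring unbounded alternation.
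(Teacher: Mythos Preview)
Your reduction of $\mathsf{BQP}^B$ to $\mathsf{BQP}^{\mathsf{PH}^A}$ does not go through. You define $C = \bigoplus_{k \ge 1} C_k$ with $C_k$ complete for $\mathsf{\Sigma}_k^{\mathsf{P},A}$, and then assert that a $\mathsf{BQP}^B$-algorithm can be simulated in $\mathsf{BQP}^{\mathsf{PH}^A}$ by replacing each $C$-query with a $\mathsf{PH}^A$-query. But $\mathsf{BQP}^{\mathsf{PH}^A}$ means $\mathsf{BQP}$ with access to a \emph{single fixed} language in some $\mathsf{\Sigma}_{k_0}^{\mathsf{P},A}$, with $k_0$ a constant independent of the input length. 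A $\mathsf{BQP}^B$ machine running in time $p(n)$, by contrast, can query $C_k$ for every $k \le p(n)$. Your last paragraph notes that each individual query lives in \emph{some} level, but there is no uniform constant bounding all of them; unless $\mathsf{PH}^A$ happens to collapse (and for the random-like $A$ of \Cref{thm:np^bqp_not_in_bqp^np_intro} it almost certainly does not), the language $C$ is not in $\mathsf{PH}^A$ at all, and the simulation fails. At the circuit level this is exactly the issue: the sub-circuits feeding the top $\mathsf{BQP}$ gate would have depth $p(n)$ rather than $O(1)$, and the block-sensitivity tail bound underlying the lower bound degrades exponentially in the depth, becoming vacuous.

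This is precisely why the paper does \emph{not} glue on a tower of $\mathsf{PH}^A$-complete languages. Instead it builds the $\mathsf{NP}$-collapsing part $B$ recursively, so that $B_t$ only answers $\mathsf{NP}$ machines restricted to query $\mathcal{O}_1,\ldots,\mathcal{O}_{\lfloor\sqrt t\rfloor}$. The point of this staging (as in \Cref{lem:recursive_np_ac0}) is that for a machine of fixed polynomial running time $n^c$, unwinding $B$ back down to $A$ requires only $O(\log c)$ levels of recursion, hence a genuinely \emph{constant}-depth $\mathsf{AC^0}$ circuit. One can then invoke the query-complexity lower bound (\Cref{thm:bqp_ac0_single_forrelated_block_indistinguishable}) directly. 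Your approach would be salvageable only if you could either (i) show $\mathsf{PH}^A$ collapses for this particular $A$, or (ii) re-encode $C$ so that a length-$\ell$ query can reach at most level $O(1)$ of $\mathsf{PH}^A$ while still forcing $\mathsf{P}^B = \mathsf{NP}^B$; option (ii) essentially reproduces the paper's recursive construction.

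(A minor point: the full identity $\mathsf{QCMA}^A = \mathsf{NP}^{\mathsf{BQP}^A}$ is more delicate than you suggest, because for ``bad'' witnesses the verifier's acceptance probability need not satisfy any promise. Fortunately you only use the easy direction $\mathsf{NP}^{\mathsf{BQP}^A} \subseteq \mathsf{QCMA}^A$, which is fine.)
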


Conversely, it will follow from one of our later results, \Cref{thm:np_in_bqp_ph_infinite_intro}, that there exists an oracle relative to which $\mathsf{P} \neq \mathsf{NP}$ and yet $\mathsf{BQP} = \mathsf{QCMA} = \mathsf{QMA}$.  In other words, as far as relativizing techniques are concerned, the classical and quantum versions of the $\mathsf{P}$ vs. $\mathsf{NP}$ question are completely uncoupled from one another.

\Cref{thm:np^bqp_not_in_bqp^np_intro} also represents progress toward a proof of the following conjecture, which might be the most alluring open problem that we leave.

\begin{conjecture}
\label{conj:sigma_k_in_bqp_sigma_k+1_not}
There exists an oracle relative to which $\mathsf{NP} \subseteq \mathsf{BQP}$ but $\mathsf{PH} \not\subset \mathsf{BQP}$.\footnote{This first part of the conjecture was previously raised by Aaronson \cite{Aar10}.}  Indeed, for every $k \in \Naturals$, there exists an oracle relative to which $\mathsf{\Sigma}_k^\mathsf{P} \subseteq \mathsf{BQP}$ but $\mathsf{\Sigma}_{k+1}^\mathsf{P} \not\subset \mathsf{BQP}$.
\end{conjecture}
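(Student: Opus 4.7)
I would construct the oracle $A$ as a disjoint union $A = R \oplus U$, where $R$ is drawn from the distribution used by the paper's companion result $\mathsf{\Sigma}_{k+1}^\mathsf{P} \not\subset \mathsf{BQP}^{\mathsf{\Sigma}_{k}^\mathsf{P}}$ relative to a random oracle, and $U$ is a separately constructed ``$\Forrelation$-encoded $\mathsf{\Sigma}_k$-oracle'' whose purpose is to place $\mathsf{\Sigma}_k^{\mathsf{P},A}$ into $\mathsf{BQP}^A$. The plan hinges on the Raz--Tal asymmetry: because $\Forrelation$ is solvable with one quantum query but is essentially invisible to $\mathsf{PH}$, $U$ should hand $\mathsf{BQP}$ a free $\mathsf{\Sigma}_k$-subroutine while giving the rest of $\mathsf{PH}$ nothing. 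Thus $R$ furnishes a language $L \in \mathsf{\Sigma}_{k+1}^{\mathsf{P},R}$ with $L \notin \mathsf{BQP}^{R,\mathsf{\Sigma}_k^{\mathsf{P},R}}$, and the hope is that $U$ delivers $\mathsf{\Sigma}_k^{\mathsf{P},A} \subseteq \mathsf{BQP}^A$ without enlarging $\mathsf{BQP}^A$ enough to decide $L$.

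To build $U$, I would enumerate polynomial-time $\mathsf{\Sigma}_k$-predicates and, for each such $M$, each input length $n$, and each $x \in \{0,1\}^n$, reserve a pairwise-disjoint block $U_{M,x,n}$ holding an independently sampled $\Forrelation$ instance that is Forrelated iff $M^A(x) = 1$ and uniform otherwise. Disjointness of blocks makes the simultaneous specification $A = R \oplus U$, $U = U(A)$ converge by a standard stage-by-stage diagonalization, and the one-query $\mathsf{BQP}$ algorithm for $\Forrelation$ then yields $\mathsf{\Sigma}_k^{\mathsf{P},A} \subseteq \mathsf{BQP}^A$.

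The main obstacle, and where I expect the bulk of the work, is showing that $L$ remains outside $\mathsf{BQP}^A$. The intended mechanism is to argue that $U$ gives $\mathsf{BQP}^A$ no more useful power than direct access to a $\mathsf{\Sigma}_k^{\mathsf{P},R}$-oracle, reducing the question back to the established random-oracle separation. This rests on two sub-claims: (i) by Raz--Tal, polynomial-time $\mathsf{\Sigma}_k$-machines cannot distinguish any block $U_{M,x,n}$ from uniform, so $\mathsf{\Sigma}_k^{\mathsf{P},A}$ collapses informationally to $\mathsf{\Sigma}_k^{\mathsf{P},R}$; and (ii) a hybrid simulation converts every $U$-query of a polynomial-time $\mathsf{BQP}^{R \oplus U}$ algorithm into the corresponding $\mathsf{\Sigma}_k^{\mathsf{P},R}$-oracle query with only negligible loss in acceptance probability. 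Claim (ii) is the delicate step, since $\mathsf{BQP}$ queries $U$ in superposition and since $U$ is defined self-referentially through $A$, so one cannot just apply Raz--Tal in one shot. A natural route is to emulate the Raz--Tal argument using the paper's new quantum-aware random restriction lemma and block-sensitivity concentration theorem, collapsing the $k+1$ quantifier levels one at a time via a switching-style decomposition; controlling the error accumulation across $\poly(n)$ quantum queries and $k$ hybrid levels while keeping the fixed-point definition of $U$ self-consistent is where I anticipate the central technical difficulty. A plausible fallback, should the self-referential encoding prove too entangled, would be to replace the exact fixed point with a layered construction indexed by depth in the $\mathsf{\Sigma}_k$-predicate, so that each layer's encoding depends only on strictly shallower layers and the hybrid argument can proceed by induction on $k$.
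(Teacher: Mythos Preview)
This statement is an open \emph{conjecture} in the paper, not a theorem; the paper explicitly calls it ``the most alluring open problem that we leave.'' So there is no proof to compare against, only the authors' discussion of possible attacks in their open-problems section.

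Your proposed construction and overall strategy are essentially identical to what the authors themselves propose first: take a random oracle $R$, add a region $U$ that hides answers to $\mathsf{\Sigma}_k^{\mathsf{P}}$ machines in $\Forrelation$ instances, then argue that a $\mathsf{BQP}^{R \oplus U}$ machine is no stronger than a $\mathsf{BQP}^{\mathsf{\Sigma}_k^{\mathsf{P},R}}$ machine and invoke \Cref{cor:bqp_sigma_k}. You have also correctly located the crux in your claim~(ii). However, the paper is explicit that the authors \emph{could not make this top-down approach work}: ``Of course, we could not get this proof strategy to work---otherwise, we would not have needed the machinery surrounding sensitivity concentration of $\mathsf{AC^0}$ circuits in order to get an oracle where $\mathsf{NP}^\mathsf{BQP} \not\subset \mathsf{BQP}^\mathsf{NP}$!'' The self-referential nature of $U$ (each $\Forrelation$ block encodes the answer of a $\mathsf{\Sigma}_k$ machine that can itself query earlier $\Forrelation$ blocks) is precisely the obstruction, and neither the quantum-aware random restriction lemma nor the block-sensitivity concentration theorem, as currently stated, handles this recursion.

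The paper does sketch an alternative ``bottom-up'' route that you do not mention: rather than arguing abstractly that $U$ is useless to $\mathsf{BQP}$, hit the random oracle with the \cite{HRST17} random projection so that $\Sipser_{k+2}$ becomes a large $\AND$ and each $\mathsf{\Sigma}_k^{\mathsf{P},R}$ machine becomes a shallow decision tree, then argue inductively over the layers of $U$ that each bit is insensitive to toggling a few inputs of $R$. Carrying this out appears to require a strengthening of \Cref{lem:ac0_single_forrelated_block_indistinguishable} to inputs that are arbitrary mixtures of Forrelated and uniform rows, with several rows flipped simultaneously. Your ``layered construction indexed by depth'' fallback is in the spirit of this, but the paper makes clear that even the depth-$1$ case (giving an oracle where $\mathsf{NP} \subseteq \mathsf{BQP}$ but $\mathsf{PH} \not\subset \mathsf{BQP}$) already requires new lemmas beyond what is proved here. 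In short: your plan is reasonable and matches the authors' own first instinct, but it is a plan for an open problem, and the specific gap you flag in claim~(ii) is one the authors themselves were unable to close.
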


\Cref{conj:sigma_k_in_bqp_sigma_k+1_not} would provide spectacularly fine control over the relationship between $\mathsf{BQP}$ and $\mathsf{PH}$, going far beyond Raz-Tal to show how $\mathsf{BQP}$ could, e.g., swallow the first $18$ levels of $\mathsf{PH}$ without swallowing the $19$th.  To see the connection between \Cref{thm:np^bqp_not_in_bqp^np_intro} and \Cref{conj:sigma_k_in_bqp_sigma_k+1_not}, suppose $\mathsf{NP}^\mathsf{BQP} \subseteq \mathsf{BQP}^\mathsf{NP}$, and suppose also that $\mathsf{NP} \subseteq \mathsf{BQP}$.  Then, as observed by Fortnow \cite{For05}, this would imply
\[
\mathsf{NP}^\mathsf{NP} \subseteq \mathsf{NP}^\mathsf{BQP} \subseteq \mathsf{BQP}^\mathsf{NP} \subseteq \mathsf{BQP}^\mathsf{BQP} = \mathsf{BQP},
\]
(and so on, for all higher levels of $\mathsf{PH}$), so that $\mathsf{PH} \subseteq \mathsf{BQP}$ as well. Hence, any oracle that witnesses \Cref{conj:sigma_k_in_bqp_sigma_k+1_not} also witnesses \Cref{thm:np^bqp_not_in_bqp^np_intro}, so our proof of \Cref{thm:np^bqp_not_in_bqp^np_intro} is indeed a prerequisite to \Cref{conj:sigma_k_in_bqp_sigma_k+1_not}.

At a high level, we prove \Cref{thm:np^bqp_not_in_bqp^np_intro} by showing that no $\mathsf{BQP}^\mathsf{PH}$ machine can solve the $\OR \circ \Forrelation$ problem, in which one is given a long list of $\Forrelation$ instances, and is tasked with distinguishing whether (1) all of the instances are uniformly random, or (2) at least one of the instances is Forrelated. A first intuition is that $\mathsf{PH}$ machines should gain no useful information from the input, just because $\Forrelation$ ``looks random'' (by Raz-Tal), and hence a $\mathsf{BQP}^\mathsf{PH}$ machine should have roughly the same power as a $\mathsf{BQP}$ machine at deciding $\OR \circ \Forrelation$. If one could show this, then completing the theorem would amount to showing that $\OR \circ \Forrelation$ is hard for $\mathsf{BQP}$ machines, which easily follows from the BBBV Theorem \cite{BBBV97}.

Alas, initial attempts to formalize this intuition fail for a single, crucial reason: the possibility of homomorphic encryption! The Raz-Tal Theorem merely proves that $\Forrelation$ is a strong form of encryption against $\mathsf{PH}$ algorithms. But to rule out a $\mathsf{BQP}^\mathsf{PH}$ algorithm for $\OR \circ \Forrelation$, we \textit{also} have to show that one cannot take a collection of $\Forrelation$ instances and transform them, by means computable in $\mathsf{PH}$, into a single $\Forrelation$ instance whose solution is the $\OR$ of the solutions to the input instances. Put another way, we must show that $\mathsf{AC^0}$ circuits of constant depth and $2^{\poly(n)}$ size cannot homomorphically evaluate the $\OR$ function, when the encryption is done via the $\Forrelation$ problem.

More generally, we even have to show that $\mathsf{AC^0}$ circuits cannot transform the ``ciphertext'' into \textit{any} string that could later be decoded by an efficient quantum algorithm. \Cref{thm:np^bqp_not_in_bqp^np_intro} accomplishes this with the help of an additional structural property of $\mathsf{AC^0}$ circuits: our concentration theorem for block sensitivity. Loosely speaking, the concentration theorem implies that, with
overwhelming probability, any small $\mathsf{AC^0}$ circuit is insensitive to toggling between a yes-instance and a neighboring no-instance of the $\OR \circ \Forrelation$ problem.  This, together with the BBBV Theorem \cite{BBBV97}, then implies that such ``homomorphic encryption'' is impossible.

\bigskip

We also achieve the following converse to \Cref{thm:np^bqp_not_in_bqp^np_intro}:

\begin{theorem}[\Cref{cor:bqp^np_not_in_ph^bqp}, restated]
\label{thm:bqp^np_not_in_ph^bqp_intro}
There exists an oracle relative to which $\mathsf{BQP}^\mathsf{NP} \not\subset \mathsf{PH}^\mathsf{BQP}$, and even $\mathsf{BQP}^\mathsf{NP} \not\subset \mathsf{PH}^\mathsf{PromiseBQP}$.
\end{theorem}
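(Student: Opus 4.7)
The plan is to exhibit a promise problem $P$ with an oracle such that $P \in \mathsf{BQP}^\mathsf{NP}$ but $P \notin \mathsf{PH}^\mathsf{PromiseBQP}$. A natural candidate is \emph{Forrelation of $\mathsf{NP}$-defined functions}: the oracle $A$ encodes two Boolean functions $A_1, A_2 : \{0,1\}^n \times \{0,1\}^m \to \{0,1\}$ with $m = \Theta(n)$, from which one derives $f_A(x) := (-1)^{[\exists y : A_1(x,y) = 1]}$ and $g_A(x) := (-1)^{[\exists y : A_2(x,y) = 1]}$. A YES instance draws $(f_A, g_A)$ from the $\eps$-Forrelated distribution and a NO instance draws it uniformly; to make this well-defined I would plant each $A_i(x, \cdot)$ as either identically $0$ or a singleton supported on a uniformly random $y \in \{0,1\}^m$, conditioned on the desired value of $f_A(x), g_A(x)$.

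The containment $P \in \mathsf{BQP}^\mathsf{NP}$ is direct. Each bit of $f_A$ (respectively $g_A$) is a single $\mathsf{NP}^A$ query, so a unitary phase oracle for $f_A$ can be implemented via one call to the $\mathsf{NP}$ oracle made in superposition, and the standard one-query quantum $\Forrelation$ algorithm of \cite{Aar10} then solves $P$ using $O(1)$ total NP queries and $O(n)$ time.

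For the lower bound I would first establish that no $\mathsf{PromiseBQP}^A$ machine of polynomial query complexity distinguishes YES from NO with better than $2^{-\Omega(n)}$ advantage, by a composition argument: decoding a single bit of $f_A$ requires $\widetilde{\Omega}(2^{m/2})$ queries to $A$ by the BBBV lower bound \cite{BBBV97}, and even given direct oracle access to $(f_A, g_A)$ any quantum algorithm needs $\widetilde\Omega(\sqrt{N})$ queries to win $\Forrelation$ on $N=2^n$ inputs. I would then lift this to a $\mathsf{PH}^\mathsf{PromiseBQP}$ bound by flattening the outer $\mathsf{PH}$ into an $\mathsf{AC}^0$ circuit of quasi-polynomial size whose leaves are the polynomial-method representations of the individual $\mathsf{BQP}^A$-oracle calls (polynomials in the bits of $A$ of degree $\poly(n)$), and then applying the paper's quantum-aware random restriction lemma together with a Raz-Tal-style Fourier $L_1$ bound, so that a typical restriction of $A$ makes every $\mathsf{BQP}$-leaf essentially constant while preserving enough of the Forrelation signal of $(f_A, g_A)$ to invoke Raz-Tal. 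A diagonalization over all $\mathsf{PH}^\mathsf{PromiseBQP}$ machines then assembles into the oracle witnessing the theorem.

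The main obstacle---dual to the ``$\mathsf{AC}^0$ homomorphic encryption'' barrier encountered in \Cref{thm:np^bqp_not_in_bqp^np_intro}---is to rule out the scenario where the $\mathsf{BQP}$ oracle, though incapable of decoding any individual bit of $(f_A, g_A)$, nonetheless emits a polynomially long classical transcript from which $\mathsf{PH}$ reassembles the Forrelation signal. Formally this amounts to bounding the low-degree $L_1$ Fourier weight of the composed $\mathsf{AC}^0$-of-polynomials object, which is precisely where the quantum-aware random restriction lemma announced in the introduction should be decisive: it should ensure that after restriction the entire $\mathsf{BQP}$ layer collapses and one is left with a pure $\mathsf{AC}^0$ computation on $A$ to which the original Raz-Tal theorem applies, forcing distinguishing advantage at most $2^{-\Omega(n)}$.
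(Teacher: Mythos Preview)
Your oracle construction and high-level strategy match the paper's exactly: the problem is $\Forrelation \circ \OR$ on a sparse array $A$ (each row either all-zero or containing a single random $1$), the $\mathsf{BQP}^\mathsf{NP}$ upper bound is immediate, and the lower bound should proceed by showing the $\mathsf{BQP}$ layer is useless and then invoking Raz-Tal on the residual $\mathsf{AC}^0$ circuit.

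The gap is in your mechanism for eliminating the $\mathsf{BQP}$ layer. You propose to hit $A$ with the quantum-aware random restriction lemma so that every $\mathsf{BQP}$ leaf becomes essentially constant while the Forrelation signal in $(f_A,g_A)$ survives. But the paper's random restriction lemma (\Cref{thm:qma_random_restriction}) applies a \emph{uniform} restriction: each bit of $A$ is independently fixed to $0$ or $1$ with probability $\tfrac{1-p}{2}$ each. Under such a restriction the rows of $A$ acquire roughly half $1$s, so every $\OR$-of-row becomes $1$ deterministically and $(f_A,g_A)$ collapses to the constant pair---the Forrelation signal is destroyed long before the $\mathsf{BQP}$ leaves simplify. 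There is no choice of $p$ that threads this needle, because the relevant input distribution $\mathcal{D}_{z,N}$ is supported on strings of Hamming weight at most $M$, which has measure $2^{-\Omega(MN)}$ under the uniform distribution that random restrictions complete to.

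The paper instead exploits the sparseness of $\mathcal{D}_{z,N}$ directly, via a ``sparse Aaronson--Ambainis'' theorem (\Cref{thm:sparse_aa_parameterized}): any $T$-query quantum algorithm on an input drawn from $\mathcal{D}_{z,N}$ can be simulated to additive error $\eps$ by a deterministic classical algorithm making $O\bigl(\tfrac{T^5}{\eps^4}\log\tfrac{T}{\delta}\bigr)$ queries, on a $1-\delta$ fraction of inputs, \emph{uniformly in $z$}. The simulation just queries the bits with largest query magnitude; since the $1$s are sparse and randomly placed, the unqueried $1$s carry negligible total magnitude with high probability (Hoeffding plus a BBBV hybrid). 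This replaces each $\mathsf{BQP}$ leaf by a $\polylog(N)$-depth decision tree with $\tfrac{1}{\quasipoly(N)}$ error, after which the whole computation is genuinely $\mathsf{AC}^0$ in the bits of $A$---and crucially, in the bits of $z$ via a simple re-encoding---so Raz-Tal applies verbatim. No restriction is taken at all.
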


Note that an oracle relative to which $\mathsf{BQP}^\mathsf{NP} \not\subset \mathsf{NP}^\mathsf{BQP}$ is almost trivial to achieve, for example by considering a problem in $\mathsf{coNP}$.  However, $\mathsf{BQP}^\mathsf{NP} \not\subset \mathsf{PH}^\mathsf{BQP}$ is much harder.  At a high level, rather than considering the composed problem $\OR \circ \Forrelation$, we now need to consider the reverse composition: $\Forrelation \circ \OR$, a problem that's clearly in $\mathsf{BQP}^\mathsf{NP}$, but plausibly not in $\mathsf{PH}^\mathsf{BQP}$. The key step is to show that, when solving $\Forrelation \circ \OR$, \textit{any $\mathsf{PH}^\mathsf{BQP}$ machine can be simulated by a $\mathsf{PH}$ machine}: the $\mathsf{BQP}$ oracle is completely superfluous!  Once we've shown that, $\Forrelation \circ \OR \not \in \mathsf{PH}$ then follows immediately from Raz-Tal.

\bigskip

For our next result, recall that $\mathsf{QMA}$, or \textit{Quantum Merlin-Arthur}, is the class of problems for which a yes-answer can be witnessed by a polynomial-size quantum state. Perhaps our second most interesting result is this:

\begin{theorem}[\Cref{cor:pp_not_in_qmah}, restated]
\label{thm:pp_not_in_qmah_intro}
$\mathsf{PP}$ is not contained in the \emph{``$\mathsf{QMA}$ hierarchy''}, consisting of constant-depth towers of the form $\mathsf{QMA}^{\mathsf{QMA}^{\mathsf{QMA}^{\cdots}}}$, with probability $1$ relative to a random oracle.\footnote{Actually, our formal definition of the $\mathsf{QMA}$ hierarchy is more general than the version given here, in order to accommodate recursive queries to $\mathsf{QMA}$ promise problems. This only makes our separation stronger. See \Cref{sec:complexity_classes} for details.}
\end{theorem}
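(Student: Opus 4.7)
The plan is to pin down a concrete $\mathsf{PP}$-problem relative to the oracle and prove by induction on tower height that no constant-depth $\mathsf{QMA}$-hierarchy machine can solve it with non-negligible bias against a random oracle; a standard union bound over machines then delivers the probability-$1$ statement.

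For the candidate problem, let $A : \{0,1\}^n \to \{0,1\}$ be the random oracle and, on input $1^n$, consider the promise problem asking whether $|A^{-1}(1)| \ge 2^{n-1} + 2^{n/2}$ or $|A^{-1}(1)| \le 2^{n-1} - 2^{n/2}$. This trivially lies in $\mathsf{PP}^A$ via the probabilistic machine that samples $y$ uniformly and outputs $A(y)$. For a uniformly random $A$, $|A^{-1}(1)|$ is essentially Gaussian around $2^{n-1}$ with standard deviation $\sim 2^{n/2-1}$, so with constant probability $A$ is a valid instance and the correct label is $\Theta(1)$-balanced over $A$.

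For the lower bound let $\mathcal{Q}_k$ denote the polynomial-size $\mathsf{QMA}$-hierarchy machines of height $k$. The target is to show inductively that there exist constants $\alpha_k > 0$ such that for every $s=\poly(n)$ and every $\mathcal{Q}_k$-machine $M$ of size at most $s$,
\[
\Pr_A\bigl[M^A \text{ correctly decides the problem}\bigr] \le \tfrac12 + 2^{-n^{\alpha_k}}.
\]
The base case $k=0$ is $\mathsf{BQP}$ and follows from the $\Omega(\sqrt{N})$ Nayak--Wu / BBBV query lower bound for quantum approximate counting combined with a Chernoff bound over $A$. For the inductive step, consider a $\mathsf{QMA}^{\mathcal{Q}_{k-1}}$ machine and fix a $1/\poly(n)$-net $\mathcal{N}$ of candidate polynomial-size quantum witnesses, with $|\mathcal{N}| \le 2^{\poly(n)}$. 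For each $|\psi\rangle \in \mathcal{N}$ the resulting algorithm is a $\mathsf{BQP}^{\mathcal{Q}_{k-1}}$ query machine. I would then apply the inductive hypothesis not to the counting problem itself but to the ensemble of ``slice'' decision problems whose answers are the $\mathcal{Q}_{k-1}$-queries encountered along the computation, concluding that, except with probability $2^{-n^{\alpha_{k-1}}}$ over $A$, each such query's correct answer is a fixed $\poly(n)$-bit function of $A$. Absorbing these bits as non-uniform advice reduces the machine to a $\mathsf{BQP}$ algorithm with polynomial advice, to which the base case applies. Union-bounding over $\mathcal{N}$ and over the possible advice strings costs only $2^{\poly(n)}$ in failure probability, so choosing $\alpha_k$ a suitable constant fraction of $\alpha_{k-1}$ (say $\alpha_{k-1}/2$) preserves the inductive invariant for any constant tower height.

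The main obstacle I anticipate is the promise-problem subtlety: queries to a $\mathsf{QMA}$ oracle on inputs violating the promise may be answered adversarially, and this adversarial slack could in principle smuggle information about $A$ into the $\mathcal{Q}_{k-1}$-answers and undo the advice-compression step. I would resolve this by adopting the paper's strengthened definition of the $\mathsf{QMA}$ hierarchy (hinted at in the footnote), in which the verifier's acceptance probability is required to lie outside a fixed interval $(a,b)$ on \emph{every} input, so that $\mathsf{QMA}$ amplification still applies and each query has a well-defined bit value that can be averaged over $A$. For the $\mathsf{MIP}^*$-hierarchy generalization, the extra ingredient is that in $(\mathsf{MIP}^*)^A$ the provers themselves have no oracle access, so any joint prover strategy is a fixed object that can be handed to a verifier as a $\mathsf{QMA}$-style witness; after an $\eps$-net over finite-dimensional strategies this simulates an $(\mathsf{MIP}^*)^C$ query by a $\mathsf{QMA}^C$ query (possibly after gap amplification), and the QMA-hierarchy induction propagates unchanged.
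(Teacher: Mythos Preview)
Your inductive step has a genuine gap. The inductive hypothesis asserts only that $\mathcal{Q}_{k-1}$ machines fail on the specific counting problem; it says nothing about the \emph{structure} of arbitrary $\mathcal{Q}_{k-1}$-oracle answers. When your top-level $\mathsf{QMA}$ verifier queries its $\mathcal{Q}_{k-1}$ oracle, the queries are not instances of the counting problem---they are whatever height-$(k-1)$ problems the verifier chooses to ask about, and their answers are complicated functions of $A$ that may be highly correlated with the counting answer. Your claim that ``each such query's correct answer is a fixed $\poly(n)$-bit function of $A$'' that can be absorbed as advice does not follow from anything you have available: the queries are made in superposition and adaptively, so their support may range over all $2^{\poly(n)}$ possible query strings, and the inductive hypothesis about one specific problem places no constraint on what information these answers carry. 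This is precisely the obstacle the paper must overcome, and it does so by an entirely different route: rather than peeling off layers, it views the whole $\mathsf{QMAH}$ computation as a constant-depth circuit whose gates are functions of low $\QMA$ query complexity in the oracle string, and proves a new random restriction lemma (\Cref{thm:qma_random_restriction}) showing that any such gate becomes close to a narrow DNF after a restriction with $\Pr[*]\approx 1/\sqrt{N}$. Iterating this with H\r{a}stad's switching lemma collapses the entire circuit at once, after which the target $\PARITY$ function (not approximate counting) remains undecided on average.

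Two further points. First, your proposed fix for the promise-problem issue---forcing the verifier's acceptance probability outside $(a,b)$ on \emph{every} input---is not the paper's definition and would alter the class; the paper instead allows the oracle to be any total language extending the promise problem and proves the lower bound against all such extensions simultaneously, which its random-restriction machinery handles because it applies to partial functions. Second, your $\mathsf{MIP}^*$ argument via an $\eps$-net over finite-dimensional prover strategies cannot work: the content of $\mathsf{MIP}^*=\mathsf{RE}$ is exactly that optimal entangled strategies may require unbounded dimension, so no finite net suffices. The paper sidesteps this because its notion of $\QMA$ query complexity (\Cref{def:qma_query_complexity}) allows unbounded witness length and therefore already upper-bounds any class with a polynomial-query quantum verifier, $\mathsf{MIP}^*$ included.
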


Note that $\mathsf{PP} = \mathsf{PostBQP}$, where $\mathsf{PostBQP}$ denotes $\mathsf{BQP}$ augmented with the power of postselection \cite{Aar05}, and so \Cref{thm:pp_not_in_qmah_intro} contrasts with the classical containment $\mathsf{PostBPP} \subseteq \mathsf{BPP}^\mathsf{NP} \subseteq \mathsf{PH}$ \cite{HHT97,Kup15}. Nevertheless, before this paper, to our knowledge, it was not even known how to
construct an oracle relative to which $\mathsf{PP} \not\subset \mathsf{BQP}^\mathsf{NP}$, let alone classes like $\mathsf{BQP}^{\mathsf{NP}^{\mathsf{BQP}^{\mathsf{NP}^{\cdots}}}}$ or $\mathsf{QCMA}^{\mathsf{QCMA}^{\mathsf{QCMA}^{\cdots}}}$, which are contained in the $\mathsf{QMA}$ hierarchy. The closest result we are aware of is due to Kretschmer \cite{Kre21c}, who gave a \textit{quantum} oracle relative to which $\mathsf{BQP} = \mathsf{QMA} \neq \mathsf{PostBQP}$.

Perhaps shockingly, our proof of \Cref{thm:pp_not_in_qmah_intro} can be extended even to show that $\mathsf{PP}$ is not in, say, $\mathsf{QMIP}^{\mathsf{QMIP}^{\mathsf{QMIP}^{\cdots}}}$ relative to a random oracle, where $\mathsf{QMIP}$ means Quantum Multi-prover Interactive Proofs with entangled provers.  This is despite the breakthrough results of Reichardt, Unger, and Vazirani \cite{RUV13}, and more recently Ji, Natarajan, Vidick, Wright, and Yuen \cite{JNVWY20}, which showed that in the \textit{unrelativized} world, $\mathsf{QMIP} = \mathsf{MIP}^* = \mathsf{RE}$ (where $\mathsf{MIP}^*$ means $\mathsf{QMIP}$ with classical communication only, and $\mathsf{RE}$ means Recursively Enumerable), so in particular, $\mathsf{QMIP}$ contains the halting problem.  This underscores the dramatic extent to which results like $\mathsf{QMIP} = \mathsf{RE}$ are nonrelativizing!

\Cref{thm:pp_not_in_qmah_intro} can also be understood as showing that in the black-box setting, there is no quantum analogue of Stockmeyer's approximate counting algorithm \cite{Sto83}. For a probabilistic algorithm $M$ that runs in $\poly(n)$ time and an error bound $\eps \ge \frac{1}{\poly(n)}$, the approximate counting problem is to estimate the acceptance probability of $M$ up to a multiplicative factor of $1 + \eps$. Stockmeyer's algorithm \cite{Sto83} gives a relativizing $\poly(n)$-time reduction from the approximate counting problem to a problem in the third level of the polynomial hierarchy, and crucially relies on pulling the randomness out of $M$. In structural complexity terms, Stockmeyer's algorithm can be reinterpreted as showing that $\mathsf{SBP} \subseteq \mathsf{PH}$ relative to all oracles, where $\mathsf{SBP}$ is the complexity class defined in \cite{BGM05} that captures approximate counting.

One might wonder: is there a version of Stockmeyer's algorithm for the \textit{quantum} approximate counting problem, where we instead wish to approximate the acceptance probability of a quantum algorithm? In particular, is $\mathsf{SBQP}$, the complexity class that captures quantum approximate counting \cite{Kup15}, contained in the $\mathsf{QMA}$ hierarchy?\footnote{We thank Patrick Rall (personal communication) for bringing this question to our attention.} Kuperberg \cite{Kup15} showed that $\mathsf{PP} \subseteq \mathsf{P}^\mathsf{SBQP}$, so it follows that $\mathsf{PP} \subseteq \mathsf{QMAH}$ if and only if $\mathsf{SBQP} \subseteq \mathsf{QMAH}$, where $\mathsf{QMAH}$ denotes the $\mathsf{QMA}$ hierarchy. Thus, \Cref{thm:pp_not_in_qmah_intro} implies that $\mathsf{SBQP} \not\subset \mathsf{QMAH}$ relative to a random oracle, implying that such a quantum analogue of Stockmeyer's algorithm does not exist in the black-box setting.\footnote{Note that this is just one of many possible ways that we could ask whether there exists a quantum analogue of Stockmeyer's algorithm. For example, one might consider alternative definitions of the quantum approximate counting task, such as the problem defined in \cite{BCGW21} of approximating the number of witness states accepted by a $\mathsf{QMA}$ verifier. One might also consider other definitions of the ``quantum polynomial hierarchy,'' some of which are explored in \cite{GSS+18}.} This demonstrates yet another case where a classical complexity result that relies on fixing randomness cannot be generalized to the quantum setting.

Notably, our proof of \Cref{thm:pp_not_in_qmah_intro} does not appeal to Raz-Tal at all, but instead relies on a new random restriction lemma for the acceptance probabilities of quantum query algorithms. Our random restriction lemma shows that if one randomly fixes most of the inputs to a quantum query algorithm, then the algorithm's behavior on the unrestricted inputs can be approximated by a ``simple'' function (say, a small decision tree or small DNF formula). We then use this random restriction lemma to generalize the usual random restriction proof that, for example, $\PARITY \not\in \mathsf{AC^0}$ \cite{Has87}.

\bigskip

Here is another noteworthy result that we are able to obtain, by
combining random restriction arguments with lower bounds on quantum query complexity:

\begin{theorem}[\Cref{cor:bqp_sigma_k}, restated]
\label{thm:bqp_sigma_k_intro}
For every $k \in \Naturals$, $\mathsf{\Sigma}_{k+1}^\mathsf{P} \not\subset \mathsf{BQP}^{\mathsf{\Sigma}_{k}^\mathsf{P}}$ with probability $1$ relative to a random oracle.
\end{theorem}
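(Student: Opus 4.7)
The plan is to mirror the classical proof that, relative to a random oracle, $\mathsf{\Sigma}_{k+1}^\mathsf{P} \not\subset \mathsf{\Sigma}_k^\mathsf{P}$ (due to Furst--Saxe--Sipser and H{\aa}stad, later sharpened by Rossman--Servedio--Tan), substituting the paper's quantum random restriction lemma for the classical switching lemma wherever the outer $\mathsf{BQP}$ layer needs to be stripped away. The candidate hard language is a depth-$(k+1)$ alternating-quantifier Sipser function $\Sipser_{k+1}^A$ encoded into the random oracle $A$, with its layer fan-ins tuned so that $\Pr_A[\Sipser_{k+1}^A(1^n)=1]\approx 1/2$; by construction it lies in $\mathsf{\Sigma}_{k+1}^\mathsf{P}$ relative to $A$.

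Fix any polynomial-time $\mathsf{BQP}^{\mathsf{\Sigma}_k^\mathsf{P}}$ algorithm $M$ with quantum query bound $T=\poly(n)$. Since every $\mathsf{\Sigma}_k^\mathsf{P}$ oracle query $M$ can make is computed by an $\mathsf{AC^0}$ circuit of depth $k$ and size $2^{\poly(n)}$ on the bits of $A$, I would apply a random restriction $\rho$ that keeps each bit of $A$ alive with probability $p=n^{-c}$ for an appropriately chosen constant $c$. By H{\aa}stad's multi-switching lemma, applied simultaneously to all $2^{\poly(n)}$ possible $\mathsf{\Sigma}_k$-query circuits, with failure probability $o(1)$ there is a common canonical decision tree of depth $d_1=\polylog(n)$ on the surviving bits that determines the $\mathsf{\Sigma}_k^\mathsf{P}$ oracle's answer on every query $M$ might ever make.

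Having collapsed the oracle, I invoke the paper's quantum random restriction lemma on $M$'s quantum-query portion under the same $\rho$. The lemma yields that the acceptance probability of $M$, as a function of the surviving bits of $A$, is $\eps$-approximated in $L_\infty$ by a decision tree of depth $d_2=\polylog(n)$, hence by a DNF of width $d_2$ and size $2^{d_2}=\quasipoly(n)$---i.e., a depth-$2$ $\mathsf{AC^0}$ circuit of quasipolynomial size on the surviving bits of $A$. On the other hand, with high probability over $\rho$ the restriction $\Sipser_{k+1}^{\rho(A)}$ is still a balanced Sipser instance on $\Theta(pN)$ bits, and by the Rossman--Servedio--Tan depth hierarchy bound (for $k\geq 2$), by classical DNF lower bounds on $\mathsf{\Sigma}_2$-complete functions (for $k=1$), or by the BBBV theorem applied directly to unstructured search (for $k=0$, where the multi-switching step is vacuous since $\mathsf{\Sigma}_0^\mathsf{P}=\mathsf{P}$), no such approximator can correlate with $\Sipser_{k+1}^{\rho(A)}$ beyond bias $o(1)$. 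This contradicts $M$'s alleged correctness, and a Borel--Cantelli argument over all $\mathsf{BQP}^{\mathsf{\Sigma}_k^\mathsf{P}}$ machines (enumerated by their Turing descriptions) promotes the separation to probability $1$ over random $A$.

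The main obstacle is balancing three interacting parameter regimes: $p$ must be small enough for the multi-switching lemma to uniformly handle all $2^{\poly(n)}$ of the $\mathsf{\Sigma}_k$-query circuits, yet large enough that the quantum random restriction lemma still yields $d_2=\polylog(n)$ given $M$'s $T$ queries; and simultaneously the restricted $\Sipser_{k+1}$ on $\Theta(pN)$ bits must remain too hard for any depth-$2$ $\mathsf{AC^0}$ circuit of quasipolynomial size. Aligning these forces the Sipser fan-ins and the restriction parameter $p$ to track the quantitative Rossman--Servedio--Tan bounds carefully, but once this is done the argument goes through uniformly in $k$.
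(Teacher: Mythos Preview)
Your invocation of the quantum random restriction lemma is where the argument breaks. The paper's lemma (\Cref{thm:bqp_random_restriction}, or the DNF version \Cref{thm:qma_random_restriction}) does not yield an $L_\infty$ decision-tree approximation of the acceptance probability. It says that under a restriction with $\Pr[*]=p$, for each fixed input there is a size-$k$ certificate outside of which the acceptance probability moves by at most $16Tp\sqrt{N/k}$; for this error to be $o(1)$ with $k=\polylog(n)$ you need $p\lesssim\sqrt{k}/(T\sqrt{N})$, which for $N=2^{\Theta(n)}$ is $2^{-\Theta(n)}$, not $n^{-c}$. At your $p=n^{-c}$ the bound is vacuous. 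And even at the correct $p$, what the lemma delivers is expected closeness (over $\rho$ and a uniform input) to a width-$k$ DNF, not the pointwise decision-tree approximation you assert. So the step ``$\eps$-approximated in $L_\infty$ by a decision tree of depth $d_2=\polylog(n)$'' is not supported, and the parameter balancing you flag as the ``main obstacle'' is in fact unresolvable in this framework.

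The paper's proof avoids the quantum random-restriction machinery entirely and is considerably simpler. It reuses the HRST random \emph{projections} as a black box (\Cref{thm:hrst}): these simultaneously collapse every depth-$(d{-}1)$ $\mathsf{AC^0}$ sub-circuit feeding the quantum top gate to a decision tree of depth $\le 2^{m/2-4}$, reduce $\Sipser_d$ to a single $\AND$ of fan-in $\approx(\ln 2)2^m$, and complete to the uniform distribution via a biased product distribution $D$ in which each coordinate is $0$ with probability $b\approx 2^{-m}$. Under $D$, each such decision tree differs from its all-ones value with probability at most $b\cdot 2^{m/2-4}\le 2^{-m/2-4}$, so every input wire to the quantum top gate flips with at most that probability. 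A single direct BBBV bound (\Cref{cor:average_case_bbbv_function}) then forces the quantum gate's output to be essentially constant: $\Pr_{x\sim D}[f_\rho(x)=i]\le O(T^2 2^{-m/2})$ for some $i\in\{0,1\}$. Since the projected $\AND$ is balanced under $D$, the correlation with $\Sipser_{d|\rho}$ is at most $\tfrac12+o(1)$. No second restriction, no multi-switching, and no quantum switching lemma are needed---just HRST plus one application of BBBV.
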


\Cref{thm:bqp_sigma_k_intro} extends the breakthrough of H\r{a}stad, Rossman, Servedio, and Tan \cite{HRST17}, who (solving an open problem from the 1980s) showed that $\mathsf{PH}$ is infinite relative to a random oracle with probability $1$.  Our result shows, not only that a random oracle creates a gap between every two successive levels of $\mathsf{PH}$, but that quantum computing fails to bridge that gap.

Again, \Cref{thm:bqp_sigma_k_intro} represents a necessary step toward a proof of \Cref{conj:sigma_k_in_bqp_sigma_k+1_not}, because if we had $\mathsf{\Sigma}_{k+1}^\mathsf{P} \subseteq \mathsf{BQP}^{\mathsf{\Sigma}_{k}^\mathsf{P}}$, then clearly $\mathsf{\Sigma}_k^\mathsf{P} \subseteq \mathsf{BQP}$ would imply $\mathsf{\Sigma}_{k+1}^\mathsf{P} \subseteq \mathsf{BQP}^\mathsf{BQP} = \mathsf{BQP}$.

\bigskip

Our last two theorems return to the theme of the autonomy of $\mathsf{BQP}$.

\begin{theorem}[\Cref{thm:bqp=pp_ph_infinite}, restated]
\label{thm:np_in_bqp_ph_infinite_intro}
There exists an oracle relative to which $\mathsf{NP} \subseteq \mathsf{BQP}$, and indeed $\mathsf{BQP} = \mathsf{P^{\# P}}$, and yet $\mathsf{PH}$ is infinite.
\end{theorem}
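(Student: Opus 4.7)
The plan is to define $A = (B, F)$, where $B$ is a uniformly random oracle (making $\mathsf{PH}^B$ infinite almost surely by \cite{HRST17}) and $F$ is a collection of Forrelation instances encoding the relativized $\mathsf{PP}^A$-complete problem $\textsc{MajSat}^A$. Concretely, for each Boolean formula $\phi$ of length $n$, I reserve a block in $F$ of Forrelation input parameter $n^2$ (i.e., bit-length $2^{\Theta(n^2)}$), set to a Forrelated instance if $\phi$ is a yes-instance of $\textsc{MajSat}^A$ and a uniform instance otherwise. The self-reference is handled by a standard staged construction, placing the block for a length-$n$ formula at oracle indices sufficiently long that $\phi^A$'s $\poly(n)$-time computation only queries bits of $B$ and blocks for strictly shorter formulas, giving a well-defined inductive definition.

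The inclusion $(\mathsf{P^{\#P}})^A \subseteq \mathsf{BQP}^A$ follows from the chain
\[
(\mathsf{P^{\#P}})^A \;=\; \mathsf{P}^{\mathsf{PP}^A} \;\subseteq\; \mathsf{BQP}^{\mathsf{PP}^A} \;\subseteq\; \mathsf{BQP}^{\mathsf{BQP}^A} \;=\; \mathsf{BQP}^A,
\]
using $\mathsf{P^{\#P}} = \mathsf{P^{PP}}$ (relativizing), the fact that $\mathsf{BQP}^A$ decides $\textsc{MajSat}^A$ in polynomial time by running Aaronson's one-query Forrelation algorithm on the corresponding block (of parameter $n^2$), and the relativizing self-lowness of $\mathsf{BQP}$. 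The reverse $\mathsf{BQP}^A \subseteq (\mathsf{P^{\#P}})^A$ is the standard relativizing containment, and $\mathsf{NP} \subseteq \mathsf{P^{\#P}}$ then gives $\mathsf{NP}^A \subseteq \mathsf{BQP}^A$.

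To show $\mathsf{PH}^A$ is infinite, fix $k$ and let $L_{k+1} \in (\mathsf{\Sigma}_{k+1}^\mathsf{P})^B \setminus (\mathsf{\Sigma}_k^\mathsf{P})^B$ be the hard language from \cite{HRST17}. Trivially $L_{k+1} \in (\mathsf{\Sigma}_{k+1}^\mathsf{P})^A$. Suppose for contradiction that $L_{k+1} \in (\mathsf{\Sigma}_k^\mathsf{P})^A$ via some depth-$(k+1)$, size-$2^{\poly(n)}$ $\mathsf{AC^0}$ family $\{C_n\}$ over $(B, F)$. A hybrid argument, swapping each Forrelation block of $F$ for a uniformly random string one at a time, together with Raz-Tal's $\mathsf{AC^0}$-distinguishing bound of $2^{-\Omega(n^2)}$ per block, gives total advantage $o(1)$: the Raz-Tal bias is exponentially small in $n^2$, dominating a union bound over the $2^{\poly(n)}$ relevant blocks. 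After the swap, $C_n$ is an $\mathsf{AC^0}$ circuit over $B$ with uniformly random auxiliary bits, which by $\mathsf{BPAC^0} = \mathsf{AC^0}$ yields $L_{k+1} \in (\mathsf{\Sigma}_k^\mathsf{P})^B$, contradicting \cite{HRST17}.

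The main obstacle is calibrating the Forrelation block parameter to close the hybrid argument against every polynomial-size PH distinguisher while simultaneously preserving polynomial-time $\mathsf{BQP}$ decoding. My choice of parameter $n^2$ works because Raz-Tal's $2^{-\Omega(n^2)}$ per-block bias dominates the $2^{\poly(n)}$-block union bound for any fixed polynomial, yet $\mathsf{BQP}^A$ accesses each block in $\poly(n^2) = \poly(n)$ time. A secondary subtlety, requiring careful bookkeeping, is that the staged, self-referential construction of $F$ must be robust to the hybrid swaps---i.e., swapping one block's content should not retroactively alter the intended contents of other blocks---which I plan to handle by viewing the final $A$ as a single distribution over oracles and applying the hybrid only to that marginal, rather than iteratively rebuilding the construction at each hybrid step.
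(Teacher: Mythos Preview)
Your proposal is essentially the same construction and argument as the paper's: a random oracle for the $\mathsf{PH}$-infinite part, Forrelation-encoded answers to a $\mathsf{PP}$/$\mathsf{P^{\#P}}$-complete problem to collapse $\mathsf{P^{\#P}}$ into $\mathsf{BQP}$, and a Raz--Tal hybrid to show the encoded blocks look uniform to any fixed $\mathsf{\Sigma}_k^\mathsf{P}$ machine, reducing to \cite{HRST17}. The cosmetic differences (encoding $\textsc{MajSat}^A$ and invoking $\mathsf{BQP}^{\mathsf{BQP}}=\mathsf{BQP}$, versus the paper's direct encoding of all $\mathsf{P^{\#P}}$ machine--input pairs) do not change the substance.

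Two small points to tighten. First, the ``suppose for contradiction that $L_{k+1}\in(\mathsf{\Sigma}_k^\mathsf{P})^A$'' framing is at odds with the distributional hybrid you correctly plan to run on the marginal over $A$: once $A$ is fixed there is nothing to swap. The paper instead bounds, for each fixed machine $M$ and a sparse sequence of input lengths $n_i$, the probability over the oracle distribution that $M$ is correct on $0^{n_i}$ (showing it is at most $0.7$ for cofinitely many $i$), and concludes via the product/Borel--Cantelli route; you should recast your argument the same way. Second, you do not need $\mathsf{BPAC^0}=\mathsf{AC^0}$ after the swap: since the swapped $F'$ bits are independent of the Sipser input in $B$, averaging over them and applying the HRST average-case bound directly (as the paper does) already gives the $\tfrac{1}{2}+o(1)$ success probability.
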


\Cref{thm:np_in_bqp_ph_infinite_intro} resolves a question of Aaronson \cite{Aar10}.
As a simple corollary (\Cref{cor:bqp_not_in_np/poly}), we also obtain an oracle relative to which $\mathsf{BQP} \not\subset \mathsf{NP/poly}$, resolving a question of Aaronson, Cojocaru, Gheorghiu, and Kashefi \cite{ACGK19}.

For three decades, one of the great questions of quantum computation has been whether it can solve $\mathsf{NP}$-complete problems in polynomial time. Many experts guess that the answer is no, for similar reasons as they guess that $\mathsf{P} \neq \mathsf{NP}$---say, the BBBV Theorem \cite{BBBV97}, combined with our failure to find any promising leads for evading that theorem's assumptions in the worst case.  But the fact remains that we have no structural evidence connecting the $\mathsf{NP} \not\subset \mathsf{BQP}$ conjecture to any ``pre-quantum'' beliefs about complexity classes.  No one has any idea how to show, for example, that if $\mathsf{NP} \subseteq \mathsf{BQP}$ then $\mathsf{P} = \mathsf{NP}$ as well, or anything even remotely in that direction.

Given the experience of classical complexity theory, it would be
reasonable to hope for a theorem showing that, if $\mathsf{NP} \subseteq \mathsf{BQP}$, then $\mathsf{PH}$ collapses---analogous to the Karp-Lipton Theorem \cite{KL80}, that if $\mathsf{NP} \subset \mathsf{P/poly}$ then $\mathsf{PH}$ collapses, or the Boppana-H\r{a}stad-Zachos Theorem \cite{BHZ87}, that if $\mathsf{NP} \subseteq \mathsf{coAM}$ then $\mathsf{PH}$ collapses.  No such result is known for $\mathsf{NP} \subseteq \mathsf{BQP}$, once again because of the difficulty that there is no known way to pull the randomness out of a $\mathsf{BQP}$ algorithm.  \Cref{thm:np_in_bqp_ph_infinite_intro} helps to explain this situation, by showing that any proof of such a conditional collapse would have to be nonrelativizing.  The proof of \Cref{thm:np_in_bqp_ph_infinite_intro} builds, again, on the Raz-Tal Theorem.  And this is easily seen to be necessary, since as we pointed out earlier, if $\mathsf{BQP} \subseteq \mathsf{AM}$, then $\mathsf{NP} \subseteq \mathsf{BQP}$ really \textit{would} imply a collapse of $\mathsf{PH}$.

\begin{theorem}[\Cref{thm:p=np_bqp=pp}, restated]
\label{thm:p=np_bqp=pp_intro}
There exists an oracle relative to which $\mathsf{P} = \mathsf{NP} \neq \mathsf{BQP} = \mathsf{P^{\# P}}$.
\end{theorem}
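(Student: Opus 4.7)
The plan is to build the desired oracle on top of the one supplied by \Cref{thm:np_in_bqp_ph_infinite_intro}, by adjoining a classical ``advice'' oracle that encodes the whole of $\mathsf{PH}^A$. Let $A$ be the oracle from \Cref{thm:np_in_bqp_ph_infinite_intro}, so that $\mathsf{BQP}^A = \mathsf{P^{\# P}}^A$ and, by the Raz--Tal lower bound driving that construction, the Forrelation-encoded language witnessing $\mathsf{BQP}^A \not\subset \mathsf{PH}^A$ (call it $L$) lies in $\mathsf{BQP}^A \setminus \mathsf{PH}^A$. Define
\[
W \;:=\; \{\,(M,x,1^t,k)\,:\, M \text{ is an alternating Turing machine and } M^A(x) \text{ accepts in time } t \text{ with at most } k \text{ alternations}\,\},
\]
so that $W \in \mathsf{PH}^A$, and set $A' := A \oplus W$.

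The verification then reduces to a short chain of standard relativizing identities. First, $\mathsf{P}^{A'} = \mathsf{NP}^{A'} = \mathsf{PH}^A$: a polynomial-time machine with oracle $A'$ reads any $\mathsf{PH}^A$ answer in a single query to $W$, giving $\mathsf{PH}^A \subseteq \mathsf{P}^{A'}$, while the converse $\mathsf{NP}^{A'} \subseteq \mathsf{PH}^A$ follows from $W \in \mathsf{PH}^A$ together with the closure $\mathsf{NP}^{\mathsf{PH}^A} = \mathsf{PH}^A$. Second, $\mathsf{BQP}^{A'} = \mathsf{P^{\# P}}^{A'} = \mathsf{P^{\# P}}^A$: the containment $\mathsf{BQP} \subseteq \mathsf{P^{\# P}}$ relativizes, Toda's theorem (also relativizing) gives $W \in \mathsf{PH}^A \subseteq \mathsf{P^{\# P}}^A$, and the self-lowness of $\mathsf{P^{\# P}}$ then yields $\mathsf{P^{\# P}}^{A'} = \mathsf{P^{\# P}}^A$, while the lower bound $\mathsf{BQP}^{A'} \supseteq \mathsf{BQP}^A = \mathsf{P^{\# P}}^A$ is immediate. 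Third, combining the two identities, $\mathsf{P}^{A'} = \mathsf{PH}^A \subsetneq \mathsf{P^{\# P}}^A = \mathsf{BQP}^{A'}$, with $L$ furnishing an explicit separating language.

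The one genuine thing to worry about is whether equipping a classical machine with a full $\mathsf{PH}^A$-oracle could let it decode the Forrelation gadget, thereby destroying the Raz--Tal separation I am trying to transplant. This worry is dispelled precisely by the identity $\mathsf{PH}^{A'} = \mathsf{PH}^A$---which holds because $W$ already lives in $\mathsf{PH}^A$---so the augmentation never climbs past $\mathsf{PH}^A$, and the Raz--Tal bound proven for \Cref{thm:np_in_bqp_ph_infinite_intro} applies verbatim to $\mathsf{P}^{A'}$. Everything else is bookkeeping with self-lowness of $\mathsf{P^{\# P}}$ and the standard identity $\mathsf{NP}^{\mathsf{\Sigma}_k^{\mathsf{P},A}} = \mathsf{\Sigma}_{k+1}^{\mathsf{P},A}$.
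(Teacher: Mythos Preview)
Your argument has a genuine gap at the claim ``$W \in \mathsf{PH}^A$.'' Because the alternation bound $k$ is part of the input to $W$, deciding membership in $W$ on an instance $(M,x,1^t,k)$ requires $\mathsf{\Sigma}_k^{\mathsf{P}^A}$ power, and $k$ is unbounded. Since $\mathsf{PH}^A$ is infinite by your choice of $A$, no fixed level of $\mathsf{PH}^A$ decides $W$; indeed $\mathsf{PH}$ has no complete problem unless it collapses. Your language $W$ is in fact complete for alternating polynomial time with oracle $A$, i.e.\ for $\mathsf{PSPACE}^A$. Once you see this, the whole construction unravels: since $W$ is $\mathsf{PSPACE}^A$-complete and $\mathsf{PSPACE}$ is self-low, one gets
\[
\mathsf{P}^{A'} = \mathsf{NP}^{A'} = \mathsf{BQP}^{A'} = \mathsf{P^{\# P}}^{A'} = \mathsf{PSPACE}^A,
\]
so the separation $\mathsf{NP}^{A'} \neq \mathsf{BQP}^{A'}$ you want is destroyed, not preserved.

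The natural attempted fix---encoding only a fixed level, say $\mathsf{NP}^A$, into $W$---does put $W$ back inside $\mathsf{PH}^A$, but then $\mathsf{NP}^{A'} = \mathsf{NP}^{\mathsf{NP}^A} = \mathsf{\Sigma}_2^{\mathsf{P}^A}$ while $\mathsf{P}^{A'} = \mathsf{P}^{\mathsf{NP}^A}$, and these differ because $\mathsf{PH}^A$ is infinite. This is precisely the obstacle the paper's proof works to overcome: the third oracle part $C$ must recursively encode answers to $\mathsf{NP}^{\mathcal{O}}$ machines (not just $\mathsf{NP}^A$ machines), so that $\mathsf{P}^\mathcal{O} = \mathsf{NP}^\mathcal{O}$ genuinely holds; the substantive content is then \Cref{lem:recursive_np_ac0}, which shows that despite this recursion, each bit of $C$ is still computable by a constant-depth $\mathsf{AC^0}$ circuit in $A$ and $B$, so that the Raz--Tal argument and the $\mathsf{AC^0}$ lower bound for $\PARITY$ still apply.
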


\Cref{thm:p=np_bqp=pp_intro} says, in effect, that there is no relativizing obstruction to $\mathsf{BQP}$ being inordinately powerful even while $\mathsf{NP}$ is inordinately weak. It substantially extends the Raz-Tal Theorem, that there is an oracle relative to which $\mathsf{BQP} \not\subset \mathsf{PH}$, to show that in some oracle worlds, $\mathsf{BQP}$ doesn't go just \textit{slightly} beyond the power of $\mathsf{PH}$ (which,
if $\mathsf{P} = \mathsf{NP}$, is simply the power of $\mathsf{P}$), but \textit{vastly} beyond it. Once again, this illustrates the difference between randomness and quantumness, because if $\mathsf{P} = \mathsf{NP}$, then $\mathsf{P} = \mathsf{BPP}$ for relativizing reasons.

We conjecture that \Cref{thm:p=np_bqp=pp_intro} could be extended yet further, to give an oracle relative to which $\mathsf{P} = \mathsf{NP}$ and yet $\mathsf{BQP} = \mathsf{EXP}$, but we leave that problem to future work.

\subsection{Proof Techniques}

We now give rough sketches of the important ideas needed to prove our results. Here, in contrast to \Cref{sec:results}, we present the results in the order that they appear in the main text, which is roughly in order of increasing technical difficulty.

Our proofs of \Cref{thm:np_in_bqp_ph_infinite_intro} and \Cref{thm:p=np_bqp=pp_intro} serve as useful warm-ups, giving a flavor for how we use the Raz-Tal Theorem and oracle construction techniques in later proofs. In \Cref{thm:np_in_bqp_ph_infinite_intro}, to construct an oracle where $\mathsf{BQP} = \mathsf{P^{\# P}}$ but $\mathsf{PH}$ is infinite, we start by taking a random oracle, which by the work of H\r{a}stad, Rossman, Servedio, and Tan \cite{HRST17, RST15} is known to make $\mathsf{PH}$ infinite. Then, for each $\mathsf{P^{\# P}}$ machine $M$, we add to the oracle an instance of the $\Forrelation$ problem that encodes the behavior of $M$: if $M$ accepts, we choose a Forrelated instance, while if $M$ rejects, we choose a uniformly random instance. This gives a $\mathsf{BQP}$ machine the power to decide any $\mathsf{P^{\# P}}$ language.\footnote{The careful reader might wonder: if we can encode the answers to $\mathsf{P^{\# P}}$ machines, then what is to stop us from encoding the answers to some arbitrarily powerful class, such as $\mathsf{EXP}$ or $\mathsf{RE}$, into the $\Forrelation$ instances? For a $\mathsf{P^{\# P}}$ machine $M$, we exploit the fact that we can always choose $\Forrelation$ instances on oracle strings that cannot be queried by $M$. For example, if $M$ runs in time $t$, then we can encode $M$'s output into strings of length $t^c$ for some $c > 1$, which remain accessible to a $\mathsf{BQP}$ machine with a larger polynomial running time. By contrast, if we tried to do the same for an $\mathsf{EXP}$ machine (say), we run into the problem that the machine whose behavior we are trying to encode could query the very encoding we are making of its output, and thus our oracle would be circularly defined.}

It remains to argue that adding these $\Forrelation$ instances does not collapse $\mathsf{PH}$. We want to show that relative to our oracle, for every $k$, there exists a language in $\mathsf{\Sigma}_{k+1}^\mathsf{P}$ that is not in $\mathsf{\Sigma}_{k}^\mathsf{P}$. This is where we leverage the Raz-Tal Theorem: because the $\Forrelation$ instances look random to $\mathsf{PH}$, we can show, by a hybrid argument, that a $\mathsf{\Sigma}_{k}^\mathsf{P}$ algorithm's probability of correctly deciding a target function in $\mathsf{\Sigma}_{k+1}^\mathsf{P}$ is roughly unchanged if we replace the $\Forrelation$ instances with uncorrelated, uniformly random bits. But auxiliary random bits cannot possibly improve the success probability, and so a simple appeal to \cite{HRST17} implies that the $\mathsf{\Sigma}_{k+1}^\mathsf{P}$ language remains hard for $\mathsf{\Sigma}_{k}^\mathsf{P}$.

\bigskip

The proof of \Cref{thm:p=np_bqp=pp_intro}, giving an oracle where $\mathsf{P} = \mathsf{NP} \neq \mathsf{BQP} = \mathsf{P^{\# P}}$, follows a similar recipe to the proof of \Cref{thm:np_in_bqp_ph_infinite_intro}. We start with a random oracle, which separates $\mathsf{PH}$ from $\mathsf{P^{\# P}}$, and then we add a second region of the oracle that puts $\mathsf{P^{\# P}}$ into $\mathsf{BQP}$ by encoding all $\mathsf{P^{\# P}}$ queries in instances of the $\Forrelation$ problem. Next, we add a third region of the oracle that answers all $\mathsf{NP}$ queries, which has the effect of collapsing $\mathsf{PH}$ to $\mathsf{P}$. Finally, we again leverage the Raz-Tal Theorem to argue that the $\Forrelation$ instances have no effect on the separation between $\mathsf{PH}$ and $\mathsf{P^{\# P}}$, because the $\Forrelation$ instances look random to $\mathsf{PH}$ algorithms.

\bigskip

We next prove \Cref{thm:bqp_sigma_k_intro}, that $\mathsf{\Sigma}_{k+1}^\mathsf{P} \not\subset \mathsf{BQP}^{\mathsf{\Sigma}_{k}^\mathsf{P}}$ relative to a random oracle. Our proof builds heavily on the proof by \cite{HRST17} that $\mathsf{\Sigma}_{k+1}^\mathsf{P} \not\subset \mathsf{\Sigma}_{k}^\mathsf{P}$ relative to a random oracle. Indeed, our proof is virtually identical, except for a single additional step.

\cite{HRST17}'s proof involves showing that there exists a function $\Sipser_{d}$ that is computable by a small $\mathsf{AC^0}$ circuit of depth $d$ (which corresponds to a $\mathsf{\Sigma}_{d-1}^\mathsf{P}$ algorithm), but such that any small $\mathsf{AC^0}$ circuit of depth $d-1$ (which corresponds to a $\mathsf{\Sigma}_{d-2}^\mathsf{P}$ algorithm) computes $\Sipser_{d}$ on at most a $\frac{1}{2} + o(1)$ fraction of random inputs. This proof uses random restrictions, or more accurately, a generalization of random restrictions called \textit{random projections} by \cite{HRST17}. Roughly speaking, the proof constructs a distribution $\mathcal{R}$ over random projections with the following properties:

\begin{enumerate}[(i)]
\item Any small $\mathsf{AC^0}$ circuit $C$ of depth $d-1$ ``simplifies'' with high probability under a random projection drawn from $\mathcal{R}$, say, by collapsing to a low-depth decision tree.
\item The target $\Sipser_d$ function ``retains structure''  with high probability under a random projection drawn from $\mathcal{R}$.
\item The structure retained in (ii) implies that the original unrestricted circuit $C$ fails to compute the $\Sipser_d$ function on a large fraction of inputs.
\end{enumerate}

To prove \Cref{thm:bqp_sigma_k_intro}, we generalize step (i) above from $\mathsf{\Sigma}_{d-2}^\mathsf{P}$ algorithms to $\mathsf{BQP}^{\mathsf{\Sigma}_{d-2}^\mathsf{P}}$ algorithms. That is, if we have a quantum algorithm that queries arbitrary depth-$(d-1)$ $\mathsf{AC^0}$ functions of the input, then we show that this algorithm's acceptance probability also ``simplifies'' under a random projection from $\mathcal{R}$.  We prove this by combining the BBBV Theorem \cite{BBBV97} with \cite{HRST17}'s proof of step (i).

\bigskip

We next move on to the proof of \Cref{thm:np^bqp_not_in_bqp^np_intro}, where we construct an oracle relative to which $\mathsf{NP^{BQP}}\not\subset\mathsf{BQP^{PH}}$. Recall that we prove \Cref{thm:np^bqp_not_in_bqp^np_intro} by showing that no $\mathsf{BQP}^\mathsf{PH}$ machine can solve the $\OR \circ \Forrelation$ problem. To establish this, imagine that we fix a ``no'' instance $x$ of the $\OR \circ \Forrelation$ problem, meaning that $x$ consists of a list of $\sim 2^n$ $\Forrelation$ instances that are all uniformly random (i.e. non-Forrelated). We can turn $x$ into an adjacent ``yes'' instance $y$ by randomly choosing one of the $\Forrelation$ instances of $x$ and changing it to be Forrelated.

Our proof amounts to showing that with high probability over $x$, an $\mathsf{AC^0}$ circuit of size $2^{\poly(n)}$ is unlikely (over $y$) to distinguish $x$ from $y$. Then, applying the BBBV Theorem \cite{BBBV97}, we can show that for most choices of $x$, a $\mathsf{BQP}^\mathsf{PH}$ algorithm is unlikely to distinguish $x$ from $y$, implying that it could not have solved the $\OR \circ \Forrelation$ problem.

Next, we notice that it suffices to consider what happens when, instead of choosing $y$ by randomly flipping one of the $\Forrelation$ instances of $x$ from uniformly random to Forrelated, we instead choose a string $z$ by randomly resampling one of the instances of $x$ from the uniform distribution. This is because, as a straightforward consequence of the Raz-Tal Theorem (\Cref{thm:raz-tal-informal-intro}), if $f$ is an $\mathsf{AC^0}$ circuit of size $2^{\poly(n)}$, then $\left|\Pr_y[f(x) \neq f(y)] - \Pr_{z}[f(x) \neq f(z)] \right| \le 2^{-\Omega(n)}$.

Our key observation is that the quantity $\Pr_{z}[f(x) \neq f(z)]$ is proportional to a sort of ``block sensitivity'' of $f$ on $x$. More precisely, it is proportional to an appropriate averaged notion of block sensitivity, where the average is taken over collections of blocks that respect the partition into separate $\Forrelation$ instances. This is where our block sensitivity concentration theorem comes into play:

\begin{theorem}[\Cref{cor:ac0_block_sensitivity_tail_bound}, informal]
\label{thm:ac0_block_sensitivity_intro}
Let $f: \{0,1\}^N \to \{0,1\}$ be an $\mathsf{AC^0}$ circuit of size $\quasipoly(N)$ and depth $O(1)$, and let $B = \{B_1,B_2,\ldots,B_k\}$ be a collection of disjoint subsets of $[N]$. Then for any $t$,
\[
\Pr_{x \sim \{0,1\}^N} \left[ \bs_B^x(f) \ge t \right] \le 4N\cdot 2^{-\Omega\left(\frac{t}{\polylog(N)}\right)},
\]
where $\bs_B^x(f)$ denotes the block sensitivity of $f$ on $x$ with respect to $B$.
\end{theorem}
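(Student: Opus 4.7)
The plan is to combine a Fourier-analytic bound on the \emph{expected} block sensitivity with a higher-moment argument that upgrades the expectation into a sub-exponential tail.

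First, I would set up the problem by writing
$$\bs_B^x(f) \;=\; \sum_{i=1}^k Y_i(x), \qquad Y_i(x) \,:=\, \mathbf{1}[f(x) \neq f(x \oplus e_{B_i})],$$
where $e_{B_i} \in \{0,1\}^N$ is the indicator vector of $B_i$. Viewing $f$ as $\pm 1$-valued, a routine Fourier computation gives
$$\E_x[Y_i] \;=\; \sum_{S \subseteq [N]\,:\, |S \cap B_i| \text{ odd}} \hat f(S)^2.$$
Summing over $i$ and using disjointness of the $B_i$'s,
$$\E_x[\bs_B^x(f)] \;\le\; \sum_S \hat f(S)^2 \cdot |S| \;=\; \mathbb{I}(f) \;=\; O((\log s)^{d-1}) \;=\; \polylog(N),$$
where the last estimate is Boppana's bound on the total influence of constant-depth $\mathsf{AC^0}$ circuits of quasipoly size $s$.

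Second, to boost the expectation into an exponential tail, I would estimate the $k$-th moment of $\bs_B^x(f)$ and apply Markov's inequality. Expanding
$$\E[\bs_B^x(f)^k] \;=\; \sum_{i_1,\ldots,i_k} \E\Bigl[\prod_{j=1}^k Y_{i_j}(x)\Bigr]$$
and Fourier-expanding each $Y_{i_j}$ produces a character sum that I would control via Tal's theorem on the level-$\ell$ Fourier $L_1$-weight of $\mathsf{AC^0}$, namely $L_{1,\ell}(f) := \sum_{|S|=\ell} |\hat f(S)| \le (\log s)^{O(d\ell)}$. Disjointness of the $B_i$'s is crucial here: it rules out cross-block cancellations and forces the moment to factorize block-by-block, which should give $\E[\bs_B^x(f)^k] \le (\polylog N)^{O(k)}$. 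Markov then yields $\Pr[\bs_B^x(f) \ge t] \le (C\polylog(N)/t)^k$, and optimizing over $k \approx t/\polylog(N)$ produces the claimed $2^{-\Omega(t/\polylog(N))}$ decay. The $4N$ prefactor is absorbed either by slack in Stirling-type combinatorial terms or by a trivial union bound over the $O(N)$ candidate witness positions in the circuit.

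The main obstacle will be the bookkeeping in the moment expansion: each product $\prod_j Y_{i_j}$ expands into $O(2^k)$ terms, each a product of $2k$ Fourier characters weighted by the block-parity signs $(-1)^{|T \cap B_{i_j}|}$, and one must organize the sum so that the level-$\ell$ $L_1$-bounds on $f$ apply multiplicatively along the $k$ blocks without blowing up the combinatorics. A potentially cleaner alternative, closer to the paper's ``random restriction'' theme, is to apply a $p$-random Håstad restriction with $p = 1/\polylog(N)$, collapse $f|_\rho$ to a decision tree of depth $D = O(\polylog(N))$ via iterated switching, and then bound $\bs_B^x(f)$ at a random completion by $D$ plus a coupling term that tracks blocks whose bits straddle the fixed/free partition. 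Either route should reach the stated bound; the switching-lemma route makes the $4N$ prefactor most transparent, arising from a union bound over canonical-decision-tree positions, while the moment route makes the $2^{-\Omega(t/\polylog(N))}$ decay most transparent.
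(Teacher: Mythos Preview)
Your moment approach has a real gap. The claim that disjointness of the $B_i$'s ``forces the moment to factorize block-by-block'' is unjustified: the indicators $Y_i(x)$ all depend on the \emph{same} $x$, so $\E\bigl[\prod_j Y_{i_j}\bigr]$ does not factor as $\prod_j \E[Y_{i_j}]$. Worse, the bound $\E[\bs_B^x(f)^k] \le (\polylog N)^{O(k)}$ as literally stated is too strong: if it held for all $k$ with a fixed base $M = \polylog N$, then Markov with $k \to \infty$ would force $\bs_B^x(f) \le M$ almost surely, which is false (e.g.\ the Tribes function has block sensitivity $\Theta(N/\log N)$ on its pivotal inputs). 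A correct moment bound must pick up a $k!$ or $k^{O(k)}$ factor, and neither Tal's $L_1$ bounds nor disjointness of the blocks supplies this for free; you would need a substantially more careful combinatorial argument that you have not sketched. Your random-restriction alternative is closer in spirit but is also only a sketch, and the ``coupling term that tracks blocks whose bits straddle the fixed/free partition'' is exactly where the difficulty lies.

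The paper's proof is entirely different and much shorter. It reduces $B$-block sensitivity to \emph{ordinary} sensitivity of an auxiliary function: define
\[
g(y,z) \;:=\; f\bigl(y \oplus z_1\cdot B_1 \oplus \cdots \oplus z_k\cdot B_k\bigr)
\]
on $N+k$ bits, observe that $g$ is still in $\mathsf{AC^0}$ of size $s+O(N)$ and depth $d+1$ (each input bit of $f$ is either a $y$-bit or the $\XOR$ of one $y$-bit with one $z$-bit, which costs one extra layer), and note that $\bs_B^x(f) \le \s^{(y,z)}(g)$ for every $(y,z)$ mapping to $x$. The result then follows immediately from a \emph{sensitivity} tail bound for $\mathsf{AC^0}$, which the paper proves separately via Rossman's random-restriction lemma (roughly: $\s^x(f) = p^{-1}\E_S[\s^{x|_S}(f_\rho)] \le p^{-1}\E_S[\D(f_\rho)]$, then control the tail of $\D(f_\rho)$). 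This auxiliary-function trick, which the paper credits to Tal, is the key idea you are missing; it sidesteps all the moment bookkeeping entirely and makes the $4N$ prefactor transparent (it is just $2(N+k) \le 4N$ from applying the sensitivity bound to $g$).
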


Informally, \Cref{thm:ac0_block_sensitivity_intro} says that the probability that an $\mathsf{AC^0}$ circuit has $B$-block sensitivity $t \gg \polylog(N)$ on a random input $x$ decays exponentially in $t$. This generalizes the result of Linial, Mansour, and Nisan \cite{LMN93} that the \textit{average} sensitivity of $\mathsf{AC^0}$ circuits is at most $\polylog(N)$. It also generalizes a concentration theorem for the sensitivity of $\mathsf{AC^0}$ circuits that appeared implicitly in the work of Gopalan, Servedio, Tal, and Wigderson \cite{GSTW16}, by taking $B$ to be the partition into singletons.\footnote{Interestingly, \cite{GSTW16}'s goal, in proving their concentration theorem for the sensitivity of $\mathsf{AC^0}$, was to make progress toward a proof of the famous \textit{Sensitivity Conjecture}---a goal that Huang \cite{Hua19} achieved shortly afterward using completely different methods.  One happy corollary of this work is that, nevertheless, \cite{GSTW16}'s attempt on the problem was not entirely in vain.} In fact, we derive \Cref{thm:ac0_block_sensitivity_intro} as a simple corollary of such a sensitivity tail bound for $\mathsf{AC^0}$. For completeness, we will also prove our own sensitivity tail bound, rather than appealing to \cite{GSTW16}. Our sensitivity tail bound follows from an $\mathsf{AC^0}$ random restriction lemma due to Rossman \cite{Ros17}.

\bigskip

To prove \Cref{thm:p=np_bqp!=qcma_intro}, which gives an oracle relative to which $\mathsf{P} = \mathsf{NP}$ but $\mathsf{BQP} \neq \mathsf{QCMA}$, we use a similar technique to the proof of \Cref{thm:p=np_bqp=pp_intro}. We first take the oracle constructed in \Cref{thm:np^bqp_not_in_bqp^np_intro} that contains instances of the $\OR \circ \Forrelation$ problem. Next, we add a second region of the oracle that answers all $\mathsf{NP}$ queries. This collapses $\mathsf{PH}$ to $\mathsf{P}$. Finally, we use \Cref{thm:np^bqp_not_in_bqp^np_intro} to argue that these $\mathsf{NP}$ queries do not enable a $\mathsf{BQP}$ machine to solve the $\OR \circ \Forrelation$ problem, which is in $\mathsf{QCMA}$.

\bigskip

We now move on to the proof of \Cref{thm:bqp^np_not_in_ph^bqp_intro}, that there exists an oracle relative to which $\mathsf{BQP}^\mathsf{NP} \not\subset \mathsf{PH}^\mathsf{BQP}$. Recall that our strategy is to show that no $\mathsf{PH}^\mathsf{BQP}$ machine can solve the $\Forrelation \circ \OR$ problem. We prove this by showing that with high probability, a $\mathsf{PH}^\mathsf{BQP}$ machine on a random instance of the $\Forrelation \circ \OR$ problem can be simulated by a $\mathsf{PH}$ machine, from which a lower bound easily follows from the Raz-Tal Theorem. This simulation hinges on the following theorem, which seems very likely to be of independent interest:

\begin{theorem}[\Cref{thm:sparse_aa_parameterized}, informal]
\label{thm:sparse_aa_intro}
Consider a quantum algorithm $Q$ that makes $T$ queries to an $M \times N$ array of bits $x$, where each length-$N$ row of $x$ contains a single uniformly random $1$ and $0$s everywhere else. Then for any $\eps \gg \frac{T}{\sqrt{N}}$ and $\delta > 0$, there exists a deterministic classical algorithm that makes $O\left(\frac{T^5}{\eps^4} \log \frac{T}{\delta}\right)$ queries to $x$, and approximates $Q$'s acceptance probability to within additive error $\eps$ on a $1 - \delta$ fraction of such randomly chosen $x$'s.
\end{theorem}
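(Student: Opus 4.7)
The plan is to adapt the Aaronson-Ambainis strategy of iteratively fixing influential variables, specialized to the sparse-oracle setting. Let $P(x) \in [0,1]$ denote the acceptance probability of $Q$ on oracle $x$. Since $Q$ makes $T$ queries, $P$ is a multilinear polynomial of degree at most $2T$ in the $MN$ bits of $x$. Under the sparse distribution, where each row independently contains a single uniformly random $1$, one can decompose $\mathrm{Var}_x[P(x)] = \sum_{j=1}^M I_j$, with $I_j$ the variance contributed by row $j$ in an orthogonal row-by-row decomposition. The classical algorithm will progressively reveal the most influential rows and output the conditional expectation of $P$ once the residual variance drops below $\eps^2$.

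The first technical step I would establish is a sparse analogue of the Aaronson-Ambainis variable-influence lemma: whenever the conditional variance of $P$ exceeds $\eps^2$, there exists a row $j^*$ whose influence $I_{j^*}$ is at least $\poly(\eps/T)$. Because $P$ has degree at most $2T$, the bulk of its non-constant mass lies on low-degree terms in the appropriate orthogonal basis over the sparse product distribution, and a hypercontractive or Markov-type inequality in that basis yields that some single row captures a $\poly(1/T)$ fraction of the total variance. Iterating, after $\poly(T/\eps)$ rounds the residual variance falls below the target threshold $\eps^2$.

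The second technical step concerns how the algorithm exploits an identified influential row using only few queries, since the final complexity bound carries no dependence on $N$. The crucial observation is that under the condition $\eps \gg T/\sqrt{N}$, a BBBV-type argument shows that the quantum algorithm's sensitivity to any row is concentrated on at most $\poly(T/\eps)$ positions within that row, namely those where the circuit places substantial query mass. These positions are determined by the (known) circuit $Q$ and can be enumerated without any queries to $x$. Querying just these positions, and using the one-hot structure of the row, suffices to estimate the row's contribution to $P$ up to additive error $\poly(\eps/T)$. Combining $\poly(T/\eps)$ iterations with $\poly(T/\eps)$ queries per iteration, plus a $\log(T/\delta)$ factor for amplification across iterations via Chebyshev and a union bound, yields the stated complexity $O(T^5/\eps^4 \log(T/\delta))$, with the precise exponents arising from tracking constants in the sparse Aaronson-Ambainis lemma.

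The main obstacle will be proving the sparse Aaronson-Ambainis lemma itself, since the unrestricted Aaronson-Ambainis conjecture remains open; any polynomial bound on influences must exploit the specific sparse product structure, where one-hot rows interact only through a low-degree polynomial and the condition $\eps \gg T/\sqrt{N}$ rules out Grover-like concentration of variance on rare row positions. A secondary difficulty is that the algorithm cannot recover the exact position of the $1$ in any row with few queries, so it must be designed to approximate conditional means of $P$ rather than to learn the oracle outright; verifying that these estimated conditional means remain $\eps$-accurate across the entire iterated procedure, with high probability over the random sparse input, will require a careful error-propagation analysis that couples the BBBV bound to the variance-reduction guarantee of the influence lemma.
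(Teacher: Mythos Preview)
Your proposal takes a substantially harder route than the paper and leaves the central step as an unproved conjecture. You propose to adapt the Aaronson--Ambainis influence strategy: decompose the variance of $P(x)$ over rows, repeatedly locate a row of influence at least $\poly(\eps/T)$, and query inside it. But the existence of such a high-influence coordinate is precisely the content of the Aaronson--Ambainis conjecture, and you do not actually show that sparsity rescues it; you only assert that ``a hypercontractive or Markov-type inequality in that basis'' should work. That is the whole difficulty, and it is not addressed. Nor is it clear how the specific bound $O(T^5/\eps^4\log(T/\delta))$ would fall out of your iteration; you attribute the exponents to ``tracking constants'' in a lemma you have not stated precisely.

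The paper's argument is much more elementary and bypasses influences entirely. It simulates $Q$ query by query: before the $t$th query, the classical simulator knows (a description of) the current state, and hence the query magnitudes $q_{i,j}$ on each cell. It simply queries every cell with $q_{i,j}\ge 1/K$ (at most $K$ of them) and pretends all remaining cells are $0$. The point is that under the sparse distribution, the total query mass landing on unqueried $1$-cells is a sum of independent bounded contributions, one per row, each at most $1/K$ and with total expectation at most $1/N$; Hoeffding then bounds the probability that this mass exceeds $\alpha^2/4$, which via the BBBV bound controls the Euclidean deviation of the simulated state. A hybrid over the $T$ steps plus a union bound finishes it, and choosing $K=\Theta(T^4\eps^{-4}\log(T/\delta))$ yields the stated $KT$ query count. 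No variance decomposition, no influential-row lemma, no adaptivity beyond the step-by-step simulation is needed; the sparse structure enters only through the Hoeffding step.
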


Informally, \Cref{thm:sparse_aa_intro} says that any fast enough quantum algorithm can be simulated by a deterministic classical algorithm, with at most a polynomial blowup in query complexity, on almost all sufficiently sparse oracles. The crucial point here is that the classical simulation still needs to work, even in most cases where the quantum algorithm is lucky enough to find many `$1$' bits.  We prove \Cref{thm:sparse_aa_intro} via a combination of tail bounds and the BBBV hybrid argument \cite{BBBV97}.

In the statement of \Cref{thm:sparse_aa_intro}, we do not know whether the exponent of $5$ on $T$ is tight, and suspect that it isn't.  We only know that the exponent needs to be at least $2$, because of Grover's algorithm \cite{Gro96}.

We remark that \Cref{thm:sparse_aa_intro} bears similarity to a well-known conjecture that involves simulation of quantum query algorithms by classical algorithms. A decade ago, motivated by the question of whether $\mathsf{P}=\mathsf{BQP}$ relative to a random oracle with probability $1$, Aaronson and Ambainis \cite{AA14} proposed the following conjecture:

\begin{conjecture}[{\cite[Conjecture 1.5]{AA14}; attributed to folklore}]
\label{conj:aaronson_ambainis}
Consider a quantum algorithm $Q$ that makes $T$ queries to $x \in \{0,1\}^N$. Then for any $\eps, \delta > 0$, there exists a deterministic classical algorithm that makes $\poly\left(T, \frac{1}{\eps}, \frac{1}{\delta}\right)$ queries to $x$, and approximates $Q$'s acceptance probability to within additive error $\eps$ on a $1 - \delta$ fraction of uniformly randomly inputs $x$.
\end{conjecture}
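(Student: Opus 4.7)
The plan is to reduce the conjecture to a structural statement about low-degree bounded polynomials. By the polynomial method, the acceptance probability $p(x) := \Pr[Q \text{ accepts } x]$ is a multilinear polynomial of degree at most $2T$ taking values in $[0,1]$. I would construct an adaptive classical decision tree that, at each internal node corresponding to a restriction $\rho$, queries a coordinate $i^*$ maximizing the influence $\mathrm{Inf}_{i^*}(p_\rho)$, and at each leaf outputs the mean $\E[p_\rho]$. A standard Fourier identity shows that the expected residual variance $\E_\rho[\mathrm{Var}(p_\rho)]$ drops by $\mathrm{Inf}_{i^*}(p_\rho)$ at each step. Once $\E_\rho[\mathrm{Var}(p_\rho)] \le \eps^2 \delta$, Markov's inequality yields the desired $\eps$-approximation on a $1-\delta$ fraction of inputs, and the simulator's query complexity is exactly the tree depth.

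The central technical content I would need is a \emph{variance-vs-max-influence} inequality of the form $\max_i \mathrm{Inf}_i(p) \ge \mathrm{Var}(p)/\poly(d)$ for every multilinear $p \colon \{0,1\}^N \to [0,1]$ of degree $d$. Given this, a simple induction bounds the tree depth by $\poly(T, 1/\eps, 1/\delta)$, crucially independent of $N$. The naive Fourier bound $\sum_i \mathrm{Inf}_i(p) = \sum_S |S|\, \widehat{p}(S)^2 \le d \cdot \mathrm{Var}(p)$ only yields $\max_i \mathrm{Inf}_i \ge (d/N)\,\mathrm{Var}(p)$, which is useless. The hoped-for improvement should come from combining the $[0,1]$-boundedness constraint with hypercontractivity, which forces $\|p\|_4$ to stay within a $d$-dependent factor of $\|p\|_2$ and so should concentrate Fourier mass onto a $\poly(d)$-size set of heavy coordinates.

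The hard part, and the reason this conjecture has been open for over a decade, is precisely this variance-vs-max-influence inequality. The sparse-oracle proof of \Cref{thm:sparse_aa_intro} sidesteps the issue entirely: sparsity lets a BBBV-style hybrid argument locate the rare positions that actually matter, whereas on the unrestricted cube BBBV bounds only the \emph{sum} of query weights, never an individual influence. Moreover, a proof that treats $p$ merely as an arbitrary bounded degree-$d$ polynomial would automatically extend the conjecture to that broader setting, and tribes-like examples suggest any such extension must be extraordinarily tight. Progress may therefore require exploiting structure specific to quantum acceptance probabilities, such as their completely-bounded operator norm characterization, rather than arguing purely in the Fourier domain over generic bounded low-degree polynomials.
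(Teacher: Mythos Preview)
The statement you are addressing is a \emph{conjecture} in the paper, not a theorem; the paper explicitly says ``it remains open to this day.'' There is therefore no proof in the paper to compare against. Your proposal accurately reproduces the standard reduction (due to Aaronson and Ambainis) from the query-complexity formulation to the Fourier-analytic \emph{variance-vs-max-influence} inequality for bounded low-degree polynomials, and you correctly identify that inequality as the open obstruction. You are also right that the paper's sparse-oracle result sidesteps this entirely via a BBBV-style argument exploiting sparsity, which is unavailable for the uniform cube. In short: your write-up is not a proof and does not pretend to be one; your diagnosis of where the difficulty lies is accurate and matches the paper's own framing of the conjecture as open.
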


While \Cref{conj:aaronson_ambainis} has become influential in Fourier analysis of Boolean functions,\footnote{In the context of Fourier analysis, the Aaronson-Ambainis Conjecture usually refers to a closely-related conjecture about influences of bounded low-degree polynomials; see e.g. \cite{Mon12,OZ16}. Aaronson and Ambainis \cite{AA14} showed that this related conjecture implies \Cref{conj:aaronson_ambainis}.} it remains open to this day. \Cref{thm:sparse_aa_intro} could be seen as \textit{the analogue of \Cref{conj:aaronson_ambainis} for sparse oracles}---an analogue that, because of the sparseness, turns out to be much easier to prove.

\bigskip

We conclude with the proof of \Cref{thm:pp_not_in_qmah_intro}, showing that $\mathsf{PP}$ is not contained in the $\mathsf{QMA}$ hierarchy relative to a random oracle. This is arguably the most technically involved part of this work. Recall that our key contribution, and the most important step of our proof, is a random restriction lemma for quantum query algorithms. In fact, we even prove a random restriction lemma for functions with low \emph{quantum Merlin-Arthur $(\mathsf{QMA})$ query complexity}: that is, functions $f$ where a verifier, given an arbitrarily long ``witness state,'' can become convinced that $f(x) = 1$ by making few queries to $x$. Notably, our definition of $\mathsf{QMA}$ query complexity does not care about the length of the witness, but only the number of queries made by the verifier. This property allows us to extend our results to complexity classes beyond $\mathsf{QMA}$, such as $\mathsf{QMIP}$.

An informal statement of our random restriction lemma is given below:

\begin{theorem}[\Cref{thm:qma_random_restriction}, informal]
\label{thm:qma_random_restriction_intro}
Consider a partial function $f: \{0,1\}^N \to \{0,1,\bot\}$ with $\mathsf{QMA}$ query complexity at most $\polylog(N)$. For some $p = \frac{1}{\sqrt{N}\polylog(N)}$, let $\rho$ be a random restriction that leaves each variable unrestricted with probability $p$. Then $f_\rho$ is $\frac{1}{\quasipoly(N)}$-close, in expectation over $\rho$, to a $\polylog(N)$-width DNF formula.\footnote{By saying that $f_\rho$ is ``close'' to a DNF formula, we mean that there exists a DNF $g$ depending on $\rho$ such that the fraction of inputs on which $f_\rho$ and $g$ agree is $1 - \frac{1}{\quasipoly(N)}$, in expectation over $\rho$. In \Cref{sec:closeness}, we introduce some additional notation and terminology that makes it easier to manipulate such expressions, but we will not use them in this exposition.}
\end{theorem}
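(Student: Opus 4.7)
The plan is to establish a quantum analogue of Hastad's switching lemma by combining Marriott--Watrous amplification with a BBBV-style analysis of the verifier's query distribution and carefully tuned Chernoff bounds.

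First, I amplify the $\mathsf{QMA}$ protocol so that every yes-input admits a witness accepting with probability $\geq 1 - 2^{-T}$, while every no-input rejects every witness with probability $\geq 1 - 2^{-T}$; the query complexity remains $T = \polylog(N)$. For each yes-input $x$, fix an optimal witness $|\psi_x\rangle$ and let $q_s^{(x)}$ denote the total squared query amplitude placed on coordinate $s$ when the verifier runs on oracle $x$ with this witness, so that $\sum_s q_s^{(x)} \leq T$. Define the heavy set $H_x = \{s \in [N] : q_s^{(x)} \geq \tau\}$ with threshold $\tau = \Theta(1/\sqrt{N})$, giving $|H_x| \leq T/\tau = O(\sqrt{N}\polylog(N))$.

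Second, I study the interaction of $H_x$ with the unrestricted set $U$ under the random restriction $\rho$. Because $p = 1/(\sqrt{N}\polylog(N))$ is chosen precisely so that $\mathbb{E}[|H_x \cap U|] = p|H_x| = O(1)$, a Chernoff bound gives $|H_x \cap U| \leq \polylog(N)$ with probability $1 - 1/\quasipoly(N)$ over $\rho$. Moreover, for any $x'$ that agrees with $x$ on $H_x$---which holds automatically on the restricted coordinates, and will be enforced on the unrestricted heavy coordinates by the DNF clause below---the BBBV bound applied to the light coordinates in $U \setminus H_x$ yields an acceptance probability change of at most $2\sqrt{T \cdot |U| \cdot \tau} \leq 1/3$. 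Hence $V^{x'}(|\psi_x\rangle)$ still accepts with probability $\geq 1/2$, certifying $f(x') = 1$.

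Third, I construct $g_\rho$ as the DNF whose clauses are, for each yes-input $x$ of $f$ satisfying $|H_x \cap U| \leq \polylog(N)$, the conjunction $\bigwedge_{s \in H_x \cap U} [z_s = x_s]$. By the argument above, $g_\rho$ has no false positives and width at most $\polylog(N)$. The false-negative rate is the expected fraction of yes-inputs whose restricted heavy sets exceed the target width. By exchanging the order of expectation between $\rho$ and the unrestricted part $\xi$---since together they form a uniformly random $x \in \{0,1\}^N$---this quantity equals $\Pr_{x, \rho}[f(x) = 1 \text{ and } |H_x \cap U| > \polylog(N)]$, which is at most $1/\quasipoly(N)$ by the Chernoff bound of the previous step.

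The main obstacle is calibrating the heaviness threshold $\tau$ so that two competing requirements are simultaneously met: $\tau$ must be small enough that the BBBV bound $2\sqrt{T |U| \tau}$ certifies consistency on light coordinates, yet $|H_x| = T/\tau$ must remain small enough that $p|H_x| = O(1)$, keeping the clause width controllable. The choice $\tau = \Theta(1/\sqrt{N})$ together with $p = \Theta(1/(\sqrt{N}\polylog(N)))$ is exactly what makes both work. Extending to $\mathsf{MIP}^*$---needed for the main $\mathsf{QMA}$ hierarchy lower bound---relies on the fact that this random restriction argument depends only on the verifier's query behavior and not on the structure of the (possibly entangled) witnesses, which I expect to carry through essentially unchanged.
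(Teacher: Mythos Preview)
Your approach is correct and essentially the same as the paper's. Both proofs define a heaviness threshold on the verifier's query magnitudes (you set $\tau = \Theta(1/\sqrt{N})$; the paper sets $\tau = 2Tp/k$, which comes out to the same order), use a Chernoff bound to show the heavy coordinates surviving the restriction are few, apply BBBV on the light coordinates to certify that the clause is a valid $1$-certificate, build the DNF as the OR of these small certificates, and finish by the same exchange-of-expectations trick (the paper phrases this as sampling $\rho$ via a uniform $z$ together with a random $S$). The paper factors the argument into a standalone ``random restriction for $\mathsf{BQP}$'' lemma before specializing to $\mathsf{QMA}$, and it skips Marriott--Watrous amplification (unnecessary since a $1/4$ BBBV shift already preserves the $(2/3,1/3)$ gap), but these are cosmetic differences.
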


An unusual feature of \Cref{thm:qma_random_restriction_intro} is that we can only show that $f_\rho$ is \textit{close} to a simple function \textit{in expectation}. By contrast, H\r{a}stad's switching lemma for DNF formulas \cite{Has87} shows that the restricted function reduces to a simple function \textit{with high probability}, so in some sense our result is weaker. Additionally, unlike the switching lemma, our result has a quantitative dependence on the number of inputs $N$. Whether this dependence can be removed (so that the bound depends only on the number of queries) remains an interesting problem for future work.

With \Cref{thm:qma_random_restriction_intro} in hand, proving that $\mathsf{PP} \not\subset \mathsf{QMA}^{\mathsf{QMA}^{\mathsf{QMA}^{\cdots}}}$ relative to a random oracle is conceptually analogous to the proof that $\mathsf{PP} \not\subset \mathsf{PH}$ relative to a random oracle \cite{Has87}. We first view a $\mathsf{QMA}^{\mathsf{QMA}^{\mathsf{QMA}^{\cdots}}}$ machine as a small constant-depth circuit in which the gates are functions of low $\mathsf{QMA}$ query complexity. Then we want to argue that the probability that such a circuit agrees with the $\PARITY$ function on a random input is small. We accomplish this via repeated application of \Cref{thm:qma_random_restriction_intro}, interleaved with H\r{a}stad's switching lemma for DNF formulas \cite{Has87}.

To elaborate further, we first take a random restriction that, by \Cref{thm:qma_random_restriction_intro}, turns all of the bottom-layer $\mathsf{QMA}$ gates into DNF formulas. Next, we apply another random restriction and appeal to the switching lemma to argue that these DNFs reduce to functions of low decision tree complexity, which can be absorbed into the next layer of $\mathsf{QMA}$ gates. Finally, we repeat as many times as needed until the entire circuit collapses to a low-depth decision tree. Since the $\PARITY$ function reduces to another $\PARITY$ function under any random restriction, we conclude that this decision tree will disagree with the reduced $\PARITY$ function on a large fraction of inputs, and hence the original circuit must have disagreed with the $\PARITY$ function on a large fraction of inputs as well.

Of course, the actual proof of \Cref{thm:pp_not_in_qmah_intro} is more complicated because of the accounting needed to bound the error introduced from \Cref{thm:qma_random_restriction_intro}, but all of the important concepts are captured above.

We end with a few remarks on the proof ideas needed for \Cref{thm:qma_random_restriction_intro}. Essentially, the first step involves proving that if we take a function $f$ computed by a quantum query algorithm $Q$, a random restriction $\rho$, and a uniformly random input $x$ to $f_\rho$, then $x$ likely contains a small set $K$ of ``influential'' variables. These influential variables have the property that for any string $y$ that agrees with $x$ on $K$, $\left|\Pr[Q(x) = 1] - \Pr[Q(y) = 1]\right|$ is bounded by a small constant. Hence, $K$ serves as a certificate for $f_\rho$'s behavior on $x$.

Proving that such a $K$ usually exists amounts to a careful application of the BBBV Theorem \cite{BBBV97}; the reader may find the details in \Cref{thm:bqp_random_restriction}. Finally, we generalize from quantum query algorithms to arbitrary $\mathsf{QMA}$ query algorithms by observing that we only need to keep track of the certificates for inputs $x$ such that $f_\rho(x) = 1$. The DNF we obtain in \Cref{thm:qma_random_restriction_intro} is then simply the $\OR$ of all of these small $1$-certificates.

\section{Preliminaries}
\subsection{Notation and Basic Tools}
We denote by $[N]$ the set $\{1,2,\ldots,N\}$. For a finite set $S$, $|S|$ denotes the size of $S$. If $\mathcal{D}$ is a probability distribution, then $x \sim \mathcal{D}$ means that $x$ is a random variable sampled from $\mathcal{D}$. If $v$ is a real or complex vector, then $||v||$ denotes the Euclidean norm of $v$.

We use $\poly(n)$ to denote an arbitrary polynomially-bounded function of $n$, i.e. a function $f$ for which there is a constant $c$ such that $f(n) \le n^c$ for all sufficiently large $n$. Likewise, we use $\polylog(n)$ for an arbitrary $f$ satisfying $f(n) \le \log(n)^c$ for all sufficiently large $n$, and $\quasipoly(n)$ for an arbitrary $f$ satisfying $f(n) \le 2^{\log(n)^c}$ for all sufficiently large $n$.

For a string $x \in \{0,1\}^N$, $|x|$ denotes the length of $x$. Additionally, if $i \in [N]$, then $x^{\oplus i}$ denotes the string obtained from $x$ by flipping the $i$th bit. Similarly, if $S \subseteq [N]$, then $x^{\oplus S}$ denotes the string obtained from $x$ by flipping the bits corresponding to all indices in $S$. For sets $S \subseteq [N]$, we sometimes use $\{0,1\}^S$ to denote mappings from $S$ to $\{0,1\}$; these may equivalently be identified with strings in $\{0,1\}^{|S|}$ obtained by concatenating the bits of the mapping in order. We denote by $x|_S$ the string in $\{0,1\}^S$ obtained by restricting $x$ to the bits indexed by $S$.

We view partial Boolean functions as functions of the form $f: S \to \{0,1,\bot\}$, where the domain of $f$ is $\Dom(f) \coloneqq \left\{ x \in S : f(x) \in \{0,1\} \right\}$. We use $\bot$ (instead of $*$) to refer to the evaluation of $f$ on inputs outside the domain so as to avoid conflicting with our notation for random restrictions; see \Cref{sec:random_restrictions} below.

We use the following forms of the Chernoff bound:

\begin{fact}[Chernoff bound]
\label{fact:chernoff}
Suppose $X_1,\ldots,X_n$ are independent identically distributed random variables where $X_i = 1$ with probability $p$ and $X_i = 0$ with probability $1-p$. Let $X = \sum_{i=1}^n X_i$ and let $\mu = \E[X] = pn$. Then for all $\delta \ge 0$ it holds that:
\begin{align*}
\Pr\left[X \ge (1 + \delta)\mu\right] &\le e^{-\frac{\delta^2\mu}{2+\delta}},\\
\Pr\left[X \le (1 - \delta)\mu\right] &\le e^{-\frac{\delta^2\mu}{2}},
\end{align*}
Additionally, if $\delta \le 1$, then we may use the weaker bound:
\[
\Pr\left[X \ge (1 + \delta)\mu\right] \le e^{-\frac{\delta^2\mu}{3}}.
\]
\end{fact}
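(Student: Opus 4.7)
The plan is to use the standard exponential moment method (a.k.a.\ Chernoff's trick): apply Markov's inequality to $e^{tX}$ for an appropriately chosen parameter $t$, exploit independence of the $X_i$ to factor the moment generating function, and then optimize $t$. All three claimed bounds follow from the same ``master'' inequality after elementary analytic manipulation of the exponent.

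Concretely, for the upper tail I would first fix $t>0$ and write
\[
\Pr[X\ge(1+\delta)\mu]=\Pr[e^{tX}\ge e^{t(1+\delta)\mu}]\le\frac{\E[e^{tX}]}{e^{t(1+\delta)\mu}}.
\]
Independence of the $X_i$ then gives $\E[e^{tX}]=(1-p+pe^t)^n$, and the standard inequality $1+x\le e^x$ bounds this by $e^{\mu(e^t-1)}$. Choosing $t=\ln(1+\delta)$ to minimize the resulting exponent yields the master Chernoff inequality
\[
\Pr[X\ge(1+\delta)\mu]\le\left(\frac{e^{\delta}}{(1+\delta)^{1+\delta}}\right)^{\!\mu}=\exp\!\bigl(\mu\bigl(\delta-(1+\delta)\ln(1+\delta)\bigr)\bigr).
\]

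The two stated upper-tail bounds then reduce to analytic estimates on the function $\varphi(\delta):=\delta-(1+\delta)\ln(1+\delta)$. Using the inequality $\ln(1+\delta)\ge\tfrac{2\delta}{2+\delta}$ (valid for all $\delta\ge 0$, and checkable by differentiating the difference) gives $\varphi(\delta)\le-\delta^2/(2+\delta)$, which produces the first stated bound. For the weaker form valid on $\delta\in[0,1]$, one instead uses a second-order Taylor estimate to obtain $\varphi(\delta)\le-\delta^2/3$. The lower-tail bound is proved symmetrically: apply Markov's inequality to $e^{-tX}$ with $t>0$, optimize at $t=-\ln(1-\delta)$, and then use the inequality $(1-\delta)\ln(1-\delta)\ge-\delta+\delta^2/2$ on $\delta\in[0,1]$ to clean up the exponent into $-\delta^2\mu/2$.

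Since every step is textbook, there is no real mathematical obstacle; the only care needed is in the elementary verification of the three transcendental inequalities that turn $\varphi(\delta)$ (and its lower-tail analogue) into the stated rational exponents, and these are routine single-variable calculus. In fact, because the Fact is a standard tool, an equally acceptable option is simply to cite a reference such as Mitzenmacher--Upfal or Dubhashi--Panconesi in lieu of reproducing any of this.
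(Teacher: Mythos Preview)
Your proof is correct and follows the standard textbook approach. The paper itself does not prove this statement at all---it is stated as a ``Fact'' without proof, treated as a standard tool from the literature---so your final remark that one could simply cite a reference is exactly what the paper does.
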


We also require Hoeffding's inequality, which generalizes \Cref{fact:chernoff} to sums of arbitrary independent bounded random variables:
\begin{fact}[Hoeffding's inequality]
\label{fact:hoeffding}
Suppose $X_1,\ldots,X_n$ are independent random variables subject to $a_i \le X_i \le b_i$ for all $i$. Let $X = \sum_{i=1}^n X_i$ and let $\mu = \E[X]$. Then for all $\delta \ge 0$ it holds that:
\[
\Pr\left[X \ge (1 + \delta)\mu\right] \le \exp\left(-\frac{2\delta^2\mu^2}{ \sum_{i=1}^n (b_i - a_i)^2}\right).
\]

\end{fact}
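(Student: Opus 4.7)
The plan is to prove Hoeffding's inequality via the standard exponential moment (Chernoff) method, reducing the problem to a single-variable tail bound known as Hoeffding's lemma. First, for any parameter $s > 0$, I would apply Markov's inequality to the nonnegative random variable $e^{s(X - \mu)}$:
\[
\Pr[X \ge (1 + \delta)\mu] = \Pr\!\left[e^{s(X - \mu)} \ge e^{s \delta \mu}\right] \le e^{-s \delta \mu} \cdot \E\!\left[e^{s(X - \mu)}\right].
\]
By independence of the $X_i$, the expectation factorizes as $\E[e^{s(X - \mu)}] = \prod_{i=1}^n \E[e^{s(X_i - \E[X_i])}]$, so the task reduces to bounding each centered single-variable exponential moment on its own.

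The key ingredient is Hoeffding's lemma: for any random variable $Y$ with $\E[Y] = 0$ and $a \le Y \le b$ almost surely, one has $\E[e^{sY}] \le \exp\!\bigl(s^2(b - a)^2/8\bigr)$. I would prove this by setting $\psi(s) = \log \E[e^{sY}]$, observing $\psi(0) = \psi'(0) = 0$, and showing that $\psi''(s)$ equals the variance of $Y$ under the tilted probability measure $dP_s \propto e^{sY}\, dP$. Since the tilted random variable remains supported in $[a,b]$, its variance is at most $(b - a)^2/4$, this being the maximum variance of any random variable supported in an interval of length $b - a$ (attained by the two-point distribution on the endpoints). A second-order Taylor expansion then gives $\psi(s) \le s^2(b - a)^2/8$. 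Applying this to each centered variable $X_i - \E[X_i]$, which lies in an interval of length at most $b_i - a_i$, yields
\[
\E\!\left[e^{s(X - \mu)}\right] \le \exp\!\left(\frac{s^2}{8} \sum_{i=1}^n (b_i - a_i)^2\right).
\]

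Combining this with the Markov step produces an upper bound of $\exp\!\bigl(-s \delta \mu + \tfrac{s^2}{8} \sum_i (b_i - a_i)^2\bigr)$, which is quadratic in $s$. Optimizing by choosing $s = 4 \delta \mu / \sum_i (b_i - a_i)^2$ gives exactly the stated bound $\exp\!\bigl(-2 \delta^2 \mu^2 / \sum_i (b_i - a_i)^2\bigr)$. The main obstacle is Hoeffding's lemma itself, specifically the bound $\psi''(s) \le (b - a)^2/4$ on the tilted variance; everything surrounding it is the same routine Chernoff-style optimization used to prove \Cref{fact:chernoff}, and indeed one recovers the standard Chernoff bound as the special case where each $X_i$ is supported in $\{0,1\}$ (though the Chernoff bound in \Cref{fact:chernoff} exploits the multiplicative structure of $\{0,1\}$-valued variables to yield a slightly sharper constant in the exponent).
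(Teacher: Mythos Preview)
Your proof is correct and follows the standard exponential-moment argument for Hoeffding's inequality. Note, however, that the paper does not actually prove this statement: it is stated as a \texttt{Fact} in the preliminaries and used as a black-box tool, so there is no ``paper's own proof'' to compare against. Your write-up is exactly the textbook derivation one would expect.
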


\subsection{Complexity Classes and Oracles}
\label{sec:complexity_classes}
We assume familiarity with basic complexity classes, including: $\mathsf{P}$, $\mathsf{NP}$, $\mathsf{PH} = \bigcup_{k=0}^\infty \mathsf{\Sigma}_k^\mathsf{P}$, $\mathsf{PP}$, $\mathsf{P}^{\mathsf{\# P}}$, $\mathsf{BQP}$, $\mathsf{QCMA}$, and $\mathsf{QMA}$; see e.g. the Complexity Zoo\footnote{\url{https://complexityzoo.net/Complexity_Zoo}} for definitions. For one of these complexity classes $\mathcal{C}$, $\mathsf{Promise}\mathcal{C}$ denotes the corresponding class of promise problems. Recall that a promise problem can be viewed as a partial function $\Pi: \{0,1\}^* \to \{0,1,\bot\}$. We say that a language $A: \{0,1\}^* \to \{0,1\}$ \textit{extends} $\Pi$ if for all $x \in \Dom(\Pi)$, $\Pi(x) = A(x)$.

In this work, we follow the convention that an \textit{algorithm} refers to a (possibly probabilistic) abstract procedure, while a \textit{machine} refers to a computational problem (either a language or promise problem) that is decided by an algorithm. For example, if $\mathcal{A}$ is a polynomial-time quantum algorithm, and $M$ is the $\mathsf{PromiseBQP}$ machine corresponding to $\mathcal{A}$, then this means that $M: \{0,1\}^* \to \{0,1,\bot\}$ is defined by:
\[
M(x) \coloneqq \begin{cases}
0 & \Pr[\mathcal{A}(x) = 1] \le \frac{1}{3},\\
1 & \Pr[\mathcal{A}(x) = 1] \ge \frac{2}{3},\\
\bot & \text{otherwise}.
\end{cases}
\]
Note that, while $\mathcal{A}$ defines a probabilistic procedure, $M$ has no further randomness in its definition after rounding the acceptance probabilities of $\mathcal{A}$.

We frequently make use of complexity classes augmented with oracles, where we use the standard notation that $\mathcal{C}^\mathcal{O}$ denotes a complexity class $\mathcal{C}$ augmented with oracle $\mathcal{O}$. We also consider oracles for promise problems in the standard way: if $\mathcal{C}$ is a complexity class and $\Pi$ a promise problem, then a language (or promise problem) $L$ is in $\mathcal{C}^\Pi$ if there exists a $\mathcal{C}$ oracle machine $M$ such that, for every language $A$ that extends $\Pi$, $M^A$ decides $L$. We also take this as a definition of $M^\Pi$:
\[
M^\Pi(x) \coloneqq \begin{cases}
0 & M^L(x) = 0 \text{ for every language } L \text{ that extends } \Pi,\\
1 & M^L(x) = 1 \text{ for every language } L \text{ that extends } \Pi,\\
\bot & \text{otherwise}.
\end{cases}
\]

We remark that it is not clear to us if this is the ``right'' way to define queries to a promise problem for quantum complexity classes, such as $\mathsf{BQP}^\Pi$ or $\mathsf{QMA}^\Pi$. If we have query access to some quantum algorithm that ``solves'' a promise problem, that algorithm could conceivably behave arbitrarily (even non-unitarily) on the inputs outside of the promise: there is no guarantee that it decides some language, as we assume above. However, since we are chiefly interested in proving \textit{lower bounds} on $\mathsf{QMA}$ complexity, this distinction makes little difference to us: our choice of definition could only possibly make the class $\mathsf{QMA}^\Pi$ \textit{more} powerful than if the oracle could have worse behavior on non-promise inputs. Whether complexity classes such as $\mathsf{BQP}^\Pi$ and $\mathsf{QMA}^\Pi$ are robust with respect to the notion of promise problem queries remains an interesting question for future work.\footnote{Nevertheless, we are not the first to apply this notion of promise problem queries to quantum complexity classes: Aaronson and Drucker \cite{AD14} use the same definition for $\mathsf{QMA}^\Pi$.}

If $\mathcal{C}$ and $\mathcal{D}$ are both complexity classes, then $\mathcal{C}^\mathcal{D} \coloneqq \bigcup_{L \in \mathcal{D}} \mathcal{C}^L$. A tower of relativized complexity classes such as $\mathcal{C}^{\mathcal{D}^\mathcal{O}}$ should be understood as $\mathcal{C}^{\left(\mathcal{D}^\mathcal{O}\right)}$, which is to say that notation for relativization, like exponentiation, is right associative. Nevertheless, in such cases, we can always view a language (or promise problem) in a complexity class such as $\mathcal{C}^{\mathcal{D}^\mathcal{O}}$ as being specified by a $\mathcal{C}^\mathcal{D}$ oracle machine. For example, a language in $\mathsf{P}^{\mathsf{NP}^\mathcal{O}}$ is uniquely specified by (1) a $\mathsf{P}$ oracle machine $A$ (a polynomial-time deterministic oracle Turing machine), (2) an $\mathsf{NP}$ oracle machine $B$ (a polynomial-time nondeterministic oracle Turing machine), and (3) the oracle $\mathcal{O}$. In such cases, we may refer to the pair $\langle A, B \rangle$ as a $\mathsf{P}^\mathsf{NP}$ oracle machine.

We define a quantum analogue of the polynomial hierarchy that we call $\mathsf{QMAH}$, and denote by $\mathsf{PromiseQMAH}$ the promise version of this class. Let $\mathsf{PromiseQMAH}_1 = \mathsf{PromiseQMA}$, and for $k > 1$ we recursively define:
\begin{align*}
\mathsf{PromiseQMAH}_k &\coloneqq \mathsf{PromiseQMA}^{\mathsf{PromiseQMAH}_{k-1}},
\end{align*}
Then, analogous to $\mathsf{PH}$, we take:
\begin{align*}
\mathsf{PromiseQMAH} &\coloneqq \bigcup_{k=1}^\infty \mathsf{PromiseQMAH}_k
\end{align*}
$\mathsf{QMAH}$ denotes the set of languages in $\mathsf{PromiseQMAH}$.

We note that there are many other possible ways to define a quantum analogue of the polynomial hierarchy (see \cite{GSS+18}), and that our definition appears to differ from all others that we are aware of. The definition we give is closest in spirit to a class called $\mathsf{BQPH}$ by Vinkhuijzen \cite{Vin18}, except that we allow recursive queries to $\mathsf{PromiseQMA}$ instead of just $\mathsf{QMA}$.

We next specify some terminology and notation that we will use for constructing oracles. We will often find it convenient to specify oracles as a union of disjoint regions. Formally, this means the following. Suppose we have an increasing sequence $n_1 < n_2 < ...$ and an associated sequence of functions $A_1, A_2, \ldots$, where $A_i: \{0,1\}^{n_i} \to \{0,1\}$. We call each $A_i$ a \textit{region}, and define the oracle $\mathcal{O}: \{0,1\}^* \to \{0,1\}$ constructed from these regions via:
\[
\mathcal{O}(x) \coloneqq \begin{cases}
A_i(x) & |x| = n_i\\
0 & \text{otherwise.}
\end{cases}
\]

We may also construct oracles by joining other oracles together. For example, if we have a pair of oracles $A, B: \{0,1\}^* \to \{0,1\}$, then $\mathcal{O} = (A, B)$ means that we define $\mathcal{O}: \{0,1\}^* \to \{0,1\}$ by:
\begin{align*}
\mathcal{O}(0x) &\coloneqq A(x)\\
\mathcal{O}(1x) &\coloneqq B(x).
\end{align*}

A \textit{random oracle} $\mathcal{O}$ is a uniformly random language where for each $x \in \{0,1\}^*$, $\mathcal{O}(x) = 0$ or $1$ with probability $\frac{1}{2}$ (independently for each $x$).

\subsection{Query Complexity and Related Measures}
We assume some familiarity with quantum and classical query complexity. We recommend a survey by Ambainis \cite{Amb18} for additional background and definitions. A quantum query to a string $x \in \{0,1\}^N$ is implemented via the unitary transformation $U_x$ that acts on basis states of the form $\ket{i}\ket{w}$ as $U_x\ket{i}\ket{w} = (-1)^{x_i}\ket{i}\ket{w}$, where $i \in [N]$ and $w$ is an index over a workspace register.


We start with some standard definitions for classical and quantum query complexity.

\begin{definition}[Decision tree complexity]
The \emph{decision tree complexity} of a function $f: \{0,1\}^N \to \{0,1, \bot\}$, also called the \emph{deterministic query complexity} of $f$ and denoted $\D(f)$, is the fewest number of queries made by any deterministic algorithm $\mathcal{A}(x)$ that satisfies, for all $x \in \Dom(f)$, $\mathcal{A}(x) = f(x)$.
\end{definition}

\begin{definition}[Quantum query complexity]
The \emph{(bounded-error) quantum query complexity} of a function $f: \{0,1\}^N \to \{0,1,\bot\}$, denoted $\Q(f)$, is the fewest number of queries made by any quantum query algorithm $\mathcal{A}(x)$ that satisfies, for all $x \in \Dom(f)$:
\begin{itemize}
\item If $f(x) = 1$, then $\Pr\left[\mathcal{A}(x) = 1 \right] \ge \frac{2}{3}$, and
\item If $f(x) = 0$, then $\Pr\left[\mathcal{A}(x) = 1 \right] \le \frac{1}{3}$.
\end{itemize}
\end{definition}

We define a notion of $\mathsf{QMA}$ (Quantum Merlin-Arthur) query complexity as follows.

\begin{definition}[$\mathsf{QMA}$ query complexity]
\label{def:qma_query_complexity}
The \emph{(bounded-error) $\mathsf{QMA}$ query complexity} of a function $f: \{0,1\}^N \to \{0,1,\bot\}$, denoted $\QMA(f)$, is the fewest number of queries made by any quantum query algorithm $\mathcal{V}(\ket{\psi}, x)$ that takes an auxiliary input state $\ket{\psi}$ and satisfies, for all $x \in \Dom(f)$:
\begin{itemize}
\item (Completeness) If $f(x) = 1$, then there exists a state $\ket{\psi}$ such that $\Pr\left[\mathcal{V}(\ket{\psi}, x) = 1 \right] \ge \frac{2}{3}$, and
\item (Soundness) If $f(x) = 0$, then for every state $\ket{\psi}$, $\Pr\left[\mathcal{V}(\ket{\psi}, x) = 1 \right] \le \frac{1}{3}$.
\end{itemize}
The algorithm $\mathcal{V}$ is sometimes called the \emph{verifier}, and the state $\ket{\psi}$ a \emph{witness}.
\end{definition}

Note that, in contrast to most previous works (c.f. \cite{RS04,AKKT20,ST19}), our definition of $\mathsf{QMA}$ query complexity completely ignores the number of qubits in the witness state $\ket{\psi}$. Our definition is more closely related to the \textit{quantum certificate complexity} $\QC(f)$ that was introduced by Aaronson \cite{Aar08}. The key difference between $\QMA(f)$ and $\QC(f)$ is that $\QMA(f)$ only requires the ability to query-efficiently verify $1$-inputs to the function, whereas $\QC(f)$ assumes the existence of a query-efficient verifier on both $0$- and $1$-inputs. Thus, one can view $\QMA(f)$ as a one-sided version of $\QC(f)$.\footnote{In other contexts, it might be preferable to denote ``one-sided quantum certificate complexity'' by $\QC_1(f)$, but in this work we exclusively use $\QMA(f)$ so as to emphasize the connection to quantum Merlin-Arthur protocols.

One can also show, as a consequence of \cite[Theorems 4 and 7]{Aar08}, that the verifier in the definition of $\QMA(f)$ can be replaced by a classical randomized verifier with perfect completeness, at the cost of a quadratic increase in the query complexity. We will not require this fact elsewhere in the paper, however.} 

It is important to emphasize that $\mathsf{QMA}$ query complexity is not completely trivial: even though the witness can have unbounded length, the power of the verifier is still limited by the number of queries it makes. Indeed, in some cases, the proof does not help at all. For example, for the function $\AND_N$ on $N$ bits that outputs $1$ if all of the inputs are $1$, we have $\Q(\AND_N) = \QMA(\AND_N) = \Theta\left(\sqrt{N}\right)$, as observed by Raz and Shpilka \cite{RS04}.

We next define sensitivity, and the related $B$-block sensitivity.

\begin{definition}[Sensitivity]
The \emph{sensitivity} of a function $f: \{0,1\}^N \to \{0,1\}$ on input $x$ is defined as:
\[
\s^x(f) \coloneqq \left|\left\{i \in [n] : f(x) \neq f\left(x^{\oplus i}\right)\right\}\right|.
\]
The \emph{sensitivity of $f$} is defined as:
\[
\s(f) \coloneqq \max_{x \in \{0,1\}^N} \s^x(f).
\]
\end{definition}

\begin{definition}[$B$-block sensitivity]
\label{def:d_block_sensitivity}
Let $B = \{S_1,S_2,\ldots,S_k\}$ be collection of disjoint subsets of $[N]$. The \emph{block sensitivity} of a function $f: \{0,1\}^N \to \{0,1\}$ on input $x \in \{0,1\}^N$ with respect to $B$ is defined as:
\[
\bs_B^x(f) \coloneqq \left|\left\{i \in [k] : f(x) \neq f\left(x^{\oplus S_i}\right)\right\}\right|.
\]
The \emph{block sensitivity of $f$} with respect to $B$ is defined as:
\[
\bs_B(f) \coloneqq \max_{x \in \{0,1\}^N} \bs_B^x(f).
\]
\end{definition}

Note that sensitivity is the special case of $B$-block sensitivity in which $B$ is the partition into singletons.

We require a somewhat unusual definition of certificate complexity. Our definition agrees with the standard definition for \textit{total} functions, but may differ for partial functions.

\begin{definition}[Certificate complexity]
\label{def:certificate_complexity}
Let $f: \{0,1\}^N \to \{0,1,\bot\}$, and suppose $x \in \Dom(f)$. A \emph{certificate for $x$ on $f$}, also called an \emph{$f(x)$-certificate}, is a set $K \subseteq [N]$ such that for any $y \in \Dom(f)$ satisfying $y_i = x_i$ for all $i \in K$, $f(x) = f(y)$.

The \emph{certificate complexity of $f$ on $x$}, denoted $\C^x(f)$, is the minimum size of any certificate for $f$ on $x$. The \emph{certificate complexity of $f$} is defined as:
\[
\C(f) \coloneqq \max_{x \in \{0,1\}^N} \C^x(f).
\]
\end{definition}

Intuitively, in this definition of certificate complexity, a $b$-certificate for $b \in \{0,1\}$ witnesses that $f(x) \neq 1-b$, in contrast to the standard definition where a $b$-certificate witnesses that $f(x)=b$.

\subsection{Random Restrictions}
\label{sec:random_restrictions}
A \textit{restriction} is a function $\rho: [N] \to \{0,1,*\}$. A \textit{random restriction} with $\Pr[*] = p$ is a distribution over restrictions in which, for each $i \in [N]$, we independently sample:
\[
\rho(i) = \begin{cases}
0 & \text{ with probability } \frac{1-p}{2}\\
1 & \text{ with probability } \frac{1-p}{2}\\
* & \text{ with probability } p.
\end{cases}
\]

If $f: \{0,1\}^N \to \{0,1,\bot\}$ is a function and $\rho$ is a restriction where $S = \{i \in [N]: \rho(i) = *\}$, we denote by $f_\rho: \{0,1\}^S \to \{0,1,\bot\}$ the function obtained from $f$ by fixing the inputs where $\rho(i) \in \{0,1\}$. We call the remaining variables the \textit{unrestricted variables}. We sometimes apply restrictions to functions $f_i$ that have a subscript in the name, in which case we denote the restricted function by $f_{i|\rho}$ for notational clarity.

In this work, we also make use of \textit{projections}, which are a generalization of restrictions that were introduced in \cite{HRST17}. The exact definition of projections is unimportant for us, but intuitively, they are restrictions where certain unrestricted variables may be mapped to each other; see \cite{HRST17} for a more precise definition. We use the same notation $f_\rho$ for applying a projection $\rho$ to a function $f$ as we do for restrictions.

\subsection{Circuit Complexity}
\label{sec:circuit_complexity}
In this work, we consider Boolean circuits where the gates can be arbitrary partial Boolean functions $f: \{0,1\}^N \to \{0,1,\bot\}$. On input $x \in \{0,1,\bot\}^N$, a circuit gate labeled by $f$ evaluates to $b \in \{0,1\}$ if, for all $y \in \{0,1\}^N$ that extend $x$ (meaning, for all $i \in [n]$, $x_i \in \{0,1\}$ implies $y_i = x_i$), we have $f(y) = b$; otherwise, the gate evaluates to $\bot$.

We always specify the basis of gates allowed in Boolean circuits. Most commonly, we will consider Boolean circuits with $\AND$, $\OR$, and $\NOT$ gates where the $\AND$ and $\OR$ gates can have unbounded fan-in, but we will also consider e.g. circuits where the gates can be arbitrary functions of low $\mathsf{QMA}$ query complexity. 

The \textit{size} of a circuit is the number of gates of fan-in larger than $1$ (i.e. excluding $\NOT$ gates). The \textit{depth} of circuit is the length of the longest path from an input variable to the output gate, ignoring gates of fan-in $1$.

An $\AND$/$\OR$/$\NOT$ circuit is \textit{alternating} if all $\NOT$ gates are directly above the input, and all paths from the inputs to the output gate alternate between $\AND$ and $\OR$ gates. A circuit is \textit{layered} if for each gate $g$ in the circuit, the distance from $g$ to the output gate is the same along all paths. We denote by $\mathsf{AC^0}[s, d]$ the set of alternating, layered $\AND$/$\OR$/$\NOT$ circuits of size at most $s$ and depth at most $d$. By a folklore result, any $\AND$/$\OR$/$\NOT$ circuit can be turned into an alternating, layered circuit of the same depth at the cost of a small (constant multiplicative) increase in size.

A \textit{DNF formula}, also just called a DNF, is a depth-2 $\mathsf{AC^0}$ circuit where the top gate is an $\OR$ gate (i.e. the circuit is an $\OR$ of $\AND$s). The \textit{width} of a DNF is the maximum fan-in of any of the $\AND$ gates.

We require the following results in circuit complexity.
\begin{theorem}[{\cite[Lemma 7.8]{Has87}}]
\label{thm:hastad_parity_ac0_average}
For every constant $d$, there exists a constant $c$ such that for all sufficiently large $N$, for all $C \in \mathsf{AC^0}\left[2^{N^{c}},d \right]$, one has:
\[
\Pr_{x \sim \{0,1\}^N}\left[C(x) = \PARITY_N(x)\right] \le 0.6,
\]
where $\PARITY_N$ is the parity function on $N$ bits.
\end{theorem}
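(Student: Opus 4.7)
The plan is to follow Håstad's classical random restriction argument: apply a sequence of $d-1$ random restrictions, using the switching lemma at each stage to reduce the depth of the circuit by one, until all that remains is a shallow decision tree on a set of surviving variables that is still large compared to the tree's depth. On such a decision tree, the restricted $\PARITY$ function (which is itself a parity of the surviving variables, up to a global sign) cannot be computed better than by guessing.

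Concretely, I would first normalize $C$ to be layered and alternating of size $s = 2^{N^c}$ and depth $d$. Fix a target depth $t = \Theta(N^c)$ for each switching step. Håstad's switching lemma states that a width-$w$ DNF (or CNF) under a $p$-random restriction fails to collapse to a depth-$t$ decision tree with probability at most $(5pw)^t$. At stage $i$ I take a random restriction with $*$-probability $p_i = \Theta(1/w_i)$, so that $s \cdot (5p_i w_i)^t = o(1)$ by the choice of $t$; a union bound over the $\leq s$ bottom-layer gates then guarantees that all of them simplify simultaneously. Each resulting depth-$t$ decision tree is rewritten as both a width-$t$ DNF and a width-$t$ CNF, absorbed into the layer above, and the overall circuit depth decreases by one while the size grows by at most a $2^t$ factor (which stays bounded by a polynomial in $s$ for $c$ small enough). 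After the first stage, $w_i = t = \Theta(N^c)$, so subsequent $p_i = \Theta(N^{-c})$.

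After $d-1$ rounds, the surviving-variable set has expected size $N \cdot \prod p_i = \Omega(N^{1-c(d-1)})$; by choosing $c$ small enough as a function of $d$ (say $c < 1/(2d)$), this is $\omega(t)$ with high probability by a Chernoff bound (\Cref{fact:chernoff}). Conditioned on this event and on all switching steps succeeding, $C|_\rho$ is a depth-$t$ decision tree on $N' = \omega(t)$ variables while $\PARITY_N|_\rho$ is, up to a sign, $\PARITY_{N'}$; such a decision tree agrees with $\PARITY_{N'}$ on at most $\tfrac{1}{2} + 2^{-(N'-t)}$ of the surviving inputs, since each of its $\leq 2^t$ leaves fixes at most $t < N'$ variables and leaves at least one free variable on which parity is balanced. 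Totaling the $o(1)$ failure probabilities from the union bound gives $\Pr_x[C(x) = \PARITY_N(x)] \leq \tfrac{1}{2} + o(1) < 0.6$ for sufficiently large $N$.

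The main obstacle is the familiar balancing act: $c$ must be chosen small enough (as a function of $d$) that (i) the union-bounded switching failure $s \cdot (5p_i w_i)^t$ stays $o(1)$ at every one of the $d-1$ stages, (ii) the absorbed decision trees do not blow up the circuit's size beyond what the next application of the switching lemma can tolerate, and (iii) $\prod p_i$ still leaves $\omega(t)$ surviving variables at the end. A secondary subtlety is the very first stage, where the bottom gates of $C$ may have fan-in as large as $s$; this is handled either by splitting wide gates into narrower pieces (noting that any $\AND$ or $\OR$ of fan-in $\gg 1/p_1$ is almost surely trivialized by the first restriction) or by taking $p_1$ slightly smaller than the later $p_i$'s. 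Both adjustments are standard and change the argument only at the level of constants.
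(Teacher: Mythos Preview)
The paper does not prove this statement; it is quoted in the preliminaries as a known result with a citation to H\r{a}stad's thesis \cite[Lemma 7.8]{Has87}, and no proof is given. Your proposal is precisely H\r{a}stad's original random restriction argument, and the sketch is correct.

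One minor remark: in your final step, once the decision tree has depth $t < N'$, its agreement with $\PARITY_{N'}$ on a uniform input is \emph{exactly} $\tfrac{1}{2}$, not $\tfrac{1}{2} + 2^{-(N'-t)}$: every leaf fixes at most $t$ variables and hence leaves at least one variable free, on which the restricted parity is perfectly balanced. Your stated bound is still a valid upper bound, so this does not affect the argument.
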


\begin{theorem}[{\cite[Proof of Theorem 10.1]{HRST17}}]
\label{thm:hrst}
For all constant $d \ge 2$ and all sufficiently large $m \in \Naturals$, there exists a function $\Sipser_d \in \mathsf{AC^0}\left[2^{\Theta(m)}, d\right]$ with $N = 2^{\Theta(m)}$ inputs, and a class $\mathcal{R}$ of random projections such that the following hold:
\begin{enumerate}[(a)]
\item For some value $b = 2^{-m}\left(1 - O\left(2^{-m/2}\right)\right)$, for any function $f: \{0,1\}^N \to \{0,1,\bot\}$,
\[
\Pr_{x \sim \{0,1\}^N}\left[f(x) = \Sipser_d(x)\right] = \Pr_{x \sim D, \rho \sim \mathcal{R}}\left[f_\rho(x) = \Sipser_{d|\rho} (x)\right],
\]
where $D$ is the distribution over bit strings in which each coordinate is $0$ with probability $b$ and $1$ with probability $1 - b$.
\end{enumerate}
Additionally, let $C \in \mathsf{AC^0}[s, d - 1]$. If we sample $\rho \sim \mathcal{R}$, then:
\begin{enumerate}[(a)]
\setcounter{enumi}{1}
\item Except with probability at most $s2^{-2^{m/2 - 4}}$, $\D\left(C_\rho\right) \le 2^{m/2 - 4}$.
\item Except with probability at most $O\left(2^{-m/2}\right)$, $\Sipser_{d|\rho}$ is reduced to an $\AND$ gate of fan-in $(\ln 2) \cdot 2^m \cdot \left(1 \pm O\left(2^{-m/4}\right)\right)$.
\end{enumerate}
\end{theorem}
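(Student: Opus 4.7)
The plan is to follow the random-projection framework of \cite{HRST17}, building the family $\mathcal{R}$ and the Sipser function together so that a single sample $\rho \sim \mathcal{R}$ simultaneously (i) collapses any $\mathsf{AC^0}[s,d-1]$ circuit to a very shallow decision tree, and (ii) collapses $\Sipser_d$ to a single top $\AND$ gate of approximately its original top fan-in. I would define $\Sipser_d$ as the standard layered alternating $\AND/\OR$ tree of depth $d$ with top $\AND$-gate, where the fan-in at each layer is chosen so that under a suitably biased product distribution each internal gate is $0$ or $1$ with probability close to $1/2$; the right bias turns out to be $b = 2^{-m}\bigl(1 - O(2^{-m/2})\bigr)$, and the total number of inputs works out to $N = 2^{\Theta(m)}$.

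For part (a), I would design $\rho$ stagewise, with one stage per non-top layer of the Sipser tree. At each stage, and for each gate in the current bottom layer, I independently decide whether to ``kill'' the gate by fixing a subset of its inputs, and if so, draw the fixed values according to a distribution calibrated so that, once the remaining unrestricted coordinates are filled according to $D$, the induced marginal on each original input coordinate is exactly uniform on $\{0,1\}$. Part (a) then follows by bookkeeping: the joint distribution of $(\rho, x)$ with $x \sim D$ reproduces the uniform distribution on $\{0,1\}^N$, so $\Pr[f = \Sipser_d]$ is preserved under the transformation.

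For part (b), I would iterate H\r{a}stad's switching lemma, in its projection formulation from \cite{HRST17}, once per stage of $\rho$. At each stage the bottom-layer subformulas of $C$ are $\AND$s or $\OR$s of small fan-in; choosing the per-stage restriction strength appropriately, the switching lemma forces each such subformula into a depth-$t$ decision tree except with probability $(Cps)^t$, and these shallow trees can then be absorbed into the layer directly above, dropping the circuit depth by one. After $d-2$ such collapses the entire circuit has become a decision tree of depth at most $2^{m/2-4}$; a union bound over the at most $s$ subformulas per stage yields the stated $s \cdot 2^{-2^{m/2-4}}$ failure probability.

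For part (c), the same stagewise projection collapses $\Sipser_d$ layer by layer: at each non-top layer almost every gate is driven to its majority value by Chernoff concentration on the number of surviving children, and after $d-1$ stages only the top $\AND$ gate remains. Its surviving fan-in is a sum of $\Theta(2^m)$ independent indicators with mean $\ln 2$, giving $(\ln 2) \cdot 2^m \cdot \bigl(1 \pm O(2^{-m/4})\bigr)$ by Chernoff, and the overall failure probability aggregated over all $d-1$ stages is $O(2^{-m/2})$. The main obstacle is the simultaneous calibration: the restriction strength at each stage must be strong enough to drive arbitrary $\mathsf{AC^0}[s,d-1]$ circuits down to tiny decision trees with doubly exponentially small failure probability, yet gentle enough to leave $\Sipser_d$'s top $\AND$ gate intact with precisely the claimed fan-in. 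Threading this needle---by matching Sipser's layerwise fan-ins to the per-stage restriction parameters---is the technical heart of the argument.
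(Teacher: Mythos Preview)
The paper does not prove this theorem at all: it is stated as a quotation from \cite[Proof of Theorem 10.1]{HRST17} and used purely as a black box (e.g., in the proof of \Cref{thm:bqp_sigma_k_query_version} and in deriving \Cref{cor:hrst_ac0_depth_hierarchy}). There is therefore no proof in the paper to compare your proposal against.

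That said, your sketch is a faithful high-level account of the \cite{HRST17} argument: the stagewise random projections calibrated to complete to the uniform distribution (giving (a)), the iterated projection switching lemma collapsing an $\mathsf{AC^0}[s,d-1]$ circuit layer by layer (giving (b)), and the concentration of surviving fan-in at the top gate of $\Sipser_d$ (giving (c)). Since the present paper treats this as an imported tool rather than something to be reproved, no proof is expected here; a citation is all that is required.
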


An exact definition of the $\Sipser_d$ function is not important for us, but to help give some intuition, we mention a few of its other important properties. $\Sipser_d$ can be constructed as a depth-regular read-once monotone formula in which the gates at odd distance from the input are $\AND$ gates and the gates at even distance are $\OR$ gates. Additionally, for each $i \in [d]$, the gates at distance $i$ from the inputs all have the same fan-in $f_i$. These $f_i$s satisfy $f_1 = 2m$ and $f_i = 2^{\Theta(m)}$ for $i > 1$. This regularity allows for an appropriately scaled version of the $\Sipser_d$ function (or its negation) to be computed in $\mathsf{\Sigma}_{d-1}^\mathsf{P}$. Specifically, given an oracle $\mathcal{O}: \{0,1\}^{\lceil \log N \rceil} \to \{0,1\}$, and viewing the first $N$ bits of the truth table of $\mathcal{O}$ as an input $x$ to $\Sipser_d$, a  $\mathsf{\Sigma}_{d-1}^{\mathsf{P}^\mathcal{O}}$ machine can evaluate $\Sipser_d(x)$ (if $d$ is even, otherwise $1 - \Sipser_d(x)$ if $d$ is odd) in time $\polylog(N) = \poly(m)$. See \cite{RST15,HRST17} for further details.

\cite{HRST17} roughly explains the intuitive meaning of the above theorem as follows. Property (a) guarantees that the distribution $\mathcal{R}$ of random projections completes to the uniform distribution. Property (b) shows that the circuit $C$ simplifies with high probability under a random projection, while property (c) shows that $\Sipser_{d}$ retains structure under this distribution of projections with high probability. A simple corollary of these properties is the following:

\begin{corollary}[{\cite[Theorem 10.1]{HRST17}}]
\label{cor:hrst_ac0_depth_hierarchy}
Let $\Sipser_d$ be the function defined in \Cref{thm:hrst} on $N = 2^{\Theta(m)}$ bits. Let $C \in \mathsf{AC^0}[s, d-1]$. Then, for all sufficiently large $m$, we have:
\[
\Pr_{x \sim \{0,1\}^N}\left[C(x) = \Sipser_d(x)\right] \le \frac{1}{2} + O\left(2^{-m/4}\right) + s2^{-2^{m/2 - 4}}.
\]
\end{corollary}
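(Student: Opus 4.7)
The plan is to apply the three properties of \Cref{thm:hrst} in sequence and then reduce to a clean calculation about a shallow decision tree trying to compute a large-fan-in $\AND$ on biased inputs.

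First, by property (a) of \Cref{thm:hrst}, we may rewrite
\[
\Pr_{x \sim \{0,1\}^N}\!\left[C(x) = \Sipser_d(x)\right] = \Pr_{\rho \sim \mathcal{R},\, x \sim D}\!\left[C_\rho(x) = \Sipser_{d|\rho}(x)\right].
\]
I would then condition on the intersection of the two good events described by properties (b) and (c): namely, that $C_\rho$ collapses to a decision tree of depth $T \le 2^{m/2-4}$, and that $\Sipser_{d|\rho}$ reduces to a single $\AND$ of some fan-in $k = (\ln 2)\cdot 2^m \cdot (1 \pm O(2^{-m/4}))$. By the union bound, the complementary ``bad'' event has probability at most $s\,2^{-2^{m/2-4}} + O(2^{-m/2})$, which already accounts for the non-$\frac{1}{2}$ terms in the target bound.

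On the good event, the main step is to show
\[
\Pr_{x \sim D}\!\left[C_\rho(x) = \Sipser_{d|\rho}(x)\right] \le \tfrac{1}{2} + O\!\left(2^{-m/4}\right).
\]
I would fix the decision tree and partition its leaves $\ell$ according to whether the path to $\ell$ encounters a queried variable lying in the support of the $\AND$ that has been set to $0$ (``closed'' leaves, where $\Sipser_{d|\rho}$ is already forced to $0$) or not (``open'' leaves). For a closed leaf the tree can of course agree perfectly, but the probability that a random $x \sim D$ reaches any closed leaf is bounded by the expected number of $\AND$-support bits queried that take value $0$, which is at most $T\cdot b \le 2^{m/2-4}\cdot 2^{-m} = 2^{-m/2-4}$; this contributes only $O(2^{-m/4})$ to the agreement probability. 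For an open leaf at which $j \le T$ of the $k$ $\AND$-inputs have been queried (all set to $1$), the conditional probability that $\Sipser_{d|\rho}(x)=1$ is exactly $(1-b)^{k-j}$. The parameters are arranged so that $kb = \ln 2 \cdot (1 \pm O(2^{-m/4}))$ and $Tb = O(2^{-m/2})$, so a short Taylor expansion of $(1-b)^{k-j}$ gives $(1-b)^{k-j} = \tfrac{1}{2} + O(2^{-m/4})$ uniformly in $j \in [0,T]$. Hence the best fixed label at an open leaf agrees with $\Sipser_{d|\rho}$ with probability at most $\tfrac{1}{2} + O(2^{-m/4})$.

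Summing over leaves, weighted by the probability of reaching each, yields the desired $\tfrac{1}{2} + O(2^{-m/4})$ bound on the good event, and combining with the failure probability of the good event produces the stated corollary. The only mildly delicate point is the Taylor estimate on $(1-b)^{k-j}$: one has to track both the $(1 \pm O(2^{-m/4}))$ slack in the fan-in from (c) and the $(1 - O(2^{-m/2}))$ slack in $b$ from the definition of $D$, and verify that both contribute at most $O(2^{-m/4})$ additive error to the conditional probability. Everything else is routine union bounds and case analysis, so I do not expect a real obstacle here — the corollary is really just a bookkeeping consequence of the three properties already packaged into \Cref{thm:hrst}.
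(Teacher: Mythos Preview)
Your proposal is correct and takes essentially the same approach as the paper (which does not prove this corollary explicitly, but whose proof of the more general \Cref{thm:bqp_sigma_k_query_version} follows the same three-step outline: apply (a), condition on the good events from (b) and (c), then exploit that under $D$ the shallow decision tree is nearly constant while the large $\AND$ is nearly balanced). Your leaf-by-leaf case analysis is just a slightly finer rendition of the paper's coarser observation that $\Pr_{x\sim D}[C_\rho(x)\neq C_\rho(1^{|x|})]\le Tb=O(2^{-m/2})$ together with $(1-b)^k=\tfrac{1}{2}\pm O(2^{-m/4})$.
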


Circuit complexity lower bounds are an indispensable tool for proving separations of relativized complexity classes, as was first observed by Furst, Saxe, and Sipser \cite{FSS84}. This connection can be formalized via the following lemma, which shows that the behavior of any $\mathsf{PH}$ oracle machine can be computed by an $\mathsf{AC^0}$ circuit whose inputs are the bits of the oracle string.

\begin{lemma}[{Implicit in \cite[Lemma 2.3]{FSS84}}]
\label{lem:furst-saxe-sipser}
Let $M$ be a $\mathsf{\Sigma}_k^{\mathsf{P}}$ oracle machine, and let $p(n)$ be a polynomial upper bound on the running time of $M$ on inputs of length $n$. Then for any $x \in \{0,1\}^n$, there exists a circuit $C \in \mathsf{AC^0}\left[2^{\poly(n)}, k+1\right]$ such that for any oracle $\mathcal{O}: \{0,1\}^* \to \{0,1\}$, we have:
\[
M^\mathcal{O}(x) = C\left(\mathcal{O}_{[p(n)]}\right),
\]
where $\mathcal{O}_{[p(n)]}$ denotes the concatenation of the bits of $\mathcal{O}$ on all strings of length at most $p(n)$.
\end{lemma}

Thus, lower bounds on $\mathsf{AC^0}$ circuit complexity give rise to oracle separations involving $\mathsf{PH}$. In particular, average-case lower bounds on the size of $\mathsf{AC^0}$ circuits can be used to construct separations relative to random oracles. For example, \Cref{thm:hastad_parity_ac0_average} implies that $\mathsf{P^{\# P}} \not\subset \mathsf{PH}$ relative to a random oracle \cite{Has87}, while \Cref{cor:hrst_ac0_depth_hierarchy} implies that $\mathsf{\Sigma}_{k+1}^\mathsf{P} \not\subset \mathsf{\Sigma}_k^\mathsf{P}$ relative to a random oracle \cite{HRST17,RST15}.

In most cases, when applying \Cref{lem:furst-saxe-sipser} to jump between $\mathsf{PH}$ oracle machines and $\mathsf{AC^0}$ circuits, we follow the convention of using $n$ to denote the length of an input to the $\mathsf{PH}$ machine, and $N$ to denote the (exponentially larger) size of the input to the corresponding $\mathsf{AC^0}$ circuit. For example, we might consider a $\mathsf{PH}$ machine that queries a function $f: \{0,1\}^{p(n)} \to \{0,1\}$, where $p(n) \le \poly(n)$. The truth table of $f$ can be interpreted as a string of length $N = 2^{p(n)}$. The corresponding $\mathsf{AC^0}$ circuit will have $N$ inputs and size $s \le 2^{\poly(n)}$. Thus, when we view $s$ as a function of the input size $N$ of the circuit, we have the bound $s \le \quasipoly(N)$.

\subsection{Other Background}

The form of the Raz-Tal Theorem stated below forms the basis for several of our results. It states that there exists a distribution that looks pseudorandom to small constant-depth $\mathsf{AC^0}$ circuits, but that is easily distinguishable from random by an efficient quantum algorithm.

\begin{theorem}[{\cite[Theorem 1.2]{RT19}}]
\label{thm:raz-tal}
For all sufficiently large $N$, there exists an explicit distribution $\mathcal{F}_N$ that we call the \emph{Forrelation distribution} over $\{0,1\}^N$ such that:
\begin{enumerate}
\item There exists a quantum algorithm $\mathcal{A}$ that makes $\polylog(N)$ queries and runs in time $\polylog(N)$ such that:
\[
\left|\Pr_{x \sim \mathcal{F}_N} [\mathcal{A}(x) = 1] - \Pr_{y \sim \{0,1\}^{N}} [\mathcal{A}(y) = 1]  \right| \ge 1 - \frac{1}{N^2}.
\]
\item For any $C \in \mathsf{AC^0}[\quasipoly(N),O(1)]$:
\[
\left|\Pr_{x \sim \mathcal{F}_N} [C(x) = 1] - \Pr_{y \sim \{0,1\}^{N}} [C(y) = 1]  \right| \le \frac{\polylog(N)}{\sqrt{N}}.
\]
\end{enumerate}
\end{theorem}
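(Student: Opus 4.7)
The plan is to define the Forrelation distribution $\mathcal{F}_N$ via a Gaussian construction and to prove the two parts separately. Writing $N = 2 \cdot 2^n$ for convenience, sample a jointly Gaussian pair $(G_1, G_2) \in \mathbb{R}^{2^n} \times \mathbb{R}^{2^n}$ where $G_1$ has i.i.d.\ entries of variance $\varepsilon^2/2^n$ and $G_2 = H G_1$, with $H$ the normalized Hadamard matrix and $\varepsilon = 1/\polylog(N)$. Form the $N$-bit string by taking the entrywise signs of $(G_1,G_2)$, truncating large entries so the construction factors through a genuine distribution on $\{0,1\}^N$. The marginals of each bit are uniform, while the Hadamard coupling between $G_1$ and $G_2$ encodes the Fourier-analytic ``forrelation'' that a quantum algorithm will detect but a shallow circuit will not.

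For Part~1, I would use Aaronson's original $\Forrelation$ algorithm: prepare $H^{\otimes n}\ket{0^n}$, apply the phase oracle for the first half of the input, apply $H^{\otimes n}$, apply the phase oracle for the second half, apply $H^{\otimes n}$ again, and measure. The probability of observing $\ket{0^n}$ is exactly $(2^{-3n/2}\sum_{x,y} f(x)(-1)^{x\cdot y}g(y))^2$, which concentrates near $\varepsilon^2$ under $\mathcal{F}_N$ but is $O(2^{-n})$ under uniform. Boosting with $\polylog(N)$ independent repetitions and a majority vote amplifies the bias beyond $1 - 1/N^2$.

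For Part~2, the hard part, I would follow Raz--Tal's Brownian interpolation. Smooth the circuit by convolving with a small Gaussian: $p(G) \coloneqq \E_\eta[C(\mathrm{sign}(G+\eta))]$, where $\eta$ regularizes the sign discontinuity. Construct a continuous-time path $B(t)$, $t \in [0,1]$, with $B(0)$ corresponding to uniform samples and $B(1)$ to $\mathcal{F}_N$, so that by It\^o's lemma
$$\E[p(B(1))] - p(B(0)) = \tfrac{1}{2}\int_0^1 \E\left[\mathrm{tr}(\Sigma(t)\,\nabla^2 p(B(t)))\right] dt,$$
where $\Sigma(t)$ encodes the instantaneous Hadamard coupling between the $G_1$- and $G_2$-coordinates. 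Expanding $C$ in the Fourier basis writes each entry of $\nabla^2 p$ as a sum over pairs of Fourier coefficients $\hat{C}(S)\hat{C}(T)$, weighted by derivatives of Gaussian density factors; the off-diagonal block structure of $\Sigma(t)$ only couples pairs $(S,T)$ with $S$ on $G_1$-indices and $T$ on $G_2$-indices of equal size via a Hadamard character sum.

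The central technical input is Tal's level-$k$ Fourier $L_1$ bound for $\mathsf{AC^0}$: for $C \in \mathsf{AC^0}[s,d]$ one has $\sum_{|S|=k}|\hat{C}(S)| \le O(\log s)^{(d-1)k}\sqrt{k}$. Combining this with the above Hadamard structure shows that each Fourier level contributes at most $\polylog(N)/\sqrt{N}$ to the It\^o integral, and the contributions decay geometrically in $k$, yielding the stated $\polylog(N)/\sqrt{N}$ total bound. The main obstacle is orchestrating this calculation cleanly: one must (i) handle the Gaussian truncation and smoothing parameter so that signs of Gaussians really model bits of $\mathcal{F}_N$ up to negligible total variation, (ii) control the It\^o cross terms using the level-$k$ $L_1$ bound rather than a naive $\ell^2$ argument that would lose a factor of $\sqrt{N}$, and (iii) invoke Tal's Fourier bound as a black box, whose own proof via random restrictions and an induction on circuit depth is itself a substantial theorem.
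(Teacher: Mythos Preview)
The paper does not prove this statement at all: \Cref{thm:raz-tal} is quoted as a black-box result from \cite{RT19} and used throughout without reproof. There is therefore no ``paper's own proof'' to compare against.

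That said, your sketch is a reasonable outline of the Raz--Tal argument itself. The Gaussian-plus-Hadamard construction of $\mathcal{F}_N$, the truncation step, and the use of Tal's level-$k$ Fourier $L_1$ bound for $\mathsf{AC^0}$ are all faithful to \cite{RT19}. Your It\^o/Brownian formulation is one valid way to organize the interpolation argument, though the original \cite{RT19} proof is phrased more discretely: they write $\mathcal{F}_N$ as a sum of many small independent Gaussian increments and bound the change in $\E[C]$ across a single increment via a second-order Taylor expansion (morally the same as your It\^o step), then sum. A minor inaccuracy: your stated variance $\varepsilon^2/2^n$ is not the parameterization Raz--Tal use (they take each coordinate of $G_1$ to have variance $\varepsilon$ with $\varepsilon \approx 1/\log N$, not scaled by $2^{-n}$), and your description of the quantum acceptance probability concentrating ``near $\varepsilon^2$'' versus ``$O(2^{-n})$'' would need to be checked against the actual scaling to get the claimed $1 - 1/N^2$ advantage after amplification. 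But none of this matters for the present paper, which treats the theorem as given.
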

Note that, by standard amplification techniques, the $\frac{1}{N^2}$ in the above theorem can be replaced by any $\delta \le 2^{-\polylog(N)}$ at a cost of $\polylog(N)$ in the other parameters. For our purposes, the above theorem suffices as written. In some cases where $N$ is clear from context, we omit the subscript and write the distribution as $\mathcal{F}$. Additionally, in a slight abuse of notation, we sometimes informally call the decision problem of distinguishing a sample from $\mathcal{F}_N$ from a sample from the uniform distribution the $\Forrelation$ problem.

The next lemma was essentially shown in \cite{BBBV97}. We provide a proof for completeness.
\begin{lemma}
\label{lem:bbbv_query_magnitude}
Consider a quantum algorithm $Q$ that makes $T$ queries to $x \in \{0,1\}^N$. Write the state of the quantum algorithm immediately after $t$ queries to $x$ as:
\[\ket{\psi_t} = \sum_{i=1}^N \sum_{w} \alpha_{i,w,t}\ket{i,w},\]
where $w$ are indices over a workspace register. Define the \emph{query magnitude} $q_i$ of an input $i \in [N]$ by:
\[q_i \coloneqq \sum_{t=1}^{T} \sum_w \left| \alpha_{i,w,t} \right|^2. \]
Then for any $y \in \{0,1\}^N$, we have:
\[
\left|\Pr\left[Q(x) = 1 \right] - \Pr\left[Q(y) = 1 \right] \right| \le 8\sqrt{T} \cdot \sqrt{\sum_{i : x_i \neq y_i} q_i}.
\]
\end{lemma}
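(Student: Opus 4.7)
The plan is to carry out the standard BBBV hybrid argument, bounding the Euclidean distance between the final states under the two oracles by telescoping over query indices, and then converting that distance into a bound on the difference of acceptance probabilities.

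First, I would define for each $s \in \{0,1,\ldots,T\}$ a hybrid state $\ket{H_s}$ obtained by running $Q$ with oracle $x$ for the first $s$ queries and oracle $y$ for the remaining $T-s$ queries (and performing all the non-oracle unitaries in both cases unchanged). Then $\ket{H_T}$ is the final state under oracle $x$ and $\ket{H_0}$ is the final state under oracle $y$. By the triangle inequality, $\|\ket{H_T}-\ket{H_0}\| \le \sum_{s=1}^{T}\|\ket{H_s}-\ket{H_{s-1}}\|$.

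Next, I would bound each telescoping term. The hybrids $\ket{H_s}$ and $\ket{H_{s-1}}$ apply identical unitaries except that at the $s$-th query one uses $O_x$ and the other uses $O_y$. Because all the unitaries following the $s$-th query are norm-preserving, $\|\ket{H_s}-\ket{H_{s-1}}\|$ equals $\|(O_x-O_y)\ket{\psi_{s-1}}\|$, where $\ket{\psi_{s-1}}$ is the state just before the $s$-th query in the $x$-trajectory (and has the same basis-state magnitudes as the state $\ket{\psi_{s-1}}$ of the lemma statement, since the oracle is diagonal in the computational basis). Since $(O_x-O_y)\ket{i,w}$ is zero when $x_i=y_i$ and has magnitude $2$ when $x_i\neq y_i$, we get
\[
\|\ket{H_s}-\ket{H_{s-1}}\|^2 \;=\; 4\sum_{i:x_i\neq y_i}\sum_w|\alpha_{i,w,s-1}|^2.
\]
Applying Cauchy--Schwarz to $\sum_s \|\ket{H_s}-\ket{H_{s-1}}\| \le \sqrt{T}\bigl(\sum_s\|\ket{H_s}-\ket{H_{s-1}}\|^2\bigr)^{1/2}$, and exchanging the order of summation, the inner sum collects exactly $q_i$, giving
\[
\|\ket{H_T}-\ket{H_0}\| \;\le\; 2\sqrt{T}\cdot\sqrt{\sum_{i:x_i\neq y_i}q_i}.
\]

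Finally, I would convert the state-distance bound into an acceptance-probability bound. If $\Pi$ is the ``accept'' projector, then using $|a^2-b^2|=|a+b|\cdot|a-b|$ with $a=\|\Pi\ket{H_T}\|$ and $b=\|\Pi\ket{H_0}\|$, together with the reverse triangle inequality and $\|\Pi\|\le 1$, one gets $|\Pr[Q(x)=1]-\Pr[Q(y)=1]| \le 2\|\ket{H_T}-\ket{H_0}\|$. Combining yields the claimed bound with a constant of $4$; the $8$ in the statement is simply a loose constant, and there is no obstacle here. None of the steps is truly difficult, but the one most worth doing carefully is the indexing in step two, making sure that the amplitudes appearing in $\|(O_x-O_y)\ket{\psi_{s-1}}\|^2$ match the $\alpha_{i,w,t}$ that appear in the definition of the query magnitude $q_i$.
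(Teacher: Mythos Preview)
Your approach is essentially the same hybrid argument the paper uses: telescope over the $T$ queries, bound each term by the norm of $(O_x-O_y)$ applied to the pre-query state, apply Cauchy--Schwarz, and convert to acceptance probabilities. The paper cites \cite[Lemma~3.6]{BV97} for the constant $4$ in the last step, so your sharper constant is fine and the stated $8$ is indeed slack.

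However, you have an off-by-one in precisely the place you flagged. The state just before the $s$-th query is $U_{s-1}\ket{\psi_{s-1}}$, where $U_{s-1}$ is the non-oracle unitary following the $(s-1)$-th query; this does \emph{not} have the same basis-state magnitudes as the lemma's $\ket{\psi_{s-1}}$, because $U_{s-1}$ need not be diagonal. The diagonality of the oracle instead tells you that $U_{s-1}\ket{\psi_{s-1}}$ has the same magnitudes as $O_x U_{s-1}\ket{\psi_{s-1}} = \ket{\psi_s}$. Hence the correct expression is
\[
\|\ket{H_s}-\ket{H_{s-1}}\|^2 \;=\; 4\sum_{i:x_i\neq y_i}\sum_w |\alpha_{i,w,s}|^2,
\]
and summing over $s=1,\ldots,T$ recovers exactly $\sum_{i:x_i\neq y_i} q_i$ as defined in the lemma. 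With your indexing you would instead get $\sum_{t=0}^{T-1}$, which does not match $q_i$. This is a trivial fix, but since you identified the indexing as the delicate step, it is worth getting right.
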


\begin{proof}
Denote by $\ket{\psi'_t}$ the state of the quantum algorithm after $t$ queries, where the first $t-1$ queries are to $x$ and the $t$th query is to $y$. For $t > 0$, we have:
\begin{align*}
|| \ket{\psi_t} - \ket{\psi'_t} || &= \left|\left| 2 \sum_{i : x_i \neq y_i} \sum_{w} \alpha_{i,w,t} \ket{i,w} \right|\right|\\
&= 2\sqrt{\sum_{i : x_i \neq y_i} \sum_{w} |\alpha_{i,w,t}|^2}.
\end{align*}

Hence, if we denote by $\ket{\varphi_t}$ the state of the quantum algorithm after $t$ queries, where all $t$ queries are to $y$, then:
\begin{align*}
\left|\Pr\left[Q(x) = 1 \right] - \Pr\left[Q(y) = 1 \right] \right| & \le 4||\ket{\psi_T} - \ket{\varphi_T}||\\ &\le \sum_{t=1}^T 8\sqrt{\sum_{i : x_i \neq y_i} \sum_{w} |\alpha_{i,w,t}|^2}\\
&\le 8\sqrt{T} \cdot \sqrt{\sum_{t=1}^T \sum_{i : x_i \neq y_i} \sum_w |\alpha_{i,w,t}|^2}\\
&= 8\sqrt{T} \cdot \sqrt{\sum_{i : x_i \neq y_i} q_i}.
\end{align*}
Above, the first line holds by \cite[Lemma 3.6]{BV97}; the second line is valid by the BBBV hybrid argument used in \cite[Theorem 3.3]{BBBV97}; the third line applies the Cauchy-Schwarz inequality, viewing the summation as the inner product between the all $1$s vector and the terms of the sum; and the last line substitutes the definition of $q_i$.
\end{proof}

\section{Consequences of the Raz-Tal Theorem}

In this section, we prove several oracle separations that build on the Raz-Tal Theorem (\Cref{thm:raz-tal}) and other known circuit lower bounds.

\subsection{Relativizing (Non-)Implications of \texorpdfstring{$\mathsf{NP}\subseteq\mathsf{BQP}$}{NP in BQP}}

Our first result proves the following:

\begin{theorem}
\label{thm:bqp=pp_ph_infinite}
There exists an oracle relative to which $\mathsf{BQP} = \mathsf{P^{\# P}}$ and $\mathsf{PH}$\ is infinite.
\end{theorem}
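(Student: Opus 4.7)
The plan is to build the oracle in two pieces, $\mathcal{O} = R \sqcup A$. First, I would let $R$ be a uniformly random oracle; by Hastad--Rossman--Servedio--Tan \cite{HRST17}, $\mathsf{PH}^R$ is infinite with probability $1$. Second, I would adjoin an auxiliary oracle $A$ that, for each $\mathsf{P^{\# P}}$ machine $M_i$ and each input $x$ of length $n$, reserves a dedicated block $A_{i,x}$ of $N = N(n)$ bits. The block $A_{i,x}$ will be a fresh sample from the Forrelation distribution $\mathcal{F}_N$ if $M_i^{\mathcal{O}}(x)=1$, and a fresh uniform sample otherwise. To make this well-defined, I would fill $A$ stage by stage in $n$, choosing $N(n)$ large enough (say $N(n)=2^{n}$) that the bits $A_{i,x}$ being placed at stage $n$ depend only on oracle bits assigned at strictly earlier stages, which is arrangeable because $M_i$ runs in polynomial time.

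Given this $\mathcal{O}$, the containment $\mathsf{P^{\# P}}^{\mathcal{O}} \subseteq \mathsf{BQP}^{\mathcal{O}}$ is immediate: to simulate $M_i^{\mathcal{O}}(x)$, a quantum machine locates the address of $A_{i,x}$ (just a $\poly(n)$-bit string) and runs the Raz--Tal quantum distinguisher of \Cref{thm:raz-tal}(1) on it, succeeding with probability $\ge 1 - 1/N^2$. The reverse containment $\mathsf{BQP}^{\mathcal{O}} \subseteq \mathsf{P^{\# P}}^{\mathcal{O}}$ is the relativizing upper bound of Bernstein--Vazirani and Adleman--DeMarrais--Huang.

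The heart of the proof is showing that $\mathsf{PH}^{\mathcal{O}}$ remains infinite. For each $k$, I would take the $\Sipser_{k+2}$-style hard function of \Cref{thm:hrst} encoded on a fresh slice of $R$; this yields a language in $\mathsf{\Sigma}_{k+1}^{\mathsf{P},\mathcal{O}}$, and the goal is to prove it lies outside $\mathsf{\Sigma}_{k}^{\mathsf{P},\mathcal{O}}$. Any fixed $\mathsf{\Sigma}_k^{\mathsf{P},\mathcal{O}}$ machine on inputs of length $n$ unrolls into a $\quasipoly(n)$-size, $O(1)$-depth $\mathsf{AC^0}$ circuit $C$ whose inputs are bits of $R$ and bits of $A$. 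A hybrid argument, iterating \Cref{thm:raz-tal}(2) once per Forrelation block touched by $C$, lets me replace every relevant $A_{i,x}$-block by fresh uniform bits while shifting the circuit's acceptance probability by at most $\polylog(N)/\sqrt{N}$ per block. Because $C$ addresses at most $\quasipoly(n)$ blocks and $N=2^{n}$, the cumulative shift is $o(1)$. After all replacements, $\mathcal{O}$ is distributed exactly as a uniformly random oracle, so \Cref{cor:hrst_ac0_depth_hierarchy} shows $C$ agrees with $\Sipser_{k+2}$ on at most a $\frac{1}{2}+o(1)$ fraction of inputs. A standard Borel--Cantelli union bound over all $\mathsf{\Sigma}_k$ machines (there are only countably many) then gives that with probability $1$ the hard language sits properly outside $\mathsf{\Sigma}_k^{\mathsf{P},\mathcal{O}}$ for every $k$.

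The main obstacle I anticipate is the bookkeeping in the hybrid argument: I need to make sure that the number of Forrelation blocks at which the distributions differ, summed against the Raz--Tal error $\polylog(N)/\sqrt{N}$, stays $o(1)$, and simultaneously that $N(n)$ grows slowly enough that the $\mathsf{BQP}$ decoder still runs in time polynomial in $n$ (which it does, since the Raz--Tal algorithm uses only $\polylog(N)=\poly(n)$ queries and time). A secondary subtlety is the staged definition of $b_{i,x} = M_i^{\mathcal{O}}(x)$: because $M_i$ might query bits of $A$ that themselves encode $\mathsf{P^{\# P}}$ answers, I must either define things recursively in lexicographic order of (stage, $i$, $x$), or take $N(n)$ to grow fast enough between stages that all queries of $M_i$ on length-$n$ inputs land in strictly earlier stages; the latter approach keeps the construction clean.
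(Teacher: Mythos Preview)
Your approach is essentially the same as the paper's: a random oracle plus Forrelation-encoded $\mathsf{P^{\#P}}$ answers, then a Raz--Tal hybrid to show the Forrelation blocks cannot help any $\mathsf{\Sigma}_k^{\mathsf{P}}$ circuit compute the Sipser function. However, your parameters do not work as written, and the error is not just bookkeeping.

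A $\mathsf{\Sigma}_k^{\mathsf{P}}$ machine on unary input $0^n$ unrolls into an $\mathsf{AC^0}$ circuit of size $2^{\poly(n)}$, not $\quasipoly(n)$; it is quasipolynomial only in the Sipser instance size $2^n$, not in $n$ itself. Consequently the circuit can touch up to $2^{\poly(n)}$ Forrelation blocks, and with your choice $N(n)=2^n$ the per-block Raz--Tal error is $\approx \poly(n)\cdot 2^{-n/2}$. Summing $2^{\poly(n)}$ terms of that size is not $o(1)$; the hybrid blows up. The paper fixes this by taking the Forrelation instance at stage $t$ to have size $2^{t^2}$, so that the per-block error is $\poly(n)/2^{(n+j)^2/2}$ and even the sum over all $\le 2^{n+j}$ blocks at stage $n+j$, summed over $j\ge 1$, is $2^{-\Omega(n^2)}$. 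The paper also resolves your ``secondary subtlety'' not by growing $N(n)$ (which does nothing to stop $M_i$ from querying later stages) but by syntactically restricting the machines encoded at stage $t$ to query only $\mathcal{O}_1,\ldots,\mathcal{O}_{\lfloor\sqrt t\rfloor}$; this simultaneously makes the staging well-defined and guarantees that the low-stage blocks are independent of the target Sipser instance, so only the large-$N$ blocks need to be hybridized away.
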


The proof idea is as follows. First, we take a random oracle, which makes $\mathsf{PH}$ infinite \cite{HRST17,RST15}. Then, we encode the answers to all possible $\mathsf{P^{\# P}}$ queries in instances of the $\Forrelation$ problem, allowing a $\mathsf{BQP}$ machine to efficiently decide any $\mathsf{P^{\# P}}$ language. We then leverage \Cref{thm:raz-tal} to argue that adding these $\Forrelation$ instances does not collapse $\mathsf{PH}$, because the $\Forrelation$ instances look random to $\mathsf{PH}$ algorithms. The formal proof is given below.

\begin{proof}[Proof of \Cref{thm:bqp=pp_ph_infinite}]
We will inductively construct this oracle $\mathcal{O}$, which will consist of two parts, $A$ and $B$. Denote the first part of the oracle $A$, and let this be a random oracle. For each $t\in\mathbb{N}$, we will add a region of $B$ called $B_t$ that will depend on the previously constructed parts of the oracle.  For convenience, we let $A_t$ denote the region of $A$ corresponding to inputs of length $t$, and we write $\mathcal{O}_t = (A_t, B_t)$.

Let $S_t$ be the set of all ordered pairs of the form $\langle M, x \rangle$ such that:
\begin{enumerate}
\item $M$ is a $\mathsf{P^{\# P}}$ oracle machine and $x$ is an input to $M$,
\item $\langle M, x \rangle$ takes less than $t$ bits to specify, and
\item $M$ is syntactically restricted to run in less than $t$ steps, and to query only the $\mathcal{O}_1, \mathcal{O}_2, \ldots, \mathcal{O}_{\lfloor \sqrt{t} \rfloor}$ regions of the oracle.
\end{enumerate}
Note that there are at most $2^t$ elements in $S_t$. Let $M_1, M_2, \ldots, M_{2^t}$ be an enumeration of $S_t$. For each $M_i$ in $S_t$, we add a function $f_i: \{0,1\}^{t^2} \to \{0,1\}$ into $B_t$. That is, we define $B_t: \{0,1\}^t \times \{0,1\}^{t^2} \to \{0,1\}$ by $B_t(i, x) \coloneqq f_i(x)$. The function $f_i$ is chosen subject to the following rules:
\begin{enumerate}
\item If $M_i$ accepts, then $f_i$ is drawn from the Forrelation distribution $\mathcal{F}_{2^{t^2}}$ (given in \Cref{thm:raz-tal}).
\item If $M_i$ rejects, then $f_i$ is uniformly random.
\end{enumerate}
Let $\mathcal{D}$ be the resulting distribution over oracles $\mathcal{O} = (A, B)$.

\begin{claim}
\label{claim:np_in_bqp}
$\mathsf{BQP}^\mathcal{O} = \mathsf{P}^{\mathsf{\# P}^\mathcal{O}}$ with probability $1$ over $\mathcal{O}$.
\end{claim}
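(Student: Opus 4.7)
The plan is to prove both inclusions separately. The forward direction $\mathsf{BQP}^\mathcal{O} \subseteq \mathsf{P}^{\mathsf{\#P}^\mathcal{O}}$ is just the relativizing Adleman-DeMarrais-Huang bound $\mathsf{BQP} \subseteq \mathsf{PP} \subseteq \mathsf{P}^{\mathsf{\#P}}$ and holds for every $\mathcal{O}$, so all the content lies in showing $\mathsf{P}^{\mathsf{\#P}^\mathcal{O}} \subseteq \mathsf{BQP}^\mathcal{O}$ with probability $1$ over $\mathcal{D}$.

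For the reverse inclusion, fix an arbitrary $L \in \mathsf{P}^{\mathsf{\#P}^\mathcal{O}}$, decided by some $\mathsf{P}^{\mathsf{\#P}}$ oracle machine $M$ running in time $p(|x|)$. I would construct a $\mathsf{BQP}^\mathcal{O}$ simulator $\mathcal{A}^*$ as follows: on input $x$, pick the smallest $t = \poly(|x|)$ for which $\langle M, x\rangle \in S_t$ (which works once $t \ge \max\{|\langle M, x\rangle|+1,\, p(|x|)^2+1\}$); determine the index $i$ of $\langle M, x\rangle$ in the canonical enumeration of $S_t$; and then run the Raz-Tal distinguisher from \Cref{thm:raz-tal} on the block $f_i : \{0,1\}^{t^2} \to \{0,1\}$ residing in $B_t$. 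Because the distinguisher makes $\polylog(2^{t^2}) = \poly(|x|)$ queries and runs in $\poly(|x|)$ time, $\mathcal{A}^*$ is a bona fide $\mathsf{BQP}^\mathcal{O}$ algorithm, and by construction $f_i$ is drawn from $\mathcal{F}_{2^{t^2}}$ exactly when $M^\mathcal{O}(x) = 1$ and from the uniform distribution otherwise.

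Correctness then proceeds in two steps. The distinguishing advantage of $1 - 2^{-2t^2}$ that Raz-Tal guarantees between the averaged acceptance probabilities converts, via Markov's inequality, into the statement that for each $x$, with probability $\ge 1 - 2^{-\Omega(t^2)}$ over the draw of $f_i$ in $\mathcal{D}$, the realized $f_i$ is ``good'' in the sense that the distinguisher's acceptance probability on it is $\ge 2/3$ if $M^\mathcal{O}(x) = 1$ and $\le 1/3$ otherwise, meeting the standard $\mathsf{BQP}$ bound. Summing the failure probabilities $2^{-\Omega(t(|x|)^2)}$ over all $x \in \{0,1\}^*$ gives a convergent series, so the Borel-Cantelli lemma implies that with probability $1$ over $\mathcal{O}$, $\mathcal{A}^*$ is correct on all but finitely many $x$; hardcoding those finitely many exceptions into $\mathcal{A}^*$ yields $L \in \mathsf{BQP}^\mathcal{O}$. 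A countable intersection over the (countably many) candidate machines $M$ promotes this to a probability-$1$ statement simultaneously for every $L \in \mathsf{P}^{\mathsf{\#P}^\mathcal{O}}$.

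The main point I would take care to verify is that the inductive construction of $B$ is noncircular: each block $f_i \in B_t$ depends on $M_i^\mathcal{O}$'s acceptance for a machine $M_i$ syntactically restricted to query only $\mathcal{O}_1, \ldots, \mathcal{O}_{\lfloor\sqrt{t}\rfloor}$, so $B_t$ is a measurable function of strictly earlier stages and $\mathcal{D}$ is well-defined as a distribution on oracles. Beyond this bookkeeping, I do not expect a serious obstacle: the whole argument is a clean application of Raz-Tal, Markov, and Borel-Cantelli to an oracle that has been engineered precisely so that a $\mathsf{BQP}$ machine can read off every $\mathsf{P}^{\mathsf{\#P}}$ answer by decoding a single Forrelation instance.
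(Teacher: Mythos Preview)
Your proposal is correct and follows essentially the same approach as the paper: run the Raz-Tal distinguisher on the appropriate Forrelation block in $B_t$, use Markov's inequality to pass from the averaged advantage to a per-instance bounded-error guarantee, and then apply Borel--Cantelli to handle all inputs with probability $1$. The paper performs the Borel--Cantelli sum over all pairs $\langle M, x\rangle$ simultaneously rather than fixing $M$ and taking a countable intersection afterward, but this is a cosmetic difference.
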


\begin{proof}[Proof of Claim]
It suffices to show that $\mathsf{P}^{\mathsf{\# P}^\mathcal{O}} \subseteq \mathsf{BQP}^\mathcal{O}$, as the reverse containment holds relative to all oracles. Let $M$ be any $\mathsf{P}^{\mathsf{\# P}}$ oracle machine. Then given an input $x$ of size $n$, a quantum algorithm can decide whether $M^\mathcal{O}(x)$ accepts in $\poly(n)$ time by looking up the appropriate $B_t$, the one that contains a $\Forrelation$ instance $f_i: \{0,1\}^{t^2} \to \{0,1\}$ encoding the behavior of $\left<M,x\right>$, and then deciding whether $f_i$ is Forrelated or random by using the distinguishing algorithm $\mathcal{A}$ from \Cref{thm:raz-tal}.

In more detail, by \Cref{thm:raz-tal} we know that:
\[
\Pr_{\mathcal{O} \sim \mathcal{D}} \left[\mathcal{A}(f_i) \neq M^\mathcal{O}(x)\right] \le 2^{-2t^2},
\]
where the probability in the above expression is also taken over the randomness of $\mathcal{A}$. By Markov's inequality, we may conclude:
\[
\Pr_{\mathcal{O} \sim \mathcal{D}} \left[\Pr\left[\mathcal{A}(f_i) \neq M^\mathcal{O}(x)\right] \ge 1/3\right] \le 3 \cdot 2^{-2t^2}.
\]
Hence, the $\mathsf{BQP}$ promise problem defined by $\mathcal{A}$ agrees with the $\mathsf{P}^{\mathsf{\# P}}$ language on $\left<M,x\right>$, except with probability at most $3 \cdot 2^{-2t^2}$.

We now appeal to the Borel-Cantelli Lemma to argue that, with probability $1$ over $\mathcal{O} \sim \mathcal{D}$, $\mathcal{A}$ correctly decides $M^\mathcal{O}(x)$ for all but finitely many $x \in \{0,1\}^*$. Since there are at most $2^t$ inputs $\left<M,x\right>$ that take less than $t$ bits to specify, we have:
\[
\sum_{\langle x, M \rangle \in \{0,1\}^*} \Pr_{\mathcal{O} \sim \mathcal{D}}\left[\mathcal{A}^\mathcal{O} \text{ does not decide } M^\mathcal{O}(x)\right] \le \sum_{t=1}^{\infty} 2^t \cdot 3 \cdot 2^{-2t^2} < \infty
\]
Therefore, the probability that $\mathcal{A}$ fails on infinitely many inputs $\left<M,x\right>$ is $0$. Hence, $\mathcal{A}$ can be modified into a $\mathsf{BQP}$ algorithm that decides $M^\mathcal{O}(x)$ for \textit{all} $x \in \{0,1\}^*$, with probability $1$ over $\mathcal{O} \sim \mathcal{D}$.
\end{proof}

Now, we must show that $\mathsf{PH}^\mathcal{O}$ is infinite. We will accomplish this by proving, for all $k \in \Naturals$, $\mathsf{\Sigma}_{k}^{\mathsf{P}^\mathcal{O}}\neq\mathsf{\Sigma}_{k-1}^{\mathsf{P}^\mathcal{O}}$ with probability 1 over the choice of $\mathcal{O}$. Let $L^\mathcal{O}$ be the unary language used for the same purpose as in \cite{RST15,HRST17}. That is, $L^\mathcal{O}$ consists of strings $0^n$ such that, if we treat $n$ as an index into a portion of the random oracle $A_n$ that encodes a size-$2^n$ instance of the $\Sipser_{k+1}$ function, then that instance evaluates to $1$. By construction, $L^\mathcal{O} \in \mathsf{\Sigma}_{k}^{\mathsf{P}^\mathcal{O}}$ \cite{RST15,HRST17}. Furthermore, \cite{RST15,HRST17} show that $L^\mathcal{O}$ is not in $\mathsf{\Sigma}_{k-1}^{\mathsf{P}^A}$ with probability $1$ over the random oracle $A$. We need to argue that adding $B$ has probability $0$ of changing this situation. Fix any $\mathsf{\Sigma}_{k-1}^{\mathsf{P}^\mathcal{O}}$ oracle machine $M$. By the union bound, it suffices to show that

\[
\Pr_{\mathcal{O}\sim\mathcal{D}}\left[M^\mathcal{O} \text{ decides } L^\mathcal{O}\right]=0.
\]

Let $n_1<n_2<\cdots$ be an infinite sequence of input lengths, spaced far enough apart (e.g. $n_{i+1}=2^{n_i}$) such that $M\left(0^{n_i}\right)$ can query the oracle on strings of length $n_{i+1}$ or greater for at most finitely many values of $i$. Next, let
\[
p(M,i)\coloneqq\Pr_{\mathcal{O}\sim\mathcal{D}}\left[M^\mathcal{O} \text{ correctly decides } 0^{n_i}|M^\mathcal{O} \text{ correctly decided }0^{n_1},\dots,0^{n_{i-1}}\right]
\]
Then we have that
\[
\Pr_{\mathcal{O}\sim\mathcal{D}}\left[M^\mathcal{O}\text{ decides }L^\mathcal{O}\right]\leq\prod_{i=1}^\infty p(M,i).
\]

\noindent Thus it suffices to show that, for every fixed $M$, we have $p(M,i)\leq 0.7$ for all but finitely many $i$. To do this, we will consider a new quantity $q(M,i)$, which is defined exactly the same way as $p(M,i)$, except that now the oracle is chosen from a different distribution, which we call $D_i$. This $D_i$ is defined identically to $\mathcal{D}$ on $A$ and $B_1, \ldots, B_{n_i}$, but is uniformly random on $B_m$ for all $m > n_i$. It suffices to prove the following: for any fixed $M$,
\begin{enumerate}[(a)]
    \item $q(M,i)\leq 0.6$ for all but finitely many values of $i$, and
    \item $|q(M,i)-p(M,i)|\leq 0.1$ for all but finitely many values of $i$.
\end{enumerate}

Statement (a) essentially follows from the work of \cite{HRST17}. In more detail, the key observation is that the only portion of $\mathcal{O}$ that can depend on whether $0^{n_i}$ is in $L^\mathcal{O}$ is the input to the size-$2^{n_i}$ $\Sipser_{k+1}$ function that is encoded in $A$. All other portions of $\mathcal{O}$ are sampled independently from this region under $D_i$:
\begin{enumerate}
\item The rest of $A$ is sampled uniformly at random,
\item $B_1,\ldots,B_{n_i}$ are drawn from a distribution that cannot depend on any queries to $A$ on inputs of length $\lfloor \sqrt{n_i} \rfloor$ or greater (so in particular, they cannot depend on $A_{n_i}$), and
\item $B_{n_i+1}, B_{n_i + 2},\ldots$ are sampled uniformly at random.
\end{enumerate}
Hence, $M$ is forced to evaluate the size-$2^{n_i}$ $\Sipser_{k+1}$ function using only auxilliary and uncorrelated random bits. By the well-known connection between $\mathsf{\Sigma}_{k-1}^\mathsf{P}$ oracle machines and constant-depth circuits (\Cref{lem:furst-saxe-sipser}), $M$'s behavior on this size-$2^{n_i}$ string can be computed by an $\mathsf{AC^0}\left[2^{\poly(n_i)}, k\right]$ circuit. \Cref{cor:hrst_ac0_depth_hierarchy} shows that such a circuit correctly evaluates this $\Sipser_{k+1}$ function with probability greater than (say) 0.6 for at most finitely many $i$. This even holds conditioned on $M^\mathcal{O}$ correctly deciding $0^{n_1},\dots,0^{n_{i-1}}$, because the size-$2^{n_i}$ $\Sipser_{k+1}$ instance is chosen independently from the smaller instances, and because $M\left(0^{n_i}\right)$ can query the oracle on strings of length $n_{i+1}$ or greater for at most finitely many values of $i$.

For statement (b), we will prove this claim using a hybrid argument. We consider an infinite sequence of hybrids $\{D_{i,j} : j \in \Naturals\}$ between $D_i = D_{i,0}$ and $\mathcal{D}$, where in the $j$th hybrid $D_{i,j}$ we sample $A$ and $B_1,\ldots,B_{n_i + j}$ according to $\mathcal{D}$ and $B_{n_i + j + 1}, B_{n_i + j+2}, \ldots$ uniformly at random. The change between each $D_{i,j-1}$ and $D_{i,j}$ may be further decomposed into a sequence of smaller changes: from the uniform distribution $\mathcal{U}$ to the Forrelated $\mathcal{F}$, for each function $f: \{0,1\}^{(n_i + j)^2} \to \{0,1\}$ corresponding to a $\mathsf{P^{\# P}}$ oracle machine that happens to accept. 

Suppose we fix the values of $\mathcal{O}$ on everything except for $f$. \Cref{thm:raz-tal} implies that:
\begin{equation}
\label{eq:raz-tal-ph-no-triangle}
\left|\Pr_{f \sim \mathcal{F}}\left[M^\mathcal{O}\left(0^{n_i}\right) = 1\right]-\Pr_{f \sim \mathcal{U}}\left[M^\mathcal{O}\left(0^{n_i}\right) = 1 \right]\right| \le \frac{\poly(n_i)}{2^{(n_i+j)^2/2}}.
\end{equation}
This is because, again using \Cref{lem:furst-saxe-sipser}, there exists an $\mathsf{AC^0}\left[2^{\poly(n_i)} , k\right]$ circuit that takes the oracle string as input and evaluates to $M^\mathcal{O}\left(0^{n_i}\right)$. In fact, \eqref{eq:raz-tal-ph-no-triangle} \textit{also} holds even if the parts of $\mathcal{O}$ other than $f$ are not necessarily fixed, but are drawn from some distribution, by convexity (so long as the distribution is the same in both of the probabilities in \eqref{eq:raz-tal-ph-no-triangle}). In particular, using the fact that the $n_i$s are far enough apart for sufficiently large $i$, \eqref{eq:raz-tal-ph-no-triangle} also holds (for all sufficiently large $i$) when the parts of $\mathcal{O}$ other than $f$ are drawn from the distribution conditioned on $M^\mathcal{O}$ correctly deciding $0^{n_1},\dots,0^{n_{i-1}}$.

Now, recall that there are at most $2^{t}$ $\Forrelation$ instances in the $B_t$ part of the oracle. By the triangle inequality, bounding over each of these instances yields:
\[
\left|\Pr_{\mathcal{O} \sim D_{i, j-1}}\left[M^\mathcal{O}\left(0^{n_i}\right) = 1\right]-\Pr_{\mathcal{O} \sim D_{i, j}}\left[M^\mathcal{O}\left(0^{n_i}\right) = 1 \right]\right| \le 2^{n_i + j} \cdot \frac{\poly(n_i)}{2^{(n_i+j)^2/2}},
\]
where we implicitly condition on $M^\mathcal{O}$ correctly deciding $0^{n_1},\dots,0^{n_{i-1}}$ in both of the probabilities above, omitting it as written purely for notational simplicity. Hence, when we change \textit{all} of the hybrids, we obtain:
\begin{align*}
|q(M, i) - p(M, i)| &= \left|\Pr_{\mathcal{O} \sim D_i}\left[M^\mathcal{O}\left(0^{n_i}\right) = 1\right]-\Pr_{\mathcal{O} \sim \mathcal{D}}\left[M^\mathcal{O}\left(0^{n_i}\right) = 1 \right]\right|\\
&\le \sum_{j=1}^\infty 2^{n_i + j} \cdot \frac{\poly(n_i)}{2^{(n_i + j)^2 / 2}}\\
&\le \frac{\poly(n_i)}{2^{\Omega\left(n_i^2\right)}}\\
&\le 0.1
\end{align*}
for all but at most finitely many $i$.
\end{proof}

We conclude this section with a simple corollary.

\begin{corollary}
\label{cor:bqp_not_in_np/poly}
There exists an oracle relative to which $\mathsf{BQP}\not \subset
\mathsf{NP/poly}$.
\end{corollary}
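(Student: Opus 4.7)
The plan is to derive the corollary directly from \Cref{thm:bqp=pp_ph_infinite} by contradiction, using the fact that a strong classical upper bound on $\mathsf{BQP}$ would force $\mathsf{PH}$ to collapse, contradicting the infinitude of $\mathsf{PH}$ relative to the oracle we already constructed.

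Concretely, let $\mathcal{O}$ be any oracle produced by \Cref{thm:bqp=pp_ph_infinite}, so that $\mathsf{BQP}^\mathcal{O} = \mathsf{P}^{\mathsf{\#P}^\mathcal{O}}$ while $\mathsf{PH}^\mathcal{O}$ is infinite. Suppose for contradiction that $\mathsf{BQP}^\mathcal{O} \subseteq \mathsf{NP}^\mathcal{O}/\mathrm{poly}$. Since $\mathsf{BQP}$ is closed under complement relative to every oracle, the same advice circuits also give $\mathsf{BQP}^\mathcal{O} \subseteq \mathsf{coNP}^\mathcal{O}/\mathrm{poly}$. In particular $\mathsf{NP}^\mathcal{O} \subseteq \mathsf{BQP}^\mathcal{O} \subseteq \mathsf{coNP}^\mathcal{O}/\mathrm{poly}$.

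Now I would invoke the relativized Karp--Lipton-style theorem of Yap: if $\mathsf{NP} \subseteq \mathsf{coNP}/\mathrm{poly}$, then $\mathsf{PH} = \Sigma_3^\mathsf{P}$. The proof of this theorem is purely syntactic (a self-reducibility plus advice-guessing argument), so it relativizes without change. Applied in the world of $\mathcal{O}$, it yields $\mathsf{PH}^\mathcal{O} = \Sigma_3^{\mathsf{P},\mathcal{O}}$, contradicting the fact that $\mathsf{PH}^\mathcal{O}$ is infinite. Hence $\mathsf{BQP}^\mathcal{O} \not\subset \mathsf{NP}^\mathcal{O}/\mathrm{poly}$.

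There is really no technical obstacle: the entire content of the corollary is already encoded in \Cref{thm:bqp=pp_ph_infinite}, and the only thing to verify is that Yap's theorem relativizes, which is standard. (One could alternatively route through Toda's theorem plus a Karp--Lipton-type collapse for $\mathsf{P}^{\mathsf{\#P}} \subseteq \mathsf{NP}/\mathrm{poly}$, but the Yap route is the cleanest.) The only mild subtlety is remembering to use closure of $\mathsf{BQP}$ under complement so that an $\mathsf{NP}/\mathrm{poly}$ upper bound on $\mathsf{BQP}$ automatically upgrades to $\mathsf{NP}/\mathrm{poly} \cap \mathsf{coNP}/\mathrm{poly}$, which is what enables Yap's hypothesis.
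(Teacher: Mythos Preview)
Your proposal is correct and is essentially the same argument as the paper's: both use the oracle from \Cref{thm:bqp=pp_ph_infinite}, invoke the relativized Yap theorem to conclude that $\mathsf{coNP}^\mathcal{O}\not\subset\mathsf{NP}^\mathcal{O}/\mathsf{poly}$ (equivalently, $\mathsf{NP}^\mathcal{O}\not\subset\mathsf{coNP}^\mathcal{O}/\mathsf{poly}$), and then use $\mathsf{coNP}^\mathcal{O}\subseteq\mathsf{BQP}^\mathcal{O}$ to finish. The paper phrases it as a direct implication rather than your contradiction-plus-closure-under-complement detour, but the content is identical.
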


\begin{proof}
It is known that for all oracles $\mathcal{O}$, $\mathsf{coNP}^\mathcal{O}\subset\mathsf{NP}^\mathcal{O}/\mathsf{poly}$ implies that $\mathsf{PH}^\mathcal{O}$ collapses to the third level \cite{Yap83}. 
Let $\mathcal{O}$ be the oracle used in \Cref{thm:bqp=pp_ph_infinite}. Since $\mathsf{PH}^\mathcal{O}$ is infinite, $\mathsf{coNP}^\mathcal{O}\not\subset\mathsf{NP}^\mathcal{O}/\mathsf{poly}$. On the other hand, $\mathsf{coNP}^\mathcal{O}\subseteq\mathsf{BQP}^\mathcal{O}$, and hence $\mathsf{BQP}^\mathcal{O}\not\subset\mathsf{NP}^\mathcal{O}/\mathsf{poly}$.
\end{proof}

\subsection{Weak \texorpdfstring{$\mathsf{NP}$}{NP}, Strong \texorpdfstring{$\mathsf{BQP}$}{BQP}}

In this section, we prove the following:

\begin{theorem}
\label{thm:p=np_bqp=pp}
There exists an oracle relative to which $\mathsf{P}=\mathsf{NP}%
\neq\mathsf{BQP}=\mathsf{P}^{\mathsf{\#P}}$.
\end{theorem}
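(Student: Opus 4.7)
The plan is to extend the construction of \Cref{thm:bqp=pp_ph_infinite} with a third oracle region $C$ whose sole purpose is to collapse $\mathsf{PH}$ to $\mathsf{P}$. The oracle $\mathcal{O} = (A, B, C)$ will have: $A$ a uniformly random oracle; $B$ a region of $\Forrelation$ instances encoding the answers to $\mathsf{P}^{\mathsf{\#P}^\mathcal{O}}$ queries, built exactly as in \Cref{thm:bqp=pp_ph_infinite}; and $C$ a new region that encodes, in the clear, the answers to all $\mathsf{PH}^\mathcal{O}$ queries, i.e., for each level $k$, each $\mathsf{\Sigma}_k^\mathsf{P}$ oracle machine $M$, and each input $y$, a bit stating whether $M^\mathcal{O}(y)$ accepts. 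As in \Cref{thm:bqp=pp_ph_infinite}, everything is assembled via a padded, layered diagonalization: the bits added at description length $t$ only answer machines whose oracle queries stay within levels $t'<\sqrt{t}$, so every ``new'' $B$ or $C$ bit depends only on regions already constructed, avoiding any circularity.

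With this oracle in place, three of the four desired properties are immediate. First, $\mathsf{P}^\mathcal{O} = \mathsf{NP}^\mathcal{O} = \mathsf{PH}^\mathcal{O}$: a single deterministic lookup into $C$ resolves any $\mathsf{\Sigma}_k^{\mathsf{P}^\mathcal{O}}$ question, and an $\mathsf{NP}^\mathcal{O}$ machine gains nothing beyond $\mathsf{PH}^\mathcal{O}$. Second, $\mathsf{BQP}^\mathcal{O} = \mathsf{P}^{\mathsf{\#P}^\mathcal{O}}$: the containment $\mathsf{BQP}^\mathcal{O} \subseteq \mathsf{P}^{\mathsf{\#P}^\mathcal{O}}$ is trivial, while for the other direction a $\mathsf{BQP}$ machine decodes the appropriate $\Forrelation$ block of $B$ using the algorithm of \Cref{thm:raz-tal} to read off any desired $\mathsf{P}^{\mathsf{\#P}^\mathcal{O}}$ bit, exactly as in \Cref{thm:bqp=pp_ph_infinite}. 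The remaining work is the separation $\mathsf{BQP}^\mathcal{O} \ne \mathsf{P}^\mathcal{O}$.

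To obtain the separation it suffices to exhibit a language $L$ in $\mathsf{P}^{\mathsf{\#P}^\mathcal{O}}$ that is not in $\mathsf{PH}^{(A,B)}$, since $\mathsf{P}^\mathcal{O} = \mathsf{PH}^\mathcal{O} = \mathsf{PH}^{(A,B)}$ (the last equality because $C$ is itself computable in $\mathsf{PH}^{(A,B)}$). I would take $L$ to be the unary language consisting of all $0^n$ such that the parity of the $2^n$ bits of $A$ at length $n$ equals $1$, which is trivially in $\mathsf{P}^{\mathsf{\#P}^A} \subseteq \mathsf{P}^{\mathsf{\#P}^\mathcal{O}}$. The lower bound $L \notin \mathsf{PH}^{(A,B)}$ with probability $1$ is then obtained by importing, essentially verbatim, the hybrid argument of \Cref{thm:bqp=pp_ph_infinite}: for each fixed $\mathsf{\Sigma}_k^{\mathsf{P}^{(A,B)}}$ oracle machine $M$ and a sufficiently spaced sequence $n_1 < n_2 < \cdots$, define $p(M,i)$ as the probability that $M$ correctly decides $L$ on $0^{n_i}$ under the true distribution, and $q(M,i)$ as the same probability under the hybrid that replaces every $\Forrelation$ instance in $B$ by uniformly random bits. \Cref{thm:raz-tal} and a triangle-inequality bound over the at most $2^t$ Forrelation blocks at each level $t$ give $|p(M,i) - q(M,i)| \le 0.1$ for all but finitely many $i$; meanwhile, under the hybrid distribution $M$'s behavior on $0^{n_i}$ is an $\mathsf{AC^0}\bigl[2^{\poly(n_i)}, O(1)\bigr]$ function of a random string that contains the $2^{n_i}$-bit slice of $A$ on which $L$ depends, so \Cref{thm:hastad_parity_ac0_average} yields $q(M,i) \le 0.6$. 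A Borel-Cantelli argument and a countable union bound over $M$ finish the separation.

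The main obstacle is verifying that the Raz-Tal hybrid argument, originally carried out in \Cref{thm:bqp=pp_ph_infinite} for the two-region oracle $(A,B)$, is not disrupted by the presence of $C$. The saving observation is that $C$ is a deterministic function of $(A,B)$ and that the relevant lower bound is against $\mathsf{PH}^{(A,B)}$ machines, so the hybrid swaps $\Forrelation$ blocks in $B$ for uniform random bits with $A$ held fixed, and $C$ never appears in the analysis. The padded construction guarantees that a $\mathsf{PH}^{(A,B)}$ machine on input length $n$ interacts only with parts of $B$ whose Forrelation structure is independent of the slice of $A$ that defines $L$ at length $n$, which is exactly the locality the hybrid argument needs to keep the cumulative Raz-Tal error negligible.
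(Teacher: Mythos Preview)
Your overall strategy matches the paper's, but your decision to have $C$ encode answers to \emph{all} of $\mathsf{PH}$ rather than just $\mathsf{NP}$ is fatal. Once $C$ contains a bit for every triple $(k,M',y)$ with $M'$ a $\mathsf{\Sigma}_k^{\mathsf{P}}$ oracle machine, a $\mathsf{P}^\mathcal{O}$ machine on input $0^n$ can manufacture a query to $C$ for a $\mathsf{\Sigma}_{\Theta(n)}^{\mathsf{P}}$ machine that computes the parity of $A_n$ directly: parity of $2^n$ oracle bits is in alternating polynomial time with $\Theta(n)$ alternations (e.g.\ via a balanced $\XOR$ tree), the resulting machine has description length $O(n)$, and it fits into $C_t$ for $t\approx n^2$. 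Hence $L\in\mathsf{P}^\mathcal{O}$ under your oracle, and the separation $\mathsf{P}^\mathcal{O}\neq\mathsf{BQP}^\mathcal{O}$ evaporates---indeed your $C$ hands $\mathsf{P}^\mathcal{O}$ essentially $\mathsf{PSPACE}$ power over the lower levels, which already swallows $\mathsf{P}^{\mathsf{\#P}}$. The step ``$\mathsf{PH}^\mathcal{O}=\mathsf{PH}^{(A,B)}$ because $C$ is computable in $\mathsf{PH}^{(A,B)}$'' is exactly where this hides: each individual bit of $C$ may lie in \emph{some} level of $\mathsf{PH}^{(A,B)}$, but a $\mathsf{P}^\mathcal{O}$ machine can query $C$-bits whose $\mathsf{PH}$-level grows with $n$, so no fixed level of $\mathsf{PH}^{(A,B)}$ captures the language it decides.

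The paper avoids this by encoding only $\mathsf{NP}$ in $C$, which already forces $\mathsf{P}^\mathcal{O}=\mathsf{NP}^\mathcal{O}$ and hence $\mathsf{P}^\mathcal{O}=\mathsf{PH}^\mathcal{O}$. Even then it does not route through the equality $\mathsf{PH}^\mathcal{O}=\mathsf{PH}^{(A,B)}$; fully unfolding $C$ down to $(A,B)$ would cost $\Theta(\log\log n)$ recursion levels rather than $O(1)$. Instead the paper proves a dedicated recursion lemma (\Cref{lem:recursive_np_ac0}): for a \emph{fixed} $\mathsf{NP}^\mathcal{O}$ machine with polynomial running time, the computation on $0^{n_i}$ is a genuinely constant-depth $\mathsf{AC^0}$ circuit whose inputs are $A$, $B$, and only the \emph{low-level} portion $C_1,\ldots,C_{n_i}$. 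Since $C_{\le n_i}$ depends only on $\mathcal{O}_{\le\sqrt{n_i}}$, it is independent of $A_{n_i}$ and can be treated as auxiliary randomness; the Raz--Tal hybrid then runs over that circuit exactly as in \Cref{thm:bqp=pp_ph_infinite}.
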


Note that, relative to any oracle, $\mathsf{P} = \mathsf{NP}$ implies $\mathsf{P} = \mathsf{PH}$. So, the Raz-Tal oracle separation of $\mathsf{BQP}$ and $\mathsf{PH}$ \cite{RT19} is necessary to prove \Cref{thm:p=np_bqp=pp}, in the sense that \Cref{thm:p=np_bqp=pp} is strictly stronger: any oracle $\mathcal{O}$ that satisfies \Cref{thm:p=np_bqp=pp} must also have $\mathsf{BQP}^\mathcal{O} \not\subset \mathsf{PH}^\mathcal{O}$. 

We follow a similar proof strategy to \Cref{thm:bqp=pp_ph_infinite}, with some additional steps. First, we take a random oracle, which separates $\mathsf{PH}$ from $\mathsf{P}^\mathsf{\#P}$ (morally, because $\PARITY$ is not approximable by $\mathsf{AC^0}$ circuits \cite{Has87}). We encode the answers to all possible $\mathsf{P}^\mathsf{\# P}$ queries in instances of the $\Forrelation$ problem, allowing a $\mathsf{BQP}$ machine to efficiently decide any $\mathsf{P}^\mathsf{\# P}$ language. Then, we add a region of the oracle that answers all $\mathsf{NP}$ queries, which collapses $\mathsf{PH}$ to $\mathsf{P}$. Finally, we leverage \Cref{thm:raz-tal} to argue that the $\Forrelation$ instances have no effect on the separation between $\mathsf{PH}$ and $\mathsf{P}^\mathsf{\#P}$, because the $\Forrelation$ instances look random to $\mathsf{PH}$ algorithms. The formal proof is given below.

\begin{proof}[Proof of \Cref{thm:p=np_bqp=pp}]
This oracle $\mathcal{O}$ will consist of three parts: a random oracle $A$, and oracles $B$ and $C$ that we will construct inductively. For each $t \in \Naturals$, we will add regions $B_t$ and $C_t$ that will depend on the previously constructed parts of the oracle. For convenience, we let $A_t$ denote the region of $A$ corresponding to inputs of length $t$, and we write $\mathcal{O}_t = (A_t, B_t, C_t)$.

We first describe $B$, which will effectively collapse $\mathsf{P}^\mathsf{\# P}$ to $\mathsf{BQP}$. For $t \in \Naturals$, let $S_t$ be the set of all ordered pairs of the form $\langle M, x \rangle$ such that:
\begin{enumerate}
\item $M$ is a $\mathsf{P}^\mathsf{\# P}$ oracle machine and $x$ is an input to $M$,
\item $\langle M, x \rangle$ takes less than $t$ bits to specify, and
\item $M$ is syntactically restricted to run in less than $t$ steps, and to query only the $\mathcal{O}_1, \mathcal{O}_2, \ldots, \mathcal{O}_{\lfloor \sqrt{t} \rfloor}$ regions of the oracle.
\end{enumerate}
Note that there are at most $2^t$ elements in $S_t$. Let $M_1, M_2, \ldots, M_{2^t}$ be an enumeration of $S_t$. For each $M_i$ in $S_t$, we add a function $f_i: \{0,1\}^{t^2} \to \{0,1\}$ into $B_t$. That is, we define $B_t: \{0,1\}^t \times \{0,1\}^{t^2} \to \{0,1\}$ by $B_t(i, x) \coloneqq f_i(x)$. The function $f_i$ is chosen subject to the following rules:
\begin{enumerate}
\item If $M_i$ accepts, then $f_i$ is drawn from the Forrelation distribution $\mathcal{F}_{2^{t^2}}$ (given in \Cref{thm:raz-tal}).
\item If $M_i$ rejects, then $f_i$ is uniformly random.
\end{enumerate}
We next describe $C$, which will effectively collapse $\mathsf{NP}$ to $\mathsf{P}$. For $t \in \Naturals$, define $T_t$ similarly to $S_t$, except that we take $\mathsf{NP}$ oracle machines instead of $\mathsf{P}^\mathsf{\# P}$ oracle machines. For each $M_i$ in $T_t$, we add a bit into $C_t$ that returns $M_i(x)$. That is, we define $C_t: \{0,1\}^t \to \{0,1\}$ by $C_t(i) \coloneqq M_i(x)$.

Let $\mathcal{D}$ be the resulting distribution over oracles $\mathcal{O} = (A, B, C)$. We will show that the statement of the theorem holds with probability $1$ over $\mathcal{O}$ sampled from $\mathcal{D}$.

\begin{claim}
\label{claim:p=np}
$\mathsf{P}^\mathcal{O} = \mathsf{NP}^\mathcal{O}$ with probability $1$ over $\mathcal{O}$.
\end{claim}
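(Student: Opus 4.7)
The plan is to establish the nontrivial direction $\mathsf{NP}^\mathcal{O} \subseteq \mathsf{P}^\mathcal{O}$ by direct simulation, exploiting the fact that the $C$ region was designed precisely to answer $\mathsf{NP}$ queries. The reverse containment holds relative to every oracle, so nothing probabilistic is needed there; in fact, the simulation I have in mind will work pointwise for every $\mathcal{O}$, so the ``with probability $1$'' conclusion is immediate.

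First, I would fix an arbitrary $\mathsf{NP}$ oracle machine $M$ with time bound $p(n)$ and an input $x$ with $|x| = n$. To match the construction of $C_t$, I would pass to a syntactically restricted variant $M'$ of $M$ (with a built-in clock of length $p(n)$ and queries truncated to length $\leq p(n)$) satisfying $M'^{\mathcal{O}}(x) = M^{\mathcal{O}}(x)$. Then I would pick $t$ to be a concrete polynomial in $n$, e.g.\ $t = (p(n) + |\langle M', x\rangle|)^2 + 1$, large enough that $\langle M', x\rangle$ takes fewer than $t$ bits to specify, $M'$ runs in fewer than $t$ steps on $x$, and $p(n) \leq \lfloor\sqrt{t}\rfloor$. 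By construction $\langle M', x\rangle \in T_t$, and so $C_t$ contains, at a specific position $i$ determined by the canonical enumeration of $T_t$, the bit $M'^{\mathcal{O}}(x)$.

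The $\mathsf{P}^{\mathcal{O}}$ simulation then proceeds deterministically: on input $x$, compute $t$ and the index $i$ of $\langle M', x\rangle$ in the enumeration of $T_t$ (both in time $\poly(n)$), issue a single query to the corresponding bit of $C_t$, and output the result. Since this works for every realization of $\mathcal{O}$, we get $\mathsf{NP}^{\mathcal{O}} \subseteq \mathsf{P}^{\mathcal{O}}$ pointwise, which gives the claim (the ``with probability $1$'' formulation being weaker).

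The one point that genuinely needs checking is that the recursive definition of $C$ is well-posed, since $T_t$ ranges over $\mathsf{NP}$ machines that query the full oracle $\mathcal{O}$, including $C$ itself. The syntactic restriction in the definition of $T_t$---machines are only allowed to query $\mathcal{O}_1, \ldots, \mathcal{O}_{\lfloor\sqrt{t}\rfloor}$---saves us, because for $t \geq 2$ we have $\lfloor\sqrt{t}\rfloor < t$, so $C_t$ depends only on the regions $(A_j, B_j, C_j)$ for $j \leq \lfloor\sqrt{t}\rfloor$, all of which are already fixed at stage $t$ of the inductive construction. Thus there is no circularity, and the simulation above is the main (and only) obstacle, which the present setup has already been arranged to sidestep.
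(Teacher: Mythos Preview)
Your proposal is correct and takes essentially the same approach as the paper, which dismisses the claim as trivial in a single sentence: look up the bit in $C$ that encodes $M(x)$. Your version is simply a careful unpacking of that sentence, including the verification that $t$ can be chosen polynomially in $n$ and that the recursion defining $C$ is well-founded; these details are fine and the paper omits them.
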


\begin{claim}
\label{claim:bqp=pp}
$\mathsf{BQP}^\mathcal{O} = \mathsf{P}^{\mathsf{\# P}^\mathcal{O}}$ with probability $1$ over $\mathcal{O}$.
\end{claim}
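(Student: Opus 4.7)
The plan is essentially to mimic the proof of Claim~\ref{claim:np_in_bqp} (from Theorem~\ref{thm:bqp=pp_ph_infinite}) almost verbatim, with only cosmetic changes to account for the additional region $C$ of the oracle. The reverse containment $\mathsf{BQP}^\mathcal{O} \subseteq \mathsf{P}^{\mathsf{\# P}^\mathcal{O}}$ is free since it relativizes, so the only thing to establish is $\mathsf{P}^{\mathsf{\# P}^\mathcal{O}} \subseteq \mathsf{BQP}^\mathcal{O}$.

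Concretely, fix a $\mathsf{P}^{\mathsf{\# P}}$ oracle machine $M$ and an input $x$ of length $n$. The plan is to describe a BQP algorithm that, on input $x$, computes the smallest $t$ such that $\langle M, x \rangle \in S_t$, locates the corresponding block $f_i : \{0,1\}^{t^2} \to \{0,1\}$ inside $B_t$ (which takes only $\poly(n)$ time, since $t = \poly(n)$), and runs the Raz--Tal quantum distinguisher $\mathcal{A}$ from Theorem~\ref{thm:raz-tal} on $f_i$. By construction, $f_i$ is drawn from $\mathcal{F}_{2^{t^2}}$ if $M^\mathcal{O}$ accepts $x$ and uniformly otherwise, so $\mathcal{A}$ outputs $M^\mathcal{O}(x)$ with error at most $2^{-2t^2}$ in expectation over the oracle.

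Applying Markov's inequality in the same way as in Claim~\ref{claim:np_in_bqp} gives that the probability (over $\mathcal{O} \sim \mathcal{D}$) that $\mathcal{A}$ has error probability $\ge 1/3$ on a particular $\langle M, x\rangle$ is at most $3 \cdot 2^{-2t^2}$. Summing the tail bounds using the fact that there are at most $2^t$ pairs $\langle M, x\rangle$ of description length less than $t$, one gets
\[
\sum_{\langle M,x\rangle} \Pr_{\mathcal{O} \sim \mathcal{D}}\bigl[\mathcal{A} \text{ does not decide } M^\mathcal{O}(x)\bigr] \le \sum_{t=1}^\infty 2^t \cdot 3 \cdot 2^{-2t^2} < \infty,
\]
so by Borel--Cantelli, with probability $1$ over $\mathcal{O}$, for every fixed $M$, $\mathcal{A}$ fails on at most finitely many inputs; these finitely many exceptions can be hard-coded into a proper BQP machine that then decides $M^\mathcal{O}$ on all of $\{0,1\}^*$. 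A union bound over the countably many machines $M$ then yields $\mathsf{P}^{\mathsf{\# P}^\mathcal{O}} \subseteq \mathsf{BQP}^\mathcal{O}$ with probability $1$.

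The only conceptual point that needs verification (and the only ``obstacle,'' though a very mild one) is that the presence of the $C$ region does not break the inductive construction or the analysis. But the enumeration of $S_t$ already considers $\mathsf{P}^{\mathsf{\# P}}$ machines that may query all of $\mathcal{O}_1,\dots,\mathcal{O}_{\lfloor\sqrt{t}\rfloor}$ (including the $C$ parts), so the encoded answer $f_i \sim \mathcal{F}_{2^{t^2}}$ or $f_i \sim \mathrm{Uniform}$ is determined by whether $M_i^\mathcal{O}$ accepts with respect to the full oracle, and the length restriction $\lfloor\sqrt{t}\rfloor$ guarantees the recursive definition is well-founded. Thus the same argument goes through without modification.
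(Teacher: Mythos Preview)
Your proposal is correct and matches the paper's own treatment exactly: the paper explicitly states that the proof of Claim~\ref{claim:bqp=pp} is ``completely analogous to the proof of \Cref{claim:np_in_bqp}'' and omits it. Your observation that the presence of $C$ changes nothing (since $S_t$ already enumerates machines querying all of $\mathcal{O}_1,\dots,\mathcal{O}_{\lfloor\sqrt{t}\rfloor} = (A_j, B_j, C_j)_{j \le \lfloor\sqrt{t}\rfloor}$) is the only point that needed checking, and you handled it correctly.
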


The proof of \Cref{claim:p=np} is trivial: given an $\mathsf{NP}^\mathcal{O}$ machine $M$ and input $x$, a polynomial time algorithm can decide $M(x)$ by simply looking up the bit in $C$ that encodes $M(x)$. The proof of \Cref{claim:bqp=pp} is identical to the proof of \Cref{claim:np_in_bqp} in \Cref{thm:bqp=pp_ph_infinite}, so we omit it.

To complete the proof, we will show that $\mathsf{NP}^\mathcal{O} \neq \mathsf{P}^{\mathsf{\# P}^\mathcal{O}}$ with probability $1$ over $\mathcal{O}$. Let $L^\mathcal{O}$ be the following unary language: $L^\mathcal{O}$ consists of strings $0^n$ such that, if we treat $n$ as an index into a portion of the random oracle $A_n$ of size $2^n$, then the parity of that length-$2^n$ string is $1$. By construction, $L^\mathcal{O} \in \mathsf{P}^{\mathsf{\# P}^\mathcal{O}}$. We will show that $L^\mathcal{O} \not\in \mathsf{NP}^\mathcal{O}$ with probability $1$ over $\mathcal{O}$.

Fix any $\mathsf{NP}^{\mathcal{O}}$ oracle machine $M$. By the union bound, it suffices to show that

\[
\Pr_{\mathcal{O}\sim\mathcal{D}}\left[M^\mathcal{O} \text{ decides } L^\mathcal{O}\right]=0.
\]

Let $n_1<n_2<\cdots$ be an infinite sequence of input lengths, spaced far enough apart (e.g. $n_{i+1}=2^{n_i}$) such that $M\left(0^{n_i}\right)$ can query the oracle on strings of length $n_{i+1}$ or greater for at most finitely many values of $i$. Next, let
\[
p(M,i)\coloneqq\Pr_{\mathcal{O}\sim\mathcal{D}}\left[M^\mathcal{O} \text{ correctly decides } 0^{n_i}|M^\mathcal{O} \text{ correctly decided }0^{n_1},\dots,0^{n_{i-1}}\right]
\]
Then we have that
\[
\Pr_{\mathcal{O}\sim\mathcal{D}}\left[M^\mathcal{O}\text{ decides }L^\mathcal{O}\right]\leq\prod_{i=1}^\infty p(M,i).
\]

\noindent Thus it suffices to show that, for every fixed $M$, we have $p(M,i)\leq 0.7$ for all but finitely many $i$. To do this, we will consider a new quantity $q(M,i)$, which is defined exactly the same way as $p(M,i)$, except that now the oracle is chosen from a different distribution, which we call $D_i$. This $D_i$ is defined identically to $\mathcal{D}$ on $A$, $C$, and $B_1, \ldots, B_{n_i}$, but is uniformly random on $B_m$ for all $m > n_i$. It suffices to prove the following: for any fixed $M$,
\begin{enumerate}[(a)]
    \item $q(M,i)\leq 0.6$ for all but finitely many values of $i$, and
    \item $|q(M,i)-p(M,i)|\leq 0.1$ for all but finitely many values of $i$.
\end{enumerate}

To prove these, we first need the following lemma, which essentially states that for any $t' \le \poly(t)$, any bit in $C_{t'}$ can be computed by a small (i.e. quasipolynomial in the input length) constant-depth $\mathsf{AC^0}$ circuit whose inputs do not depend on $C_i$ for any $i > t$.

\begin{lemma}
\label{lem:recursive_np_ac0}
Fix $t, d \in \Naturals$, and let $t' \le t^{2^d}$. For each $\langle M, x \rangle \in T_{t'}$, there exists an $\AND$/$\OR$/$\NOT$ circuit of size at most $2^{1 + t^{2^d}}$ and depth $2d$ that takes as input $A_1,A_2,\ldots,A_{t^{2^{d-1}}}$; $B_1,B_2,\ldots,B_{t^{2^{d-1}}}$; and $C_1,C_2,\ldots,C_{t}$, and outputs $M(x)$.
\end{lemma}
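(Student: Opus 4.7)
The plan is to induct on $d$, reading the pairs $\langle M,x\rangle$ as $\mathsf{NP}$-machine/input pairs (which is what is needed when the lemma is applied to expand $C$-bits of $\mathcal{O}$; bits in $C_t$ are by construction outputs of $\mathsf{NP}$-machines in $T_t$).

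For the base case $d=1$, I would take $\langle M,x\rangle\in S_{t'}$ with $t'\le t^2$ and write $M^{\mathcal{O}}(x)$ as an $\OR$ over accepting leaves of its nondeterministic computation tree, of the $\AND$ of the at-most-$t'$ oracle-query literals labelling the root-to-leaf path. Since the tree has depth $<t'$, it has at most $2^{t'}\le 2^{t^2}$ leaves, so this gives a depth-$2$ DNF of size at most $1+2^{t^2}\le 2^{1+t^2}$. The syntactic restriction on $M$ forces every queried bit to lie in $\mathcal{O}_1,\ldots,\mathcal{O}_{\lfloor\sqrt{t'}\rfloor}\subseteq\mathcal{O}_1,\ldots,\mathcal{O}_t$, so the literals sit in $A_1,\ldots,A_t$; $B_1,\ldots,B_t$; $C_1,\ldots,C_t$, as the statement demands for $d=1$.

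For the inductive step, assume the claim at $d-1$ and let $\langle M,x\rangle\in S_{t'}$ with $t'\le t^{2^d}$. I would apply the base-case construction to obtain an outer depth-$2$ DNF of size at most $2^{1+t^{2^d}}$, whose literals live in $\mathcal{O}_1,\ldots,\mathcal{O}_{\lfloor\sqrt{t'}\rfloor}\subseteq\mathcal{O}_1,\ldots,\mathcal{O}_{t^{2^{d-1}}}$. The $A$- and $B$-literals are already in the target input blocks. Each high-level $C$-literal at some level $t''$ with $t<t''\le t^{2^{d-1}}$ equals $M'(x')$ for some $\langle M',x'\rangle\in T_{t''}$; by the inductive hypothesis, this value is computed by a circuit of depth $2(d-1)$ and size $\le 2^{1+t^{2^{d-1}}}$ over the inputs $A_1,\ldots,A_{t^{2^{d-2}}}$; $B_1,\ldots,B_{t^{2^{d-2}}}$; $C_1,\ldots,C_t$. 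Substituting these subcircuits for the corresponding literals of the outer DNF yields a circuit of depth $2+2(d-1)=2d$ whose inputs are contained in the claimed blocks.

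The main obstacle I foresee is the size accounting, not the depth or the input-set containment (both of which are automatic). A naive substitution fills roughly $t'\cdot 2^{t'}$ literal slots, each with a subcircuit of size $2^{1+t^{2^{d-1}}}=2^{1+\sqrt{t^{2^d}}}$, producing a total of $2^{(1+o(1))t^{2^d}}$ gates rather than exactly $2^{1+t^{2^d}}$. To recover the stated constant, I would share subcircuits across repeated occurrences of the same $C$-literal (at level $t''$ there are only $2^{O(t^{2^{d-1}})}$ distinct $\mathsf{NP}$-queries, a lower-order term), and/or prove the induction with a slightly looser size bound carrying a controlled additive slack $o(t^{2^d})$ in the exponent and check that this slack does not blow up through the recursion. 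The downstream use of the lemma is not sensitive to the exact constant in the exponent, only to the circuit remaining in $\mathsf{AC^0}[2^{\poly(t)},2d]$ after collapsing the outer $\AND$-layer into the top layer of each substituted subcircuit where possible.
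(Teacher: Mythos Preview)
Your approach matches the paper's proof exactly: induction on $d$, with the base case expressing $M(x)$ as a width-$t'$ DNF over $\mathcal{O}_1,\ldots,\mathcal{O}_t$, and the inductive step substituting the depth-$2(d-1)$ subcircuits from the inductive hypothesis for the high-level $C$-literals (and you are right that the lemma is really about the $\mathsf{NP}$-machine set $T_{t'}$, not $S_{t'}$). Your hedge about the size constant is unnecessary: sharing subcircuits across the at most $\sum_{i=t+1}^{t^{2^{d-1}}}2^i\le 2^{1+t^{2^{d-1}}}$ distinct high-level $C$-bits, each of size $\le 2^{1+t^{2^{d-1}}}$, contributes at most $2^{2+2t^{2^{d-1}}}\le 2^{t^{2^d}}$ additional gates (for $t,d\ge 2$), so the total stays within $2^{1+t^{2^d}}$ exactly as stated.
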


\begin{proof}[Proof of Lemma]
Assume $t \ge 2$ (otherwise the theorem is trivial). We proceed by induction on $d$. Consider the base case $d = 1$. By definition of $T_{t'}$, $\langle M, x \rangle$ is an $\mathsf{NP}$ oracle machine that runs in less than $t'$ steps and queries only the $\mathcal{O}_1,\mathcal{O}_2,\ldots,\mathcal{O}_{t}$ regions of the oracle, because $t' \le t^2$. Hence, $M(x)$ computes a function of certificate complexity (\Cref{def:certificate_complexity}) at most $t'$ in the bits of $\mathcal{O}_1,\mathcal{O}_2,\ldots,\mathcal{O}_{t}$. This function may thus be expressed as a DNF formula of width at most $t'$, which is in turn an $\AND$/$\OR$/$\NOT$ circuit of depth $2$ and size at most $2^{t'} + 1 \le 2^{t^2} + 1 \le 2^{1 + t^2}$.

For the inductive step, let $d \ge 2$. Similar to the base case, we use the definition of $T_{t'}$ to obtain a circuit of depth $2$ and size at most $2^{t'} + 1$ that takes as input $\mathcal{O}_1,\mathcal{O}_2,\ldots,\mathcal{O}_{t^{2^{d-1}}}$ and outputs $M(x)$. To complete the theorem, we use the inductive hypothesis to replace each of the inputs to this DNF formula from the regions $C_{t+1},C_{t+2},\ldots,C_{t^{2^{d-1}}}$ with the respective circuits that compute them. This yields a circuit of depth $2d$, and by the inductive hypothesis, the total number of gates in this circuit is at most:
\[
\left(2^{t'} + 1\right) + \sum_{i=t+1}^{t^{2^{d-1}}} 2^{i} \cdot 2^{1 + t^{2^{d-1}}},
\]
because for each $t+1 \le i \le t^{2^{d-1}}$, there are at most $2^i$ bits in $C_i$. The above quantity is upper bounded by:

\begin{align*}
\left(2^{t'} + 1\right) + \sum_{i=t+1}^{t^{2^{d-1}}} 2^{i} \cdot 2^{1 + t^{2^{d-1}}}
&\le \left(2^{t^{2^d}} + 1\right) + \sum_{i=t+1}^{t^{2^{d-1}}} 2^{i} \cdot 2^{1 + t^{2^{d-1}}}\\
&\le 2^{t^{2^d}} + \sum_{i=1}^{t^{2^{d-1}}} 2^{i} \cdot 2^{1 + t^{2^{d-1}}}\\
&\le 2^{t^{2^d}} + 2^{1 + t^{2^{d-1}}} \cdot 2^{1 + t^{2^{d-1}}}\\
&= 2^{t^{2^d}} + 2^{2 + 2t^{2^{d-1}}}\\
&\le 2^{t^{2^d}} + 2^{4t^{2^{d-1}}}\\
&\le 2^{t^{2^d}} + 2^{t^{2 + 2^{d-1}}}\\
&\le 2^{t^{2^d}} + 2^{t^{2^d}}\\
&= 2^{1 + t^{2^d}}.
\end{align*}
Above, the first inequality holds because $t' \le t^{2^d}$; the second inequality simply expands the range of the sum (which certainly increases the sum by at least $1$); the third inequality applies $\sum_{i=1}^j 2^i \le 2^{j+1}$; and the remaining inequalities hold because $t \ge 2$ and $d \ge 2$.
\end{proof}


Note that \Cref{lem:recursive_np_ac0} does not depend on the distribution of $A$ and $B$, but only on the way $C$ is defined recursively in terms of $A$ and $B$. Hence, it holds for both $\mathcal{O}$ drawn from $\mathcal{D}$ or drawn from any $D_i$. We also note that the circuit given in \Cref{lem:recursive_np_ac0} is not in $\mathsf{AC^0}$ normal form (i.e. it is not necessarily alternating and layered), but can be made so at the cost of a small increase in size.

Choosing specific parameters in \Cref{lem:recursive_np_ac0} gives the following simple corollary:
\begin{corollary}
\label{cor:recursive_np_ac0_M}
Fix an $\mathsf{NP}^\mathcal{O}$ oracle machine $M$. Let $p(n)$ be a polynomial upper bound on the running time of $M$ on inputs of length $n$, and also on the number of bits needed to specify $\langle M, 0^n \rangle$. Then there exists an $\mathsf{AC^0}\left[2^{p(n_i)^{O(1)}}, O(1) \right]$ circuit that takes as input $A$, $B$, and $C_1,\ldots,C_{n_i}$ and computes $M\left(0^{n_i}\right)$.
\end{corollary}

\begin{proof}[Proof of Corollary]
Let $t = n_i$ and $t' = p(n_i)^2$. Then $\langle M, 0^{n_i} \rangle \in T_{t'}$, because $M$ is restricted to query only $\mathcal{O}_1,\ldots,\mathcal{O}_{p(n)}$ by its time upper bound. Additionally, there exists $d = O(1)$ such that $t' \le t^{2^d}$ because $t' \le \poly(t)$. The corollary follows from \Cref{lem:recursive_np_ac0}.
\end{proof}

With \Cref{cor:recursive_np_ac0_M} in hand, the remainder of the proof closely follows the proof of \Cref{thm:bqp=pp_ph_infinite}. Statement (a) essentially follows from the work of \cite{Has87}.  In more detail, consider the circuit produced by \Cref{cor:recursive_np_ac0_M} that computes $M\left(0^{n_i}\right)$. The key observation is that the only portion of the input to this circuit that can depend on whether $0^{n_i}$ is in $L^\mathcal{O}$ is the input to the size-$2^{n_i}$ parity function that is encoded in $A$. All other portions of the input are sampled independently from this region under $D_i$:
\begin{enumerate}
\item The rest of $A$ is sampled uniformly at random,
\item $B_1,\ldots,B_{n_i}$ and $C_1,\ldots,C_{n_i}$ are drawn from a distribution that cannot depend on any queries to $A$ on inputs of length $\lfloor \sqrt{n_i} \rfloor$ or greater (so in particular, they cannot depend on $A_{n_i}$), and
\item $B_{n_i+1}, B_{n_i + 2},\ldots$ are sampled uniformly at random.
\end{enumerate}
Hence, $M$ is forced to evaluate the size-$2^{n_i}$ parity function using only auxiliary and uncorrelated random bits. By \Cref{thm:hastad_parity_ac0_average}, $M$ can do this with probability greater than 0.6 for at most finitely many $i$. This even holds conditioned on $M^\mathcal{O}$ correctly deciding $0^{n_1},\dots,0^{n_{i-1}}$, because the size-$2^{n_i}$ parity instance is chosen independently from the smaller instances, and because $M\left(0^{n_i}\right)$ can query the oracle on strings of length $n_{i+1}$ or greater for at most finitely many values of $i$.

For statement (b), we will prove this claim using a hybrid argument. We consider an infinite sequence of hybrids $\{D_{i,j} : j \in \Naturals\}$ between $D_i = D_{i,0}$ and $\mathcal{D}$, where in the $j$th hybrid $D_{i,j}$ we sample $A$, $C$, and $B_1,\ldots,B_{n_i + j}$ according to $\mathcal{D}$ and $B_{n_i + j + 1}, B_{n_i + j+2}, \ldots$ uniformly at random. The change between each $D_{i,j-1}$ and $D_{i,j}$ may be further decomposed into a sequence of smaller changes: from the uniform distribution $\mathcal{U}$ to the Forrelated $\mathcal{F}$, for each function $f: \{0,1\}^{(n_i + j)^2} \to \{0,1\}$ corresponding to a $\mathsf{P^{\# P}}$ oracle machine that happens to accept. 

Suppose we ``fix'' the values of $\mathcal{O}$ on everything except for $f$, in the following sense. We fix $A$ and $B$, except for some particular $f$ in $B$ that is allowed to vary. Then, we define $C$ recursively in terms of $A$ and $B$ in the usual, deterministic way (so that changing $f$ can affect $C$, but not the rest of $A$ and $B$). \Cref{thm:raz-tal} implies that:
\begin{equation}
\label{eq:raz-tal-ph-no-triangle-2}
\left|\Pr_{f \sim \mathcal{F}}\left[M^\mathcal{O}\left(0^{n_i}\right) = 1\right]-\Pr_{f \sim \mathcal{U}}\left[M^\mathcal{O}\left(0^{n_i}\right) = 1 \right]\right| \le \frac{\poly(n_i)}{2^{(n_i+j)^2/2}}.
\end{equation}
This is because, by \Cref{cor:recursive_np_ac0_M}, there exists an $\mathsf{AC^0}\left[2^{\poly(n_i)} , O(1)\right]$ circuit that takes $A$, $B$, and $C_1,\ldots,C_{n_i}$ as input and evaluates to $M^\mathcal{O}\left(0^{n_i}\right)$. In fact, \eqref{eq:raz-tal-ph-no-triangle-2} \textit{also} holds even if the parts of $A$, $B$, and $C_1,\ldots,C_{n_i}$ other than $f$ are not necessarily fixed, but are drawn from some distribution, by convexity (so long as the distribution is the same in both of the probabilities in \eqref{eq:raz-tal-ph-no-triangle-2}). In particular, using the fact that the $n_i$s are far enough apart for sufficiently large $i$, \eqref{eq:raz-tal-ph-no-triangle-2} also holds (for all sufficiently large $i$) when the parts of $A$, $B$, and $C_1,\ldots,C_{n_i}$ other than $f$ are drawn from the distribution conditioned on $M^\mathcal{O}$ correctly deciding $0^{n_1},\dots,0^{n_{i-1}}$.

Now, recall that there are at most $2^{t}$ $\Forrelation$ instances in the $B_t$ part of the oracle. By the triangle inequality, bounding over each of these instances yields:
\[
\left|\Pr_{\mathcal{O} \sim D_{i, j-1}}\left[M^\mathcal{O}\left(0^{n_i}\right) = 1\right]-\Pr_{\mathcal{O} \sim D_{i, j}}\left[M^\mathcal{O}\left(0^{n_i}\right) = 1 \right]\right| \le 2^{n_i + j} \cdot \frac{\poly(n_i)}{2^{(n_i+j)^2/2}},
\]
where we implicitly condition on $M^\mathcal{O}$ correctly deciding $0^{n_1},\dots,0^{n_{i-1}}$ in both of the probabilities above, omitting it as written purely for notational simplicity. Hence, when we change \textit{all} of the hybrids, we obtain:
\begin{align*}
|q(M, i) - p(M, i)| &= \left|\Pr_{\mathcal{O} \sim D_i}\left[M^\mathcal{O}\left(0^{n_i}\right) = 1\right]-\Pr_{\mathcal{O} \sim \mathcal{D}}\left[M^\mathcal{O}\left(0^{n_i}\right) = 1 \right]\right|\\
&\le \sum_{j=1}^\infty 2^{n_i + j} \cdot \frac{\poly(n_i)}{2^{(n_i + j)^2 / 2}}\\
&\le \frac{\poly(n_i)}{2^{\Omega\left(n_i^2\right)}}\\
&\le 0.1
\end{align*}
for all but at most finitely many $i$.
\end{proof}

\section{Fine Control over \texorpdfstring{$\mathsf{BQP}$}{BQP}\ and \texorpdfstring{$\mathsf{PH}$}{PH}}
\subsection{\texorpdfstring{$\mathsf{BQP}^{\mathsf{PH}}$}{BQP\^{}PH} Lower Bounds for \texorpdfstring{$\Sipser$}{Sipser} Functions}
In this section, we prove that $\mathsf{BQP}^{\mathsf{\Sigma}_k^\mathsf{P}}$ does not contain $\mathsf{\Sigma}_{k+1}^\mathsf{P}$ relative to a random oracle, generalizing the known result that $\mathsf{PH}$ is infinite relative to a random oracle \cite{HRST17,RST15}.

We first require the following form of the BBBV Theorem \cite{BBBV97}. It essentially states that a quantum algorithm that makes few queries to its input is unlikely to detect small random changes to the input. Viewed another way, \Cref{lem:average_case_bbbv} is just a probabilistic version of \Cref{lem:bbbv_query_magnitude}.

\begin{lemma}
\label{lem:average_case_bbbv}
Consider a quantum algorithm $Q$ that makes $T$ queries to $x \in \{0,1\}^N$. Let $y \in \{0,1\}^N$ be drawn from some distribution such that, for all $i \in [N]$, $\Pr_{y}\left[x_i \neq y_i\right] \le p$. Then for any $r > 0$:
\[
\Pr_{y} \left[ \left|\Pr\left[Q(y) = 1 \right] - \Pr\left[Q(x) = 1 \right] \right| \ge r \right] \le \frac{64pT^2}{r^2}
\]
\end{lemma}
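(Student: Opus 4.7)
The plan is to reduce this to \Cref{lem:bbbv_query_magnitude} by a Markov-style averaging argument, exactly as the lemma's statement ``just a probabilistic version'' suggests.

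First, I would apply \Cref{lem:bbbv_query_magnitude} pointwise: for any fixed $y$,
\[
\left|\Pr[Q(y)=1] - \Pr[Q(x)=1]\right| \le 8\sqrt{T}\cdot\sqrt{\sum_{i:x_i\neq y_i} q_i},
\]
where $q_i$ is the query magnitude of coordinate $i$ (defined with respect to $Q$ running on $x$). So it suffices to upper bound $\Pr_y\bigl[\sum_{i:x_i\neq y_i} q_i \ge r^2/(64T)\bigr]$, since this event is implied by the difference being at least $r$.

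Next, I would compute the expectation of $S \coloneqq \sum_{i:x_i\neq y_i} q_i = \sum_{i=1}^N q_i\cdot\mathbf{1}[x_i\neq y_i]$. By linearity and the hypothesis that $\Pr_y[x_i\neq y_i]\le p$ for every $i$,
\[
\E_y[S] \;\le\; p\sum_{i=1}^N q_i.
\]
The standard fact that the query magnitudes sum to at most $T$ (immediate from the definition of $q_i$, since each $\ket{\psi_t}$ is a unit vector and contributes at most $1$ to $\sum_i q_i$ summed over the $T$ queries) then gives $\E_y[S]\le pT$.

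Finally I would finish with Markov's inequality: taking $s = r^2/(64T)$,
\[
\Pr_y\bigl[S \ge s\bigr] \;\le\; \frac{\E_y[S]}{s} \;\le\; \frac{pT}{r^2/(64T)} \;=\; \frac{64pT^2}{r^2},
\]
and since $\{|\Pr[Q(y)=1]-\Pr[Q(x)=1]|\ge r\}\subseteq\{S\ge s\}$ by the first step, the desired bound follows. There is no real obstacle here; the only thing to be careful about is remembering that $q_i$ is defined with respect to $Q$'s execution on $x$ (not $y$), so that $S$ is a deterministic function of the random $y$, making the Markov step valid. The constant $64$ comes directly from squaring the $8$ in \Cref{lem:bbbv_query_magnitude}.
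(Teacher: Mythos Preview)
Your proposal is correct and matches the paper's proof essentially step for step: apply \Cref{lem:bbbv_query_magnitude} pointwise, reduce to bounding $\Pr_y[\sum_{i:x_i\neq y_i} q_i \ge r^2/(64T)]$, compute $\E_y[S]\le pT$ by linearity of expectation using $\sum_i q_i \le T$, and finish with Markov's inequality. Your remark that the $q_i$ are defined with respect to the fixed input $x$ (so $S$ is a function of $y$ alone) is exactly the care the paper also implicitly takes.
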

\begin{proof}
By \Cref{lem:bbbv_query_magnitude}, we have that for any fixed $y$:
\[
\left|\Pr\left[Q(x) = 1 \right] - \Pr\left[Q(y) = 1 \right] \right| \le 8\sqrt{T} \cdot \sqrt{\sum_{i : x_i \neq y_i} q_i},
\]
where we recall the definition of the \textit{query magnitudes} $q_i$ which depend on the algorithm's behavior on input $x$, and which satisfy $\sum_{i=1}^n q_i = T$. This implies that:
\begin{align*}
\Pr_{y} \left[ \left|\Pr\left[Q(y) = 1 \right] - \Pr\left[Q(x) = 1 \right] \right| \ge r \right]
&\le \Pr_{y} \left[8\sqrt{T} \cdot \sqrt{\sum_{i : x_i \neq y_i} q_i} \ge r\right]\\
&= \Pr_{y} \left[\sum_{i : x_i \neq y_i} q_i \ge \frac{r^2}{64T} \right]\\
&\le \frac{64T}{r^2} \cdot \E_{y}\left[\sum_{i : x_i \neq y_i} q_i\right] \\
&= \frac{64T}{r^2} \cdot \sum_{i=1}^n q_i \cdot \Pr_{y}[y_i \neq x_i]\\
&\le \frac{64pT^2}{r^2},
\end{align*}
where the third line applies Markov's inequality (the $q_i$s are nonnegative), and the last two lines use linearity of expectation along with the fact that the $q_i$s sum to $T$.
\end{proof}

\begin{corollary}
\label{cor:average_case_bbbv_function}
Let $f: \{0,1\}^N \to \{0,1,\bot\}$ be a partial function with $\Q(f) \le T$. Fix $x \in \{0,1\}^N$, and let $y \in \{0,1\}^N$ be drawn from some distribution such that, for all $i \in [N]$, $\Pr_{y}\left[x_i \neq y_i\right] \le p$. Then for some $i \in \{0,1\}$, $\Pr_y[f(y) = i] \le 2304pT^2$.
\end{corollary}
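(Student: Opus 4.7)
The plan is to derive this as a direct application of \Cref{lem:average_case_bbbv} to the acceptance probability of the bounded-error quantum query algorithm witnessing $\Q(f) \le T$. Let $Q$ be such an algorithm, and write $a(z) \coloneqq \Pr[Q(z) = 1]$. The defining property of $Q$ is that $f(z) = 0$ implies $a(z) \le 1/3$ and $f(z) = 1$ implies $a(z) \ge 2/3$, so these two value-sets of $f$ are separated by an acceptance-probability gap of at least $1/3$. The point is that a random $y$ close to $x$ cannot shift $a$ by more than half this gap, except on a small probability event.

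First, I would apply \Cref{lem:average_case_bbbv} with $r = 1/6$, yielding
\[
\Pr_y\left[\,|a(y) - a(x)| \ge \tfrac{1}{6}\,\right] \;\le\; \frac{64 \, p T^2}{(1/6)^2} \;=\; 2304\, p T^2.
\]
Thus, outside an event of probability at most $2304pT^2$, we have $|a(y) - a(x)| < 1/6$, so $a(y)$ lies in the open interval $(a(x) - 1/6,\, a(x) + 1/6)$.

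Next, I would split on whether $a(x) \le 1/2$ or $a(x) > 1/2$. If $a(x) \le 1/2$, then on the good event $a(y) < 2/3$, which rules out $f(y) = 1$ (since that would force $a(y) \ge 2/3$); hence $\Pr_y[f(y) = 1] \le 2304 p T^2$. If instead $a(x) > 1/2$, then on the good event $a(y) > 1/3$, ruling out $f(y) = 0$, so $\Pr_y[f(y) = 0] \le 2304 p T^2$. Either way, at least one value $i \in \{0,1\}$ satisfies the claimed bound.

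There is essentially no obstacle here beyond bookkeeping; the only small subtlety is just to confirm that the strict/non-strict inequalities line up, which is handled by the $1/6 < 1/3 \cdot \tfrac{1}{2}$ choice of slack (leaving room on both sides of the $1/3$ vs.\ $2/3$ gap). The constant $2304 = 64 \cdot 36$ is exactly what drops out of this choice of $r$.
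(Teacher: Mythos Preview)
The proposal is correct and follows essentially the same approach as the paper: apply \Cref{lem:average_case_bbbv} with $r = \tfrac{1}{6}$ and choose $i$ based on whether $\Pr[Q(x)=1] \le \tfrac{1}{2}$, using the $\tfrac{1}{3}$ error bound of $Q$ to conclude. Your write-up simply spells out the case analysis that the paper's proof leaves implicit.
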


\begin{proof}
Let $Q$ be the quantum query algorithm corresponding to $f$. We choose $i = 1$ if $\Pr\left[Q(x) = 1\right] \le \frac{1}{2}$, and $i = 0$ otherwise. Then the claim follows from \Cref{lem:average_case_bbbv} with $r = \frac{1}{6}$, just because $Q$ computes $f$ with error at most $\frac{1}{3}$.
\end{proof}

We now prove a query complexity version of the main result of this section.

\begin{theorem}
\label{thm:bqp_sigma_k_query_version}
Let $\Sipser_d$ be the function defined in \Cref{thm:hrst} for some choice of $d$, $m$, and $N$. Let $f: \{0,1\}^N \to \{0,1,\bot\}$ be computable by a depth-$d$ circuit of size $s$ in which the top gate has bounded-error quantum query complexity $T$, and all of the sub-circuits of the top gate are depth-$(d-1)$ $\mathsf{AC^0}$ circuits. Then:
\[
\Pr_{x \sim \{0,1\}^N}\left[f(x) = \Sipser_d(x)\right] \le \frac{1}{2} + O\left(2^{-m/4}\right) + s2^{-2^{m/2 - 4}} + 2304T^22^{-m/2 -4}.
\]
\end{theorem}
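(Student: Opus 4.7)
The plan is to follow the proof of \Cref{cor:hrst_ac0_depth_hierarchy}, inserting one additional step that handles the quantum top gate via \Cref{cor:average_case_bbbv_function}.

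First, I would use property (a) of \Cref{thm:hrst} to rewrite
\[
\Pr_{x \sim \{0,1\}^N}\bigl[f(x) = \Sipser_d(x)\bigr] = \Pr_{x \sim D,\, \rho \sim \mathcal{R}}\bigl[f_\rho(x) = \Sipser_{d|\rho}(x)\bigr].
\]
Call a projection $\rho$ \emph{good} if (i) every sub-circuit $C_i$ of the top gate satisfies $\D(C_{i|\rho}) \le 2^{m/2-4}$, and (ii) $\Sipser_{d|\rho}$ has been reduced to an $\AND$ gate of fan-in $L = (\ln 2)\cdot 2^m (1 \pm O(2^{-m/4}))$. By property (b), union-bounded over the sub-circuits (whose total size is at most $s$), together with property (c), a sampled $\rho$ is bad with probability at most $s \cdot 2^{-2^{m/2-4}} + O(2^{-m/2})$. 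It therefore suffices to show that for every good $\rho$,
\[
\Pr_{x \sim D}\bigl[f_\rho(x) = \Sipser_{d|\rho}(x)\bigr] \le \tfrac{1}{2} + O(2^{-m/4}) + 2304 T^2 \cdot 2^{-m/2-4}.
\]

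Now fix a good $\rho$ and let $A \subseteq [N]$ be the set of $L$ coordinates feeding the $\AND$ gate $\Sipser_{d|\rho}$. Split a sample from $D$ as $x = (v, w)$, where $v$ assigns the coordinates outside $A$ and $w \in \{0,1\}^A$ assigns the coordinates in $A$ (each independently $0$ with probability $b = 2^{-m}(1 - O(2^{-m/2}))$). Fix $v$, set $x^* = (v, 1^L)$, and view the top gate $Q$ as a quantum query algorithm of complexity $T$ on the \emph{virtual} bits $z_i = C_{i|\rho}(\cdot)$. Because each $C_{i|\rho}$ is a decision tree of depth at most $2^{m/2-4}$, it follows a fixed path querying at most $2^{m/2-4}$ coordinates on $x^*$, and the output $z_i$ can change between $x^*$ and $(v,w)$ only if one of these path coordinates flips. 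Only $A$-coordinates can possibly flip (each with probability at most $b$), so
\[
\Pr_w\bigl[z_i(v, w) \neq z_i(v, 1^L)\bigr] \;\le\; 2^{m/2-4} \cdot b \;=\; 2^{-m/2-4}
\]
for every virtual bit $i$. Applying \Cref{cor:average_case_bbbv_function} to $Q$ at the virtual level with $p = 2^{-m/2-4}$ yields some $c \in \{0,1\}$ (depending on $v$ and $\rho$) such that $\Pr_w[f_\rho(v, w) = c] \le 2304 T^2 \cdot 2^{-m/2-4}$. Since $\Sipser_{d|\rho}(v, w) = \AND_L(w)$ equals $1$ with probability $(1-b)^L = \tfrac{1}{2} \pm O(2^{-m/4})$, splitting the agreement event on the value of $f_\rho$ gives
\[
\Pr_w\bigl[f_\rho(v,w) = \Sipser_{d|\rho}(v,w)\bigr] \le \Pr_w\bigl[\Sipser_{d|\rho}(v,w) = 1-c\bigr] + \Pr_w\bigl[f_\rho(v,w) = c\bigr] \le \tfrac{1}{2} + O(2^{-m/4}) + 2304 T^2 \cdot 2^{-m/2-4}.
\]
Averaging over $v$ and $\rho$, and adding the contribution from bad $\rho$, yields the theorem.

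The main obstacle I anticipate is keeping the BBBV step at the virtual level. Naively composing $Q$ with the depth-$2^{m/2-4}$ decision trees produces a quantum algorithm of query complexity $T \cdot 2^{m/2-4}$ on the raw bits, and applying \Cref{cor:average_case_bbbv_function} to this composed algorithm would replace $T^2$ by $T^2 \cdot 2^{m-8}$, wrecking the bound. The saving grace is that each decision tree can only change its output when a bit on its current path flips; this clean per-virtual-bit perturbation bound of $2^{-m/2-4}$ lets the $T^2$ factor in \Cref{cor:average_case_bbbv_function} be charged to the $T$ queries of the top gate alone.
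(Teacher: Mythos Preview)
Your proposal is correct and follows essentially the same argument as the paper: rewrite via property (a), union-bound the failure of (b) and (c), and on a good projection bound the per-sub-circuit flip probability by $b\cdot 2^{m/2-4}$ using the decision-tree depth, then apply \Cref{cor:average_case_bbbv_function} to the top gate at the virtual level. The only cosmetic difference is that you condition on the non-$A$ coordinates $v$ and compare to $(v,1^L)$, whereas the paper compares directly to $1^{|x|}$; both yield the same bound.
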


The proof of this theorem largely follows Theorem 10.1 of \cite{HRST17}, the relevant parts of which are quoted in \Cref{thm:hrst}. We take a distribution $\rho \sim \mathcal{R}$ of random projections with the property that (a) $\mathcal{R}$ completes to the uniform distribution, (b) $f_\rho$ simplifies with high probability over $\rho$, and (c) $\Sipser_{d|\rho}$ retains structure with high probability over $\rho$. Essentially the only difference compared to \cite{HRST17} is that we must incorporate the BBBV Theorem (in the form of \Cref{cor:average_case_bbbv_function}) in order to argue (b).

\begin{proof}[Proof of \Cref{thm:bqp_sigma_k_query_version}]
By \Cref{thm:hrst}(a),
\[
\Pr_{x \sim \{0,1\}^N}\left[f(x) = \Sipser_d(x)\right]
= \Pr_{x \sim D, \rho \sim \mathcal{R}}\left[f_\rho(x) = \Sipser_{d|\rho} (x)\right].
\]
\Cref{thm:hrst}(b) and a union bound over all of the sub-circuits of the top gate imply that, except with probability at most $s2^{-2^{m/2 - 4}}$ over $\rho \sim \mathcal{R}$, $f_\rho$ can be computed by a depth-$2$ circuit where the top gate has bounded-error quantum query complexity $T$, and all of the gates below the top gate have deterministic query complexity at most $2^{m/2-4}$. In this case, we say for brevity that $f_\rho$ ``simplifies''. By \Cref{thm:hrst}(c) and a union bound, except with probability at most $O\left(2^{-m/2}\right) + s2^{-2^{m/4 - 2}}$ over $\rho \sim \mathcal{R}$, $f_\rho$ simplifies and $\Sipser_{d|\rho}$ is an $\AND$ of fan-in $(\ln 2) \cdot 2^m \cdot \left(1 \pm O\left(2^{-m/4}\right)\right)$.

An $\AND$ of fan-in $(\ln 2) \cdot 2^m \cdot \left(1 \pm O\left(2^{-m/4}\right)\right)$ evaluates to $1$ with probability $\frac{1}{2}\left(1 \pm O\left(2^{-m/4}\right)\right)$ on an input sampled from $D$. On the other hand, if $f_\rho$ simplifies, then for each sub-circuit $C$ of the top gate, $\Pr_{x \sim D}\left[C(x) \neq C\left(1^{|x|}\right)\right] \le b2^{m/2-4}$, just because $\D(C) \le 2^{m/2-4}$ and each bit of $x$ is $0$ with probability at most $b$. Hence, by \Cref{cor:average_case_bbbv_function} with $p=b2^{m/2-4}$, if $f_\rho$ simplifies, then for some $i \in \{0,1\}$, $\Pr_{x \sim D}\left[f_\rho(x) = i\right] \le 2304bT^22^{m/2-4}$. Since $b \le 2^{-m}$, putting these together gives us that:
\[
\Pr_{x \sim D, \rho \sim \mathcal{R}}\left[f_\rho(x) = \Sipser_{d|\rho} (x)\right] \le \frac{1}{2} + O\left(2^{-m/4}\right) + s2^{-2^{m/2-4}} + 2304 T^22^{-m/2-4} .\qedhere
\]
\end{proof}

To complete this section, we require the following extension of Furst-Saxe-Siper \cite{FSS84} (\Cref{lem:furst-saxe-sipser}) to $\mathsf{BQP}^{\mathsf{\Sigma}_k^\mathsf{P}}$ machines.

\begin{proposition}
\label{prop:fss_bqp^ph}
Let $M$ be a $\mathsf{BQP}^{\mathsf{\Sigma}_k^\mathsf{P}}$ oracle machine (i.e. a pair $\langle A, B \rangle$ of a $\mathsf{BQP}$ oracle machine $A$ and a $\mathsf{\Sigma}_k^\mathsf{P}$ oracle machine $B$). Let $p(n)$ be a polynomial upper bound on the runtime of $A$ and $B$ on inputs of length $n$. Then for any $x \in \{0,1\}^n$, there is a depth-$(k+2)$ circuit $C$ of size at most $2^{\poly(n)}$ in which the top gate has bounded-error quantum query complexity at most $p(n)$, and all of the sub-circuits of the top gate are $\mathsf{AC^0}\left[2^{\poly(n)}, k+1\right]$ circuits, such that for any oracle $\mathcal{O}: \{0,1\}^* \to \{0,1\}$ we have:
\[
M^\mathcal{O}(x) = C\left( \mathcal{O}_{[p(p(n))]} \right),
\]
where $\mathcal{O}_{[p(p(n))]}$ denotes the concatenation of the bits of $\mathcal{O}$ on all strings of length at most $p(p(n))$.
\end{proposition}

\begin{proof}
For convenience, denote by $L$ the language decided by $B^\mathcal{O}$. $M^\mathcal{O}(x) = A^{L}(x)$ is a function of bounded-error quantum query complexity at most $p(n)$ in the bits of $L$. We take this function to be the top gate of our circuit, and use \Cref{lem:furst-saxe-sipser} to replace the inputs to this gate, the bits of $L$, with $\mathsf{AC^0}$ circuits.

Since $A^{L}(x)$ runs in time at most $p(n)$, it can only query the evaluation of $B^\mathcal{O}$ on inputs up to length at most $p(n)$. By \Cref{lem:furst-saxe-sipser}, for each $y \in \{0,1\}^{m}$ with $m \le p(n)$, $B^\mathcal{O}(y)$ is computed by an $\mathsf{AC^0}\left[s, k+1 \right]$ circuit for some $s \le 2^{\poly(p(n))} \le 2^{\poly(n)}$, where the inputs to this circuit are the bits of $\mathcal{O}$ on inputs of length at most $p(p(n))$. The resulting circuit obtained by composing the quantum gate with these $\mathsf{AC^0}$ circuits has total number of gates bounded above by:
\[
1 + \sum_{m=0}^{p(n)} 2^{m} \cdot 2^{\poly(m)} \le 2^{\poly(n)},
\]
and thus it satisfies the statement of the proposition.
\end{proof}

By standard techniques, we obtain the main result of this section, stated in terms of oracles instead of query complexity.

\begin{corollary}
\label{cor:bqp_sigma_k}
For all $k$, $\mathsf{\Sigma}_{k+1}^{\mathsf{P}^\mathcal{O}} \not\subset \mathsf{BQP}^{\mathsf{\Sigma}_k^{\mathsf{P}^\mathcal{O}}}$ with probability $1$ over a random oracle $\mathcal{O}$.
\end{corollary}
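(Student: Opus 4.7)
My plan is to define a unary language $L^\mathcal{O}$ that witnesses the separation. For each $n$, let $0^n \in L^\mathcal{O}$ iff a $\Sipser_{k+2}$ instance (as in \Cref{thm:hrst}, with $m = \Theta(n)$ chosen so that $N = 2^{\Theta(m)} = 2^n$) encoded in a fixed $N$-bit window of $\mathcal{O}$ evaluates to $1$. Since $\Sipser_{k+2}$ is computed by a depth-$(k+2)$ alternating $\mathsf{AC^0}$ circuit with the appropriate leaf polarity, the standard Furst--Saxe--Sipser \cite{FSS84} translation immediately yields $L^\mathcal{O} \in \mathsf{\Sigma}_{k+1}^{\mathsf{P}^\mathcal{O}}$ regardless of $\mathcal{O}$.

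To show $L^\mathcal{O} \not\in \mathsf{BQP}^{\mathsf{\Sigma}_k^{\mathsf{P}^\mathcal{O}}}$ with probability $1$, I would fix an arbitrary $\mathsf{BQP}^{\mathsf{\Sigma}_k^\mathsf{P}}$ oracle machine $M$ and recast $M^\mathcal{O}(0^n)$ into the circuit model of \Cref{thm:bqp_sigma_k_query_version}. On input $0^n$, $M$ is a $T = \poly(n)$-query quantum algorithm whose oracle answers are $\mathsf{\Sigma}_k^\mathsf{P}$ predicates; by \cite{FSS84} each such predicate is computed on the raw bits of $\mathcal{O}$ by a depth-$(k+1)$ $\mathsf{AC^0}$ circuit of size $2^{\poly(n)}$. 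Substituting these sub-circuits for the oracle gates realizes $M^\mathcal{O}(0^n)$ as a depth-$(k+2)$ circuit with top-gate quantum query complexity $T$ and total sub-circuit size $s = 2^{\poly(n)}$, exactly the form to which \Cref{thm:bqp_sigma_k_query_version} applies. The three error terms $O(2^{-m/4})$, $s\cdot 2^{-2^{m/2-4}}$, and $2304\,T^2 2^{-m/2-4}$ each decay to $0$ for $m = \Theta(n)$, so the probability (over a uniformly random $N$-bit window) that the circuit agrees with $\Sipser_{k+2}$ is at most $\frac{1}{2} + o(1)$.

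The final step is the Borel--Cantelli / hybrid argument already set up in the proof of \Cref{thm:bqp=pp_ph_infinite}. I would pick a rapidly growing sequence $n_1 < n_2 < \cdots$ (say $n_{i+1} = 2^{n_i}$) so that $M(0^{n_i})$ only queries oracle locations inside $\mathcal{O}_1, \ldots, \mathcal{O}_{n_{i+1}-1}$ for all but finitely many $i$, let $p(M,i)$ be the probability that $M^\mathcal{O}$ correctly decides $0^{n_i}$ conditioned on correctness at $0^{n_1}, \ldots, 0^{n_{i-1}}$, and bound $\Pr_\mathcal{O}[M^\mathcal{O} \text{ decides } L^\mathcal{O}] \le \prod_i p(M,i)$. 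Because the windows $A_{n_i}$ are disjoint pieces of a random oracle, the conditioning only reshapes the distribution of the ``other'' oracle bits (which \Cref{thm:bqp_sigma_k_query_version} already allows to be fixed arbitrarily), leaving the marginal on $A_{n_i}$ uniform; hence $p(M,i) \le \frac{1}{2} + o(1)$ for all but finitely many $i$ and the product vanishes. A countable union bound over all oracle machines $M$ then concludes. The only delicate point I anticipate is verifying that the conditioning on correctness at earlier $n_j$'s truly preserves the uniform marginal on $A_{n_i}$ and does not disturb the circuit-complexity form demanded by \Cref{thm:bqp_sigma_k_query_version}; once this is checked, the rest is a direct translation via standard oracle-separation machinery that the paper has already deployed.
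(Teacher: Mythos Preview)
Your proposal is correct and follows essentially the same approach as the paper's proof sketch: set $d=k+2$, define $L^{\mathcal{O}}$ via $\Sipser_{d}$, use \cite{FSS84} to place $L^{\mathcal{O}}\in\mathsf{\Sigma}_{k+1}^{\mathsf{P}^{\mathcal{O}}}$ and to realize any $\mathsf{BQP}^{\mathsf{\Sigma}_k^{\mathsf{P}^{\mathcal{O}}}}$ machine as a depth-$d$ circuit with a quantum top gate over depth-$(d-1)$ $\mathsf{AC^0}$ sub-circuits, and then invoke \Cref{thm:bqp_sigma_k_query_version} with $m=\Theta(n)$, $T=\poly(n)$, $s=2^{\poly(n)}$. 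Your explicit spacing of the $n_i$'s and product-of-conditional-probabilities argument is a slightly more detailed version of what the paper compresses into ``even conditioned on correctly deciding $L^{\mathcal{O}}(0^\ell)$ for all $\ell<n$,'' but the substance is the same.
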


\begin{proof}
Let $d = k + 2$. Let $L^\mathcal{O}$ be the unary language used for the same purpose as in \cite{RST15,HRST17}. That is, $L^\mathcal{O}$ consists of strings $0^n$ such that, if we treat $n$ as an index into a portion of the random oracle $\mathcal{O}$ that encodes a size-$2^n$ instance of the $\Sipser_{d}$ function, then that instance evaluates to $1$. By construction, $L^\mathcal{O} \in \mathsf{\Sigma}_{k+1}^{\mathsf{P}^\mathcal{O}}$ \cite{RST15,HRST17}.

It remains to show that $L^\mathcal{O} \not\in \mathsf{BQP}^{\mathsf{\Sigma}_k^{\mathsf{P}^\mathcal{O}}}$. Fix a $\mathsf{BQP}^{\mathsf{\Sigma}_k^{\mathsf{P}}}$ oracle machine $M$. By the union bound, it suffices to show that
\[
\Pr_{\mathcal{O}}\left[M^\mathcal{O} \text{ decides } L^\mathcal{O}\right]=0.
\]

Let $n_1<n_2<\cdots$ be an infinite sequence of input lengths, spaced far enough apart (e.g. $n_{i+1}=2^{n_i}$) such that $M\left(0^{n_i}\right)$ can query the oracle on strings of length $n_{i+1}$ or greater for at most finitely many values of $i$. Next, let
\[
p(M,i)\coloneqq\Pr_{\mathcal{O}}\left[M^\mathcal{O} \text{ correctly decides } 0^{n_i}|M^\mathcal{O} \text{ correctly decided }0^{n_1},\dots,0^{n_{i-1}}\right]
\]
Then we have that
\[
\Pr_{\mathcal{O}}\left[M^\mathcal{O}\text{ decides }L^\mathcal{O}\right]\leq\prod_{i=1}^\infty p(M,i).
\]

\noindent Thus it suffices to show that, for every fixed $M$, we have $p(M,i)\leq 0.7$ for all but finitely many $i$. \Cref{prop:fss_bqp^ph} shows that $M$'s behavior on $0^{n_i}$ can be computed by a circuit of size $s \le 2^{\poly(n_i)}$ in which the top gate has bounded-error quantum query complexity $T \le \poly(n_i)$, and all of the sub-circuits of the top gate are depth-$(d-1)$ $\mathsf{AC^0}$ circuits. \Cref{thm:bqp_sigma_k_query_version} with $N = 2^{n_i}$ and $m = \Theta(n_i)$ shows that such a circuit correctly evaluates the $\Sipser_d$ function with probability greater than (say) $0.7$ for at most finitely many $i$. This even holds conditioned on $M^\mathcal{O}$ correctly deciding $0^{n_1},\ldots,0^{n_{i-1}}$, because the size-$2^{n_i}$ $\Sipser_d$ instance is chosen independently of the smaller instances, and because $M\left(0^{n_i}\right)$ can query the oracle on strings of length $n_{i+1}$ or greater for at most finitely many values of $i$.
\end{proof}

\subsection{\texorpdfstring{$\mathsf{BQP}^{\mathsf{PH}}$}{BQP\^{}PH} Lower Bounds for \texorpdfstring{$\OR \circ \Forrelation$}{OR of FORRELATIONs}}
\label{sec:implications_of_ac0_concentration}
In this section, we use tail bounds on the sensitivity of $\mathsf{AC^0}$ circuits to construct an oracle relative to which $\mathsf{NP}^\mathsf{BQP} \not\subset \mathsf{BQP}^\mathsf{PH}$. Such bounds are given implicitly in Section 3 of \cite{GSTW16}. For completeness, we derive our own bound on the sensitivity tails of $\mathsf{AC^0}$ circuits, though our bound is probably quantitatively suboptimal.

To prove that the sensitivity of $\mathsf{AC^0}$ circuits concentrates well, the first ingredient we need is a random restriction lemma for $\mathsf{AC^0}$ circuits. We use the following form, due to Rossman \cite{Ros17}.

\begin{theorem}[{\cite{Ros17}}]
\label{thm:rossman_ac0_restriction}
Let $f \in \mathsf{AC^0}[s, d]$, and let $\rho$ be a random restriction with $\Pr[*] = p$. Then for any $t > 0$:
\[
\Pr_\rho\left[\D(f_\rho) \ge t \right] \le \left(p \cdot O\left(\log s\right)^{d-1}\right)^t.
\]
\end{theorem}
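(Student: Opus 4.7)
The plan is to prove the bound by induction on the depth $d$, interpreted as the statement that every $\mathsf{AC^0}[s,d]$ circuit has ``criticality'' at most $O(\log s)^{d-1}$ (in Rossman's sense, the smallest $\lambda$ for which $\Pr_\rho[\D(f_\rho)\ge t] \le (p\lambda)^t$ holds for every $p$ and $t$). The strategy combines a bottom-fan-in reduction with iterated applications of H\aa stad's switching lemma, one per depth layer.

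For the base case $d=1$, $f$ is a single $\AND$ or $\OR$ of fan-in at most $s$. For an $\OR$, $\D(f_\rho)\ge t$ requires at least $t$ inputs to survive as $*$ and none of the inputs to be fixed to $1$. A direct calculation using the joint multinomial distribution of $(0,1,*)$-assignments under $\rho$ bounds this probability by $O(p)^t$ for $p$ bounded away from $1$; the $\AND$ case is symmetric. Hence every depth-$1$ $\mathsf{AC^0}$ circuit has criticality $O(1)$.

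For the inductive step, write the top gate (say $\OR$) of $f$ as $\OR(f_1,\dots,f_M)$ with $M\le s$ and each $f_i \in \mathsf{AC^0}[s,d-1]$. I decompose $\rho$ as an independent composition $\rho=\rho_1\cdot\rho_2$ of random restrictions with $*$-probabilities $q_1,q_2$ satisfying $q_1 q_2 = p$, choosing $q_2=\Theta(1/\log s)$ so that $q_1 = p\cdot\Theta(\log s)$. By the inductive hypothesis (with parameter $\tau=\Theta(\log s)$), each $f_i|_{\rho_1}$ is computed by a decision tree of depth at most $\tau$ except with probability at most $(q_1 \cdot O(\log s)^{d-2})^\tau$, which is absorbed into the final tail. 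Rewriting each such tree as a width-$\tau$ DNF and combining yields a depth-$2$ circuit with bottom fan-in $\tau$; applying H\aa stad's switching lemma against $\rho_2$ then bounds $\Pr[\D(f_\rho)\ge t]$ by $(C \cdot q_2 \tau)^t$ plus the inherited failure probability from the inductive step. Since $q_2 \tau = \Theta(1)$ and we have paid a factor $\Theta(\log s)$ in going from $p$ to $q_1$, a short calculation gives the claimed $(p\cdot O(\log s)^{d-1})^t$ tail.

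The main obstacle is the bookkeeping across the composition of $\rho_1$ and $\rho_2$: one has to verify that the failure probabilities from the bottom-fan-in reduction, from the $M$ inductive calls, and from the top-layer switching lemma combine into a single $(p\cdot O(\log s)^{d-1})^t$ tail without extra $\log\log$ or polynomial-in-$d$ losses. A naive union bound over the $M \le s$ sub-formulas at each depth level would blow up the failure probability. H\aa stad's and Rossman's \emph{multi-switching lemma} is designed precisely for this setting, simplifying many parallel sub-formulas against a single random restriction at once; I would invoke it in lieu of repeated union bounds and follow the entropy-based inductive scheme of \cite{Ros17}, which packages the whole argument as a single uniform bound on criticality.
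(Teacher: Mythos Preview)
The paper does not prove this theorem; it is quoted without proof as a result of Rossman~\cite{Ros17} and used as a black box in the proof of \Cref{lem:ac0_sensitivity_tail_bound}. So there is no in-paper argument to compare against.

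Your outline has the right overall shape, and the base case is essentially correct (though note that the fan-in of a depth-$1$ gate is bounded by the number of inputs $N$, not by $s$; the criticality-$O(1)$ conclusion still holds because the bound $(2p)^t$ one gets is fan-in-independent). The real issue is exactly the one you flag: in the inductive step, controlling the $M\le s$ parallel sub-circuits. With $\tau=\Theta(\log s)$ fixed in advance and a naive union bound, the inherited failure term is $s\cdot(q_1\cdot O(\log s)^{d-2})^\tau$, an additive error that does \emph{not} fold into $(p\cdot O(\log s)^{d-1})^t$ uniformly in $t$ and $p$. For instance, take $p\cdot O(\log s)^{d-1}=1/2$ and $t\gg\tau$: the target bound $(1/2)^t$ decays exponentially in $t$, while the inherited term $s\cdot(1/2)^\tau=s^{1-\Theta(1)}$ is a fixed quantity independent of $t$. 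Letting $\tau$ grow with $t$ does not help either, since then the width of the intermediate DNFs grows and the top-level switching step yields $(O(q_2\tau))^t$ with $q_2\tau\gg 1$.

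Your two proposed remedies do not close this gap. Appealing to ``the entropy-based inductive scheme of~\cite{Ros17}'' is circular, since that \emph{is} the proof of the theorem being cited. And the multi-switching lemma, while related, controls a somewhat different event (existence of a common shallow decision tree under whose leaves all $F_i$ simplify); extracting from it the clean criticality tail $(p\lambda)^t$ for the full depth-$d$ circuit and for \emph{all} $t$ simultaneously is precisely the technical content that Rossman's encoding argument supplies. So as a proof proposal this correctly locates the obstacle but does not actually resolve it.
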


With this in hand, it is straightforward to derive our sensitivity tail bound.

\begin{lemma}
\label{lem:ac0_sensitivity_tail_bound}
Let $f: \{0,1\}^N \to \{0,1\}$ be a circuit in $\mathsf{AC^0}[s, d]$. Then for any $t > 0$,
\[
\Pr_{x \sim \{0,1\}^N} \left[ \s^x(f) \ge t \right] \le 2N\cdot 2^{-\Omega\left(\frac{t}{(\log s)^{d-1}}\right)}.
\]
\end{lemma}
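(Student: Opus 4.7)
The plan is to combine Rossman's restriction lemma (\Cref{thm:rossman_ac0_restriction}) with a Chernoff argument, via a coupling between a uniformly random input $x$ and a random restriction $\rho$ that makes the ``star mask'' of $\rho$ independent of $x$.

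First I would sample the pair $(x,\rho)$ as follows: pick $x\sim\{0,1\}^N$ uniformly, and independently pick a subset $M\subseteq[N]$ by putting each $i\in M$ with probability $p$ (the value of $p$ to be chosen shortly); then define $\rho$ by $\rho_i=*$ if $i\in M$ and $\rho_i=x_i$ otherwise. One immediately checks that marginally $x$ is uniform and $\rho$ is a random restriction with $\Pr[*]=p$ in the sense of \Cref{thm:rossman_ac0_restriction}, while the set $M$ of unrestricted coordinates is independent of $x$. Writing $x'\coloneqq x|_M$ for the free coordinates, one also checks that whenever $i\in M$ is a sensitive index of $f$ at $x$, the index $i$ is sensitive for $f_\rho$ at $x'$; in particular,
\[
\D(f_\rho)\ \ge\ \s^{x'}(f_\rho)\ \ge\ \bigl|\{i\in M:f(x)\neq f(x^{\oplus i})\}\bigr|.
\]

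Next, condition on any fixed $x$ with $\s^x(f)\ge t$. Since $M$ is independent of $x$, the number of starred sensitive coordinates stochastically dominates $\mathrm{Binomial}(t,p)$, and a standard Chernoff bound (\Cref{fact:chernoff}) gives that this count is at least $tp/2$ except with probability $e^{-\Omega(tp)}$. Choose $p=1/\bigl(c(\log s)^{d-1}\bigr)$ for a constant $c$ large enough that the $O(\cdot)$ hidden inside \Cref{thm:rossman_ac0_restriction} satisfies $p\cdot O(\log s)^{d-1}\le 1/2$. Then Rossman's bound gives
\[
\Pr_\rho\!\left[\D(f_\rho)\ge tp/2\right]\ \le\ 2^{-tp/2}.
\]

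Combining these two facts via the coupling, for any $t$ one has
\[
\Pr_x\!\left[\s^x(f)\ge t\right]\cdot\bigl(1-e^{-\Omega(tp)}\bigr)\ \le\ \Pr_{x,\rho}\!\left[\D(f_\rho)\ge tp/2\right]\ \le\ 2^{-tp/2},
\]
which yields $\Pr_x[\s^x(f)\ge t]\le 2^{-\Omega(t/(\log s)^{d-1})}$ in the regime $tp\gg 1$; in the complementary regime $t=O((\log s)^{d-1})$, the target bound $2N\cdot 2^{-\Omega(t/(\log s)^{d-1})}$ is trivially satisfied because the right-hand side exceeds $1$ for an appropriate choice of the hidden constant. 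The main obstacle is really just bookkeeping: making sure the coupling is set up so that $M$ is independent of $x$ (otherwise Chernoff is unavailable), and balancing the two exponents $tp/2$ from Rossman and $tp/8$ from Chernoff so that the product argument does not lose any factor beyond $O(1)$. The $2N$ prefactor in the stated conclusion appears to be slack and is not needed by this approach, but it harmlessly absorbs the trivial small-$t$ case.
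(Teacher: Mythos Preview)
Your argument is correct and in fact slightly sharper than the paper's. Both proofs set up the identical coupling between a uniform $x$ and a random restriction $\rho$ (with star mask $M$ independent of $x$) and both invoke Rossman's lemma with $p=\Theta((\log s)^{1-d})$. The divergence is in how one passes from the tail of $\D(f_\rho)$ back to the tail of $\s^x(f)$. The paper uses the expectation identity $\s^x(f)=\frac{1}{p}\E_M[\s^{x|_M}(f_\rho)]$, truncates this expectation at a level $j$ using the crude bound $\D(f_\rho)\le N$, and then applies Markov's inequality; the truncation and the Markov step together produce the $2N$ prefactor. You instead observe that, conditioned on $\s^x(f)\ge t$, the count $|M\cap\{\text{sensitive coords}\}|$ dominates $\mathrm{Binomial}(t,p)$ and apply Chernoff directly, so that $\Pr_x[\s^x(f)\ge t]\cdot(1-e^{-\Omega(tp)})\le\Pr_\rho[\D(f_\rho)\ge tp/2]\le 2^{-tp/2}$. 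This avoids Markov and yields $2^{-\Omega(t/(\log s)^{d-1})}$ with no $N$ factor in the main regime; the $2N$ in the stated lemma is then only used to vacuously absorb the small-$t$ case, exactly as you note. Your route is the cleaner of the two.
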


\begin{proof}
Let $\rho$ be a random restriction with $\Pr[*] = p$, for some $p$ to be chosen later. It will be convenient to view the choice of $\rho$ as follows: we choose a string $x \in \{0,1\}^N$ uniformly at random, and then we choose a set $S \subseteq [N]$ wherein each $i \in [N]$ is included in $S$ independently with probability $p$. Then, we take $\rho$ to be:
\[\rho(i) = \begin{cases} * & i \in S\\ x_i & i \not\in S. \end{cases}\]
Thus, by definition, it holds that $f_\rho(x|_S) = f(x)$.

Observe that for any fixed $x \in \{0,1\}^N$ and $j > 0$, we have:
\begin{align}
\s^x(f)
&=\frac{1}{p}\E_S\left[\s^{x|_S}(f_\rho) \right]\nonumber\\
&\le \frac{1}{p}\E_S\left[\s(f_\rho) \right]\nonumber\\
&\le \frac{1}{p}\E_S\left[\D(f_\rho) \right]\nonumber\\
&\le \frac{1}{p} \cdot \left(j + N \cdot \Pr_S\left[\D(f_\rho) \ge j\right] \right),\label{eq:sensitivity_restriction_bound}
\end{align}
where the first line holds because each sensitive bit of $f$ on $x$ is kept unrestricted with probability $p$; the second line holds by the definition of sensitivity; the third line holds by known relations between query measures; and the last line holds because $\D(f_\rho) \le N$ always holds.

With this in hand, we derive:
\begin{align*}
\Pr_x\left[\s^x(f) \ge t\right]
&\le \Pr_x\left[ \frac{1}{p} \cdot \left(j + N \cdot \Pr_S\left[\D(f_\rho) \ge j\right] \right) \ge t \right]\\
&= \Pr_x\left[\Pr_S\left[\D(f_\rho) \ge j\right] \ge \frac{pt - j}{N} \right]\\
&\le  \Pr_{x,S}\left[\D(f_\rho) \ge j\right] \cdot \frac{N}{pt - j}\\
&\le \left(p \cdot O\left(\log s\right)^{d-1}\right)^j \cdot \frac{N}{pt - j}.
\end{align*}
Above, the first line applies \eqref{eq:sensitivity_restriction_bound}; the third line holds by Markov's inequality; and the last line applies \Cref{thm:rossman_ac0_restriction}.

Choose $p = O\left(\log s\right)^{1-d}$ so that the above expression simplifies to $2^{-j} \cdot \frac{N}{pt - j}$. Then, set $j = pt - 1$ and the corollary follows.
\end{proof}

The sensitivity tail bound above immediately implies a tail bound on the \textit{block} sensitivity of $\mathsf{AC^0}$ circuits. We thank Avishay Tal for providing us with a proof of this fact.

\begin{corollary}
\label{cor:ac0_block_sensitivity_tail_bound}
Let $f: \{0,1\}^N \to \{0,1\}$ be a circuit in $\mathsf{AC^0}[s, d]$, and let $B = \{B_1,B_2,\ldots,B_k\}$ be a collection of disjoint subsets of $[N]$. Then for any $t$,
\[
\Pr_{x \sim \{0,1\}^N} \left[ \bs_B^x(f) \ge t \right] \le 4N\cdot 2^{-\Omega\left(\frac{t}{(\log (s + N))^{d}}\right)}.
\]
\end{corollary}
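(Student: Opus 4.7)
The plan is to reduce the $B$-block sensitivity of $f$ on a uniform input to the ordinary sensitivity of a closely related $\mathsf{AC^0}$ circuit on a slightly larger input space, and then invoke \Cref{lem:ac0_sensitivity_tail_bound}.

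First, I would introduce $f_B : \{0,1\}^{N+k} \to \{0,1\}$ defined by $f_B(x,y) := f(x \oplus y^B)$, where $y^B \in \{0,1\}^N$ is obtained from $y$ by ``broadcasting'' each coordinate $y_j$ across the block $B_j$ and placing $0$s outside $\bigcup_j B_j$ (well-defined because the $B_j$ are disjoint). The key identity is that flipping $y_j$ in $f_B$ toggles exactly the bits of $x' := x \oplus y^B$ lying in block $B_j$, so the number of sensitive $y$-coordinates of $f_B$ at $(x,y)$ equals $\bs_B^{x'}(f)$, and hence $\s^{(x,y)}(f_B) \ge \bs_B^{x'}(f)$. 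Since uniform $(x,y) \in \{0,1\}^{N+k}$ induces uniform $x' \in \{0,1\}^N$, this already gives $\Pr_x[\bs_B^x(f) \ge t] \le \Pr_{(x,y)}[\s^{(x,y)}(f_B) \ge t]$.

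Second, I would realize $f_B$ as an alternating, layered $\AND/\OR/\NOT$ circuit of depth $d+1$ and size $O(sN)$. Assume WLOG that the bottom gate layer of $f$ consists of ANDs (the other case is symmetric). For each occurrence of a literal $x_j^\pm$ with $j \in B_i$, replace it by the CNF expansion of $x_j \oplus y_i$ (or its negation), with the outer AND of this depth-$2$ gadget absorbed into the bottom-layer AND of $f$ that the literal feeds. Only the inner ORs of the gadgets form a new layer, so the depth grows by exactly $1$; each occurrence contributes $O(1)$ gates, and the total size is $O(sN)$ since bottom-layer fan-in is at most $2N$. Literals with $j \notin \bigcup_i B_i$ remain as inputs. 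Applying \Cref{lem:ac0_sensitivity_tail_bound} to $f_B$ with $s' = O(sN)$ and $d' = d+1$, and using $N+k \le 2N$ together with $\log(sN) = O(\log(s+N))$, yields
\[
\Pr_x[\bs_B^x(f) \ge t] \;\le\; \Pr_{(x,y)}[\s^{(x,y)}(f_B) \ge t] \;\le\; 2(N+k) \cdot 2^{-\Omega(t/(\log(sN))^d)} \;\le\; 4N \cdot 2^{-\Omega(t/(\log(s+N))^d)},
\]
as claimed.

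The only delicate point is the depth accounting: we genuinely need $f_B$ to have depth $d+1$ rather than $d+2$, since otherwise \Cref{lem:ac0_sensitivity_tail_bound} would give $(\log(s+N))^{d+1}$ in the exponent instead of $(\log(s+N))^d$ and we would fall one power short. The merging trick works precisely because the CNF/DNF expansion of XOR is a depth-$2$ gadget whose outer connective can be matched to the type of the bottom gate layer of $f$ and absorbed into it.
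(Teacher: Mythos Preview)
Your proposal is correct and follows essentially the same approach as the paper: define the auxiliary function $g(y,z) = f(y \oplus z_1 B_1 \oplus \cdots \oplus z_k B_k)$, observe that $\bs_B^{x}(f) \le \s^{(y,z)}(g)$ for any $(y,z)$ mapping to $x$, realize $g$ in $\mathsf{AC^0}$ at depth $d+1$ by absorbing the outer layer of the $\XOR$ gadgets into the bottom layer of $f$, and apply \Cref{lem:ac0_sensitivity_tail_bound}. The only cosmetic difference is that the paper computes each needed $\XOR$ once and reuses it, obtaining size $s + O(N)$ rather than your $O(sN)$, but since $\log(sN) = O(\log(s+N))$ this does not affect the final bound.
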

\begin{proof}
Consider the function $g: \{0,1\}^{N + k}$ defined by
\[
g(y, z) \coloneqq f\left(y \oplus z_1 \cdot B_1 \oplus z_2 \cdot B_2 \oplus \ldots \oplus z_k \cdot B_k\right),
\]
where $z_i \cdot B_i$ denotes the all zeros string if $z_i = 0$, and otherwise is the indicator string of $B_i$.

We claim that $g \in \mathsf{AC^0}[s + O(N), d+1]$. Let $x = y \oplus z_1 \cdot B_1 \oplus z_2 \cdot B_2 \oplus \ldots \oplus z_k \cdot B_k$. Notice that each bit of $x$ is either a bit in $y$, or else the $\XOR$ of a bit of $y$ with a bit of $z$. Hence, we can compute $g$ by feeding in at most $N$ $\XOR$ gates and their negations into $f$. The $\XOR$ function can be written as either an $\OR$ of $\AND$s or an $\AND$ of $\OR$s: $a \oplus b = (a \lor b) \land (\lnot a \lor \lnot b) = (a \land \lnot b) \lor (\lnot a \land  b)$. Hence, we can absorb one layer of $\AND$ or $\OR$ gates into the bottom layer of the circuit that computes $f$, thus obtaining a circuit of depth $d+1$.

Notice that for any $x \in \{0,1\}^N$, there are exactly $2^k$ strings $(y, z) \in \{0,1\}^{N + k}$ such that $x = y \oplus z_1 \cdot B_1 \oplus z_2 \cdot B_2 \oplus \ldots \oplus z_k \cdot B_k$. Moreover, for any such $(y, z)$ we have that $\bs_B^x(f) \le \s^{(y,z)}(g)$. Thus, \Cref{lem:ac0_sensitivity_tail_bound} implies that:
\[
\Pr_{x \sim \{0,1\}^N} \left[ \bs_B^x(f) \ge t \right] \le \Pr_{(y, z) \sim \{0,1\}^{N + k}} \left[ \s^{(y, z)}(g) \ge t \right] \le 2(N + k)\cdot 2^{-\Omega\left(\frac{t}{(\log (s + N))^{d}}\right)},
\]
and the corollary follows because $k \le N$.
\end{proof}

The rough idea of the proof going forward is as follows: an $\mathsf{NP}^\mathsf{BQP}$ machine can easily distinguish (1) a uniformly random $M \times N$ array of bits, and (2) an $M \times N$ array which contains a single row drawn from the Forrelation distribution, and is otherwise random. We want to show that a $\mathsf{BQP}^\mathsf{PH}$ machine cannot distinguish (1) and (2). To prove this, we first use our block sensitivity tail bound to argue in \Cref{lem:ac0_block_averaged_sensitivity_tail_bound} below that for most uniformly random strings $x$, an $\mathsf{AC^0}$ circuit is unlikely to detect a change to $x$ made by uniformly randomly resampling a single row of $x$. Then, we use the Raz-Tal Theorem to argue in \Cref{lem:ac0_single_forrelated_block_indistinguishable} that the same holds if we instead resample a single row of $x$ from the Forrelation distribution, rather than the uniform distribution. Finally, in \Cref{thm:bqp_ac0_single_forrelated_block_indistinguishable} we apply the BBBV Theorem to argue that a $\mathsf{BQP}^\mathsf{PH}$ oracle machine cannot distinguish cases (1) and (2).

\begin{lemma}
\label{lem:ac0_block_averaged_sensitivity_tail_bound}
Let $f: \{0,1\}^{MN} \to \{0,1\}$ be a circuit in $\mathsf{AC^0}[s, d]$. Let $x \in \{0,1\}^{MN}$ be an input, viewed as an $M \times N$ array with $M$ rows and $N$ columns. Let $y$ be sampled depending on $x$ as follows: uniformly select one of the rows of $x$, randomly reassign all of the bits of that row, and leave the other rows of $x$ unchanged. Then for any $\eps > 0$:
\[
\Pr_{x \sim \{0,1\}^{MN}}\left[\Pr_y\left[f(x) \neq f(y)\right] \ge \eps \right] \le 8M^2N \cdot 2^{-\Omega\left(\frac{\eps M}{(\log(s + MN))^d}\right)}.
\]
\end{lemma}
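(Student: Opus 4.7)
The plan is to convert the resampling probability into an averaged block sensitivity and then invoke Corollary~\ref{cor:ac0_block_sensitivity_tail_bound}. Take the partition $B = \{R_1, \ldots, R_M\}$ of $[MN]$ into the $M$ row-blocks of size $N$. For each $x$ and each row index $i$, let $p_i(x) := \Pr_{y' \sim \{0,1\}^N}\bigl[f(x^{(i \to y')}) \neq f(x)\bigr]$, where $x^{(i \to y')}$ denotes $x$ with its $i$-th row replaced by $y'$. Then by construction of $y$,
\[
\Pr_y[f(x) \neq f(y)] = \frac{1}{M}\sum_{i=1}^M p_i(x).
\]

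Next I would observe the crude but sufficient bound $\sum_{i} p_i(x) \le \bs_B^x(f)$. Indeed, $p_i(x) \le 1$ always; and $p_i(x) = 0$ whenever $i$ is not $B$-block-sensitive at $x$, since in that case no choice of $y'$ (equivalently, no flip pattern inside row $i$) alters $f(x)$. Summing, the number of nonzero terms is at most $\bs_B^x(f)$, so $\Pr_y[f(x)\neq f(y)] \le \bs_B^x(f)/M$. Consequently, the event $\Pr_y[f(x)\neq f(y)] \ge \eps$ is contained in the event $\bs_B^x(f) \ge \eps M$.

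Finally, I would apply Corollary~\ref{cor:ac0_block_sensitivity_tail_bound} to $f \in \mathsf{AC^0}[s,d]$ on $MN$ variables with threshold $t = \eps M$, yielding
\[
\Pr_x\!\left[\bs_B^x(f) \ge \eps M\right] \le 4\,MN \cdot 2^{-\Omega\!\left(\frac{\eps M}{(\log(s+MN))^d}\right)} \le 8\,M^2 N \cdot 2^{-\Omega\!\left(\frac{\eps M}{(\log(s+MN))^d}\right)},
\]
which is the desired inequality.

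There is no real obstacle here: the argument is a two-line reduction to the already-proved block sensitivity concentration bound. The only thing to keep straight is the identification of row-resampling with row-block sensitivity, which works because resampling a row is at least as expressive as flipping any subset of bits within that row, so insensitive rows contribute nothing to the sum $\sum_i p_i(x)$.
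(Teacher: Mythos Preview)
Your reduction has a genuine gap in the step ``$p_i(x) = 0$ whenever $i$ is not $B$-block-sensitive at $x$.'' With your choice $B = \{R_1,\ldots,R_M\}$ of full rows, the statement ``$i$ is not $B$-block-sensitive at $x$'' means precisely that $f(x) = f(x^{\oplus R_i})$, i.e., flipping \emph{every} bit of row $i$ leaves $f$ unchanged. It does \emph{not} mean that $f$ is constant under arbitrary modifications of row $i$. A concrete counterexample: take $M=1$, $N=2$, and $f(x_1,x_2) = x_1 \oplus x_2$. Flipping both bits preserves parity, so $\bs_B^x(f) = 0$ for every $x$; but resampling the row gives $p_1(x) = 1/2$, so $\sum_i p_i(x) = 1/2 > 0 = \bs_B^x(f)$. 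Thus the containment of events $\{\Pr_y[f(x)\neq f(y)] \ge \eps\} \subseteq \{\bs_B^x(f) \ge \eps M\}$ fails.

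The paper repairs exactly this point by taking $B$ to be a \emph{random} collection $\{S_1,\ldots,S_M\}$ where each $S_i$ is a uniformly random subset of row $i$. Then sampling $y$ is the same as sampling such a $B$, picking $i\in[M]$ uniformly, and setting $y = x^{\oplus S_i}$, which gives the exact identity $\Pr_y[f(x)\neq f(y)] = \frac{1}{M}\E_{B}[\bs_B^x(f)]$. From there one cannot simply pass to a tail event on a fixed $B$; instead the paper bounds $\E_B[\bs_B^x(f)] \le j + M\cdot\Pr_B[\bs_B^x(f)\ge j]$, applies Markov's inequality over $x$ to the term $\Pr_B[\bs_B^x(f)\ge j]$, and only then invokes Corollary~\ref{cor:ac0_block_sensitivity_tail_bound} on the joint probability over $x$ and $B$. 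Choosing $j = \eps M - 1$ yields the stated bound (and is the source of the extra factor of $M$ compared to what your argument would have given).
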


\begin{proof}
Let $\mathcal{B}$ be the distribution over collections $B = \{S_1,\ldots,S_M\}$ of subsets of $[MN]$ wherein each $S_i$ is a uniformly random subset of the $i$th row. Notice that for any fixed $x \in \{0,1\}^{MN}$ and $j > 0$, we have:
\begin{align}
\Pr_y\left[f(x) \neq f(y)\right]
&= \frac{1}{M} \cdot \E_{B \sim \mathcal{B}}\left[\bs^x_B(f)\right]\nonumber\\
&\le \frac{1}{M} \cdot \left(j + M \cdot \Pr_{B \sim \mathcal{B}}\left[\bs^x_B(f) \ge j\right] \right)\nonumber\\
&= \frac{j}{M} + \Pr_{B \sim \mathcal{B}}\left[\bs^x_B(f) \ge j\right],\label{eq:block_averaged_sensitivity_bound}
\end{align}
just because one can sample $y$ by drawing $B \sim \mathcal{B}$, $i \sim [M]$, and taking $y = x^{\oplus S_i}$. The inequality in the second line holds because $\bs^x_B(f) \le |B| = M$.

With this in hand, we derive:
\begin{align*}
\Pr_{x \sim \{0,1\}^{MN}}\left[\Pr_y\left[f(x) \neq f(y)\right] \ge \eps \right]
&\le \Pr_{x}\left[\frac{j}{M} + \Pr_{B \sim \mathcal{B}}\left[\bs^x_B(f) \ge j\right] \ge \eps \right]\\
&= \Pr_{x}\left[\Pr_{B \sim \mathcal{B}}\left[\bs^x_B(f) \ge j\right] \ge \eps - \frac{j}{M} \right]\\
&\le \Pr_{x,B}\left[\bs^x_B(f) \ge j\right] \cdot \frac{M}{\eps M - j}\\
&\le 4MN \cdot 2^{-\Omega\left(\frac{j}{(\log(s + MN))^d}\right)} \cdot \frac{M}{\eps M - j}\\
&\le \frac{4M^2N}{\eps M - j} \cdot 2^{-\Omega\left(\frac{j}{(\log(s + MN))^d}\right)}.
\end{align*}
Above, the first line applies \eqref{eq:block_averaged_sensitivity_bound}; the third line holds by Markov's inequality; and the fourth line applies \Cref{cor:ac0_block_sensitivity_tail_bound}. Choosing $j = \eps M - 1$ completes the proof.
\end{proof}

\begin{lemma}
\label{lem:ac0_single_forrelated_block_indistinguishable}
Let $M \le \quasipoly(N)$, and suppose that $f: \{0,1\}^{MN} \to \{0,1\}$ is a circuit in $\mathsf{AC^0}[\quasipoly(N), O(1)]$. Let $x \in \{0,1\}^{MN}$ be an input, viewed as an $M \times N$ array with $M$ rows and $N$ columns. Let $y$ be sampled depending on $x$ as follows: uniformly select one of the rows of $x$, randomly sample that row from the Forrelation distribution $\mathcal{F}_N$, and leave the other rows of $x$ unchanged. Then for some $\eps = \frac{\polylog(N)}{\sqrt{N}}$, we have:
\[
\Pr_{x \sim \{0,1\}^{MN}}\left[\Pr_y\left[f(x) \neq f(y)\right] \ge \eps \right] \le 8M^2N \cdot 2^{-\Omega\left(\frac{M}{\sqrt{N} \polylog(N)}\right)}.
\]

\end{lemma}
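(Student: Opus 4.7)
The plan is to reduce to \Cref{lem:ac0_block_averaged_sensitivity_tail_bound} (the uniform-resampling version) via a straightforward hybrid between the uniform distribution and the Forrelation distribution, using \Cref{thm:raz-tal} to bound the cost of each hybrid step.

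First, for each row index $i \in [M]$ and each fixed input $x$, I would define the single-row function $h_{i,x}: \{0,1\}^N \to \{0,1\}$ by
\[
h_{i,x}(z) \;\coloneqq\; \mathbb{1}\!\left[\, f\bigl(x^{(i \leftarrow z)}\bigr) \neq f(x) \,\right],
\]
where $x^{(i \leftarrow z)}$ denotes $x$ with its $i$th row replaced by $z$. Since $f(x)$ is a fixed constant once $x$ is fixed, $h_{i,x}$ is computed by hardwiring all bits of $x$ outside row $i$ into $f$ and possibly negating the top gate; in particular $h_{i,x} \in \mathsf{AC^0}[\quasipoly(N), O(1)]$. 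By the Raz--Tal theorem (\Cref{thm:raz-tal}, part 2), there is a function $\delta(N) = \frac{\polylog(N)}{\sqrt{N}}$ such that for every $i$ and every $x$,
\[
\left|\, \Pr_{z \sim \mathcal{F}_N}[h_{i,x}(z) = 1] \;-\; \Pr_{z \sim U_N}[h_{i,x}(z) = 1] \,\right| \;\le\; \delta(N).
\]

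Next, observe that if we write $y_{\mathrm{unif}}$ for the uniform-resampling version of $y$ (as in \Cref{lem:ac0_block_averaged_sensitivity_tail_bound}) and $y$ for the Forrelation-resampling version, then for every $x$,
\[
\Pr_{y}[\,f(x) \neq f(y)\,] \;=\; \frac{1}{M}\sum_{i=1}^M \Pr_{z \sim \mathcal{F}_N}[h_{i,x}(z) = 1],
\qquad
\Pr_{y_{\mathrm{unif}}}[\,f(x) \neq f(y_{\mathrm{unif}})\,] \;=\; \frac{1}{M}\sum_{i=1}^M \Pr_{z \sim U_N}[h_{i,x}(z) = 1].
\]
Averaging the Raz--Tal bound over $i$ and applying the triangle inequality yields, for every $x$,
\[
\bigl|\, \Pr_{y}[f(x) \neq f(y)] - \Pr_{y_{\mathrm{unif}}}[f(x) \neq f(y_{\mathrm{unif}})]\,\bigr| \;\le\; \delta(N).
\]

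Finally, set $\eps \coloneqq 2\delta(N) = \frac{\polylog(N)}{\sqrt{N}}$. Then the event $\Pr_{y}[f(x) \neq f(y)] \ge \eps$ implies $\Pr_{y_{\mathrm{unif}}}[f(x) \neq f(y_{\mathrm{unif}})] \ge \delta(N)$, so
\[
\Pr_x\!\left[\Pr_y[f(x)\neq f(y)] \ge \eps\right]
\;\le\; \Pr_x\!\left[\Pr_{y_{\mathrm{unif}}}[f(x)\neq f(y_{\mathrm{unif}})] \ge \delta(N)\right].
\]
Applying \Cref{lem:ac0_block_averaged_sensitivity_tail_bound} with threshold $\delta(N)$ and using $s, d = \quasipoly(N), O(1)$ (so $(\log(s+MN))^d = \polylog(N)$) gives a bound of $8M^2N \cdot 2^{-\Omega(M \cdot \delta(N)/\polylog(N))} = 8M^2N \cdot 2^{-\Omega(M/(\sqrt{N}\polylog(N)))}$, as desired.

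There is no real obstacle here; the argument is essentially a one-step hybrid. The only thing to watch is that the $\polylog(N)$ factor absorbed into $\eps$ must dominate both the $\polylog$ coming from Raz--Tal and the $\polylog$ hidden in the exponent of \Cref{lem:ac0_block_averaged_sensitivity_tail_bound}, which is automatic because we are free to enlarge the $\polylog(N)$ in the definition of $\eps$.
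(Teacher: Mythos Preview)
Your proposal is correct and follows essentially the same approach as the paper. The only cosmetic difference is packaging: the paper defines a single auxiliary circuit $C(x,z,i)$ (with $x$ and $i$ treated as extra inputs) and applies \Cref{thm:raz-tal} to it, whereas you define the family $h_{i,x}$ and apply \Cref{thm:raz-tal} to each member separately before averaging---but these are the same argument, and both then finish by reducing to \Cref{lem:ac0_block_averaged_sensitivity_tail_bound} with threshold $\eps/2$.
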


\begin{proof}
Consider a Boolean function $C(x, z, i)$ that takes inputs $x \in \{0,1\}^{MN}$, $z \in \{0,1\}^N$, and $i \in [M]$. Let $\tilde{y}$ be the string obtained from $x$ by replacing the $i$th row with $z$. Let $C$ output $1$ if $f(x) \neq f(\tilde{y})$, and $0$ otherwise. Clearly, $C \in \mathsf{AC^0}[\quasipoly(N), O(1)]$. Observe that for any fixed $x$:
\begin{equation}
\label{eq:forrelation_block_same_distribution}
\Pr_{i \sim [M],z \sim \mathcal{F}_N}\left[C(x, z, i) = 1\right] = \Pr_{y}[f(x) \neq f(y)].
\end{equation}

By \Cref{thm:raz-tal}, for some $\eps = \frac{\polylog(N)}{\sqrt{N}}$ we have:
\begin{equation}
\label{eq:forrelation_block_indistinguishable}
\left|\Pr_{i \sim [M],z \sim \mathcal{F}_N}\left[C(x, z, i) = 1\right] - \Pr_{i \sim [M],z \sim \{0,1\}^N}\left[C(x, z, i) = 1\right] \right| \le \frac{\eps}{2}.
\end{equation}

Putting these together, we obtain:
\begin{align*}
\Pr_{x \sim \{0,1\}^{MN}}\left[\Pr_y\left[f(x) \neq f(y)\right] \ge \eps \right]
&= \Pr_{x \sim \{0,1\}^{MN}}\left[\Pr_{i \sim [M], z \sim \mathcal{F}_N}\left[C(x, z, i) = 1\right] \ge \eps \right]\\
&\le \Pr_{x \sim \{0,1\}^{MN}}\left[\Pr_{i \sim [M], z \{0,1\}^N}\left[C(x, z, i) = 1\right] \ge \frac{\eps}{2} \right]\\
&= \Pr_{x \sim \{0,1\}^{MN}}\left[
\Pr_{i \sim [M],z \sim \{0,1\}^N}\left[f(x) \neq f(\tilde{y})\right]
\ge \frac{\eps}{2}
\right]\\
&\le 8M^2N \cdot 2^{-\Omega\left(\frac{\eps M}{(\log(s + MN))^d}\right)}\\
&\le 8M^2N \cdot 2^{-\Omega\left(\frac{M}{\sqrt{N} \polylog(N)}\right)},
\end{align*}
where the first line substitutes \eqref{eq:forrelation_block_same_distribution}; the second line holds by \eqref{eq:forrelation_block_indistinguishable} and the triangle inequality; the third line holds by the definition of $C$ and $\tilde{y}$ in terms of $i$ and $z$; the fourth line invokes \Cref{lem:ac0_block_averaged_sensitivity_tail_bound} for some $s = \quasipoly(N)$ and $d = O(1)$; and the last line uses these bounds on $s$ and $d$ along with the assumption that $M \le \quasipoly(N)$.
\end{proof}

The next theorem essentially shows that no $\mathsf{BQP}^\mathsf{PH}$ oracle machine can solve the $\OR \circ \Forrelation$ problem (i.e. given a list of $\Forrelation$ instances, decide if one of them is Forrelated, or if they are all uniform).

\begin{theorem}
\label{thm:bqp_ac0_single_forrelated_block_indistinguishable}
Let $M \le \quasipoly(N)$, and let $f: \{0,1\}^{MN} \to \{0,1,\bot\}$ be computable by a circuit of size $\quasipoly(N)$ in which the top gate has bounded-error quantum query complexity $T$, and all of the sub-circuits of the top gate are $\mathsf{AC^0}[\quasipoly(N), O(1)]$ circuits.

Let $b \sim \{0,1\}$ be a uniformly random bit. Suppose $z \in \{0,1\}^{MN}$ is sampled such that:
\begin{itemize}
\item If $b = 0$, then $z$ is uniformly random.
\item If $b = 1$, then a single uniformly chosen row of $z$ is sampled from the Forrelation distribution $\mathcal{F}_N$, and the remaining $M - 1$ rows of $z$ are uniformly random.
\end{itemize}
Then:
\[
\Pr_{b, z} [f(z) = b] \le \frac{1}{2} + \quasipoly(N) \cdot 2^{-\Omega\left(\frac{M}{\sqrt{N} \polylog(N)}\right)} + \frac{T^2 \polylog(N)}{\sqrt{N}}.
\]
\end{theorem}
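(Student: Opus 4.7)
The plan is to couple the two input distributions and then separately analyze the $\mathsf{AC^0}$ sub-circuits (via Raz-Tal, in the form of \Cref{lem:ac0_single_forrelated_block_indistinguishable}) and the top-level quantum query algorithm (via BBBV). Sample $x \sim \{0,1\}^{MN}$ uniformly, and obtain $y$ from $x$ by picking a uniformly random row and resampling it from $\mathcal{F}_N$. Then the marginal of $x$ is exactly the distribution of $z \mid b = 0$ and the marginal of $y$ is the distribution of $z \mid b = 1$, so
\[
\Pr_{b,z}[f(z) = b] - \tfrac{1}{2} \;=\; \tfrac{1}{2}\bigl(\Pr[f(y) = 1] - \Pr[f(x) = 1]\bigr),
\]
and it suffices to upper bound $\bigl|\Pr[f(x) = 1] - \Pr[f(y) = 1]\bigr|$.

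Next, I would view the top gate of $f$ as a quantum query algorithm $Q$ making $T$ queries to a derived bit-string $C(z) = (C_1(z), \ldots, C_k(z))$, where each $C_i$ is one of the $\mathsf{AC^0}[\quasipoly(N), O(1)]$ sub-circuits and $k \le \quasipoly(N)$. I would then apply \Cref{lem:ac0_single_forrelated_block_indistinguishable} to each $C_i$ separately and union-bound over the $k$ sub-circuits: except on a $\quasipoly(N)\cdot 2^{-\Omega(M/(\sqrt{N}\polylog(N)))}$ fraction of $x$, we have $\Pr_y[C_i(x) \neq C_i(y)] \le \eps$ for every $i$ simultaneously, where $\eps = \polylog(N)/\sqrt{N}$. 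Call such $x$ ``good''; the bad $x$ contribute at most the first error term in the statement. For each good $x$, apply \Cref{lem:average_case_bbbv} with perturbation probability $p = \eps$ to the quantum algorithm $Q$ operating on the oracle $C(z)$: since each of $Q$'s queried bits flips between $C(x)$ and $C(y)$ with probability at most $\eps$, a standard trade-off in the tail bound $\Pr_y\bigl[|\Pr[Q(C(x))=1]-\Pr[Q(C(y))=1]|\ge r\bigr] \le 64\eps T^2/r^2$ yields an expected advantage bounded by $T^2\polylog(N)/\sqrt{N}$. Combining the bad-$x$ term and the BBBV term and dividing by $2$ yields the claim.

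The main conceptual obstacle is that one cannot apply Raz-Tal directly to $f$, because the top gate is quantum, and one cannot apply BBBV directly to $f$ as a function of $z$, because $Q$'s $T$ queries are to $C(z)$ rather than to $z$ itself. The two-stage argument above circumvents both issues by first using Raz-Tal (via \Cref{lem:ac0_single_forrelated_block_indistinguishable}) to convert the coupled ``Forrelated row vs.\ uniform row'' change into a small per-sub-circuit flipping probability $\eps$, and then using BBBV to propagate that insensitivity through the top-level $Q$. The main bookkeeping obstacle is ensuring that the union bound over the $\quasipoly(N)$ sub-circuits does not overwhelm the individual failure probability; this is exactly where the $2^{-\Omega(M/(\sqrt{N}\polylog(N)))}$ tail given by the block-sensitivity concentration behind \Cref{lem:ac0_single_forrelated_block_indistinguishable} is essential, and why the theorem requires $M \le \quasipoly(N)$ but no stronger upper bound on the sizes of the sub-circuits.
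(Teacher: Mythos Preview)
Your proposal is correct and follows essentially the same approach as the paper: the same coupling of $x$ and $y$, the same union bound over the $\quasipoly(N)$ sub-circuits via \Cref{lem:ac0_single_forrelated_block_indistinguishable} to isolate ``good'' $x$, and then BBBV applied to the top quantum gate viewed as querying the derived string $C(z)$. The only cosmetic differences are that the paper invokes \Cref{cor:average_case_bbbv_function} directly (rather than integrating the tail bound of \Cref{lem:average_case_bbbv}), and that your displayed identity $\Pr[f(z)=b]-\tfrac12=\tfrac12(\Pr[f(y)=1]-\Pr[f(x)=1])$ is an equality only for total $f$; for partial $f$ it becomes the inequality $\le$, which is all you need.
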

\begin{proof}
We can think of sampling $z$ as follows. First, we choose a string $x_0 \sim \{0,1\}^{MN}$. Then, we sample $x_1$ by uniformly at random replacing one of the rows of $x_0$ with a sample from $\mathcal{F}_N$. Finally, we sample $b \sim \{0,1\}$ and set $z = x_b$.

Call a fixed $x_0$ ``bad'' if, for one of the sub-circuits $C$ of the top gate, we have $\Pr_{x_1}[C(x_0) \neq C(x_1)] \ge \eps$, where $\eps \le \frac{\polylog(N)}{\sqrt{N}}$ is the parameter given in \Cref{lem:ac0_single_forrelated_block_indistinguishable}. \Cref{lem:ac0_single_forrelated_block_indistinguishable}, combined with a union bound over the $\quasipoly(N)$-many such sub-circuits, implies that:
\[
\Pr_{x_0 \sim \{0,1\}^{MN}}\left[x_0 \text{ is bad}\right] \le \quasipoly(N) \cdot 2^{-\Omega\left(\frac{M}{\sqrt{N} \polylog(N)}\right)}.
\]

Clearly, it holds that:
\[
\Pr_{b,z}[f(z) = b] \le \Pr_{x_0 \sim \{0,1\}^{MN}}\left[x_0 \text{ is bad}\right] + \Pr_{x_1,b}\left[f(x_b) = b | x_0 \text{ is good} \right].
\]
$b$ is uniformly random, even conditioned on $x_0$ being good. On the other hand, \Cref{cor:average_case_bbbv_function} implies that for some $i \in \{0,1\}$ (depending on $x_0$), $\Pr_{x_1,b}\left[f(x_b) = i | x_0 \text{ is good} \right] \le 2304\eps T^2$. Thus, it holds that:
\[
\Pr_{x_1,b}\left[f(x_b) = b | x_0 \text{ is good} \right] \le \frac{1}{2} + \frac{T^2 \polylog(N)}{\sqrt{N}}.
\]
Putting these bounds together implies the statement of the theorem.
\end{proof}

Via standard complexity-theoretic techinques, \Cref{thm:bqp_ac0_single_forrelated_block_indistinguishable} implies the following oracle separation, which resolves the question of Fortnow \cite{For05}.

\begin{corollary}
\label{cor:np^bqp_not_in_bqp^ph}
There exists an oracle relative to which $\mathsf{NP}^{\mathsf{BQP}%
}\not \subset \mathsf{BQP}^{\mathsf{PH}}$.
\end{corollary}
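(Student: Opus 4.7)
The plan is to lift the query-complexity bound of Theorem~\ref{thm:bqp_ac0_single_forrelated_block_indistinguishable} to an oracle separation using the same diagonalization recipe as in Corollary~\ref{cor:bqp_sigma_k}. I will take a sufficiently sparse sequence of input lengths $n_1 < n_2 < \cdots$, and for each $n = n_i$ reserve a region $R_n$ of the oracle of $MN$ bits with $M = N = 2^n$. For each such $n$ I independently flip a coin $b_n \in \{0,1\}$ and populate $R_n$ by sampling from the distribution of Theorem~\ref{thm:bqp_ac0_single_forrelated_block_indistinguishable}: uniformly random if $b_n = 0$, and otherwise a single uniformly chosen row replaced by a draw from $\mathcal{F}_N$. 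The witness language will be the unary $L^\mathcal{O} \coloneqq \{0^n : b_n = 1\}$.

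For the upper bound $L^\mathcal{O} \in \mathsf{NP}^{\mathsf{BQP}^\mathcal{O}}$, the $\mathsf{NP}$ machine on input $0^n$ nondeterministically guesses an index $i \in [M]$, and its $\mathsf{BQP}$ subroutine runs the Raz--Tal distinguisher $\mathcal{A}$ from Theorem~\ref{thm:raz-tal} on the $i$th row of $R_n$, amplified so that its failure probability is $\ll 1/M$. If $b_n = 1$ then at least one $i$ certifies a Forrelated row; if $b_n = 0$ then a union bound over the $M$ rows rules out any accepting witness. A standard Borel--Cantelli argument, identical to the one used in \Cref{claim:np_in_bqp}, converts this into correctness on all but finitely many $n$ with probability $1$ over $\mathcal{O}$.

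For the lower bound I fix an arbitrary $\mathsf{BQP}^{\mathsf{PH}^\mathcal{O}}$ machine $V$ and show $\Pr_\mathcal{O}[V \text{ decides } L^\mathcal{O}] = 0$, which suffices as there are only countably many such $V$. On input $0^n$, each query $V$ makes to its $\mathsf{PH}$ oracle can be translated, via the Furst--Saxe--Sipser connection, into an $\mathsf{AC^0}[\quasipoly(N), O(1)]$ circuit over the bits of $\mathcal{O}$; composing these with $V$'s $T = \polylog(N)$-query quantum top layer yields precisely a circuit of the form controlled by Theorem~\ref{thm:bqp_ac0_single_forrelated_block_indistinguishable} with $M = N = 2^n$. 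The theorem then upper-bounds the probability that $V$ correctly outputs $b_n$ by
\[
\tfrac{1}{2} + \quasipoly(N) \cdot 2^{-\Omega(\sqrt{N}/\polylog(N))} + \polylog(N)/\sqrt{N} = \tfrac{1}{2} + o(1).
\]
Choosing the $n_i$ sparse enough that $V$ on $0^{n_i}$ can reach $R_{n_j}$ for $j > i$ only finitely often, the independence of the $b_{n_i}$ together with the conditional-product bookkeeping from Theorem~\ref{thm:bqp=pp_ph_infinite} drives the overall probability that $V$ decides $L^\mathcal{O}$ to zero.

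The main obstacle is the translation step in the lower bound: one has to verify carefully that each $\mathsf{PH}$-oracle answer, once queried by the $\mathsf{BQP}$ top layer, really behaves as a single $\mathsf{AC^0}$ formula in the bits of $\mathcal{O}$ of the size and depth required by Theorem~\ref{thm:bqp_ac0_single_forrelated_block_indistinguishable}, including adaptive dependencies inside the quantum query algorithm. Once this reduction to the circuit model is in hand, the remainder is the standard countable-union-plus-Borel--Cantelli template already used in the earlier proofs.
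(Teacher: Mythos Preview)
Your proposal is correct and follows essentially the same approach as the paper's proof. The only cosmetic differences are that the paper takes $M = N = 2^{n^2}$ rather than $2^n$ (both choices make the error terms in Theorem~\ref{thm:bqp_ac0_single_forrelated_block_indistinguishable} vanish), and the paper packages the upper bound via an explicit intermediate language $M^A = \{x : \text{the $x$th row is Forrelated}\}$ shown to lie in $\mathsf{BQP}^A$ by Borel--Cantelli, which is exactly your amplified Raz--Tal distinguisher plus union bound rephrased.
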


\begin{proof}
We construct an oracle $A$ as follows. Let $L^A$ be a uniformly random unary language. For each $n \in \Naturals$, we add into $A$ a region consisting of a function $f_n: \{0,1\}^{2n^2} \to \{0,1\}$. Choose $f_n$ as follows:
\begin{itemize}
\item If $L^A\left(0^n\right) = 0$, then $f_n$ is uniformly random.
\item If $L^A\left(0^n\right) = 1$, then viewing the truth table of $f_n$ as consisting of $2^{n^2}$ rows of length $2^{n^2}$, we pick a single row at random and sample it from the Forrelation distribution $\mathcal{F}_{2^{n^2}}$, and sample the remaining $2^{n^2} - 1$ rows from the uniform distribution.
\end{itemize}
Let $\mathcal{D}$ be the resulting distribution over oracles $A$.

We first show that $L^A \in \mathsf{NP}^{\mathsf{BQP}^A}$ with probability $1$ over $A \sim \mathcal{D}$. To do so, we define a language $P^A$ as follows: for a string $x \in \{0,1\}^*$, $P^A(x) = 1$ if $|x| = n^2$ and the $x$th row of $f_n$ was drawn from the Forrelation distribution; otherwise $P^A(x) = 0$. Clearly, $L^A \in \mathsf{NP}^{P^A}$: to determine if $0^n \in L^A$, nondeterministically guess a string $x \in \{0,1\}^{n^2}$ and check if $x \in P^A$. Thus, it suffices to show that $P^A \in \mathsf{BQP}^A$, which we do below. (Note that this proof shares large parts with the proof of \Cref{claim:np_in_bqp}, only modifying a few parameters.)

\begin{claim}
\label{claim:m_in_bqp_o}
$P^A \in \mathsf{BQP}^A$ with probability $1$ over $A$.
\end{claim}

\begin{proof}[Proof of Claim]
Given an input $x$ of length $n^2$, a quantum algorithm can decide whether $x \in P^A$ in $\poly(n)$ time by looking up $x$th row of $f_n$, and then deciding whether it is Forrelated or random by using the distinguishing algorithm $\mathcal{A}$ from \Cref{thm:raz-tal}.

In more detail, let $g_x: \{0,1\}^{n^2} \to \{0,1\}$ denote the $x$th row of $f_n$ (i.e. $g_x(y) \coloneqq f_n(x, y)$). By \Cref{thm:raz-tal} we know that:
\[
\Pr_{A \sim \mathcal{D}} \left[\mathcal{A}(g_x) \neq P^A(x)\right] \le 2^{-2n^2},
\]
where the probability in the above expression is also taken over the randomness of $\mathcal{A}$. By Markov's inequality, we may conclude:
\[
\Pr_{A \sim \mathcal{D}} \left[\Pr\left[\mathcal{A}(g_x) \neq P^A(x)\right] \ge 1/3\right] \le 3 \cdot 2^{-2n^2}.
\]
Hence, the $\mathsf{BQP}$ promise problem defined by $\mathcal{A}$ agrees with $P^A$ on $x$, except with probability at most $3 \cdot 2^{-2n^2}$.

We now appeal to the Borel-Cantelli Lemma to argue that, with probability $1$ over $A$, $\mathcal{A}$ correctly decides $P^A(x)$ for all but finitely many $x \in \{0,1\}^*$. Since there are exactly $2^{n^2}$ inputs $x$ of length $\{0,1\}^{n^2}$, we have:
\[
\sum_{x \in \{0,1\}^*} \Pr_{A\sim \mathcal{D}}\left[\mathcal{A}^A \text{ does not decide } P^A(x)\right] \le \sum_{n=1}^{\infty} \sum_{x \in \{0,1\}^{n^2}} 3 \cdot 2^{-2n^2} \le \sum_{n=1}^{\infty} 2^{n^2} \cdot 3 \cdot 2^{-2n^2} < \infty.
\]

Therefore, the probability that $\mathcal{A}$ fails on infinitely many inputs $x$ is $0$. Hence, $\mathcal{A}$ can be modified into a $\mathsf{BQP}$ algorithm that decides $P^A(x)$ for \textit{all} $x \in \{0,1\}^*$, with probability $1$ over $A \sim \mathcal{D}$.
\end{proof}

It remains to show that $L^A \not\in \mathsf{BQP}^{\mathsf{PH}^A}$ with probability $1$ over $A$. As we will show, this follows from \Cref{thm:bqp_ac0_single_forrelated_block_indistinguishable} in the much same way that \Cref{cor:bqp_sigma_k} follows from \Cref{thm:bqp_sigma_k_query_version}. Fix a $\mathsf{BQP}^{\mathsf{\Sigma}_k^{\mathsf{P}}}$ oracle machine $M$. By the union bound, it suffices to show that
\[
\Pr_{A \sim \mathcal{D}}\left[M^A \text{ decides } L^A\right]=0.
\]

Let $n_1<n_2<\cdots$ be an infinite sequence of input lengths, spaced far enough apart (e.g. $n_{i+1}=2^{n_i}$) such that $M\left(0^{n_i}\right)$ can query the oracle on strings of length $n_{i+1}$ or greater for at most finitely many values of $i$. Next, let
\[
p(M,i)\coloneqq\Pr_{A \sim \mathcal{D}}\left[M^A \text{ correctly decides } 0^{n_i}|M^A \text{ correctly decided }0^{n_1},\dots,0^{n_{i-1}}\right]
\]
Then we have that
\[
\Pr_{A \sim \mathcal{D}}\left[M^A\text{ decides }L^A\right]\leq\prod_{i=1}^\infty p(M,i).
\]

\noindent Thus it suffices to show that, for every fixed $M$, we have $p(M,i)\leq 0.7$ for all but finitely many $i$. \Cref{prop:fss_bqp^ph} shows that $M$'s behavior on $0^{n_i}$ can be computed by a circuit of size $2^{\poly(n_i)}$ in which the top gate has bounded-error quantum query complexity $T \le \poly(n_i)$, and all of the sub-circuits of the top gate are $\mathsf{AC^0}\left[2^{\poly(n_i)}, k+1\right]$ circuits. \Cref{thm:bqp_ac0_single_forrelated_block_indistinguishable} with $M = N = 2^{n_i^2}$ shows that such a circuit correctly evaluates the $\OR \circ \Forrelation$ function with probability greater than (say) $0.7$ for at most finitely many $i$. This even holds conditioned on $M^A$ correctly deciding $0^{n_1},\ldots,0^{n_{i-1}}$, because the size-$2^{2n_i^2}$ $\OR \circ \Forrelation$ instance is chosen independently of the smaller instances, and because $M\left(0^{n_i}\right)$ can query the oracle on strings of length $n_{i+1}$ or greater for at most finitely many values of $i$.
\end{proof}

Using techniques analogous to \Cref{thm:p=np_bqp=pp}, we obtain the following stronger oracle separation.

\begin{corollary}
\label{cor:p=np_bqp!=qcma}
There exists an oracle relative to which $\mathsf{P} = \mathsf{NP}$ but $\mathsf{BQP} \neq \mathsf{QCMA}$.
\end{corollary}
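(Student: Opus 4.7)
The plan is to combine the oracle from \Cref{cor:np^bqp_not_in_bqp^ph} with an additional oracle region that answers $\mathsf{NP}$ queries, exactly analogous to the recipe in \Cref{thm:p=np_bqp=pp} but now with the $\OR \circ \Forrelation$ language $L^A$ playing the role that $\mathsf{P^{\# P}}$ played there. Concretely, I would take $\mathcal{O} = (A, C)$, where $A$ is the oracle from \Cref{cor:np^bqp_not_in_bqp^ph} (a random unary language $L^A$ together with functions $f_n$ that are uniform when $L^A(0^n) = 0$ and have a single Forrelated row when $L^A(0^n) = 1$), and $C$ is defined inductively: for each $t \in \Naturals$, $C_t$ contains one bit per $\langle M, x \rangle \in S_t$ (as defined in \Cref{thm:p=np_bqp=pp}, but for $\mathsf{NP}$ oracle machines bounded in size by $t$) recording the value of $M^\mathcal{O}(x)$.

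The equality $\mathsf{P}^\mathcal{O} = \mathsf{NP}^\mathcal{O}$ is then immediate from a single lookup into $C$. For $L^A \in \mathsf{QCMA}^\mathcal{O}$, Merlin sends the classical index of the allegedly Forrelated row as an $n^2$-bit witness, and Arthur queries that row of $f_n$ through $A$ and runs the Raz-Tal algorithm $\mathcal{A}$ from \Cref{thm:raz-tal}, amplified to error $2^{-3n^2}$. Completeness is trivial; soundness holds with probability $1$ over $A$ by Markov's inequality, a union bound over the $2^{n^2}$ candidate witnesses, and Borel-Cantelli, essentially as in \Cref{claim:m_in_bqp_o}.

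The crux is showing $L^A \not\in \mathsf{BQP}^\mathcal{O}$ with probability $1$. Fix a $\mathsf{BQP}^\mathcal{O}$ machine $M$ and a sequence $n_1 < n_2 < \cdots$ of input lengths spaced far enough apart (e.g. $n_{i+1} = 2^{n_i}$) that $M(0^{n_i})$ cannot query $\mathcal{O}$ at lengths $\ge n_{i+1}$ for more than finitely many $i$. Applying \Cref{lem:recursive_np_ac0} to the $C$-bits queried by $M$ on input $0^{n_i}$, each such bit can be rewritten as an $\mathsf{AC}^0[\quasipoly(n_i), O(1)]$ circuit whose inputs are bits of $A$ together with $C$-bits that depend on $A$ only at lengths at most $\lfloor\sqrt{\poly(n_i)}\rfloor$, and so are independent of $f_{n_i}$. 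Conditioning on everything in $(A, C)$ outside of $f_{n_i}$, the computation $M^\mathcal{O}(0^{n_i})$ becomes a circuit with a top-level $\mathsf{BQP}$ gate of query complexity $\poly(n_i)$ whose subcircuits are $\mathsf{AC}^0[\quasipoly(N), O(1)]$ circuits on the $M \times N$ array $f_{n_i}$ with $M = N = 2^{n_i^2}$, and the conditional distribution of $f_{n_i}$ is precisely the uniform vs.\ one-Forrelated-row mixture of \Cref{thm:bqp_ac0_single_forrelated_block_indistinguishable}. That theorem then yields $\Pr[M^\mathcal{O}(0^{n_i}) = L^A(0^{n_i})] \le \tfrac{1}{2} + o(1)$ for infinitely many $i$ (even conditioned on correctness at previous $n_j$), and Borel-Cantelli together with a union bound over countably many $\mathsf{BQP}$ machines completes the proof.

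The main obstacle, as in \Cref{thm:p=np_bqp=pp}, is the bookkeeping in the reduction: tracking which portions of $(A, C)$ are being conditioned on versus averaged over, and checking that the \Cref{lem:recursive_np_ac0} expansion of $C$-queries really produces subcircuits that, after this conditioning, depend only on $f_{n_i}$ and have size $\quasipoly(N)$ with $N = 2^{n_i^2}$ so that the hypotheses of \Cref{thm:bqp_ac0_single_forrelated_block_indistinguishable} are satisfied. This is not conceptually deep, but follows the same careful structure as the corresponding step in \Cref{thm:p=np_bqp=pp}.
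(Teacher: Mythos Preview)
The proposal is correct and follows essentially the same approach as the paper's own proof sketch: both take the oracle $A$ from \Cref{cor:np^bqp_not_in_bqp^ph}, adjoin a recursively-built $\mathsf{NP}$-answering region to collapse $\mathsf{P}=\mathsf{NP}$, invoke an analogue of \Cref{lem:recursive_np_ac0} to expand those $\mathsf{NP}$-answer bits into small constant-depth circuits over $A$, and then appeal to \Cref{thm:bqp_ac0_single_forrelated_block_indistinguishable} with $M=N=2^{n^2}$. Your writeup contains a couple of minor bookkeeping slips (the expanded circuits have size $2^{\poly(n_i)}=\quasipoly(N)$ rather than $\quasipoly(n_i)$, and the residual $C$-bits after unrolling depend on $A$ at lengths $\le \lfloor\sqrt{n_i}\rfloor$, not $\lfloor\sqrt{\poly(n_i)}\rfloor$), but you already state the correct parameters in your final paragraph and these do not affect the argument.
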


\begin{proof}[Proof sketch]
This oracle $\mathcal{O}$ will consist of two parts: an oracle $A$ drawn from the same distribution as the oracle $A$ in \Cref{cor:np^bqp_not_in_bqp^ph}, and an oracle $B$ that we will construct inductively. For each $t \in \Naturals$, we add a region $B_t$ that will depend on the previously constructed parts of the oracle. For convenience, we denote by $A_t$ the region of $A$ corresponding to inputs of length $t$, and write $\mathcal{O}_t = (A_t, B_t)$.

Similarly to \Cref{thm:p=np_bqp=pp}, we define $S_t$ as the set of all $\mathsf{NP}$ machines that take less than $t$ bits to specify, run in at most $t$ steps, and query only the $\mathcal{O}_1,\ldots,\mathcal{O}_{\lfloor \sqrt{t} \rfloor}$ regions of the oracle. Then, we encode into $B_t$ the answers to all machines in $S_t$. This has the effect of making $\mathsf{P}^\mathcal{O} = \mathsf{NP}^\mathcal{O}$, as a polynomial-time algorithm can decide the behavior of any $\mathsf{NP}^\mathcal{O}$ machine $M$ by looking up the relevant bit in $B$ that encodes $M$'s behavior.

It remains to show that $\mathsf{BQP}^\mathcal{O} \neq \mathsf{QCMA}^\mathcal{O}$ with probability $1$ over $\mathcal{O}$. We achieve this by taking the language $L^A$ defined in \Cref{cor:np^bqp_not_in_bqp^ph}, which is clearly in $\mathsf{QCMA}^\mathcal{O}$, and showing that $L^A \not \in \mathsf{BQP}^\mathcal{O}$ with probability $1$ over $\mathcal{O}$.

Analogous to \Cref{lem:recursive_np_ac0}, one can show that for any $t' \le \poly(t)$, any bit of $B_{t'}$ can be computed by an $\mathsf{AC^0}\left[2^{\poly(t)}, O(1)\right]$ circuit whose inputs depend only on $A$ and $B_1,B_2,\ldots,B_t$. Hence, any $\mathsf{BQP}^\mathcal{O}$ machine that runs in time $\poly(t)$ can be computed by a circuit of size $2^{\poly(t)}$ in which the top gate has bounded-error quantum query complexity $\poly(t)$, all of the sub-circuits of the top gate are $\mathsf{AC^0}\left[2^{\poly(t)}, O(1)\right]$ circuits, and the inputs are $A$ and $B_1,B_2,\ldots,B_t$. In particular, if $t = n$, then all of the inputs are uncorrelated with $L^A(0^n)$, except for $A_{2n^2}$, the region whose $\OR \circ \Forrelation$ instance encodes $L^A(0^n)$. But in that case, we can again appeal to \Cref{thm:bqp_ac0_single_forrelated_block_indistinguishable} with $M = N = 2^{n^2}$ and $T = \poly(n)$ to argue that such a circuit correctly decides $L^A(0^n)$ with probability at most $0.7$ for infinitely many $n$.
\end{proof}

\subsection{\texorpdfstring{$\mathsf{PH}^{\mathsf{BQP}}$}{PH\^{}BQP} Lower Bounds for \texorpdfstring{$\Forrelation \circ \OR$}{FORRELATION of ORs}}
In this section, we construct an oracle relative to which $\mathsf{BQP}^\mathsf{NP} \not\subset \mathsf{PH}^\mathsf{PromiseBQP}$.

Within this section, for a string $z \in \{0,1\}^M$ and some choice of $N$, let $\mathcal{D}_{z,N}$ denote the following distribution over $\{0,1\}^{MN}$. View $x \sim \mathcal{D}_{z,N}$ as an $M \times N$ array of bits sampled as follows: if $z_i = 0$, then the $i$th row of $x$ is all $0$s, while if $z_i = 1$, then the $i$th row of $x$ has a single $1$ chosen uniformly at random and $0$s everywhere else.

Our first key lemma shows that, for a string $x \sim \mathcal{D}_{z,N}$, a quantum algorithm that queries $x$ can be efficiently simulated by a classical query algorithm, with high probability over $x$. As a warmup, we start with a version of this lemma in which the quantum algorithm makes only a single query.

\begin{lemma}
\label{lem:sparse_aaronson_ambainis_single_query}
Consider a quantum algorithm $Q$ that makes $1$ query to $x \in \{0,1\}^{MN}$ to produce a state $\ket{\psi}$. Then for any $K \in \Naturals$, there exists a deterministic classical algorithm that makes $K$ queries to $x$, and outputs a description of a state $\ket{\varphi}$ such that for any $\alpha \ge \sqrt{\frac{8}{N}}$ and any $z \in \{0,1\}^M$:
\[
\Pr_{x \sim \mathcal{D}_{z,N}} \left[|| \ket{\psi} - \ket{\varphi}|| \ge \alpha \right] \le e^{-\frac{\alpha^4 K}{32}}.
\]
\end{lemma}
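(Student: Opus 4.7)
Let $\ket{\phi_0}=\sum_{i,j,w}\alpha_{i,j,w}\ket{i,j,w}$ denote the state of $Q$ immediately before its single query (which is independent of $x$), and let $V$ be the unitary applied after the query, so that $\ket{\psi}=VU_x\ket{\phi_0}$. Define the \emph{query magnitudes} $p_{i,j}:=\sum_w|\alpha_{i,j,w}|^2$, which sum to $1$ and can be computed from the description of $Q$ alone. The deterministic classical algorithm picks the set $S\subseteq[M]\times[N]$ of $K$ positions with the largest $p_{i,j}$ values (with some fixed, e.g.\ lexicographic, tie-breaking), queries all of them to learn $x|_S$, and outputs a classical description of the state $\ket{\varphi}:=VU_{\tilde x}\ket{\phi_0}$, where $\tilde x$ agrees with $x$ on $S$ and is identically $0$ outside $S$.

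Since $V$ is unitary and since $(U_x-U_{\tilde x})$ multiplies each basis amplitude by $\pm 2$ when $x_{i,j}\neq\tilde x_{i,j}$ and by $0$ otherwise, one obtains
\[
\|\ket{\psi}-\ket{\varphi}\|^2 \;=\; \|(U_x-U_{\tilde x})\ket{\phi_0}\|^2 \;=\; 4\sum_{(i,j)\notin S,\; x_{i,j}=1} p_{i,j} \;=:\; 4Y,
\]
so it suffices to prove $\Pr_{x\sim\mathcal{D}_{z,N}}[Y\geq\alpha^2/4]\leq e^{-\alpha^4 K/32}$.

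Under $\mathcal{D}_{z,N}$, the variable $Y$ decomposes as a sum of \emph{independent} row contributions $Y=\sum_{i=1}^M X_i$, where $X_i:=z_i\cdot p_{i,J_i}\cdot\mathbf{1}[(i,J_i)\notin S]$ and each $J_i$ is uniform over $[N]$. A pigeonhole argument gives $X_i\in[0,1/K]$: the $(K+1)$-th largest query magnitude is at most $1/K$ because the top $K$ values alone already sum to at most $1$. Writing $q_i:=\sum_j p_{i,j}$, linearity of expectation gives $\mathbb{E}[Y]\leq\sum_i q_i/N\leq 1/N$, and the bound $X_i^2\leq X_i/K$ yields $\mathrm{Var}(Y)\leq \mathbb{E}[Y]/K\leq 1/(NK)$. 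The hypothesis $\alpha^2\geq 8/N$ forces $\mathbb{E}[Y]\leq \alpha^2/8$, so the deviation of interest is at least $\alpha^2/8$. Applying Bernstein's inequality to the independent bounded $X_i$'s then produces the advertised exponential tail bound once the constants are matched (equivalently one can scale $\widetilde X_i:=KX_i\in[0,1]$ and apply the Chernoff bound from \Cref{fact:chernoff} in its large-$\delta$ regime).

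\textbf{The main obstacle} is choosing the right deterministic querying strategy---namely top-$K$ by query magnitude $p_{i,j}$---and then verifying that the per-row range $1/K$ and the per-row variance bound $q_i/(NK)$ combine cleanly in Bernstein's inequality to produce the stated exponent. The sparse structure of $\mathcal{D}_{z,N}$ is crucial, since without the ``at most one $1$ per row'' guarantee, the mean $\mathbb{E}[Y]$ could be as large as $1$ instead of $1/N$, and no amount of deterministic querying would help. Beyond the BBBV-style amplitude identity (analogous to \Cref{lem:bbbv_query_magnitude}) used to reduce $\|\ket\psi-\ket\varphi\|^2$ to $4Y$, no additional quantum machinery is needed.
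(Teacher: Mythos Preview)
Your approach is essentially the paper's: query the positions of largest query magnitude, simulate assuming the unqueried bits are $0$, and control $Y=\sum_i X_i$ via a concentration inequality on the independent row contributions. The one substantive difference is in the concentration step. You invoke Bernstein with the uniform range bound $X_i\le 1/K$ and the variance bound $\mathrm{Var}(Y)\le \mathbb E[Y]/K\le 1/(NK)$; this yields an exponent of order $K\alpha^2$ rather than the stated $K\alpha^4/32$. That is actually \emph{stronger} in the regime $\alpha\lesssim 1$ (the only regime used downstream), but it does not recover the claimed bound uniformly over $\alpha\in[\sqrt{8/N},2]$: for $\alpha$ close to $2$ your exponent is $\Theta(K)$ with a small constant, whereas the lemma asks for $K/2$. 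Your parenthetical appeal to \Cref{fact:chernoff} on $KX_i$ does not close the gap either, since that fact is stated for i.i.d.\ Bernoulli variables and in any case would again produce an $\alpha^2$-type exponent.

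The paper's proof sidesteps this by using Hoeffding with \emph{non-uniform} per-row ranges $m_i:=\min\{1/K,\max_j p_{i,j}\}$. Since $m_i\le 1/K$ and $\sum_i m_i\le\sum_{i,j}p_{i,j}=1$, one gets $\sum_i m_i^2\le 1/K$, and Hoeffding then yields exactly $\exp(-2K t^2)$ with $t\ge\alpha^2/8$, i.e.\ $\exp(-K\alpha^4/32)$. Your top-$K$ querying strategy supports the same $m_i$ bound verbatim (any unqueried position has $p_{i,j}\le 1/K$ by your pigeonhole step), so swapping Bernstein for this Hoeffding-with-$m_i$ argument immediately gives the lemma as stated.
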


\begin{proof}
Call $\ket{\psi_0}$ the initial state of $Q$, and suppose that $Q$ queries the phase oracle $U_x$ corresponding to $x$, then applies a unitary $W$, so that the output state of the algorithm is
\[
\ket{\psi} = W U_x \ket{\psi_0}.
\]
Without loss of generality, we may assume that $W$ is the identity, because
\[
|| \ket{\psi} - \ket{\varphi}|| = || W\ket{\psi} - W\ket{\varphi}||.
\]

Analogous to \Cref{lem:bbbv_query_magnitude}, let $q_{i,j}$ be the query magnitude (i.e. probability) with which $Q$ queries $x_{i,j}$ during its single query. That is, if the initial state $\ket{\psi_0}$ of $Q$ has the form:
\[
\ket{\psi_0} = \sum_{i=1}^M \sum_{j=1}^N \alpha_{i,j,w} \ket{i,j,w},
\]
where $w$ are indices over a workspace register, then
\[
q_{i,j} \coloneqq \sum_w |\alpha_{i,j,w}|^2,
\]
so that $\sum_{i=1}^M \sum_{j=1}^N q_{i,j} = 1$.

The classical algorithm is simply the following: query every $x_{i,j}$ such that $q_{i,j} \ge \frac{1}{K}$. Clearly there are at most $K$ such bits, so the algorithm makes at most $K$ queries. Then, calculate $Q$'s post-query state, assuming that all of the unqueried bits are $0$. Call this state $\ket{\varphi}$.

We now argue that the classical algorithm achieves the desired approximation to $\ket{\psi}$ with the correct probability. Fix some $z \in \{0,1\}^M$. For each row $i$ with $z_i = 1$, let $j(i, x)$ be the unique column $j$ such that $x_{i,j} = 1$. Now define a random variable $w(i,x)$ that measures the contribution of row $i$ to the error of our classical simulation:
\begin{equation}
\label{eq:wij_def}
w(i,x) \coloneqq \begin{cases}
q_{i,j(i,x)} & z_i = 1 \text{ and } q_{i,j(i,x)} < \frac{1}{K},\\
0 & \text{otherwise}.
\end{cases}
\end{equation}

Note that the $w(i,x)$'s are independent random variables, and also satisfy
\begin{align*}
\E_x \left[\sum_{i=1}^M w(i,x) \right] &\le \E_x \left[\sum_{i: z_i = 1} q_{i,j(i,x)} \right]\\
&= \sum_{i=1}^M \sum_{j=1}^N \Pr[x_{i,j} = 1] q_{i,j}\\
&\le \sum_{i=1}^M \sum_{j=1}^N \frac{q_{i,j}}{N} \\
&\le \frac{1}{N}
\end{align*}
by \eqref{eq:wij_def} and linearity of expectation. We also have $w(i, x) \le \frac{1}{K}$ for all $i$, but we will actually need a stronger upper bound: namely $w(i, x) \le m_i$, where
\[
m_i \coloneqq \min\left\{\frac{1}{K},\ \max_j q_{i,j}\right\}.
\]
Note that $m_i \le \frac{1}{K}$ for all $i$, and also that
\[
\sum_{i=1}^M m_i \le \sum_{i=1}^M \sum_{j=1}^N q_{i,j} = 1,
\]
which together imply that
\begin{equation}
\label{eq:sum_mi_squares}
\sum_{i=1}^M m_i^2 \le \sum_{i=1}^M m_i \cdot \frac{1}{K} \le \frac{1}{K}.
\end{equation}

Recall that we wish to bound the distance between $\ket{\psi}$ and $\ket{\varphi}$. \eqref{eq:wij_def} implies that
\begin{equation}
|| \ket{\psi} - \ket{\varphi}|| = 2\sqrt{\sum_{i=1}^M w(i,x)},
\label{eq:euclidean_dist}
\end{equation}
and therefore
\[
\Pr_{x \sim \mathcal{D}_{z,N}} \left[|| \ket{\psi} - \ket{\varphi}|| \ge \alpha \right]
= \Pr_{x \sim \mathcal{D}_{z,N}} \left[\sum_{i=1}^M w(i,x) \ge \frac{\alpha^2}{4} \right].
\]
We finally appeal to Hoeffding's inequality (\Cref{fact:hoeffding}) to bound this quantity. Set $\mu \coloneqq \frac{1}{N}$ and $\delta \coloneqq \frac{\alpha^2 N}{4} - 1$. Then
\begin{align*}
\Pr_{x \sim \mathcal{D}_{z,N}} \left[|| \ket{\psi} - \ket{\varphi}|| \ge \alpha \right]
&= \Pr_{x \sim \mathcal{D}_{z,N}} \left[\sum_{i=1}^M w(i,x) \ge (1 + \delta)\mu \right]\\
&\le \exp\left(-\frac{2\delta^2 \mu^2}{\sum_{i=1}^M m_i^2} \right)\\
&\le \exp\left(-\frac{2\left(\frac{\alpha^2N}{4} - 1\right)^2 \left(\frac{1}{N}\right)^2}{\frac{1}{K}} \right)\\
&\le \exp\left(-\frac{2\left(\frac{\alpha^2N}{8}\right)^2 \left(\frac{1}{N}\right)^2}{\frac{1}{K}} \right)\\
&= e^{-\frac{\alpha^4 K}{32}},
\end{align*}
where the first line applies \eqref{eq:euclidean_dist}; the second line applies \Cref{fact:hoeffding}; the third line substitutes \eqref{eq:sum_mi_squares}; and the fourth line uses the assumption that $\alpha \ge \sqrt{\frac{8}{N}}$.
\end{proof}

Ultimately, we will want to apply \Cref{lem:sparse_aaronson_ambainis_single_query} many times in succession to simulate the output of quantum algorithms that make multiple queries to a string $x \sim \mathcal{D}_{z, N}$. To do so, we need the following strengthening of \Cref{lem:sparse_aaronson_ambainis_single_query} whose proof is nearly the same. It shows that a similar statement to \Cref{lem:sparse_aaronson_ambainis_single_query} holds if we condition on knowing the values of $x$ at a few locations. Below, the set $S$ captures the previously queried locations, and $f$ records their values.\footnote{An earlier version of this manuscript did not contain this generalization of \Cref{lem:sparse_aaronson_ambainis_single_query}, and consequently some proofs of the following lemmas were erroneous. We thank Chinmay Nirkhe for pointing this out to us.}

\begin{lemma}
\label{lem:sparse_aaronson_ambainis_single_query_2}
Consider a quantum algorithm $Q$ that makes $1$ query to $x \in \{0,1\}^{MN}$ to produce a state $\ket{\psi}$. Let $S \subseteq [M] \times [N]$ be such that for all $i \in [M]$, either $|\{j: (i, j) \in S\}| \le \frac{N}{2}$ or $|\{j: (i, j) \in S\}| = N$. Let $f: S \to \{0,1\}$. Then for any $K \in \Naturals$, there exists a deterministic classical algorithm that, given $f$, makes $K$ queries to $x$, and outputs a description of a state $\ket{\varphi}$ such that for any $\alpha \ge \sqrt{\frac{16}{N}}$ and any $z \in \{0,1\}^M$:
\[
\Pr_{x \sim \mathcal{D}_{z,N}} \left[|| \ket{\psi} - \ket{\varphi}|| \ge \alpha \mid x_{i,j} = f(i, j)\ \forall (i, j) \in S\right] \le e^{-\frac{\alpha^4 K}{32}}.
\]
\end{lemma}

\begin{proof}
For notational simplicity, instead of fully writing out $x_{i,j} = f(i, j)\ \forall (i, j) \in S$ in conditional probabilities, we abbreviate the condition by $f$. For example, in this notation, the statement we want to prove is that
\[
\Pr_{x \sim \mathcal{D}_{z,N}} \left[|| \ket{\psi} - \ket{\varphi}|| \ge \alpha \mid f\right] \le e^{-\frac{\alpha^4 K}{32}}.
\]

Call $\ket{\psi_0}$ the initial state of $Q$, and suppose that $Q$ queries the phase oracle $U_x$ corresponding to $x$, then applies a unitary $W$, so that the output state of the algorithm is
\[
\ket{\psi} = W U_x \ket{\psi_0}.
\]
Without loss of generality, we may assume that $W$ is the identity, because
\[
|| \ket{\psi} - \ket{\varphi}|| = || W\ket{\psi} - W\ket{\varphi}||.
\]

Analogous to \Cref{lem:bbbv_query_magnitude}, let $q_{i,j}$ be the query magnitude (i.e. probability) with which $Q$ queries $x_{i,j}$ during its single query. That is, if the initial state $\ket{\psi_0}$ of $Q$ has the form:
\[
\ket{\psi_0} = \sum_{i=1}^M \sum_{j=1}^N \alpha_{i,j,w} \ket{i,j,w},
\]
where $w$ are indices over a workspace register, then
\[
q_{i,j} \coloneqq \sum_w |\alpha_{i,j,w}|^2,
\]
so that $\sum_{i=1}^M \sum_{j=1}^N q_{i,j} = 1$.

The classical algorithm is simply the following: query every $x_{i,j}$ such that $q_{i,j} \ge \frac{1}{K}$. Clearly there are at most $K$ such bits, so the algorithm makes at most $K$ queries. Then, calculate $Q$'s post-query state, assuming that all of the unqueried bits are $0$, except for any $(i, j) \in S$ for which $f(i, j) = 1$. Call this state $\ket{\varphi}$.

We now argue that the classical algorithm achieves the desired approximation to $\ket{\psi}$ with the correct conditional probability. Fix some $z \in \{0,1\}^M$. For each row $i$ with $z_i = 1$, let $j(i, x)$ be the unique column $j$ such that $x_{i,j} = 1$. Now define a random variable $w(i,x)$ that measures the contribution of row $i$ to the error of our classical simulation:
\begin{equation}
\label{eq:wij_def_2}
w(i,x) \coloneqq \begin{cases}
q_{i,j(i,x)} & z_i = 1 \text{ and } q_{i,j(i,x)} < \frac{1}{K} \text{ and } (i, j(i, x)) \not \in S,\\
0 & \text{otherwise}.
\end{cases}
\end{equation}

Note that the $w(i,x)$'s are independent random variables (even conditioned on $f$), and also satisfy
\begin{align*}
\E_x \left[\sum_{i=1}^M w(i,x) \mid f\right] &\le \E_x \left[\sum_{i: z_i = 1 \land (i, j(i, x)) \not\in S} q_{i,j(i,x)} \mid f \right]\\
&\le \sum_{i=1}^M \sum_{j : (i, j) \not\in S} \Pr[x_{i,j} = 1 \mid f]q_{i,j}\\
&\le \sum_{i=1}^M \sum_{j=1}^N \frac{2q_{i,j}}{N}\\
&\le \frac{2}{N}
\end{align*}
by \eqref{eq:wij_def_2} and linearity of expectation, using in the third line the assumption that either $|\{j: (i, j) \in S\}| \le \frac{N}{2}$ or $|\{j: (i, j) \in S\}| = N$. We also have $w(i, x) \le \frac{1}{K}$ for all $i$, but we will actually need a stronger upper bound: namely $w(i, x) \le m_i$, where
\[
m_i \coloneqq \min\left\{\frac{1}{K},\ \max_j q_{i,j}\right\}.
\]
Note that $m_i \le \frac{1}{K}$ for all $i$, and also that
\[
\sum_{i=1}^M m_i \le \sum_{i=1}^M \sum_{j=1}^N q_{i,j} = 1,
\]
which together imply that
\begin{equation}
\label{eq:sum_mi_squares_2}
\sum_{i=1}^M m_i^2 \le \sum_{i=1}^M m_i \cdot \frac{1}{K} \le \frac{1}{K}.
\end{equation}

Recall that we wish to bound the distance between $\ket{\psi}$ and $\ket{\varphi}$. \eqref{eq:wij_def_2} implies that
\begin{equation}
|| \ket{\psi} - \ket{\varphi}|| = 2\sqrt{\sum_{i=1}^M w(i,x)},
\label{eq:euclidean_dist_2}
\end{equation}
and therefore
\[
\Pr_{x \sim \mathcal{D}_{z,N}} \left[|| \ket{\psi} - \ket{\varphi}|| \ge \alpha \mid f\right]
= \Pr_{x \sim \mathcal{D}_{z,N}} \left[\sum_{i=1}^M w(i,x) \ge \frac{\alpha^2}{4} \mid f \right].
\]
We finally appeal to Hoeffding's inequality (\Cref{fact:hoeffding}) to bound this quantity. Set $\mu \coloneqq \frac{1}{N}$ and $\delta \coloneqq \frac{\alpha^2 N}{8} - 1$. Then
\begin{align*}
\Pr_{x \sim \mathcal{D}_{z,N}} \left[|| \ket{\psi} - \ket{\varphi}|| \ge \alpha \mid f\right]
&= \Pr_{x \sim \mathcal{D}_{z,N}} \left[\sum_{i=1}^M w(i,x) \ge (1 + \delta)\mu \mid f \right]\\
&\le \exp\left(-\frac{2\delta^2 \mu^2}{\sum_{i=1}^M m_i^2} \right)\\
&\le \exp\left(-\frac{2\left(\frac{\alpha^2N}{8} - 1\right)^2 \left(\frac{2}{N}\right)^2}{\frac{1}{K}} \right)\\
&\le \exp\left(-\frac{2\left(\frac{\alpha^2N}{16}\right)^2 \left(\frac{2}{N}\right)^2}{\frac{1}{K}} \right)\\
&= e^{-\frac{\alpha^4 K}{32}},
\end{align*}
where the first line applies \eqref{eq:euclidean_dist_2}; the second line applies \Cref{fact:hoeffding} (using that the $w(i,x)$'s are conditionally independent given $f$); the third line substitutes \eqref{eq:sum_mi_squares_2}; and the fourth line uses the assumption that $\alpha \ge \sqrt{\frac{16}{N}}$.
\end{proof}

Next, by repeated application of \Cref{lem:sparse_aaronson_ambainis_single_query_2}, we generalize \Cref{lem:sparse_aaronson_ambainis_single_query} to quantum algorithms that make multiple queries.

\begin{lemma}
\label{lem:sparse_aaronson_ambainis_states}
Consider a quantum algorithm $Q$ that makes $T$ queries to $x \in \{0,1\}^{MN}$ to produce a state $\ket{\psi_T}$. Then for any $K \in \Naturals$, there exists a classical algorithm that makes at most $2KT$ queries to $x$, and outputs a description of a state $\ket{\varphi_T}$ such that for any $\alpha \ge \sqrt{\frac{16}{N}}$ and any $z \in \{0,1\}^M$:
\[
\Pr_{x \sim \mathcal{D}_{z,N}} \left[|| \ket{\psi_T} - \ket{\varphi_T}|| \ge \alpha T \right] \le T \cdot e^{-\frac{\alpha^4 K}{32}}.
\]
\end{lemma}

\begin{proof}
Call $\ket{\psi_0}$ the initial state of $Q$, and suppose that $Q$ applies a sequence of unitaries $W_1, \ldots, W_t$ interspersed by queries to the phase oracle $U_x$ corresponding to $x$. For $t \le T$, denote by
\[
\ket{\psi_{t, x}} \coloneqq W_t U_x W_{t-1} U_x \cdots W_1 U_x \ket{\psi_0}
\]
the state of $Q$ after $t$ queries, with the convention that $\ket{\psi_{0,x}} = \ket{\psi_0}$ (even though $\ket{\psi_0}$ is, of course, independent of $x$).

A detailed description of the classical algorithm is given in \Cref{alg:sparse_aa}. Intuitively speaking, the simulation algorithm simply applies the algorithm from \Cref{lem:sparse_aaronson_ambainis_single_query_2} $T$ times consecutively, recording into $f$ the queries that have been made so far. Additionally, any time the algorithm encounters a row of $x$ of which more than $N/2$ queries have already been made, it queries the remainder of the row. Thus, it is clear that the query complexity of \Cref{alg:sparse_aa} is at most $2KT$.

\begin{algorithm}
\caption{Sparse oracle classical simulation}\label{alg:sparse_aa}
\KwInput{Oracle access to $x \in \{0,1\}^{MN}$, $T$-query quantum query algorithm $Q$, $K \in \Naturals$}
\KwOutput{Approximation to the output state of $Q^x$}

$f, S \leftarrow \emptyset$

$\ket{\varphi_{0,x}} \leftarrow \ket{\psi_0}$

\For{$t \leftarrow 1$ \KwTo $T$}{
    Run the algorithm from \Cref{lem:sparse_aaronson_ambainis_single_query_2} corresponding to $\ket{\psi} = W_t U_x \ket{\varphi_{t-1,x}}$ given $S$ and $f$.

    Let $\ket{\varphi_{t,x}}$ be the output, and let $S'$ be the set of locations queried by the algorithm.

    \For(\tcc*[f]{Record the queried locations}){$(i, j) \leftarrow S'$}{
        $S \leftarrow S \cup \{(i, j)\}$
        
        $f(i, j) \leftarrow x_{i,j}$
    }
    \For(\tcc*[f]{Query rows with at least $N/2$ queries already}){$i \leftarrow 1$ \KwTo $M$}{
        \If{$|\{j : (i, j) \in S\}| \ge N / 2$}{
            \For{$j \leftarrow 1$ \KwTo $N$}{
                $S \leftarrow S \cup \{(i, j)\}$
                
                $f(i, j) \leftarrow x_{i,j}$
            }
        }
    }
}

\Return $\ket{\varphi_{T,x}}$
\end{algorithm}

It remains to show that $\ket{\psi_{T,x}}$ and $\ket{\varphi_{T,x}}$ are close with high probability. For $t \le T$, define $\ket{\gamma_{t,x}}$ as the state obtained by applying the classical algorithm for the first $t$ steps and the quantum algorithm for the remaining $T - t$ steps, i.e.
\[
\ket{\gamma_{t,x}} = W_TU_xW_{T-1}U_x\cdots W_{t+1}U_x\ket{\varphi_{t,x}}.
\]
Note that $\ket{\gamma_{0,x}} = \ket{\psi_{T,x}}$ and $\ket{\gamma_{T,x}} = \ket{\varphi_{T,x}}$. From this, we may bound:
\begin{align}
||\ket{\psi_{T,x}} - \ket{\varphi_{T,x}}|| &= ||\ket{\gamma_{0,x}} - \ket{\gamma_{T,x}}||\nonumber\\
&\le \sum_{t=1}^T ||\ket{\gamma_{t-1,x}} - \ket{\gamma_{t,x}}||\nonumber\\
&= \sum_{t=1}^T ||W_tU_x\ket{\varphi_{t-1,x}} - \ket{\varphi_{t,x}}||\label{eq:sparse_aa_hybrid},
\end{align}
where the second line holds by the triangle inequality, and the last line holds because the $W_t$'s and $U_x$ are unitary transformations. \Cref{lem:sparse_aaronson_ambainis_single_query_2} implies that all of the terms in this sum are bounded with high probability. In particular, denoting by $f^t$ and $S^t$ the values of $f$ and $S$ immediately before iteration $t$ of the outer loop,
we conclude:
\begin{align*}
\Pr_{x \sim \mathcal{D}_{z,N}}\left[||\ket{\psi_{T,x}} - \ket{\varphi_{T,x}}|| \ge \alpha T\right] &\le \Pr_{x \sim \mathcal{D}_{z,N}}\left[\sum_{t=1}^T ||W_tU_x\ket{\varphi_{t-1,x}} - \ket{\varphi_{t,x}}|| \ge \alpha T\right]\\
&\le \sum_{t=1}^T \Pr_{x \sim \mathcal{D}_{z,N}}\left[||W_tU_x\ket{\varphi_{t-1,x}} - \ket{\varphi_{t,x}}|| \ge \alpha\right]\\
&= \sum_{t=1}^T \sum_{f,S}\Pr_{x \sim \mathcal{D}_{z,N}}\left[f^t = f, S^t = S\right] \cdot \\
&\qquad\Pr_{x \sim \mathcal{D}_{z,N}}\left[||W_tU_x\ket{\varphi_{t-1,x}} - \ket{\varphi_{t,x}}|| \ge \alpha \mid f^t = f, S^t = S\right]\\
&= \sum_{t=1}^T \sum_{f,S}\Pr_{x \sim \mathcal{D}_{z,N}}\left[f^t = f, S^t = S\right] \cdot \\
&\qquad\Pr_{x \sim \mathcal{D}_{z,N}}\left[||W_tU_x\ket{\varphi_{t-1,x}} - \ket{\varphi_{t,x}}|| \ge \alpha \mid x_{i,j} = f(i,j)\ \forall (i, j) \in S\right]\\
&\le T \cdot e^{-\frac{\alpha^4 k}{32}},
\end{align*}
where the first line applies \eqref{eq:sparse_aa_hybrid}, the second line holds by a union bound, the third line is true because $f^t = f$ and $S^t = S$ if and only if $x_{i,j} = f(i,j)$ for all $(i, j) \in S$ (for all $f$, $S$ such that $\Pr_{x \sim \mathcal{D}_{z,N}}\left[f^t = f, S^t = S\right] \neq 0$), and the last line holds by \Cref{lem:sparse_aaronson_ambainis_single_query_2}. This last step crucially uses the observation that $\ket{\varphi_{t-1,x}}$ is uniquely determined by $f^t$ and $S^t$, and so it is conditionally independent of $x$ given the queries recorded in $f$ and $S$ from the previous steps of the algorithm.
\end{proof}

The next theorem is essentially just a restatement of \Cref{lem:sparse_aaronson_ambainis_states} with cleaner parameters. It can be understood as a version of the Aaronson-Ambainis conjecture \cite[Conjecture 1.5]{AA14} for sparse oracles.

\begin{theorem}
\label{thm:sparse_aa_parameterized}
Consider a quantum algorithm $Q$ that makes $T$ queries to $x \in \{0,1\}^{MN}$. Then for any $\eps \ge 4T\sqrt{\frac{16}{N}}$ and $\delta > 0$, there exists a classical algorithm that makes $O\left(\frac{T^5}{\eps^4}\log \frac{T}{\delta}\right)$ queries to $x$, and that outputs an estimate $p$ such that for any $z \in \{0,1\}^M$:
\[
\Pr_{x \sim \mathcal{D}_{z,N}} \left[|\Pr[Q(x) = 1] - p| \ge \eps \right] \le \delta.
\]
\end{theorem}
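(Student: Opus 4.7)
The theorem is essentially a repackaging of Lemma 4.15 (\Cref{lem:sparse_aaronson_ambainis_states}) with parameters tuned so that the state-distance bound translates into an acceptance-probability bound. The plan is as follows.

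First, I would run the classical simulator of \Cref{lem:sparse_aaronson_ambainis_states} to produce a description of an approximating state $\ket{\varphi_T}$, using $KT$ queries for a value of $K$ chosen below. Since the classical algorithm has the \emph{full} description of $\ket{\varphi_T}$, it can then compute, with no further oracle queries, the quantity $p \coloneqq \Pr[\text{measuring } \ket{\varphi_T} \text{ accepts}]$, and output this as its estimate. The algorithm is deterministic because the algorithm of \Cref{lem:sparse_aaronson_ambainis_states} is.

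Next I would convert the state-distance guarantee into an acceptance-probability guarantee. A standard calculation (precisely Lemma 3.6 of \cite{BV97}, which the paper has already invoked in the proof of \Cref{lem:bbbv_query_magnitude}) gives
\[
\bigl|\Pr[Q(x) = 1] - p\bigr| \;\le\; 4\,\bigl\|\ket{\psi_T} - \ket{\varphi_T}\bigr\|.
\]
So it suffices to arrange that $\ket{\psi_T}$ and $\ket{\varphi_T}$ are within Euclidean distance $\eps/4$ with probability at least $1-\delta$ over $x \sim \mathcal{D}_{z,N}$.

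Now I would set the parameters. Take $\alpha \coloneqq \eps/(4T)$, so that $\alpha T = \eps/4$; the hypothesis $\eps \ge 4T\sqrt{8/N}$ is exactly what is needed to satisfy the precondition $\alpha \ge \sqrt{8/N}$ of \Cref{lem:sparse_aaronson_ambainis_states}. Then choose $K$ so that
\[
T \cdot e^{-\alpha^4 K/32} \;\le\; \delta,
\]
i.e.\ $K = \left\lceil \frac{32}{\alpha^4}\ln\!\frac{T}{\delta}\right\rceil = O\!\left(\frac{T^4}{\eps^4}\log\frac{T}{\delta}\right)$. Plugging these into \Cref{lem:sparse_aaronson_ambainis_states} gives $\Pr_{x\sim \mathcal{D}_{z,N}}[\|\ket{\psi_T}-\ket{\varphi_T}\| \ge \eps/4] \le \delta$, which combined with the displayed inequality above yields the theorem. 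The total query count is $KT = O(T^5/\eps^4 \cdot \log(T/\delta))$, matching the claim.

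There is no real obstacle here: all the work is done by \Cref{lem:sparse_aaronson_ambainis_states}, and the only subtlety is making sure the constants in the condition $\alpha \ge \sqrt{8/N}$ line up with the stated hypothesis $\eps \ge 4T\sqrt{8/N}$, which is why the factor of $4$ (matching the constant in the BV97 lemma relating state distance to acceptance-probability distance) appears in the hypothesis.
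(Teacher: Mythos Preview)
Your proposal is correct and matches the paper's proof essentially step for step: same invocation of \Cref{lem:sparse_aaronson_ambainis_states}, same use of \cite[Lemma 3.6]{BV97} to pass from state distance to acceptance-probability distance, same choice $\alpha = \eps/(4T)$, and the same solving for $K$ to make $T\cdot e^{-\alpha^4 K/32}\le \delta$.
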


\begin{proof}
Let $Q$ be the quantum algorithm corresponding to $f$, and let $\ket{\psi}$ be the output state of $Q$ on input $x$ immediately before measurement. For some $K$ to be chosen later, consider the classical algorithm corresponding to $Q$ from \Cref{lem:sparse_aaronson_ambainis_states} that makes at most $2KT$ queries and produces a classical description of a quantum state $\ket{\varphi}$ on input $x$. Let $p$ be the probability that the first bit of $\ket{\varphi}$ is measured to be $1$ in the computational basis.

\cite[Lemma 3.6]{BV97} tells us that on any input $x$:
\[
|\Pr[Q(x) = 1] - p| \le 4|| \ket{\psi} - \ket{\varphi}||.
\]

Choose $\alpha = \frac{\eps}{4T}$, which, by the assumption of the theorem, must also satisfy $\alpha \ge \sqrt{\frac{16}{N}}$. By appealing to \Cref{lem:sparse_aaronson_ambainis_states}, we conclude that
\begin{align*}
\Pr_{x \sim \mathcal{D}_{z,N}}\left[|\Pr[Q(x) = 1] - p| \ge \eps\right] &\le \Pr_{x \sim \mathcal{D}_{z,N}}\left[|| \ket{\psi} - \ket{\varphi}|| \ge \frac{\eps}{4} \right]\\
&= \Pr_{x \sim \mathcal{D}_{z,N}}\left[|| \ket{\psi} - \ket{\varphi}|| \ge \alpha T \right]\\
&\le T \cdot e^{-\frac{\alpha^4 K}{32}}.
\end{align*}
Thus, we just need to choose $K$ such that $T \cdot e^{-\frac{\alpha^4 K}{32}} \le \delta$, or equivalently:
\[
\frac{\alpha^4 K}{32} \ge \log T + \log \frac{1}{\delta}.
\]
Choosing $K = O\left(\frac{T^4}{\eps^4} \log \frac{T}{\delta}\right)$ completes the theorem, as the classical algorithm makes at most $2KT$ queries.
\end{proof}

As a straightforward corollary, we obtain the following functional version of \Cref{thm:sparse_aa_parameterized}.

\begin{corollary}
\label{cor:sparse_aa_functional}
Let $f: \{0,1\}^{MN} \to \{0,1,\bot\}$ be a function with $\Q(f) \le T$ for some $T \le \sqrt{\frac{N}{9216}}$. Then for any $\delta > 0$, there exists a function $g: \{0,1\}^{MN} \to \{0,1\}$ with $\D(g) \le O\left(T^5 \log \frac{T}{\delta}\right)$ such that for any $z \in \{0,1\}^M$:
\[
\Pr_{x \sim \mathcal{D}_{z,N}}\left[f(x) \in \{0,1\} \text{\normalfont{ and }} f(x) \neq g(x)\right] \le \delta.
\]
\end{corollary}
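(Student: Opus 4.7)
The plan is to reduce the functional statement directly to \Cref{thm:sparse_aa_parameterized} by choosing the error parameter $\eps$ to be a sufficiently small constant, and then thresholding the probability estimate produced by the classical simulator.

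Concretely, I would take a bounded-error quantum query algorithm $Q$ computing $f$ with $T$ queries (so $\Pr[Q(x)=f(x)] \ge 2/3$ whenever $f(x) \in \{0,1\}$), and apply \Cref{thm:sparse_aa_parameterized} with $\eps \coloneqq 1/6$ and failure parameter $\delta$. The first thing to check is that the hypothesis $\eps \ge 4T\sqrt{8/N}$ holds; this is exactly where the assumption $T \le \sqrt{N/4608}$ is used, since $4 T \sqrt{8/N} \le 1/6$ is equivalent to $4608 T^2 \le N$. With this choice, \Cref{thm:sparse_aa_parameterized} supplies a deterministic classical algorithm making $O(T^5 \log(T/\delta))$ queries that, on input $x$, outputs a value $p(x)$ with
\[
\Pr_{x \sim \mathcal{D}_{z,N}}\bigl[\,|\Pr[Q(x)=1] - p(x)| \ge 1/6\,\bigr] \le \delta
\]
for every $z \in \{0,1\}^M$.

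Now define $g: \{0,1\}^{MN} \to \{0,1\}$ by $g(x) \coloneqq 1$ if $p(x) \ge 1/2$ and $g(x) \coloneqq 0$ otherwise. Clearly $\D(g) \le O(T^5 \log(T/\delta))$, since $g$ is computed by running the classical simulator and thresholding. To verify correctness, suppose $f(x) \in \{0,1\}$ and $|\Pr[Q(x)=1] - p(x)| < 1/6$. If $f(x)=1$ then $\Pr[Q(x)=1] \ge 2/3$, so $p(x) > 1/2$ and $g(x) = 1 = f(x)$; similarly if $f(x)=0$ then $\Pr[Q(x)=1] \le 1/3$, so $p(x) < 1/2$ and $g(x) = 0 = f(x)$. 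Contrapositively, the event $f(x) \in \{0,1\} \text{ and } f(x) \neq g(x)$ is contained in the event $|\Pr[Q(x)=1] - p(x)| \ge 1/6$, which has probability at most $\delta$ under $\mathcal{D}_{z,N}$.

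There is essentially no obstacle beyond bookkeeping: the only substantive point is matching the constant $4608$ in the hypothesis on $T$ with the thresholding gap of $1/6$ needed to convert a real-valued estimate into a Boolean decision with bounded error. If one wanted a slightly cleaner statement, one could absorb the constant into the $O(\cdot)$ in the query complexity and replace $1/6$ by any fixed constant less than $1/2 - 1/3$, at the cost of a different numerical constant in the hypothesis $T \le c\sqrt{N}$.
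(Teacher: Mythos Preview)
Your proposal is correct and essentially identical to the paper's proof: both apply \Cref{thm:sparse_aa_parameterized} with $\eps = 1/6$, verify that the hypothesis $T \le \sqrt{N/4608}$ is exactly what is needed for $4T\sqrt{8/N} \le 1/6$, and then define $g$ by thresholding the classical estimate at $1/2$. The only cosmetic difference is that the paper states the contrapositive step more tersely.
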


\begin{proof}
Let $Q$ be the quantum query algorithm corresponding to $f$. Choose $\eps = \frac{1}{6}$, and consider running the classical algorithm from \Cref{thm:sparse_aa_parameterized} that produces an estimate $p$ of $Q$'s acceptance probability. (The condition of \Cref{thm:sparse_aa_parameterized} is satisfied because $4T\sqrt{\frac{16}{N}} \le \frac{1}{6}$).

Define $g$ by:
\[
g(x) = \begin{cases}
1 & p \ge \frac{1}{2},\\
0 & p < \frac{1}{2}.
\end{cases}
\]
We want to show that $g$ usually agrees with $f$ on inputs drawn from $\mathcal{D}_{z,N}$. Because $Q$ computes $f$ with error at most $\frac{1}{3}$, we have:
\begin{align*}
\Pr_{x \sim \mathcal{D}_{z,N}}\left[f(x) \in \{0,1\} \text{ and } f(x) \neq g(x)\right] &\le \Pr_{x \sim \mathcal{D}_{z,N}}\left[|\Pr[Q(x) = 1] - p| \ge \frac{1}{6} \right]\\
&\le \delta,
\end{align*}
by \Cref{thm:sparse_aa_parameterized}. Additionally, $\D(g) \le O\left(T^5 \log \frac{T}{\delta}\right)$ because $g$ depends only on $p$.
\end{proof}

The next theorem essentially shows that no $\mathsf{PH}^\mathsf{PromiseBQP}$ oracle machine can solve the $\Forrelation \circ \OR$ problem (i.e. given an input divided into rows, decide if the $\OR$s of the rows are Forrelated or uniformly random).

\begin{theorem}
\label{thm:ph^bqp_forrelation_of_ors}
Let $M, N$ satisfy $\quasipoly(M) = \quasipoly(N)$ (i.e. $M \le \quasipoly(N)$ and $N \le \quasipoly(M)$). Let $f: \{0,1\}^{MN} \to \{0,1,\bot\}$ be computable by a depth-$2$ circuit of size $\quasipoly(N)$ in which the top gate is a function in $\mathsf{AC^0}[\quasipoly(N), O(1)]$, and all of the bottom gates are functions with bounded-error quantum query complexity at most $\polylog(N)$.

Let $b \sim \{0,1\}$ be a uniformly random bit. Suppose $z \in \{0,1\}^{M}$ is sampled such that:
\begin{itemize}
\item If $b = 0$, then $z$ is uniformly random.
\item If $b = 1$, then $z$ is drawn from the Forrelation distribution $\mathcal{F}_M$.
\end{itemize}
Then:
\[
\Pr_{b,z,x \sim \mathcal{D}_{z,N}} \left[f(x) = b \right] \le \frac{1}{2} + \frac{\polylog(N)}{\sqrt{M}}.
\]
\end{theorem}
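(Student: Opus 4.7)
The plan is to collapse $f$ to an $\mathsf{AC^0}$ circuit on the $M$-bit variable $z$ and then apply the Raz-Tal Theorem. The key enabler is \Cref{cor:sparse_aa_functional}: since each bottom quantum gate $h_i$ has quantum query complexity $T = \polylog(N)$, and $\mathcal{D}_{z,N}$ is exactly the sparse-row distribution covered by that corollary, I can replace $h_i$ by a classical decision-tree function $g_i$ with small query complexity, uniformly in $z$.

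Concretely, I would first pick an error parameter $\delta = 1/\quasipoly(N)$ small enough that $\quasipoly(N)\cdot\delta \le \polylog(N)/\sqrt{M}$, and apply \Cref{cor:sparse_aa_functional} to every bottom gate to obtain $g_i$ with $\D(g_i) \le O(T^5\log(T/\delta)) = \polylog(N)$ satisfying $\Pr_{x\sim\mathcal{D}_{z,N}}[h_i(x)\in\{0,1\}\text{ and }h_i(x)\neq g_i(x)] \le \delta$ for every $z\in\{0,1\}^M$. Writing each $g_i$ as a DNF of size $2^{\polylog(N)} = \quasipoly(N)$ and folding it into the top $\mathsf{AC^0}$ gate yields a single $\tilde f \in \mathsf{AC^0}[\quasipoly(N), O(1)]$. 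A union bound over the at most $\quasipoly(N)$ bottom gates gives $\Pr_{x\sim\mathcal{D}_{z,N}}[\tilde f(x)\neq f(x)] \le \polylog(N)/\sqrt{M}$ for every $z$.

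The next step reparametrizes the sampling. A sample $x\sim\mathcal{D}_{z,N}$ can be generated by picking $r\in[N]^M$ uniformly and setting $x(z,r)_{i,j} = z_i\cdot\mathbb{1}[j=r_i]$. For each fixed $r$, the map $z\mapsto\tilde f(x(z,r))$ is the restriction of $\tilde f$ obtained by hard-wiring irrelevant bits to $0$ and the chosen bits to the literals $z_i$; it is therefore an $\mathsf{AC^0}$ circuit $C_r$ on $M$ variables with the same depth and size as $\tilde f$. The hypothesis $\quasipoly(N)=\quasipoly(M)$ lets me apply the Raz-Tal Theorem (\Cref{thm:raz-tal}) to conclude $|\Pr_{z\sim\mathcal{F}_M}[C_r(z)=1] - \Pr_{z\sim U_M}[C_r(z)=1]| \le \polylog(M)/\sqrt{M}$. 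Averaging over $r$ via the triangle inequality replaces $C_r$ by $\tilde f$, and combining with Step 1 gives the same distinguishing bound for $f$. Finally, the standard identity $\Pr_{b,z,x}[f(x)=b] \le \frac{1}{2} + \frac{1}{2}\bigl|\Pr_{z\sim\mathcal{F}_M,x}[f(x)=1] - \Pr_{z\sim U_M,x}[f(x)=1]\bigr|$ translates distinguishing advantage into the desired bound.

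The main obstacle is parameter bookkeeping across three sources of error: the sparse-oracle classicalization error from \Cref{cor:sparse_aa_functional}, the union bound over $\quasipoly(N)$ bottom gates, and the Raz-Tal loss of $\polylog(M)/\sqrt{M}$. The delicate point is that $\delta$ must be small enough for the union bound yet large enough that $\D(g_i)=O(T^5\log(T/\delta))$ stays within $\polylog(N)$, so that the folded DNFs remain of $\quasipoly(N)$ size. The hypothesis $\quasipoly(N)=\quasipoly(M)$ then ensures that $\tilde f$, viewed as a circuit in the $M$-bit variable $z$ after substitution, is still of $\quasipoly(M)$ size and thus squarely inside the regime where Raz-Tal applies.
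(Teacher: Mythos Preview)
Your proposal is essentially the paper's proof: replace each bottom quantum gate by a shallow decision tree via \Cref{cor:sparse_aa_functional}, absorb into the top gate to get an $\mathsf{AC^0}$ circuit $\tilde f$, reparametrize $x\sim\mathcal{D}_{z,N}$ as a function of $z$ and auxiliary randomness $r$, and apply Raz--Tal. The paper packages the last step as a single $\mathsf{AC^0}$ circuit $C(z,i_1,\ldots,i_M)$ rather than averaging over fixed-$r$ circuits $C_r$, but that is cosmetic.

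One small wrinkle in your ordering: you propose to bound the distinguishing advantage of $\tilde f$, then transfer it to $f$, and finally apply the identity $\Pr[f(x)=b]\le \tfrac12+\tfrac12\lvert\Pr_{\mathcal F}[f=1]-\Pr_{U}[f=1]\rvert$. The middle transfer does not go through in general, because Step~1 only controls the event $\{f(x)\in\{0,1\}\text{ and }f(x)\neq\tilde f(x)\}$; when $f(x)=\bot$, $\Pr[f=1]$ can drop far below $\Pr[\tilde f=1]$ without any ``disagreement,'' so $\lvert\mathrm{dist}(f)\rvert$ need not be close to $\lvert\mathrm{dist}(\tilde f)\rvert$. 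The fix is to swap the last two steps (as the paper does): first use $\Pr[f(x)=b]\le\Pr[\tilde f(x)=b]+\Pr[f(x)\in\{0,1\}\text{ and }f(x)\neq\tilde f(x)]$, and then apply the standard identity and Raz--Tal directly to the total function $\tilde f$.
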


\begin{proof}
Suppose the bottom-level gates all have quantum query complexity at most $T$, and that there are at most $s$ such gates. Let $\delta = \frac{1}{s\sqrt{M}}$. Consider the function $g: \{0,1\}^{MN} \to \{0,1\}$ obtained by replacing all of the bottom-level gates of $f$ with the corresponding decision trees from \Cref{cor:sparse_aa_functional} that have depth $d \le O\left(T^5\log \frac{T}{\delta}\right))$. (The condition of \Cref{cor:sparse_aa_functional} is satisfied for sufficiently large $N$, as $T \le \polylog(N) \ll \sqrt{N}$.)

Note that $g \in \mathsf{AC^0}[\quasipoly(N), O(1)]$: a depth-$d$ decision tree can be computed by a width-$d$ DNF formula, and since $d \le \polylog(N) \cdot \log\left(\quasipoly(N) \cdot \sqrt{M}\right) \le \polylog(N)$ and $s \le \quasipoly(N)$, the total number of gates needed to evaluate all $s$ decision trees is at most $\quasipoly(N)$.

By a union bound over all of the bottom-level gates, observe that
\begin{align}
\Pr_{b,z,x \sim \mathcal{D}_{z,N}} \left[f(x) = b \right]
&\le \Pr_{b,z,x \sim \mathcal{D}_{z,N}} \left[g(x) = b \right] + \Pr_{b,z,x \sim \mathcal{D}_{z,N}}\left[f(x) \in \{0,1\} \text{ and } f(x) \neq g(x) \right]\nonumber\\
&\le \Pr_{b,z,x \sim \mathcal{D}_{z,N}} \left[g(x) = b \right] + \frac{1}{\sqrt{M}},\label{eq:forrelation_of_ors_classical_sim}
\end{align}
from the assumption of \Cref{cor:sparse_aa_functional}, just because $g$ can disagree with $f$ only if at least one of the decision trees disagrees with its corresponding quantum query algorithm.

Consider a Boolean function $C(z, i_1,\ldots,i_M)$ that takes inputs $z \in \{0,1\}^M$ and $i_1,\ldots,i_M \in [N]$. Let $\tilde{x} \in \{0,1\}^{MN}$ be the string in which for each row $j \in [M]$:
\begin{itemize}
\item If $z_j = 0$, then the $j$th row of $\tilde{x}$ is all zeros.
\item If $z_j = 1$, then the $j$th row of $\tilde{x}$ contains a single $1$ in the $i_j$th position.
\end{itemize}
Let $C$ compute $g(\tilde{x})$. Clearly, $C \in \mathsf{AC^0}[\quasipoly(N), O(1)]$. Moreover, if $i_1,\ldots,i_M$ are chosen randomly, then $C$ simulates the behavior of $g$:
\begin{equation}
\label{eq:forrelation_of_ors_raz_tal}
\Pr_{b,z,x \sim \mathcal{D}_{z,N}} \left[g(x) = b \right] = \Pr_{b,z,i_1,\ldots,i_M}\left[C(z, i_1,\ldots,i_M) = b\right].
\end{equation}
Putting these together, we find that:
\begin{align*}
\Pr_{b,z,x \sim \mathcal{D}_{z,N}} \left[f(x) = b \right] &\le \Pr_{b,z,x \sim \mathcal{D}_{z,N}} \left[g(x) = b \right] + \frac{1}{\sqrt{M}}\\
&= \Pr_{b,z,i_1,\ldots,i_M}\left[C(z, i_1,\ldots,i_M) = b\right] + \frac{1}{\sqrt{M}}\\
&\le \frac{1}{2} + \frac{\polylog(M)}{\sqrt{M}}\\
&\le \frac{1}{2} + \frac{\polylog(N)}{\sqrt{M}},
\end{align*}
where the first two lines apply \eqref{eq:forrelation_of_ors_classical_sim} and \eqref{eq:forrelation_of_ors_raz_tal}, the third line holds by \Cref{thm:raz-tal}, and the last line uses the fact that $M \le \quasipoly(N)$.
\end{proof}

To complete this section, we require the following proposition, which is the same as \Cref{prop:fss_bqp^ph} but with the role of $\mathsf{BQP}$ and $\mathsf{\Sigma}_k^\mathsf{P}$ reversed, and with the extra subtlety that we must also consider queries to promise problems.

\begin{proposition}
\label{prop:fss_ph^promisebqp}
Let $M$ be a ${\mathsf{\Sigma}_k^\mathsf{P}}^\mathsf{PromiseBQP}$ oracle machine (i.e. a pair $\langle A, B \rangle$ of a $\mathsf{\Sigma}_k^\mathsf{P}$ oracle machine $A$ and a $\mathsf{PromiseBQP}$ oracle machine $B$). Let $p(n)$ be a polynomial upper bound on the runtime of $A$ and $B$ on inputs of length $n$. Then for any $x \in \{0,1\}^n$, there is a depth-$2$ circuit $C$ of size at most $2^{\poly(n)}$ in which the top gate is computed by a function in $\mathsf{AC^0}\left[2^{\poly(n)}, k+1\right]$, and all of the bottom gates are functions with bounded-error quantum query complexity at most $p(p(n))$, such that for any oracle $\mathcal{O}: \{0,1\}^* \to \{0,1\}$ we have:
\[
M^\mathcal{O}(x) = C\left( \mathcal{O}_{[p(p(n))]} \right),
\]
where $\mathcal{O}_{[p(p(n))]}$ denotes the concatenation of the bits of $\mathcal{O}$ on all strings of length at most $p(p(n))$.
\end{proposition}

\begin{proof}
Let $N = \sum_{m=0}^{p(n)} 2^{m}$. By \Cref{lem:furst-saxe-sipser}, there exists a function $f: \{0,1\}^N \to \{0,1\}$ in $\mathsf{AC^0}\left[2^{\poly(n)}, k+1\right]$ such that, for any language $L$, $A^{L}(x) = f\left(L_{[p(n)]}\right)$. We take this $f$ to be the top gate of our circuit, and will replace the inputs of this gate by functions of low quantum query complexity. Recall from \Cref{sec:circuit_complexity} that for $b \in \{0,1\}$, a gate labeled by $f$ evaluates to $b$ on input $P \in \{0,1,\bot\}^N$ if, for every string $Q \in \{0,1\}^N$ that extends $P$, we have $f(Q) = b$. Additionally, recall from \Cref{sec:complexity_classes} that we define queries to a promise problem $\Pi$ such that:
\[
A^\Pi(x) \coloneqq \begin{cases}
0 & A^L(x) = 0 \text{ for every language } L \text{ that extends } \Pi,\\
1 & A^L(x) = 1 \text{ for every language } L \text{ that extends } \Pi,\\
\bot & \text{otherwise}.
\end{cases}
\]
It follows that, for any promise problem $\Pi$, $A^{\Pi}(x) = f\left(\Pi_{[p(n)]}\right)$ (or, in plain words, the extension of $f$ to allow inputs in $\{0,1,\bot\}$ is consistent with the extension of $A$ to allow queries to a promise problem).

Let $\Pi$ be the promise problem decided by $B^\mathcal{O}$. Since $A^\Pi(x)$ runs in time at most $p(n)$, it can only query the evaluation of $B^\mathcal{O}$ on inputs up to length at most $p(n)$. For each $y \in \{0,1\}^{m}$ with $m \le p(n)$, there exists a partial function $g_y$ with $\Q(g_y) \le p(m) \le p(p(n))$ such that, for every oracle $\mathcal{O}$, $B^\mathcal{O}(y)$ is computed by $g_y\left(\mathcal{O}_{p(m)} \right)$.

Consider the circuit $C$ obtained by feeding these functions $\{g_y : y \in \{0,1\}^{m}, m \le p(n)\}$ into $f$. Then $M^\mathcal{O}(x) = A^{\Pi}(x) = C\left(\mathcal{O}_{[p(p(n))]}\right)$. Furthermore, $C$ clearly satisfies the desired size, depth, and structure requirements.
\end{proof}

By straightforward techniques, \Cref{thm:ph^bqp_forrelation_of_ors} can be extended to a proof of the following oracle result.

\begin{corollary}
\label{cor:bqp^np_not_in_ph^bqp}
There exists an oracle relative to which $\mathsf{BQP}^\mathsf{NP} \not\subset \mathsf{PH}^\mathsf{PromiseBQP}$.
\end{corollary}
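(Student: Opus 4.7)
The plan is to construct a random oracle $A$ encoding one instance of the $\Forrelation \circ \OR$ problem per input length, then show (i) that the corresponding unary language lies in $\mathsf{BQP}^{\mathsf{NP}^A}$ by composing an $\mathsf{NP}$ oracle (used to evaluate $\OR$s) with the quantum $\Forrelation$ algorithm from \Cref{thm:raz-tal}, and (ii) that it lies outside $\mathsf{PH}^{\mathsf{PromiseBQP}^A}$ by a Borel--Cantelli argument whose key step is \Cref{thm:ph^bqp_forrelation_of_ors}. This mirrors exactly the structure of the proof of \Cref{cor:np^bqp_not_in_bqp^ph}, with the roles of $\OR \circ \Forrelation$ and $\mathsf{BQP}^{\mathsf{PH}}$ replaced by $\Forrelation \circ \OR$ and $\mathsf{PH}^{\mathsf{PromiseBQP}}$.

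Concretely, for each $n \in \Naturals$ set $M = N = 2^{n^2}$. Sample a uniform bit $L^A(0^n) \in \{0,1\}$; then sample $z_n \in \{0,1\}^M$ from $\mathcal{F}_M$ if $L^A(0^n)=1$ and uniformly otherwise, and finally place into the oracle a string $x_n \sim \mathcal{D}_{z_n,N}$ (so the $i$th row of $x_n$ is all $0$s if $z_{n,i}=0$ and has a single uniformly placed $1$ if $z_{n,i}=1$). The $\mathsf{BQP}^{\mathsf{NP}^A}$ upper bound on $L^A$ is immediate: for each $i \in [M]$, the bit $z_{n,i}$ is exactly the $\OR$ of the $i$th row of $x_n$, hence an $\mathsf{NP}^A$ query. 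A $\mathsf{BQP}$ machine can therefore simulate oracle access to $z_n$ using a single $\mathsf{NP}^A$ query per bit, run the Raz--Tal quantum algorithm $\mathcal{A}$ on $z_n$ using only $\polylog(M)$ such simulated queries, and (by the same Borel--Cantelli bookkeeping as in \Cref{claim:m_in_bqp_o}) decide $L^A$ correctly on all but finitely many inputs with probability $1$ over $A$.

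For the lower bound, I would fix any $\mathsf{PH}^{\mathsf{PromiseBQP}^A}$ machine $M$ and an exponentially spaced sequence $n_1 < n_2 < \cdots$ of input lengths such that $M(0^{n_i})$ only queries oracle regions of length $< n_{i+1}$. On input $0^{n_i}$, the computation of $M^A$ has precisely the shape required by \Cref{thm:ph^bqp_forrelation_of_ors}: an $\mathsf{AC^0}[\quasipoly(N), O(1)]$ top circuit (the alternating quantifier structure of $\mathsf{PH}$) feeding into $\mathsf{PromiseBQP}$ oracle gates of quantum query complexity $\polylog(N)$, all on the portion of the oracle encoding $x_{n_i}$ (every other part of the oracle is independent of $L^A(0^{n_i})$ and may be averaged out by convexity, just as in the proofs of \Cref{thm:bqp=pp_ph_infinite} and \Cref{cor:np^bqp_not_in_bqp^ph}). \Cref{thm:ph^bqp_forrelation_of_ors} then gives success probability at most $\tfrac12 + \polylog(N)/\sqrt{M} = \tfrac12 + o(1)$ on $0^{n_i}$, even conditioned on correct behavior on $0^{n_1},\dots,0^{n_{i-1}}$ (since the sparse arrays on different length scales are independent). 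Multiplying over $i$ kills the probability that $M^A$ decides $L^A$, and a union bound over the countably many $\mathsf{PH}^{\mathsf{PromiseBQP}}$ machines yields the result.

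The main technical obstacle I anticipate is not the Borel--Cantelli counting but the formal reduction of a $\mathsf{PH}^{\mathsf{PromiseBQP}^A}$ computation to the depth-$2$ ``$\mathsf{AC^0}$ of bounded-error quantum query functions'' template demanded by \Cref{thm:ph^bqp_forrelation_of_ors}. One must unpack the definition of $\mathsf{PromiseQMA}$-style oracle queries from \Cref{sec:complexity_classes} (specialized to $\mathsf{PromiseBQP}$), verify that the $\mathsf{PH}$ quantifier structure over an oracle flattens via \cite{FSS84} into a quasipolynomial-size constant-depth $\mathsf{AC^0}$ circuit whose leaves are either raw oracle bits or $\mathsf{PromiseBQP}$ oracle responses of quantum query complexity $\polylog(N)$, and absorb the $\mathsf{PromiseBQP}$ gates into the bottom layer of \Cref{thm:ph^bqp_forrelation_of_ors}. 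Once this syntactic reduction is done carefully, everything else proceeds as in the proof of \Cref{cor:np^bqp_not_in_bqp^ph}.
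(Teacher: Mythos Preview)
Your proposal is correct and matches the paper's intended approach essentially exactly: the paper omits the proof entirely, stating only that it is ``conceptually identical to the proof of \Cref{cor:np^bqp_not_in_bqp^ph} that follows from \Cref{thm:bqp_ac0_single_forrelated_block_indistinguishable},'' and you have faithfully filled in precisely that argument with $\Forrelation \circ \OR$ and \Cref{thm:ph^bqp_forrelation_of_ors} in place of $\OR \circ \Forrelation$ and \Cref{thm:bqp_ac0_single_forrelated_block_indistinguishable}. Your identification of the syntactic reduction (flattening $\mathsf{PH}^{\mathsf{PromiseBQP}}$ via \cite{FSS84} into an $\mathsf{AC^0}$ top layer over $\polylog(N)$-query quantum bottom gates) as the only real work is also on target.
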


\begin{proof}
We construct an oracle $A$ as follows. Let $L^A$ be a uniformly random unary language. For each $n \in \Naturals$, we add into $A$ a region consisting of a function $f_n: \{0,1\}^{2n} \to \{0,1\}$. Viewing the truth table of $f_n$ as a $2^{n} \times 2^{n}$ array of bits, choose $f_n$ as follows:
\begin{itemize}
\item If $L^A\left(0^n\right) = 0$, sample a uniformly random string $z$ of length $2^{n}$, and draw $f \sim \mathcal{D}_{z,2^{n}}$.
\item If $L^A\left(0^n\right) = 1$, sample $z$ from the Forrelation distribution $\mathcal{F}_{2^{n}}$, and draw $f \sim \mathcal{D}_{z,2^{n}}$.
\end{itemize}
Let $\mathcal{D}$ be the resulting distribution over oracles $A$.

We first show that $L^A \in \mathsf{BQP}^{\mathsf{NP}^A}$ with probability $1$ over $A \sim \mathcal{D}$. To do so, we define a language $P^A$ as follows: for a string $x \in \{0,1\}^*$, $P^A(x) = 1$ if $|x| = n$ and the $x$th row of $f_n$ contains a $1$; otherwise $P^A(x) = 0$. Clearly, $P^A \in \mathsf{NP}^A$: to determine if $x \in P^A$, nondeterministically guess a string $y \in \{0,1\}^{n}$ and check if $f_n(x, y) = 1$. Thus, it suffices to show that $L^A \in \mathsf{BQP}^{P^A}$, which we do below. (Note that this proof shares large parts with the proof of \Cref{claim:np_in_bqp}, only modifying a few parameters.)

\begin{claim}
\label{claim:l_in_bqp^p^a}
$L^A \in \mathsf{BQP}^{P^A}$ with probability $1$ over $A$.
\end{claim}

\begin{proof}[Proof of Claim]
A quantum algorithm can decide whether $0^{n} \in L^A$ in $\poly(n)$ time by using the Forrelation distinguishing algorithm $\mathcal{A}$ from \Cref{thm:raz-tal} on the region of $P^A$ corresponding to inputs of length $n$.

In more detail, let $z: \{0,1\}^{n} \to \{0,1\}$ denote the restriction of $P^A$ to inputs of length $n$. By \Cref{thm:raz-tal} we know that:
\[
\Pr_{A \sim \mathcal{D}} \left[\mathcal{A}(z) \neq L^A\left(0^{n}\right)\right] \le 2^{-2n},
\]
where the probability in the above expression is also taken over the randomness of $\mathcal{A}$. By Markov's inequality, we may conclude:
\[
\Pr_{A \sim \mathcal{D}} \left[\Pr\left[\mathcal{A}(z) \neq L^A\left(0^{n}\right)\right] \ge 1/3\right] \le 3 \cdot 2^{-2n}.
\]
Hence, the $\mathsf{BQP}$ promise problem defined by $\mathcal{A}$ agrees with $L^A$ on $0^{n}$, except with probability at most $3 \cdot 2^{-2n}$.

We now appeal to the Borel-Cantelli Lemma to argue that, with probability $1$ over $A$, $\mathcal{A}$ correctly decides $L^A\left(0^{n}\right)$ for all but finitely many $n \in \Naturals$. We have:
\[
\sum_{n \in \Naturals} \Pr_{A\sim D}\left[\mathcal{A}^{P^A} \text{ does not decide } L^A\left(0^{n}\right)\right] \le \sum_{n=1}^{\infty} 3 \cdot 2^{-2n} < \infty.
\]

Therefore, the probability that $\mathcal{A}$ fails on infinitely many inputs $0^n$ is $0$. Hence, $\mathcal{A}$ can be modified into a $\mathsf{BQP}$ algorithm that decides $L^A\left(0^{n}\right)$ for \textit{all} $n \in \Naturals$, with probability $1$ over $A \sim \mathcal{D}$.
\end{proof}

It remains to show that $L^A \not\in \mathsf{PH}^{\mathsf{PromiseBQP}^A}$ with probability $1$ over $A$. As we will show, this follows from \Cref{thm:ph^bqp_forrelation_of_ors} in the much same way that \Cref{cor:bqp_sigma_k} follows from \Cref{thm:bqp_sigma_k_query_version}. Fix a ${\mathsf{\Sigma}_k^\mathsf{P}}^\mathsf{PromiseBQP}$ oracle machine $M$. By the union bound, it suffices to show that
\[
\Pr_{A \sim \mathcal{D}}\left[M^A \text{ decides } L^A\right]=0.
\]

Let $n_1<n_2<\cdots$ be an infinite sequence of input lengths, spaced far enough apart (e.g. $n_{i+1}=2^{n_i}$) such that $M\left(0^{n_i}\right)$ can query the oracle on strings of length $n_{i+1}$ or greater for at most finitely many values of $i$. Next, let
\[
p(M,i)\coloneqq\Pr_{A \sim \mathcal{D}}\left[M^A \text{ correctly decides } 0^{n_i}|M^A \text{ correctly decided }0^{n_1},\dots,0^{n_{i-1}}\right]
\]
Then we have that
\[
\Pr_{A \sim \mathcal{D}}\left[M^A\text{ decides }L^A\right]\leq\prod_{i=1}^\infty p(M,i).
\]

\noindent Thus it suffices to show that, for every fixed $M$, we have $p(M,i)\leq 0.7$ for all but finitely many $i$. \Cref{prop:fss_ph^promisebqp} shows that $M$'s behavior on $0^{n_i}$ can be computed by a depth-$2$ circuit of size $2^{\poly(n_i)}$ in which the top gate is a function in $\mathsf{AC^0}\left[2^{\poly(n_i)}, k+1\right]$, and all of the bottom gates are functions with bounded-error quantum query complexity at most $\poly(n_i)$. \Cref{thm:ph^bqp_forrelation_of_ors} with $M = N = 2^{n_i}$ shows that such a circuit correctly evaluates the $\Forrelation \circ \OR$ function with probability greater than (say) $0.7$ for at most finitely many $i$. This even holds conditioned on $M^A$ correctly deciding $0^{n_1},\ldots,0^{n_{i-1}}$, because the size-$2^{2n_i}$ $\Forrelation \circ \OR$ instance is chosen independently of the smaller instances, and because $M\left(0^{n_i}\right)$ can query the oracle on strings of length $n_{i+1}$ or greater for at most finitely many values of $i$.
\end{proof}

\section{Limitations of the \texorpdfstring{$\mathsf{QMA}$}{QMA}\ Hierarchy (And Beyond)}
In this section, we use random restriction arguments to prove that $\mathsf{PP} \not\subset \mathsf{QMAH}$ relative to a random oracle.

\subsection{The Basic Random Restriction Argument}
The most basic form of our random restriction argument, though not necessarily its most easily applicable, is given below. The theorem can be understood as stating that if we choose a random subset $S$ of the bits of some input $x$, then $S$ usually contains a small set $K$ such that the quantum algorithm's acceptance probability cannot change much when any bits of $S \setminus K$ are flipped. In particular, $K$ serves as a sort of ``certificate'' of the quantum algorithm's behavior when the bits of $S \setminus K$ are unrestricted.

\begin{theorem}[Random restriction for $\mathsf{BQP}$]
\label{thm:bqp_random_restriction}
Consider a quantum algorithm $Q$ that makes $T$ queries to $x \in \{0,1\}^N$. Choose $k \in \Naturals$. If $S \subseteq [N]$ is sampled such that each $i \in [N]$ is in $S$ with probability $p$, then with probability at least $1 - 2e^{-k/6}$, there exists a set $K \subseteq S$ of size at most $k$ such that for every $y \in \{0,1\}^N$ with $\{i \in [N] : x_i \neq y_i\} \subseteq S \setminus K$, we have:
\[ \left|\Pr\left[Q(x) = 1 \right] - \Pr\left[Q(y) = 1 \right] \right| \le 16Tp\sqrt{N/k}\]
\end{theorem}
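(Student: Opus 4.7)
The plan is to reduce bounding $\left|\Pr[Q(x)=1] - \Pr[Q(y)=1]\right|$ to controlling a tail sum of query magnitudes, and then apply a two-step Chernoff argument. Let $q_1,\dots,q_N$ be the query magnitudes of \Cref{lem:bbbv_query_magnitude}; these depend only on $Q$ and $x$, and satisfy $\sum_i q_i = T$. For any $y$ with $\{i:x_i \ne y_i\} \subseteq S \setminus K$, that lemma gives
\[
\left|\Pr[Q(x)=1] - \Pr[Q(y)=1]\right| \le 8\sqrt{T}\sqrt{\sum_{i \in S\setminus K} q_i},
\]
so it suffices to exhibit, with probability at least $1-2e^{-k/6}$ over $S$, a set $K \subseteq S$ of size at most $k$ with $\sum_{i \in S\setminus K} q_i \le 4Tp^2N/k$.

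To construct $K$, I would order the coordinates so that $q_{\sigma(1)} \ge q_{\sigma(2)} \ge \cdots \ge q_{\sigma(N)}$, set $L := \min\{N,\lceil k/(2p)\rceil\}$, and define $K := S \cap \{\sigma(1),\dots,\sigma(L)\}$. Two features of this choice drive the argument. First, since $L\,q_{\sigma(L)} \le \sum_{j \le L} q_{\sigma(j)} \le T$, every $i \in S\setminus K$ satisfies $q_i \le q_{\sigma(L)} \le T/L \le 2pT/k$. Second, when $L = N$ the tail sum is identically zero, so the only non-trivial regime is $L = \lceil k/(2p)\rceil < N$, which in particular forces $k \le 2pN$.

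Two Chernoff-style estimates then finish the proof. For the size bound, $|K|$ is a sum of $L$ independent Bernoulli$(p)$ random variables with mean $pL$, which is at most roughly $k/2$ in the $L = \lceil k/(2p)\rceil$ branch and equal to $pN < k/2$ in the $L = N$ branch; in either case, \Cref{fact:chernoff} with a suitable choice of $\delta$ gives $\Pr[|K| > k] \le e^{-k/6}$. For the tail sum in the $L < N$ regime, $\sum_{i \in S\setminus K} q_i$ is a sum of independent random variables bounded in $[0, 2pT/k]$ with expectation $\mu \le p(N-L)\cdot 2pT/k \le 2p^2NT/k$. Applying the multiplicative Chernoff bound for bounded-range sums (obtained by rescaling \Cref{fact:chernoff} to $[0,1]$) at the level $(1+\delta)\mu = 4Tp^2N/k$ forces $\delta \ge 1$, so the failure probability is at most $\exp\bigl(-\delta\mu\,/\,(3\cdot 2pT/k)\bigr) \le e^{-pN/3} \le e^{-k/6}$, where the last inequality uses $k \le 2pN$. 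A union bound over the two failure events completes the proof.

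The main obstacle is unifying the two regimes $k \le 2pN$ and $k > 2pN$ so that a single choice of $K$ works for all parameters: for small $k$ the natural threshold $\lceil k/(2p)\rceil$ sits inside $[N]$ and the Chernoff bound on the tail sum is the active constraint, whereas for large $k$ this threshold would exceed $N$ and the tail sum is automatically zero but one must instead control $|S|$ itself. Capping $L$ at $N$ resolves this cleanly. A minor secondary point is that the Chernoff-type bound needed for the tail sum is for sums of bounded-range (not $\{0,1\}$) random variables, but this follows from \Cref{fact:chernoff} by the standard rescaling trick, and one could alternatively use Hoeffding's inequality (\Cref{fact:hoeffding}) at the cost of slightly worse constants.
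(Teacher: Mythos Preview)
Your argument is correct and follows essentially the same route as the paper: define $K$ as $S$ intersected with the indices of largest query magnitude (the paper uses the equivalent threshold description $K=\{i\in S:q_i>2pT/k\}$), split into cases according to whether $k$ exceeds $2pN$, and apply two Chernoff bounds. The one substantive difference is how you control $\sum_{i\in S\setminus K}q_i$ in the main case: the paper simply bounds it by $|S|\cdot\tau$ with $\tau=2pT/k$ and then Chernoff-bounds $|S|\le 2pN$, which keeps everything inside the i.i.d.\ Bernoulli form of \Cref{fact:chernoff}. Your direct Chernoff bound on the weighted sum also works, but it needs the (standard yet unstated) extension of \Cref{fact:chernoff} to independent $[0,1]$-valued summands; your claim that this ``follows from \Cref{fact:chernoff} by rescaling'' is not quite right, and your parenthetical that Hoeffding would suffice ``at the cost of slightly worse constants'' is actually false for small $p$ (the Hoeffding exponent is $\Theta(p^3N^2/k)$, which need not dominate $k/6$). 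There is also a tiny boundary issue from using $\lceil k/(2p)\rceil$ rather than splitting exactly at $k=2pN$, but this is easily patched.
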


\begin{proof}
We proceed in cases. Suppose $k > 2pN$. Then, we may simply take the set $K = S$, which satisfies the theorem whenever $|S| \le k$. By a Chernoff bound (\Cref{fact:chernoff}) with $\delta = \frac{k}{pN} - 1$, the probability that this condition is violated is upper bounded by:
\[
\Pr\left[|S| \ge (1 + \delta)pN\right] \le e^{-\frac{\delta^2 pN}{2 + \delta}} \le e^{-\frac{(1 + \delta) pN}{6}} = e^{-k/6},
\]
where we use the inequality $\frac{\delta^2}{2+\delta} \ge \frac{1 + \delta}{6}$ which holds for all $\delta \ge 1$.

In the complementary case, suppose $k \le 2pN$. Recall the definition of the \textit{query magnitudes} $q_i$ from \Cref{lem:bbbv_query_magnitude}, which are defined in terms of the behavior of $Q$ on $x$. Note that $\sum_{i=1}^N q_i = T$. Let $\tau = \frac{2Tp}{k}$. Since all $q_i$s are nonnegative, $|\{i \in [N] : q_i > \tau\}| \le \frac{T}{\tau}$. Choose $K = \{i \in S : q_i > \tau\}$. By a Chernoff bound (\Cref{fact:chernoff}),
\[ \Pr\left[ |K| \ge k \right] \le e^{-k/6}, \]
using the fact that $\E\left[ \left|K\right| \right] = p \cdot |\{i \in [N] : q_i > \tau\}| \le \frac{p T}{\tau} = \frac{k}{2}$. Additionally,
\[ \Pr\left[ |S| \ge 2p N \right] \le e^{-p N/3} \le e^{-k/6}, \]
by another Chernoff bound. Suppose $|K| \le k$ and $|S| \le 2p N$, which happens with probability at least $1 - 2e^{-k/6}$. Then we have:

\begin{align*}
\left|\Pr\left[Q(x) = 1 \right] - \Pr\left[Q(y) = 1 \right] \right| & \le 8\sqrt{T} \cdot \sqrt{\sum_{i : x_i \neq y_i} q_i}\\
&\le 8\sqrt{T} \cdot \sqrt{\sum_{i \in S \setminus K} q_i}\\
&\le 8\sqrt{T} \cdot \sqrt{|S| \tau}\\
&\le 16Tp\sqrt{N/k}.
\end{align*}
Above, the first line applies \Cref{lem:bbbv_query_magnitude}; the second line holds by the assumption that $\{i \in [N] : x_i \neq y_i\} \subseteq S \setminus K$; the third line applies the definition of $K$ to conclude that $q_i \le \tau$ for all $i \in S \setminus K$; and the last line substitutes $|S| \le 2pN$ and $\tau = \frac{2Tp}{k}$.
\end{proof}


\subsection{Measuring Closeness of Functions}
\label{sec:closeness}
In order to better make sense of \Cref{thm:bqp_random_restriction}, we introduce some language that allows us to quantify how ``close'' a pair of partial functions are.

\begin{definition}
\label{def:disagreement}
Let $f, g: \{0,1\}^N \to \{0,1,\bot\}$ be partial functions. We say that \emph{$g$ disagrees with $f$ on $x$} if $x \in \Dom(f)$ and $g(x) \neq f(x)$. The \emph{disagreement of $g$ with respect to $f$}, denoted $\disagr_f(g)$, is the fraction of inputs on which $f$ and $g$ disagree:
\[\disagr_f(g) \coloneqq \Pr_{x \sim \{0,1\}^N} \left[ g \text{ \rm disagrees with } f \text{ \rm on } x \right].\]
If $\mathcal{C}$ is a class of partial functions, the \emph{disagreement of $\mathcal{C}$ with respect to $f$}, denoted $\disagr_f(\mathcal{C})$, is the minimum disagreement of any function in $\mathcal{C}$ with $f$:
\[
\disagr_f(\mathcal{C}) \coloneqq \min_{g \in \mathcal{C}} \disagr_f(g).
\]
\end{definition}

Note that the above definition is not symmetric in $f$ and $g$. We typically think of $f$ as some ``target'' function, and $g$ as some machine that tries to compute $f$ on the inputs where $f$ is defined. The goal is for $g$ to be consistent with $f$ with good probability; thus we only penalize $g$ if it reports an incorrect answer when $f$ takes a value in $\{0,1\}$.

This next few propositions show that disagreement behaves intuitively in various ways. First, we show that disagreement satisfies a sort of ``triangle inequality''.

\begin{proposition}
\label{prop:disagr_triangle_inequality}
Let $f, g, h: \{0,1\}^N \to \{0,1,\bot\}$. Then $\disagr_f(h) \le \disagr_f(g) + \disagr_g(h)$. 
\end{proposition}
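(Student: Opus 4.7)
The plan is to prove this via a pointwise argument followed by a union bound. Concretely, I would show that for every input $x \in \{0,1\}^N$, if $h$ disagrees with $f$ on $x$, then either $g$ disagrees with $f$ on $x$, or $h$ disagrees with $g$ on $x$ (possibly both). Once this pointwise implication is established, applying the union bound to the events
\[
E_1(x) = \{g \text{ disagrees with } f \text{ on } x\}, \qquad E_2(x) = \{h \text{ disagrees with } g \text{ on } x\},
\]
over a uniformly random $x$ immediately gives $\disagr_f(h) \le \disagr_f(g) + \disagr_g(h)$, as desired.

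To prove the pointwise implication, I would fix an $x$ such that $h$ disagrees with $f$ on $x$, which by \Cref{def:disagreement} means $f(x) \in \{0,1\}$ and $h(x) \neq f(x)$. I then split into two cases based on the value of $g(x)$. If $g(x) = f(x)$, then $g(x) \in \{0,1\}$ and $h(x) \neq g(x)$, so $h$ disagrees with $g$ on $x$. If $g(x) \neq f(x)$, then since $f(x) \in \{0,1\}$, by definition $g$ disagrees with $f$ on $x$. In either case, at least one of the two events holds, which is exactly what we need.

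There is no real obstacle here; the only subtlety is that the definition of disagreement is asymmetric (it only counts $x$'s where the ``target'' function takes a value in $\{0,1\}$), so I need to verify that the case analysis respects this asymmetry, in particular that in the first case $g(x)$ is guaranteed to lie in $\{0,1\}$ (which follows from $g(x) = f(x) \in \{0,1\}$). After this verification, the proof is essentially a one-line union bound.
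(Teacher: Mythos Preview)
Your proposal is correct and matches the paper's proof essentially line for line: the paper also establishes the pointwise containment of the event ``$h$ disagrees with $f$ on $x$'' in the union of ``$g$ disagrees with $f$ on $x$'' and ``$h$ disagrees with $g$ on $x$'', then applies a union bound. Your explicit case analysis handling the asymmetry of \Cref{def:disagreement} is exactly the verification underlying the paper's one-line inequality.
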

\begin{proof}
This follows from \Cref{def:disagreement} and a union bound:
\begin{align*}
\disagr_f(h) &= \Pr_{x \sim \{0,1\}^N} \left[ h \text{ \rm disagrees with } f \text{ \rm on } x \right]\\
&\le \Pr_{x \sim \{0,1\}^N} \left[ g \text{ \rm disagrees with } f \text{ \rm on } x \text{ \rm OR } h \text{ \rm disagrees with } g \text{ \rm on } x \right]\\
&\le \disagr_f(g) + \disagr_g(h).\qedhere
\end{align*}
\end{proof}

The next two propositions show that disagreement behaves intuitively with respect to random restrictions. First, we show that disagreement is preserved, in expectation, under random restrictions.
\begin{proposition}
\label{prop:disagr_random_equal}
Let $f, g: \{0,1\}^N \to \{0,1,\bot\}$. Consider a random restriction $\rho$ with $\Pr[*] = p$. Then $\E_\rho\left[\disagr_{f_\rho}\left(g_\rho\right)\right] = \disagr_f(g)$.
\end{proposition}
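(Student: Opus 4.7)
The plan is to expand both sides using the definition of disagreement and exploit the fact that a uniformly random string can be generated equivalently as (i) picking a uniform $y \in \{0,1\}^N$ directly, or (ii) picking a random restriction $\rho$ and then picking a uniform assignment $x$ to its unrestricted coordinates.

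First, I would unpack the left-hand side. By \Cref{def:disagreement},
\[
\E_\rho\!\left[\disagr_{f_\rho}(g_\rho)\right] = \E_\rho \Pr_{x \sim \{0,1\}^{S(\rho)}}\!\left[ g_\rho \text{ disagrees with } f_\rho \text{ on } x \right],
\]
where $S(\rho) = \{i : \rho(i) = *\}$. Think of the pair $(\rho, x)$ as first drawing the restriction $\rho$ and then drawing a uniformly random extension $x \in \{0,1\}^{S(\rho)}$; the combined object determines a full input $y \in \{0,1\}^N$ by $y_i = \rho(i)$ for $i \notin S(\rho)$ and $y_i = x_i$ for $i \in S(\rho)$. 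Since for every $i$ we have $\Pr[y_i = 0] = \Pr[y_i = 1] = \tfrac{1}{2}$ independently of the other coordinates (the probability $p$ of being unrestricted is canceled by the uniform choice of $x_i$, and the probability $(1-p)/2$ of being fixed to each value contributes the rest), the marginal distribution of $y$ is exactly uniform on $\{0,1\}^N$.

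Next I would observe that by construction $f_\rho(x) = f(y)$ and $g_\rho(x) = g(y)$, so ``$g_\rho$ disagrees with $f_\rho$ on $x$'' is identical to the event ``$g$ disagrees with $f$ on $y$''. Combining this with the marginal computation and swapping the order of expectation (a routine Fubini step, valid because everything is a finite sum of nonnegative terms) gives
\[
\E_\rho\!\left[\disagr_{f_\rho}(g_\rho)\right] = \Pr_{y \sim \{0,1\}^N}\!\left[ g \text{ disagrees with } f \text{ on } y \right] = \disagr_f(g),
\]
which is the desired identity.

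There is no real obstacle; the only thing to be careful about is the bookkeeping that sampling $\rho$ and then a uniform completion $x$ really does yield the uniform distribution on $\{0,1\}^N$, and that the disagreement event transports correctly between the restricted and unrestricted pictures. Both are immediate from the definitions in \Cref{sec:random_restrictions} and \Cref{def:disagreement}.
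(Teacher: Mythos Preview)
Your proposal is correct and follows essentially the same approach as the paper: both arguments observe that sampling a random restriction $\rho$ and then a uniform completion of the unrestricted coordinates yields a uniformly random string in $\{0,1\}^N$, and that the disagreement event for $(f_\rho,g_\rho)$ on the completion coincides with the disagreement event for $(f,g)$ on the full string. The only difference is notational (the paper names the fixed and free parts separately as $y$ and $z$, whereas you build the full string $y$ from $(\rho,x)$).
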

\begin{proof}
Let $S = \{i \in [N]: \rho(i) = *\}$, and let $y \in \{0,1\}^{[N] \setminus S}$ be the assignment of non-$*$ variables under $\rho$. Then:
\begin{align*}
\E_\rho\left[\disagr_{f_\rho}\left(g_\rho\right)\right] &= \E_\rho\left[\Pr_{z \in \{0,1\}^S} \left[  g_\rho \text{ \rm disagrees with } f_\rho \text{ \rm on } z \right] \right]\\
&= \E_{y \in \{0,1\}^{[N] \setminus S}}\left[\Pr_{z \in \{0,1\}^S} \left[  g \text{ \rm disagrees with } f \text{ \rm on } (y, z) \right] \right]\\
&= \Pr_{x \in \{0,1\}^{N}}\left[g \text{ \rm disagrees with } f \text{ \rm on } x \right]\\
&= \disagr_f(g).\qedhere
\end{align*}
\end{proof}

Finally, we show that if we perform a sequence of random restrictions, each of which incurs some cost in disagreement, then the disagreement accumulates additively.
\begin{proposition}
\label{prop:disagr_composing_restrictions}
Let $f: \{0,1\}^N \to \{0,1,\bot\}$, and let $\rho, \sigma$ be random restrictions with $\Pr[*] = p,q$ respectively. Suppose there exist classes of functions $\mathcal{C},\mathcal{D}$ such that:
\begin{enumerate}[(a)]
\item $\E_{\rho} \left[\disagr_{f_{\rho}}\left(\mathcal{C}\right)\right] \le \eps$.
\item For all $g \in \mathcal{C}$, $\E_\sigma \left[\disagr_{g_{\sigma}}\left(\mathcal{D}\right)\right] \le \delta$.
\end{enumerate}
Then $\E_{\rho \sigma} \left[\disagr_{f_{\rho \sigma}}\left(\mathcal{D}\right)\right] \le \eps + \delta$.
\end{proposition}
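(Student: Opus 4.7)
The proof should be a short combination of the two preceding propositions. My strategy is to apply the triangle inequality (\Cref{prop:disagr_triangle_inequality}) with a carefully chosen intermediate function, then split the resulting expectation into two pieces that are handled by hypotheses (a) and (b) separately.

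Concretely, for each $\rho$ I would fix some $g_\rho \in \mathcal{C}$ attaining the minimum in $\disagr_{f_\rho}(\mathcal{C})$ (any deterministic tie-breaking rule over $\mathcal{C}$ works). For any further restriction $\sigma$, applying \Cref{prop:disagr_triangle_inequality} with the intermediate function $(g_\rho)_\sigma$ yields
\[
\disagr_{f_{\rho\sigma}}(\mathcal{D}) \;\le\; \disagr_{f_{\rho\sigma}}\bigl((g_\rho)_\sigma\bigr) \;+\; \disagr_{(g_\rho)_\sigma}(\mathcal{D}),
\]
where the left-hand side is an infimum over $h \in \mathcal{D}$ and the first term on the right does not depend on $h$, so the inequality survives passing to the minimum. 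Taking expectations over $\rho$ and $\sigma$ now reduces the task to bounding each of the two terms on the right.

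For the first term, once $\rho$ is fixed, \Cref{prop:disagr_random_equal} applied to the pair $f_\rho$, $g_\rho$ gives $\E_\sigma[\disagr_{f_{\rho\sigma}}((g_\rho)_\sigma)] = \disagr_{f_\rho}(g_\rho) = \disagr_{f_\rho}(\mathcal{C})$, which then averages to at most $\eps$ by hypothesis (a). For the second term, since $g_\rho \in \mathcal{C}$ for every $\rho$, hypothesis (b) gives $\E_\sigma[\disagr_{(g_\rho)_\sigma}(\mathcal{D})] \le \delta$ pointwise in $\rho$, hence also in expectation over $\rho$. Summing yields $\eps + \delta$, as required.

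I do not anticipate any real obstacle: the proposition is essentially a bookkeeping lemma formalizing additive composition of two random-restriction-based approximations. The only mild subtlety is making sure the assignment $\rho \mapsto g_\rho$ is well-defined so that expressions like $\E_{\rho\sigma}[\disagr_{(g_\rho)_\sigma}(\mathcal{D})]$ are meaningful; a fixed tie-breaking rule makes this a non-issue, and in the intended applications $\mathcal{C}$ will be a countable class (e.g.\ small DNFs) where no measurability concerns arise.
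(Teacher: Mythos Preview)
Your proposal is correct and follows essentially the same approach as the paper: pick a minimizing $g_\rho \in \mathcal{C}$ for each $\rho$, apply the triangle inequality (\Cref{prop:disagr_triangle_inequality}) with intermediate function $(g_\rho)_\sigma$, and bound the two resulting terms via \Cref{prop:disagr_random_equal} plus hypothesis~(a) and hypothesis~(b) respectively. The only cosmetic difference is that the paper also names a minimizing $h \in \mathcal{D}$ explicitly, whereas you work directly with $\disagr_{(g_\rho)_\sigma}(\mathcal{D})$; both are equivalent.
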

\begin{proof}
Let $g \in \mathcal{C}$ be the function (depending on $\rho$) that minimizes $\disagr_{f_\rho}(g)$, and let $h \in \mathcal{D}$ be the function (depending on $\rho$ and $\sigma$) that minimizes $\disagr_{g_\sigma}(h)$. Then we have:
\begin{align*}
\E_{\rho,\sigma}\left[\disagr_{f_{\rho \sigma}}(h)\right] &\le \E_{\rho,\sigma}\left[\disagr_{f_{\rho \sigma}}\left(g_\sigma\right) + \disagr_{g_{\sigma}}(h)\right]\\
&= \E_{\rho}\left[\disagr_{f_\rho}(g)\right] + \E_{\rho,\sigma}\left[\disagr_{g_\sigma}(h)\right]\\
&\le \eps + \delta,
\end{align*}
where the first line holds by \Cref{prop:disagr_triangle_inequality}; the second line applies \Cref{prop:disagr_random_equal} and linearity of expectation; and the last line holds because of assumptions (a) and (b).
\end{proof}

\subsection{Random Restriction for \texorpdfstring{$\mathsf{QMA}$}{QMA} Queries}
With the tools introduced in the previous section, we can state a more intuitive and useful form of our random restriction argument for $\mathsf{QMA}$ query algorithms. It states that a random restriction of a $\mathsf{QMA}$ query algorithm is close in expectation to a small-width DNF formula.

\begin{theorem}
\label{thm:qma_random_restriction}
Consider a partial function $f: \{0,1\}^N \to \{0,1,\bot\}$ with $\QMA(f) \le T$. Set $p = \frac{\sqrt{k}}{64T\sqrt{N}}$ for some $k \in \Naturals$. Let $\rho$ be a random restriction with $\Pr[*] = p$. Let $\mathcal{DNF}_k$ denote the set of width-$k$ DNFs. Then $\E_\rho \left[\disagr_{f_\rho}\left(\mathcal{DNF}_k\right)\right] \le 2e^{-k/6}$.
\end{theorem}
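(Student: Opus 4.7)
The plan is to reduce to Theorem 4.8 by fixing a witness for each $1$-input and then patching the resulting certificates into a DNF. First observe that with our choice $p = \frac{\sqrt{k}}{64T\sqrt{N}}$, the error bound in Theorem 4.8 becomes $16Tp\sqrt{N/k} = \frac{1}{4}$, which is precisely the slack we can afford when translating BBBV-style closeness into a $\mathsf{QMA}$-certificate statement.

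For every $x \in \{0,1\}^N$ with $f(x)=1$, fix a witness $\ket{\psi_x}$ that makes the $\mathsf{QMA}$ verifier $V$ accept with probability at least $2/3$, and let $Q_x$ be the $T$-query quantum algorithm $V(\ket{\psi_x},\cdot)$. Apply Theorem 4.8 to $Q_x$ at $x$: with probability at least $1-2e^{-k/6}$ over the sample $S$ of unrestricted coordinates, there is $K_x \subseteq S$ with $|K_x| \le k$ such that any $y$ agreeing with $x$ outside $S \setminus K_x$ satisfies $|\Pr[Q_x(y)=1]-\Pr[Q_x(x)=1]| \le 1/4$. For such $y$ we have $\Pr[Q_x(y)=1] \ge 2/3 - 1/4 > 1/3$, so the soundness of $\QMA(f) \le T$ forces $f(y) \neq 0$. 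In other words, $K_x$ (with bit assignments $x|_{K_x}$) is a genuine $1$-certificate for $f$ restricted to completions of $x$ on $[N]\setminus(S\setminus K_x)$.

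Now, given a restriction $\rho$ with unrestricted set $S$, build the DNF $D_\rho$ on the variables indexed by $S$ as follows: for each $x \in \{0,1\}^N$ consistent with $\rho$ such that $f(x)=1$ and the certificate $K_x \subseteq S$ exists, include the width-$\le k$ AND-term that demands $z|_{K_x} = x|_{K_x}$; take $D_\rho$ to be the OR of all such terms. One direction is immediate: whenever $D_\rho(z)=1$, some term is satisfied, and by the certificate property the full completion $z'$ of $z$ satisfies $f(z') \neq 0$, so $D_\rho$ never disagrees with $f_\rho$ on a $0$-input. In the other direction, if $f_\rho(z)=1$ and the certificate for the corresponding $z'$ exists, then its own term fires and $D_\rho(z)=1$. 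Thus $D_\rho$ can only disagree with $f_\rho$ on $z$'s where $f_\rho(z)=1$ but the certificate step from Theorem 4.8 failed for $z'$.

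Finally, swap expectations: sampling $\rho$ and then a uniform $z \in \{0,1\}^S$ is the same as sampling $x \in \{0,1\}^N$ uniformly and then $S$ independently with $\Pr[i\in S]=p$. Therefore
\[
\E_\rho\!\left[\disagr_{f_\rho}(D_\rho)\right] \le \E_x\!\left[\Pr_S[\text{Thm 4.8 fails for }x]\right] \le 2e^{-k/6},
\]
which yields $\E_\rho[\disagr_{f_\rho}(\mathcal{DNF}_k)] \le 2e^{-k/6}$. The most delicate point is the bookkeeping of witnesses: since $\ket{\psi_x}$ may differ for different $x$, the argument crucially depends on our $\mathsf{QMA}$ query model ignoring witness length, so that we can legitimately apply the query-only random restriction of Theorem 4.8 separately to each $Q_x$; beyond this, the rest of the proof is just a careful assembly of the $1$-certificates into a DNF and a Fubini swap.
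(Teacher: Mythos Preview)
Your proposal is correct and follows essentially the same approach as the paper's proof: fix a witness for each $1$-input, apply the BBBV-based restriction lemma (Theorem~\ref{thm:bqp_random_restriction}) to the resulting $T$-query algorithm, assemble the width-$k$ $1$-certificates into a DNF, and use the Fubini swap between $(\rho,z)$ and $(x,S)$ to bound the expected disagreement by the failure probability $2e^{-k/6}$. The paper phrases the DNF construction slightly more abstractly via minimal certificates of $f_\rho$, but the argument and all key steps are the same.
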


\begin{proof}
It will be convenient to view the choice of $\rho$ as follows: we choose a string $z \in \{0,1\}^N$ uniformly at random, and then we choose a set $S \subseteq [N]$ wherein each $i \in [N]$ is included in $S$ independently with probability $p$. Then, we take $\rho$ to be:
\[\rho(i) = \begin{cases} * & i \in S\\ z_i & i \not\in S. \end{cases}\]
Thus, by definition, it holds that $f_\rho(z|_S) = f(z)$.

Choose $g \in \mathcal{DNF}_k$ as follows, depending on the choice of $\rho$. For each $x \in \Dom(f_\rho)$, choose a certificate $K_x$ for $f_\rho$ on $x$ of minimal size. Define $g$ by:
\[g(y) \coloneqq \bigvee_{\substack{x \in f_\rho^{-1}(1) \\ \C^x(f) \le k}} \bigwedge_{i \in K_x} y_i = x_i.\]
This is to say that we take $g$ to be the $\OR$ of all of the chosen $1$-certificates of $f_\rho$ that have size at most $k$. Clearly, $g$ is computable by a width-$k$ DNF, so it remains to show that $g$ has small disagreement with respect to $f$ in expectation.

Call the pair $(z, S)$ ``good'' if either $f(z) \neq 1$ or $\C^{z|_S}(f_\rho) \le k$. Observe that $g$ agrees with $f_\rho$ on the input $z|_S$ whenever $(z, S)$ is good:
\begin{itemize}
\item If $f(z) = 0$, then $z|_S$ cannot contain a $1$-certificate for $f_\rho$, and so $g(z|_S) = f_\rho(z|_S) = 0$.
\item If $f(z) = \bot$ then $g$ always agrees with $f_\rho$ on $z|_S$.
\item Lastly, if $f(z) = 1$ and $\C^{z|_S}(f_\rho) \le k$, then $z|_S$ certainly contains the certificate $K_{z|_S}$ and hence $g(z|_S) = f_\rho(z|_S) = 1$.
\end{itemize}

Thus we can see that the expected disagreement of $g$ with respect to $f$ satisfies:
\[\E_\rho \left[\disagr_{f_\rho}(g) \right] = \Pr_{z,S}\left[g \text{ disagrees with } f_\rho \text{ on } z|_S\right] \le \Pr_{z,S}[(z, S) \text{ is not good}].\]

It remains to prove that most $(z, S)$ are good. Let $V(\ket{\psi},z)$ be the $\mathsf{QMA}$ verifier corresponding to $f$ on input $z$, where $\ket{\psi}$ is the witness. Fix $z$, and let $\ket{\psi_z}$ be the witness that maximizes $\Pr[V(\ket{\psi_z},z) = 1]$. By \Cref{thm:bqp_random_restriction}, for any $z \in \{0,1\}^N$, with probability at least $1 - 2e^{-k/6}$ over $S$, there exists a set $K \subseteq S$ of size at most $k$ such that for every $y \in \{0,1\}^N$ with $\{i \in [N] : z_i \neq y_i\} \in S \setminus K$ we have:
\[|\Pr[V(\ket{\psi_z},z) = 1] - \Pr[V(\ket{\psi_z},y) = 1]| \le \frac{1}{4}.\]
In particular, if $f(z) = 1$ then $\Pr[V(\ket{\psi_z},y) = 1] \ge \frac{2}{3} - \frac{1}{4} > \frac{1}{3}$, and so $f(y) \neq 0$. This is to say that $K$ is a certificate for $f_\rho$ on $z|_S$, and therefore $\C^{z|_S}(f_\rho) \le k$ and $(z, S)$ is good.
\end{proof}

\subsection{Application to \texorpdfstring{$\mathsf{QMAH}$}{QMAH}}

In order to generalize \Cref{thm:qma_random_restriction} to $\mathsf{QMAH}$ machines, we require the following form of H\r{a}stad's switching lemma for DNF formulas \cite{Has87}. The statement given below, and arguably its simplest proof, are given in an exposition by Thapen \cite{Tha09}. Technically, this is just a weaker statement of \Cref{thm:rossman_ac0_restriction}, though we prefer the version given here because it makes the constant factor explicit.

\begin{lemma}[Switching Lemma]
\label{lem:dnf_random_restriction}
Let $f$ be a width-$k$ DNF. If $\rho$ is a random restriction with $\Pr[*] = q < \frac{1}{9}$, then for any $t > 0$, $\Pr\left[\D(f_\rho) > t \right] \le (9qk)^t$.
\end{lemma}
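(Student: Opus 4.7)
My plan is to follow the standard Razborov-style encoding (injection) argument, essentially the one laid out in Thapen's exposition that the paper cites. The goal is to exhibit, for every ``bad'' restriction $\rho$ (meaning $\D(f_\rho) > t$), a ``more fixed'' restriction $\sigma$ together with a short advice string $\tau$ that together determine $\rho$, and then to compare probability weights.

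First I would build a canonical decision tree for $f_\rho$. Write $f = C_1 \lor C_2 \lor \cdots \lor C_m$ where each $C_i$ is a conjunction of at most $k$ literals. Process the clauses in the given order: if $C_i$ is already falsified by $\rho$, skip it; otherwise, query its live (i.e., $*$-assigned) variables in a fixed canonical order. If every query agrees with $C_i$'s literal, the tree halts with output $1$; if some query disagrees, $C_i$ is dead and we move on to $C_{i+1}$. If every clause is killed, output $0$. This tree plainly computes $f_\rho$, and its depth upper-bounds $\D(f_\rho)$, so $\D(f_\rho) > t$ implies some branch in the canonical tree has length exceeding $t$. Fix the leftmost such branch; walking along it records a sequence of ``partial clauses'' $\pi_1, \pi_2, \ldots, \pi_s$, where $\pi_j$ lists the live variables of clause $C_{i_j}$ that got queried, together with the answers seen. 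Truncate to exactly $t$ queried positions (this may cut into the final $\pi_s$).

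Second, I would define the injection $\rho \mapsto (\sigma, \tau)$. The restriction $\sigma$ agrees with $\rho$ everywhere outside the $t$ queried positions; on those positions, $\sigma$ fixes each variable to the value making its clause $C_{i_j}$ \emph{satisfied} (so $\sigma$ has exactly $t$ fewer $*$s than $\rho$). The advice $\tau$ stores, for each $\pi_j$, (i) which subset of the at most $k$ literals of $C_{i_j}$ corresponds to the live positions queried, and (ii) for each such position, the one extra bit needed to reconstruct $\rho$'s path value from the satisfying value encoded in $\sigma$. The decoding reads $\tau$ clause-by-clause, recovers each $\pi_j$ and hence the original $\rho$, so the map is injective.

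Third, the counting. Under the product distribution on restrictions with $\Pr[*] = q$, swapping a starred coordinate to a fixed coordinate shrinks the probability by $\frac{(1-q)/2}{q}$, so $\Pr[\sigma] \;=\; \left(\frac{1-q}{2q}\right)^{t} \Pr[\rho]$. Summing over bad $\rho$ mapping to a common $\sigma$, together with at most $(4k)^t$ choices of $\tau$ (using a crude bound $\binom{k}{\ell_j} \cdot 2^{\ell_j}$ on the per-clause advice), gives
\[
\Pr[\D(f_\rho) > t] \;\le\; \left(\frac{2q}{1-q}\right)^{t} \cdot (4k)^{t} \;=\; \left(\frac{8qk}{1-q}\right)^{t}.
\]
Since $q < 1/9$ forces $\frac{8}{1-q} < 9$, this bound is at most $(9qk)^t$ as desired.

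The main obstacle will be specifying the injection precisely enough to verify invertibility, and making the per-clause advice compact enough that the constant in front of $qk$ comes out at $9$ rather than something larger. In particular the truncation step, which ensures $\sigma$ always has exactly $t$ fewer $*$s than $\rho$, needs care so that the decoder knows when to stop reading $\tau$; this is handled by processing clauses in order and using $\sigma$'s own structure (the first clause satisfied by $\sigma$ among $C_{i_1}, C_{i_2}, \ldots$) to delimit the blocks of $\tau$.
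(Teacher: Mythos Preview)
The paper does not give its own proof of this lemma; it simply cites Thapen's exposition and treats the result as known. Your sketch is precisely the Razborov-style encoding argument that Thapen presents, so you are aligned with the paper's intended reference. One small point of care: in your description of the canonical tree you say ``if some query disagrees, $C_i$ is dead and we move on,'' which reads as stopping early inside a clause; for the decoding step (``the first clause satisfied by $\sigma$'') to locate $C_{i_1}$ cleanly, you want the version that queries \emph{all} live variables of each non-falsified clause before moving on, so that $\sigma$ genuinely satisfies $C_{i_1}$. With that adjustment your injection and counting go through, and indeed your advice bound is slack (one gets $(2k)^t$ rather than $(4k)^t$, hence $\bigl(\tfrac{4qk}{1-q}\bigr)^t$), comfortably below $(9qk)^t$ for $q<1/9$.
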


\begin{corollary}
\label{cor:dnf_random_restriction}
Let $f$ be a width-$k$ DNF. Denote by $\mathcal{D}_t$ the set of functions that have deterministic query complexity at most $t$. If $\rho$ is a random restriction with $\Pr[*] = q < \frac{1}{9}$, then for any $t > 0$, we have $\E_\rho[\disagr_{f_\rho}\left(\mathcal{D}_t\right)] \le (9qk)^t$.
\end{corollary}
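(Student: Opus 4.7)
The plan is to derive the corollary as a direct consequence of Lemma~\ref{lem:dnf_random_restriction} (the Switching Lemma), by partitioning the expectation according to whether the restricted function $f_\rho$ already has small decision tree complexity.

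First, I would note the trivial bound $\disagr_{f_\rho}(\mathcal{D}_t) \le 1$, which holds because disagreement is a probability. Next, observe that whenever $\D(f_\rho) \le t$, the function $f_\rho$ itself belongs to $\mathcal{D}_t$, and since $\disagr_{f_\rho}(f_\rho) = 0$ (as $f_\rho$ agrees with itself wherever it is defined, and where it outputs $\bot$ no disagreement is incurred by \Cref{def:disagreement}), we get $\disagr_{f_\rho}(\mathcal{D}_t) = 0$ in that case.

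Combining these two bounds by splitting over the event $\D(f_\rho) > t$ gives
\[
\E_\rho\left[\disagr_{f_\rho}(\mathcal{D}_t)\right] \le \Pr_\rho\!\left[\D(f_\rho) > t\right] \cdot 1 + \Pr_\rho\!\left[\D(f_\rho) \le t\right] \cdot 0 \le (9qk)^t,
\]
where the final inequality applies Lemma~\ref{lem:dnf_random_restriction}. There is no real obstacle here; the content of the corollary is simply a reformulation of the Switching Lemma in the language of expected disagreement, which will be convenient when chaining random restrictions together via \Cref{prop:disagr_composing_restrictions} in the subsequent $\mathsf{QMAH}$ application.
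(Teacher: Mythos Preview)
Your proposal is correct and takes essentially the same approach as the paper: the paper defines $g = f_\rho$ when $\D(f_\rho) \le t$ and $g$ the all-zeros function otherwise, then bounds $\E_\rho[\disagr_{f_\rho}(g)]$ by $\Pr_\rho[\D(f_\rho) > t]$ via the Switching Lemma, which is exactly your case split.
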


\begin{proof}
Take $g = f_\rho$ if $\D(f_\rho) \le t$, and otherwise let $g$ be the all-zeros function. Then $g \in \mathcal{D}_t$, and by \Cref{lem:dnf_random_restriction}, $\E_\rho[\disagr_{f_\rho}\left(g\right)] \le (9qk)^t$.
\end{proof}

With all of these tools in hand, we prove in our next theorem that under an appropriately chosen random restriction, a circuit composed of $\mathsf{QMA}$ query gates simplifies to a function that is close (in expectation) to a function with low deterministic query complexity. The proof amounts to a recursive application of \Cref{thm:qma_random_restriction} combined with \Cref{cor:dnf_random_restriction}.

\begin{theorem}
\label{thm:qmah_random_restriction}
Let $f: \{0,1\}^N \to \{0,1,\bot\}$ be computable by a size-$s$ depth-$d$ circuit where each gate is a (possibly partial) function with $\mathsf{QMA}$ query complexity at most $R$. Fix $k \in \Naturals$, and consider a random restriction $\rho$ with
\[
\Pr[*] = \left(1024R^2 k N\right)^{2^{-d} - 1} \cdot \left(\frac{e^{-1/6}}{18k}\right)^d,
\]
Denote by $\mathcal{D}_k$ the set of functions that have deterministic query complexity at most $k$. Then $\E_\rho\left[\disagr_{f_\rho}\left(\mathcal{D}_k\right)\right] \le 4se^{-k/6}$.
\end{theorem}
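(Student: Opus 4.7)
The plan is induction on the circuit depth $d$. At each inductive step I peel one layer of $\mathsf{QMA}$ gates off the bottom by a two-stage procedure: first, a random restriction $\rho_1$ with the probability prescribed by \Cref{thm:qma_random_restriction} (relative to the current bottom-layer query-complexity bound) reduces every bottom-layer gate to a width-$k$ DNF with expected disagreement at most $2e^{-k/6}$; then, a second random restriction $\rho_2$ with $\Pr[*] = e^{-1/6}/(18k)$ reduces every such DNF to a function of deterministic query complexity at most $k$, contributing by \Cref{cor:dnf_random_restriction} an additional $(9 \cdot (e^{-1/6}/(18k)) \cdot k)^k = (e^{-1/6}/2)^k \le e^{-k/6}$ expected disagreement per gate. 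The resulting depth-$\le k$ decision trees are absorbed into the gates of the layer above: a $\mathsf{QMA}$ verifier of query complexity $R$ composed with inputs computed by depth-$\le k$ decision trees simulates each of its queries by traversing the corresponding tree, so the composed verifier has $\mathsf{QMA}$ query complexity at most $Rk$. This leaves a depth-$(d-1)$ circuit of size at most $s$ whose gates have $\mathsf{QMA}$ query complexity at most $Rk$, to which I apply the inductive hypothesis.

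Expected disagreement accumulates additively. By a union bound over the $s_0 \le s$ bottom-layer gates, the two-stage restriction for the bottom layer contributes at most $3 s_0 e^{-k/6}$ expected disagreement; the inductive hypothesis applied to the resulting depth-$(d-1)$ circuit of size $\le s - s_0$ (with query-complexity bound $Rk$) contributes at most $4(s-s_0)e^{-k/6}$; by \Cref{prop:disagr_composing_restrictions}, composing the two restrictions adds these to give at most $3 s_0 e^{-k/6} + 4(s-s_0)e^{-k/6} \le 4 s e^{-k/6}$. The base case $d = 1$ is just the two-stage restriction applied to a single $\mathsf{QMA}$ gate, giving $3 e^{-k/6} \le 4 s e^{-k/6}$.

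The main obstacle is the parameter bookkeeping: showing that the composed $\Pr[*]$ across the entire recursion matches the stated closed form $(1024 R^2 kN)^{2^{-d} - 1} \cdot (e^{-1/6}/(18k))^d$. The factor $(e^{-1/6}/(18k))^d$ is transparent, as the switching-lemma probability is the same at each of the $d$ levels. The factor $(1024 R^2 kN)^{2^{-d} - 1}$ is subtler: the restriction probability at level $i$ for \Cref{thm:qma_random_restriction} must satisfy $\Pr^{(1)}_i \le \sqrt{k}/(64 R_i \sqrt{N})$ where $R_i$ is the current (inflated) $\mathsf{QMA}$ query complexity, and one has to choose these $\Pr^{(1)}_i$ so that their product telescopes to exactly $(1024 R^2 kN)^{2^{-d} - 1}$. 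The exponent $2^{-d}$ arises from this telescoping, and pinning it down amounts to unrolling the recurrence $p_d = p_{d-1} \cdot (e^{-1/6}/(18k)) \cdot (1024 R^2 kN)^{-2^{-d}}$ and verifying that the constraint of \Cref{thm:qma_random_restriction} is met at every level. This is routine algebra but requires care.
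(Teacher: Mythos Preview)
Your high-level plan matches the paper's: peel layers off the bottom via \Cref{thm:qma_random_restriction} followed by the switching lemma, absorb the resulting depth-$k$ decision trees into the next layer, and accumulate disagreement additively via \Cref{prop:disagr_composing_restrictions}. The per-gate $3e^{-k/6}$ accounting and the union bound over $s$ gates are also right.

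The gap is exactly where you flag it: the parameter bookkeeping, and it is not routine. Your induction-on-$d$ scheme replaces $R$ by $Rk$ in the recursive call, so after $i$ levels the effective query bound is $Rk^i$, not $Rk$. Moreover, you apply the inductive hypothesis on the same input length $N$ at every level. With those two choices, your recursion is $P_d(R,N) = \frac{\sqrt{k}}{64R\sqrt{N}}\cdot q \cdot P_{d-1}(Rk,N)$, and unrolling this does \emph{not} give $(1024R^2kN)^{2^{-d}-1}q^d$: the exponent of $1024R^2kN$ comes out to $2^{1-d}-\tfrac{3}{2}$, and a spurious power of $k$ appears. The recurrence you write down, $p_d = p_{d-1}\cdot q \cdot (1024R^2kN)^{-2^{-d}}$, \emph{does} telescope correctly, but it is not the recurrence your inductive scheme actually produces, because your recursive call changes $R$ to $Rk$ inside the base $1024R^2kN$.

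The paper fixes both issues by abandoning the black-box induction on $d$ and instead running a direct layer-by-layer argument. It introduces $N_i$, a high-probability upper bound on the number of unrestricted variables after processing $i$ layers, enforced by a Chernoff bound at each step: $N_i = 2p_iN_{i-1}$ with $p_i = \frac{1}{64R\sqrt{k}\,\sqrt{N_{i-1}}}$. Crucially, at every layer the gate being processed had \emph{original} $\mathsf{QMA}$ complexity $R$ and has become $Rk$ after exactly one absorption, so the parameter $T$ in \Cref{thm:qma_random_restriction} is always $Rk$, never $Rk^i$. The closed form $N_i = (2\alpha)^{2(1-2^{-i})}N^{2^{-i}}$ with $\alpha = 1/(64R\sqrt{k})$ then makes $\prod_i p_i$ telescope to $(1024R^2kN)^{2^{-d}-1}/2^d$, and the $2^{-d}$ combines with the $d$ factors of $q = e^{-1/6}/(9k)$ to give the stated $(e^{-1/6}/(18k))^d$. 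To salvage your approach you would need to track $N_i$ explicitly and state a refined inductive hypothesis in which only the bottom layer has complexity $Rk$ while the remaining layers still have complexity $R$.
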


\begin{proof}
For convenience, define $\alpha = \frac{1}{64R\sqrt{k}}$. Let $N_0 = N$, and for $i \in [d]$ define $p_i$ and $N_i$ recursively by:
\begin{align*}
p_i &= \frac{\alpha}{\sqrt{N_{i-1}}}\\
N_i &= 2p_iN_{i-1}.
\end{align*}
This recursive definition implies that:
\begin{equation}
\label{eq:product_p_is}
\prod_{i=1}^d p_i = \frac{N_d}{2^dN}.
\end{equation}
Additionally, a simple inductive calculation shows that $N_i$ takes the closed form:
\begin{equation}
\label{eq:N_i_closed_form}
N_i = (2\alpha)^{2(1 - 2^{-i})} N^{2^{-i}}.
\end{equation}

We view $\rho$ as a sequence of restrictions $\rho_1,\ldots,\rho_d$ in which $\rho_i$ has $\Pr[*] = p_iq$, where $q = \frac{e^{-1/6}}{9k}$ (one can easily verify from \eqref{eq:product_p_is} and \eqref{eq:N_i_closed_form} that $\prod_{i=1}^d p_iq$ equals the probability given in the statement in the theorem). We view each $\rho_i$ itself as the composition of two random restrictions, one with $\Pr[*] = p_i$ and one with $\Pr[*] = q$.

We proceed in cases. Suppose $k > N_d$. Let $S = \{i \in [N]: \rho(i) = *\}$ denote the set of unrestricted variables. Notice that $\disagr_{f_\rho}\left(\mathcal{D}_k\right) = 0$ whenever $|S| \le k$, just because $\mathcal{D}_k$ contains all total functions on at most $k$ bits. Let $\delta = \frac{k}{N\cdot\Pr[*]} - 1$. By \eqref{eq:product_p_is},
\[
k > N_d = \frac{2^d N\cdot\Pr[*]}{q^d} \ge 2^d N\cdot\Pr[*] \ge 2N\cdot\Pr[*],
\]
which implies that $\delta \ge 1$. By a Chernoff bound (\Cref{fact:chernoff}), this implies that:
\[
\E_\rho\left[\disagr_{f_\rho}\left(\mathcal{D}_k\right)\right] \le \Pr\left[|S| \ge (1 + \delta)N\cdot\Pr[*]\right] \le e^{-\frac{\delta^2 N\cdot\Pr[*]}{2 + \delta}} \le e^{-\frac{(1 + \delta) N\cdot\Pr[*]}{6}} = e^{-k/6},
\]
where we use the inequality $\frac{\delta^2}{2+\delta} \ge \frac{1 + \delta}{6}$ which holds for all $\delta \ge 1$. Thus, the theorem is proved in this $k > N_d$ case.

In the complementary case, suppose $k \le N_d$. Let $\mathcal{C}_i$ be the class of functions such that for all $g \in \mathcal{C}_i$:
\begin{enumerate}[(a)]
\item $g$ is computable by a circuit with the same structure as $f$, except that the gates at distance\footnote{Here, distance is defined as the length of the \textit{longest} path from that gate to any of the inputs.} at most $i$ from the input are eliminated and the gates at distance $i + 1$ make at most $Rk$ queries (or, if $i = d$, $\D(g) \le k$).
\item $g$ depends on at most $N_i$ inputs.
\end{enumerate}
By convention, let $\mathcal{C}_0 = \{f\}$. With this definition, the statement of the theorem follows from \Cref{prop:disagr_composing_restrictions} and an inductive application of the following claim:
\begin{claim}
For all $g \in \mathcal{C}_{i-1}$, $\E_{\rho_i}\left[\disagr_{g_{\rho_i}}\left(\mathcal{C}_i\right)\right] \le 4e^{-k/6} \cdot s_i$, where $s_i$ is the number of gates at distance exactly $i$ from the inputs in the circuit that computes $f$.
\end{claim}
\begin{proof}[Proof of Claim]
Consider applying $\rho_i$ to $g$. After the random restriction with $\Pr[*] = p_i$, by a Chernoff bound (\Cref{fact:chernoff}) and because $g$ depends on at most $N_{i-1}$ variables, the resulting function depends on at most $N_i = 2p_iN_{i-1}$ variables, except with probability at most $e^{-N_i / 6} \le e^{-N_d / 6} \le e^{-k/6}$ over this first restriction.\footnote{Actually, a careful inspection of the steps leading up to this proof reveals that this Chernoff bound is unnecessary: we already account for this ``bad'' event (the number of unrestricted variables being larger than $2p_i N_{i-1}$) in \Cref{thm:bqp_random_restriction}. We only write it this way to make each step of the proof is as self-contained as possible.} Additionally, by \Cref{thm:qma_random_restriction} with $T = Rk$ and $p = p_i$, because $g$ is a function of at most $N_{i-1}$ variables, there exist width-$k$ DNFs that each have expected disagreement at most $2e^{-k/6}$ with respect to the corresponding bottom-layer $\mathsf{QMA}$ gates of the circuit that computes $g_{\rho_i}$.

After the next random restriction with $\Pr[*] = q$, by \Cref{cor:dnf_random_restriction}, these width-$k$ DNFs each have expected disagreement at most $e^{-k/6}$ from functions of deterministic query complexity at most $k$. Hence, by \Cref{prop:disagr_composing_restrictions}, viewing $\rho_i$ as the composition of these two random restrictions, each bottom-layer $\mathsf{QMA}$ gate in the circuit that computes $g_{\rho_i}$ has expected disagreement at most $3e^{-k/6}$ from a function of deterministic query complexity at most $k$.

Let $h$ be a function depending on $\rho_i$, chosen as follows. Take $h$ to be the all zeros function if $g_{\rho_i}$ depends on more than $N_i$ variables; otherwise let $h$ be the function obtained from $g$ by replacing the bottom-level gates of the circuit that computes $g$ with the corresponding functions of deterministic query complexity $k$. We verify that $h \in \mathcal{C}_i$:
\begin{enumerate}[(a)]
\item We can absorb the functions of query complexity $k$ into the next layer of $\mathsf{QMA}$ gates, increasing the query complexity of each gate by a multiplicative factor of $k$. Alternatively, if $i = d$, then $\D(h) \le k$ just because $g$ consists of a single gate.
\item Either $g_{\rho_i}$ depends on at most $N_i$ variables, or else $h$ is trivial; in either case $h$ depends on at most $N_i$ inputs.
\end{enumerate}
We now demonstrate that $\E_{\rho_i}\left[\disagr_{g_{\rho_i}}(h)\right] \le 4e^{-k/6} \cdot s_i$. Notice that $h$ never disagrees with $g_{\rho_i}$ on input $x$, unless either (1) $g_{\rho_i}$ depends on more than $N_i$ variables, or (2) one of the functions of deterministic query complexity $k$ disagrees with its corresponding $\mathsf{QMA}$ gate on $x$. Hence, by a union bound, $\E_{\rho_i}\left[\disagr_{g_{\rho_i}}(h)\right] \le e^{-k/6} + 3e^{-k/6} \cdot s_i \le 4e^{-k/6} \cdot s_i$.
\end{proof}
This completes the theorem in the $k \le N_d$ case.
\end{proof}

As a corollary, we obtain the following result, which shows that small circuits composed of functions with low $\mathsf{QMA}$ query complexity cannot compute the $\PARITY$ function.

\begin{corollary}
\label{cor:qmah_parity_disagreement}
Let $f: \{0,1\}^N \to \{0,1,\bot\}$ be computed by a circuit of size $s = \quasipoly(N)$ and depth $d = O(1)$, where each gate has $\mathsf{QMA}$ query complexity at most $R \le \polylog(N)$. Let $\PARITY_N$ be the parity function on $N$ bits. Then for any $\eps \ge \frac{1}{\quasipoly(N)}$ and sufficiently large $N$, $\disagr_{\PARITY_N}(f) \ge \frac{1}{2} - \eps$.
\end{corollary}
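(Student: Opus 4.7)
The plan is to reduce to the elementary fact that low-depth decision trees cannot correlate with $\PARITY$, via the random restriction machinery just developed. First, I would choose $k = \polylog(N)$ large enough that $4se^{-k/6} \le \eps/2$; such a $k$ exists because $s = \quasipoly(N)$ and $\eps = 1/\quasipoly(N)$. Invoking \Cref{thm:qmah_random_restriction} with this $k$ (and the prescribed $\Pr[*] = p$) gives $\E_\rho\bigl[\disagr_{f_\rho}(\mathcal{D}_k)\bigr] \le 4se^{-k/6} \le \eps/2$, where $\mathcal{D}_k$ denotes the class of functions with deterministic query complexity at most $k$.

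Next, I would exploit the symmetry of parity: under any restriction $\rho$ with unrestricted set $S \subseteq [N]$, the function $(\PARITY_N)_\rho$ is $\PARITY_{|S|}$, possibly negated by a constant determined by the fixed bits. The key elementary observation is that whenever $|S| > k$, every $g$ with $\D(g) \le k$ satisfies $\disagr_{(\PARITY_N)_\rho}(g) = 1/2$ exactly: every leaf of a depth-$k$ decision tree leaves at least one coordinate free, so the parity of the remaining bits is uniformly distributed conditional on the leaf being reached, and $g$ is right on exactly half of the inputs routed to it. The possible negation does not affect this because $\disagr_{\PARITY_{|S|}}(g) = 1/2$ already forces $\disagr_{\PARITY_{|S|} \oplus 1}(g) = 1/2$.

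Combining these two ingredients via \Cref{prop:disagr_triangle_inequality} and \Cref{prop:disagr_random_equal}, for any $\rho$ with $|S| > k$ and the $g \in \mathcal{D}_k$ minimizing $\disagr_{f_\rho}(g)$, the triangle inequality yields $\disagr_{(\PARITY_N)_\rho}(f_\rho) \ge 1/2 - \disagr_{f_\rho}(\mathcal{D}_k)$. Averaging over $\rho$ and using $\disagr_{\PARITY_N}(f) = \E_\rho\bigl[\disagr_{(\PARITY_N)_\rho}(f_\rho)\bigr]$ then gives
\[
\disagr_{\PARITY_N}(f) \;\ge\; \tfrac{1}{2}\Pr[|S| > k] \;-\; \E_\rho\bigl[\disagr_{f_\rho}(\mathcal{D}_k)\bigr] \;\ge\; \tfrac{1}{2} \;-\; \tfrac{1}{2}\Pr[|S| \le k] \;-\; \tfrac{\eps}{2}.
\]
The final step is a Chernoff bound on $|S| \sim \mathrm{Binomial}(N, p)$: plugging $R, k = \polylog(N)$ and $d = O(1)$ into the formula for $p$ in \Cref{thm:qmah_random_restriction} gives $pN \ge N^{2^{-d}}/\polylog(N)$, so $pN \gg k$ for sufficiently large $N$ and $\Pr[|S| \le k] \le e^{-\Omega(pN)} \le 1/\quasipoly(N) \le \eps$.

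The only subtlety is the bookkeeping: ensuring $k$ is simultaneously large enough to make the $4se^{-k/6}$ term at most $\eps/2$, yet small enough that $pN$ still dominates $k$ under the formula for $p$. Because $d$ is constant, there is an enormous margin between the polynomial size of $pN$ and the polylogarithmic size of $k$, so no delicate tradeoff is needed and the whole argument goes through for all sufficiently large $N$.
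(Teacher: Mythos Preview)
Your proposal is correct and follows essentially the same approach as the paper: choose $k = \polylog(N)$ so that $4se^{-k/6}$ is small, apply \Cref{thm:qmah_random_restriction}, use the triangle inequality (\Cref{prop:disagr_triangle_inequality}) together with \Cref{prop:disagr_random_equal} and the fact that depth-$k$ decision trees have disagreement exactly $1/2$ with $\PARITY$ on more than $k$ bits, and finish with a Chernoff bound on $|S|$. The only cosmetic difference is that the paper bounds $\Pr[|S|\le k]$ by $e^{-k/4}$ (via $k \le pN/2$) rather than by $e^{-\Omega(pN)}$, but either estimate suffices.
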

\begin{proof}
Choose $k = \left\lceil 6\ln\left(5s/\eps\right)\right\rceil \le \polylog(N)$. Let $\rho$ be a random restriction where $p = \Pr[*]$ is the probability given in the statement of \Cref{thm:qmah_random_restriction}. A simple calculation shows that $pN \ge \frac{N^{\Omega(1)}}{\polylog(N)}$; hence $k \le \frac{pN}{2}$ for sufficiently large $N$. Let $g \in \mathcal{D}_k$ be the function (depending on $\rho$) that minimizes $\disagr_{f_\rho}(g)$. With this, we have the following chain of inequalities:
\begin{align*}
\disagr_{\PARITY_N}(f)
&= \E_\rho\left[\disagr_{\PARITY_{N|\rho}}\left(f_\rho\right)\right]\\
&\ge \E_\rho\left[\disagr_{\PARITY_{N|\rho}}\left(g\right) - \disagr_{f_\rho}\left(g\right)\right]\\
&\ge \E_\rho\left[\disagr_{\PARITY_{N|\rho}}\left(g\right)\right] - 4se^{-k/6}\\
&\ge \frac{1}{2}\Pr_\rho\left[|\{i \in [N]: \rho(i) = *\}| > k\right] - 4se^{-k/6}\\
&\ge \frac{1}{2}\left(1 - e^{-k/4}\right) - 4se^{-k/6}\\
&\ge \frac{1}{2} - 5se^{-k/6}\\
&\ge \frac{1}{2} - \eps.
\end{align*}
Above, the first line holds by \Cref{prop:disagr_random_equal}; the second line holds by \Cref{prop:disagr_triangle_inequality}; the third line applies linearity of expectation along with the bound from \Cref{thm:qmah_random_restriction}; the fourth line uses the fact that any function of deterministic query complexity $k$ disagrees with the $(k+1)$-bit parity function on exactly half of all inputs; the fifth line uses a Chernoff bound (\Cref{fact:chernoff}) and $k \le \frac{pN}{2}$; the sixth line substitutes $\frac{e^{-k/4}}{2} \le e^{-k/4} \le e^{-k/6} \le se^{-k/6}$; and the last line substitutes the definition of $k$.
\end{proof}

To complete the oracle result of this section, we require the following analogue of Furst-Saxe-Sipser \cite{FSS84} (\Cref{lem:furst-saxe-sipser}) for $\mathsf{QMAH}$.

\begin{proposition}
\label{prop:qmah_fss}
For some constant $d$, let $M$ be a $\mathsf{PromiseQMAH}_d$ oracle machine (i.e. a tuple of $\mathsf{PromiseQMA}$ oracle machines $\langle M_1, \ldots, M_d \rangle$), and let $p(n)$ be a polynomial upper bound on the runtime of each $M_i$ on inputs of length $n$. Define $p^d(n) \coloneqq \underbrace{p(p( \cdots p}_{d \text{\rm \ times}}(n)))$. Then for any $x \in \{0,1\}^n$, there is a circuit $C$ of size at most $2^{\poly(n)}$ and depth $d$ in which each gate has $\mathsf{QMA}$ query complexity at most $p^d(n)$, such that for any oracle $\mathcal{O}: \{0,1\}^* \to \{0,1\}$, we have:
\[
M^\mathcal{O}(x) = C\left(\mathcal{O}_{[p^d(n)]}\right),
\]
where $\mathcal{O}_{[p^d(n)]}$ denotes the concatenation of the bits of $\mathcal{O}$ on all strings of length at most $p^d(n)$.
\end{proposition}
\begin{proof}
We prove by induction on $d$. In the base case $d = 1$, we simply have a $\mathsf{PromiseQMA}$ machine. Thus, $M_1^\mathcal{O}(x)$ is a partial function of $\mathsf{QMA}$ query complexity at most $p(n)$ in the bits of $\mathcal{O}$, and since $M_1$ runs in time at most $p(n)$, it can only query bits of $\mathcal{O}$ up to length at most $p(n)$. We may view this $\mathsf{QMA}$ query function as a circuit of the desired form consisting of only a single gate.

For the inductive step, let $d > 1$. We can view $M$ as a $\mathsf{PromiseQMA}^\mathsf{PromiseQMAH_{d-1}}$ machine, where $M_d$ is the base $\mathsf{PromiseQMA}$ machine and $M' \coloneqq \langle M_1,\ldots,M_{d-1} \rangle$ is the $\mathsf{PromiseQMAH}_{d-1}$ machine. $M_d^{M'^{\mathcal{O}}}(x)$ is a partial function of $\mathsf{QMA}$ query complexity at most $p(n)$ in the bits of $M'^{\mathcal{O}}$.\footnote{Here, we slightly abuse notation to let $M'^{\mathcal{O}}$ denote the promise problem decided by $M'$ with oracle $\mathcal{O}$. Also observe that, as in the proof of \Cref{prop:fss_ph^promisebqp}, the notion of promise problem queries defined in \Cref{sec:complexity_classes} is consistent with the way we extend the domain of circuit gates to $\{0,1,\bot\}$ in \Cref{sec:circuit_complexity}.} We take this partial function to be the top gate of our circuit, and use the inductive hypothesis to replace the inputs to this gate, the bits of $M'^{\mathcal{O}}$, with depth-$(d-1)$ circuits.

Since $M_1$ runs in time at most $p(n)$, it can only query bits of $M'^{\mathcal{O}}$ up to length at most $p(n)$. By the inductive hypothesis, for each $y \in \{0,1\}^{m}$ with $m \le p(n)$, $M'^{\mathcal{O}}(y)$ is computed by a circuit of size $2^{\poly(m)} \le 2^{\poly(n)}$ and depth $d$, with $\mathsf{QMA}$ query complexity at most $p^{d-1}(m) \le p^d(n)$ at each gate, where the inputs to this circuit are the bits of $\mathcal{O}$ on inputs of length at most $p^{d-1}(m) \le p^d(n)$. So, the resulting circuit obtained by composing the top gate with these circuits clearly has depth $d$, query complexity at most $p^d(n)$ at each gate, and depends only on $\mathcal{O}_{[p^d(n)]}$. The total size of this circuit is upper bounded by:
\[
1 + \sum_{m=0}^{p(n)} 2^{m} \cdot 2^{\poly(m)} \le 2^{\poly(n)},
\]
which proves the proposition.
\end{proof}

Via standard complexity-theoretic techniques, this implies the following:
\begin{corollary}
\label{cor:pp_not_in_qmah}
$\mathsf{PP}^\mathcal{O}\not \subset \mathsf{QMAH}^\mathcal{O}$ with probability $1$ over a random oracle $\mathcal{O}$.
\end{corollary}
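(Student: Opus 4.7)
The plan is to mimic the structure of the proof of \Cref{cor:bqp_sigma_k}, substituting the parity lower bound of \Cref{cor:qmah_parity_disagreement} for the $\Sipser_d$ bound used there. Define the unary language
\[
L^\mathcal{O} \coloneqq \left\{0^n : \PARITY_N\!\left(\mathcal{O}(y^{1,n}),\ldots,\mathcal{O}(y^{N,n})\right) = 1\right\},
\]
where $N = 2^n$ and $y^{i,n}$ is the lexicographically $i$-th string of length $n$. Counting the $1$'s among these oracle bits is a $\mathsf{\#P}^\mathcal{O}$ computation, so their parity lies in $\mathsf{P}^{\mathsf{\#P}^\mathcal{O}} \supseteq \mathsf{PP}^\mathcal{O}$; hence $L^\mathcal{O} \in \mathsf{PP}^\mathcal{O}$ for every $\mathcal{O}$. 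It therefore suffices to show that for every fixed $\mathsf{QMAH}$ oracle machine $M$, $\Pr_\mathcal{O}[M^\mathcal{O} \text{ decides } L^\mathcal{O}] = 0$; a union bound over the countably many such $M$ then yields the corollary.

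The next step is to unroll $M$ into a circuit of $\mathsf{QMA}$ query gates. If $M$ recurses through $k = O(1)$ layers of $\mathsf{PromiseQMA}$ and each verifier runs in time $\poly(n)$, then on input $0^n$ the acceptance probability of $M^\mathcal{O}(0^n)$ is computed by a depth-$k$ circuit whose gates are $\mathsf{QMA}$ verifiers of query complexity $\poly(n) = \polylog(N)$, whose total size is at most $\quasipoly(N)$, and whose leaves query individual bits of $\mathcal{O}$. Fixing the oracle on all inputs of lengths other than $n$ collapses this to a circuit on the $N$ remaining bits that exactly matches the hypotheses of \Cref{cor:qmah_parity_disagreement}, so that corollary bounds its agreement with $\PARITY_N$ by $\tfrac{1}{2} + o(1)$. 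Averaging over the uniformly random bits of $\mathcal{O}$ at length $n$ gives
\[
\Pr_\mathcal{O}\!\left[M^\mathcal{O}(0^n) = L^\mathcal{O}(0^n)\right] \le \tfrac{1}{2} + o(1).
\]

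To upgrade this per-input failure into almost-sure overall failure, I will follow the Borel--Cantelli-style argument already used in \Cref{thm:bqp=pp_ph_infinite} and \Cref{cor:bqp_sigma_k}: choose an infinite sequence $n_1 < n_2 < \cdots$ (say $n_{i+1} = 2^{n_i}$) spaced so far apart that $M(0^{n_i})$ queries oracle bits of length $\ge n_{i+1}$ for only finitely many $i$. For the remaining indices, the oracle bits at length $n_i$ are stochastically independent of the events that $M$ correctly decided $L^\mathcal{O}(0^{n_j})$ for $j < i$, so the per-input bound persists under conditioning and the product
\[
\prod_{i=1}^\infty \Pr_\mathcal{O}\!\left[M^\mathcal{O}(0^{n_i}) = L^\mathcal{O}(0^{n_i}) \mid M^\mathcal{O} \text{ correct on } 0^{n_1},\ldots,0^{n_{i-1}}\right]
\]
equals $0$. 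Thus $\Pr_\mathcal{O}[M^\mathcal{O} \text{ decides } L^\mathcal{O}] = 0$, and a union bound over the countably many $\mathsf{QMAH}$ machines completes the proof.

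The main obstacle I anticipate is the unrolling step: rigorously writing a $\mathsf{QMAH}$ machine whose gates recursively call $\mathsf{PromiseQMA}$ oracles as a single constant-depth circuit of $\mathsf{QMA}$ query gates meeting the parameter budget of \Cref{cor:qmah_parity_disagreement}. The subtlety is that $\mathsf{PromiseQMA}$ oracles are only guaranteed to behave well on promise inputs, so following the convention of \Cref{sec:complexity_classes} one must check that at each recursive level every intermediate gate computes a well-defined partial function into $\{0,1,\bot\}$, that $\bot$ answers propagate consistently into the next layer's verifier, and that the witness-length-oblivious convention of \Cref{def:qma_query_complexity} is preserved throughout the unrolling. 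Once this bookkeeping is settled, the remainder of the argument is essentially the random-oracle Borel--Cantelli template already executed in \Cref{cor:bqp_sigma_k}.
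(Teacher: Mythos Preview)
Your approach is essentially the same as the paper's: define $L^\mathcal{O}$ via the parity of a block of oracle bits, appeal to \Cref{cor:qmah_parity_disagreement} to show no $\mathsf{QMAH}$ machine computes it on average, and conclude via the standard independence argument across widely-spaced input lengths. Your discussion of the unrolling step and the promise-oracle subtleties is more explicit than the paper's sketch, but the structure is identical.

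There is, however, a genuine logical slip in your first paragraph. From $L^\mathcal{O} \in \mathsf{P}^{\mathsf{\#P}^\mathcal{O}}$ and $\mathsf{PP}^\mathcal{O} \subseteq \mathsf{P}^{\mathsf{\#P}^\mathcal{O}}$ you conclude $L^\mathcal{O} \in \mathsf{PP}^\mathcal{O}$; the inclusion is pointing the wrong way for that inference. Parity is $\oplus\mathsf{P}$-complete and is not known to lie in $\mathsf{PP}$, so you cannot place $L^\mathcal{O}$ directly in $\mathsf{PP}^\mathcal{O}$ this way. The paper sidesteps this by first observing that $\mathsf{PP} \subseteq \mathsf{QMAH}$ iff $\mathsf{P}^{\mathsf{\#P}} \subseteq \mathsf{QMAH}$ (since $\mathsf{QMAH}$ is closed under polynomial-time Turing reductions and $\mathsf{P}^{\mathsf{PP}} = \mathsf{P}^{\mathsf{\#P}}$), and then it suffices to exhibit $L^\mathcal{O} \in \mathsf{P}^{\mathsf{\#P}^\mathcal{O}} \setminus \mathsf{QMAH}^\mathcal{O}$. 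With this one-line fix, the rest of your argument goes through as written.
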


\begin{proof}[Proof]
Note that $\mathsf{PP}^\mathcal{O} \subseteq \mathsf{QMAH}^\mathcal{O}$ if and only if $\mathsf{P}^{\mathsf{\# P}^\mathcal{O}} \subseteq \mathsf{QMAH}^\mathcal{O}$, just because $\mathsf{QMAH}^\mathcal{O}$ is closed under polynomial-time reductions. Hence, it suffices to show that $\mathsf{P}^{\mathsf{\# P}^\mathcal{O}} \not\subset \mathsf{QMAH}^\mathcal{O}$.

Let $L^\mathcal{O}$ be the language consisting of strings $0^n$ such that, if we treat $n$ as an index into a portion of $\mathcal{O}$ of size $2^n$, then the parity of that length-$2^n$ string is $1$. Then $L^\mathcal{O} \in \mathsf{P}^{\mathsf{\# P}^\mathcal{O}}$ (indeed, $L^\mathcal{O} \in \mathsf{\oplus P}^\mathcal{O}$).

It remains to show that $L^\mathcal{O} \not\in \mathsf{QMAH}^\mathcal{O}$. (The remainder of this proof is largely the same as our other oracle separations that follow from circuit lower bounds.) Fix a $\mathsf{QMAH}$ oracle machine $M$. By the union bound, it suffices to show that
\[
\Pr_{\mathcal{O}}\left[M^\mathcal{O} \text{ decides } L^\mathcal{O}\right]=0.
\]

Let $n_1<n_2<\cdots$ be an infinite sequence of input lengths, spaced far enough apart (e.g. $n_{i+1}=2^{n_i}$) such that $M\left(0^{n_i}\right)$ can query the oracle on strings of length $n_{i+1}$ or greater for at most finitely many values of $i$. Next, let
\[
p(M,i)\coloneqq\Pr_{\mathcal{O}}\left[M^\mathcal{O} \text{ correctly decides } 0^{n_i}|M^\mathcal{O} \text{ correctly decided }0^{n_1},\dots,0^{n_{i-1}}\right]
\]
Then we have that
\[
\Pr_{\mathcal{O}}\left[M^\mathcal{O}\text{ decides }L^\mathcal{O}\right]\leq\prod_{i=1}^\infty p(M,i).
\]

\noindent Thus it suffices to show that, for every fixed $M$, we have $p(M,i)\leq 0.7$ for all but finitely many $i$. \Cref{prop:qmah_fss} shows that $M$'s behavior on $0^{n_i}$ can be computed by a circuit of size at most $2^{\poly(n_i)}$ and depth $O(1)$ in which each gate has $\mathsf{QMA}$ query complexity at most $\poly(n_i)$. \Cref{cor:qmah_parity_disagreement} with $N = 2^{n_i}$, $R = \poly(n_i)$, and $\eps = 0.2$ shows that such a circuit correctly evaluates the $\PARITY_N$ function with probability greater than $0.7$ for at most finitely many $i$. This even holds conditioned on $M^\mathcal{O}$ correctly deciding $0^{n_1},\ldots,0^{n_{i-1}}$, because the size-$2^{n_i}$ $\PARITY$ instance is chosen independently of the smaller instances, and because $M\left(0^{n_i}\right)$ can query the oracle on strings of length $n_{i+1}$ or greater for at most finitely many values of $i$.
\end{proof}

\subsection{Beyond \texorpdfstring{$\mathsf{QMAH}$}{QMAH}}
Our proof that $\mathsf{PP} \not\subset \mathsf{QMAH}$ relative to a random oracle also extends to complexity classes that are potentially much stronger than $\mathsf{QMAH}$. This is because our definition of $\mathsf{QMA}$ query complexity (\Cref{def:qma_query_complexity}) only depends on the number of queries made by the verifier, and not on the length of the witness state. Hence, $\mathsf{QMA}$ query complexity actually upper bounds the relativized power of almost any complexity class that involves interactive proofs with a polynomial-time quantum verifier, including $\mathsf{QMA}(2)$ \cite{KMY03}, $\mathsf{QSZK}$ \cite{Wat02}, and $\mathsf{QMIP}$ \cite{KM02}. To illustrate, we argue briefly that $\mathsf{PP} \not\subset \mathsf{QMIP}^{\mathsf{PromiseQMIP}^{\mathsf{PromiseQMIP}^{\cdots}}}$ relative to a random oracle.

Recall that $\mathsf{PromiseQMIP}$ is the set of promise problems $\Pi$ for which there exists an efficient \textit{quantum multiprover interactive proof system}: a communication protocol in which one or more provers communicate with a verifier, trying to convince the verifier that $\Pi(x) = 1$. The verifier is a polynomial time machine that can send and receive quantum messages. The provers are computationally unbounded, and may share an entangled state at the start of the protocol. Otherwise, the provers are not allowed to communicate with each other during the protocol. Then, $\Pi(x) = 1$ if there exists a prover strategy that causes the verifier to accept with probability at least $\frac{2}{3}$, while $\Pi(x) = 0$ if, for every prover strategy, the verifier accepts with probability at most $\frac{1}{3}$.

The key observation is that a $\poly(n)$-time $\mathsf{QMIP}$ oracle protocol can be simulated by a $\poly(n)$-query $\mathsf{QMA}$ protocol in which the verifier receives an arbitrarily long witness, and the verifier is computationally unbounded. In this $\mathsf{QMA}$ protocol, the witness is interpreted as a string that is purported to encode the answers to all oracle queries on at most $\poly(n)$ bits. The verifier then simulates the $\mathsf{QMIP}$ protocol, choosing the prover strategy that causes the $\mathsf{QMIP}$ verifier to accept with the greatest possible probability when the oracle is consistent with the given witness. Finding this optimal strategy is merely a computational problem, and so the $\mathsf{QMA}$ verifier remains query-efficient.

Thus, we can extend \Cref{prop:qmah_fss} from $\mathsf{QMAH}$ oracle machines to $\mathsf{QMIP}^{\mathsf{PromiseQMIP}^{\mathsf{PromiseQMIP}^{\cdots}}}$ oracle machines. It follows, using the same proof as \Cref{cor:pp_not_in_qmah}, that $\mathsf{PP} \not\subset \mathsf{QMIP}^{\mathsf{PromiseQMIP}^{\mathsf{PromiseQMIP}^{\cdots}}}$ relative to a random oracle---despite the fact that $\mathsf{QMIP} = \mathsf{MIP}^{*}=\mathsf{RE}$ in the unrelativized world \cite{RUV13,JNVWY20}!

\section{Open Problems\label{OPEN}}

\subsection{Oracles where \texorpdfstring{$\mathsf{BQP} = \mathsf{EXP}$}{BQP = EXP}}

We construct oracles relative to which $\mathsf{BQP} = \mathsf{P}^{\mathsf{\# P}}$ and yet either $\mathsf{PH}$ is infinite (\Cref{thm:bqp=pp_ph_infinite}), or $\mathsf{P} = \mathsf{NP}$ (\Cref{thm:p=np_bqp=pp}). Can these be strengthened to oracles where we also have $\mathsf{BQP} = \mathsf{EXP}$? The main challenge in generalizing our proofs is that $\mathsf{P^{\# P}}$ machines, unlike $\mathsf{EXP}$ machines, have a polynomial upper bound on the length of the queries they can make. This property allowed us to encode the behavior of a $\mathsf{P^{\# P}}$ machine $M$ into a part of the oracle that $M$ cannot query, but that a $\mathsf{BQP}$ machine with a larger polynomial running time \textit{can} query. Alas, such a simple trick will not work when $M$ is an $\mathsf{EXP}$ machine. Nevertheless, there exist alternative tools that can collapse $\mathsf{EXP}$ to such weaker complexity classes. For instance, Heller \cite{Hel86} gives an oracle relative to which $\mathsf{BPP} = \mathsf{EXP}$. Beigel and Maciel \cite{BM99} even construct an oracle relative to which $\mathsf{P} = \mathsf{NP}$ and $\mathsf{\oplus P} = \mathsf{EXP}$, using $\mathsf{AC^0}$ circuit lower bounds for the $\PARITY$ problem that are analogous to the lower bounds we use for $\Forrelation$.

\subsection{Finer Control over \texorpdfstring{$\mathsf{BQP}$}{BQP} and \texorpdfstring{$\mathsf{PH}$}{PH}}
Recall \Cref{conj:sigma_k_in_bqp_sigma_k+1_not}, which states that for every $k$, there exists an oracle relative to which $\mathsf{\Sigma}_{k}^{\mathsf{P}}\subseteq\mathsf{BQP}$\ but $\mathsf{\Sigma}_{k+1}^{\mathsf{P}}\not \subset \mathsf{BQP}$. We conjecture more strongly that a small modification of the oracle $\mathcal{O}$ constructed in \Cref{thm:bqp=pp_ph_infinite} achieves this. Recall that $\mathcal{O}$ consists of a random oracle $A$, and an oracle $B$ that recursively hides the answers to all possible $\mathsf{P}^{\mathsf{\# P}^\mathcal{O}}$ queries in instances of the $\Forrelation$ problem. The idea is simply to modify the definition of $B$ so that it instead encodes the outputs of $\mathsf{\Sigma}_{k}^{\mathsf{P}}$ machines instead of $\mathsf{P}^{\mathsf{\# P}}$ machines.

Our intuition is that, because the $\Forrelation$ instances look random to $\mathsf{\Sigma}_{k}^{\mathsf{P}}$ machines, a $\mathsf{\Sigma}_{k}^{\mathsf{P}}$ machine should not be able to recursively reason about $B$. Thus, a $\mathsf{BQP}$ machine that queries $\mathcal{O} = (A, B)$ should be effectively no more powerful than a $\mathsf{BQP}^{\mathsf{\Sigma}_{k}^{\mathsf{P}}}$ machine that queries only $A$. If this intuition can be made precise, then one could possibly appeal to our proof that $\mathsf{\Sigma}_{k+1}^{\mathsf{P}}\not \subset \mathsf{BQP}^{\mathsf{\Sigma}_{k}^\mathsf{P}}$ relative to a random oracle. Of course, we could not get this proof strategy to work---otherwise, we would not have needed the machinery surrounding sensitivity concentration of $\mathsf{AC^0}$ circuits in order to get an oracle where $\mathsf{NP}^\mathsf{BQP} \not\subset \mathsf{BQP}^\mathsf{NP}$!

We now sketch what we consider a viable alternative approach towards showing that our conjectured oracle separation holds. Instead of the ``top-down'' view taken above, where one tries to argue that a $\mathsf{\Sigma}_{k}^{\mathsf{P}}$ machine gains no benefit from making recursive queries to $B$, one might instead attempt a ``bottom-up'' approach, where one uses the structure of the target $\mathsf{\Sigma}_{k+1}^\mathsf{P}$ problem (the $\Sipser_{k+2}$ function) to argue that each bit of $B$ has only minimal correlation with the answer, starting with the parts of $B$ that are constructed first. Very roughly speaking, our idea would be to combine the random projection technique of \cite{HRST17} with some generalization of the $\mathsf{AC^0}$ sensitivity concentration bounds that we prove in \Cref{sec:implications_of_ac0_concentration}.

In slightly more detail, we would first hit $A$ with a random projection, one that with high probability turns the $\Sipser_{k+2}$ function into an $\AND$ of large fan-in, while turning any $\mathsf{\Sigma}_{k}^{\mathsf{P}^A}$ machine into a low-depth decision tree. Then, we would want to argue that if we fix the unrestricted variables of $A$ to all $1$s, and choose $\Forrelation$ instances in $B$ consistent with this, then each bit of $B$ is unlikely to flip if we instead randomly change a few bits of $A$ to $0$s, and resample the $\Forrelation$ instances of $B$ corresponding to $\mathsf{\Sigma}_{k}^{\mathsf{P}}$ machines that return different answers. If this could be shown, then as in \Cref{thm:bqp_sigma_k_query_version}, an appeal to \Cref{lem:average_case_bbbv} (which is a modification of the BBBV Theorem \cite{BBBV97}) ought to be sufficient to argue that a $\mathsf{BQP}^\mathcal{O}$ machine could not compute the $\Sipser_{k+2}$ function.

For the bits of $B$ corresponding to the bottom-level $\mathsf{\Sigma}_{k}^{\mathsf{P}}$ machines that only query $A$ directly, this is easy to show, as a low-depth decision tree is unlikely to query any $0$s under a distribution of mostly $1$s. However, for the higher levels of $B$ corresponding to $\mathsf{\Sigma}_{k}^{\mathsf{P}}$ machines that can query the earlier bits of $B$, this becomes more challenging: we have to argue that a $\mathsf{\Sigma}_{k}^{\mathsf{P}}$ machine that queries a long list of $\Forrelation$ instances is unlikely to return a different answer when we randomly flip a few of the instances between the uniform and Forrelated distributions. This might require a generalization of \Cref{lem:ac0_single_forrelated_block_indistinguishable} in which (1) the string $x$ is not just uniformly random, but is an arbitrary sequence of Forrelated and uniformly random rows, and (2) instead of flipping a single random row of $x$ from uniformly random to Forrelated, we flip an arbitrary subset of the rows between random and Forrelated, subject only to the constraint that the probability of any individual row being chosen is small.

If this problem is too difficult, it remains interesting, in our view, to give an oracle where $\mathsf{NP} \subseteq \mathsf{BQP}$ but $\mathsf{PH} \not\subset \mathsf{BQP}$. This would merely require proving our proposed generalization of \Cref{lem:ac0_single_forrelated_block_indistinguishable} for low-width DNF formulas, as opposed to arbitrary $\mathsf{AC^0}$ circuits of quasipolynomial size.

\subsection{Stronger Random Restriction Lemmas}
Can one prove a sharper version of our random restriction lemma for $\mathsf{QMA}$ query algorithms (\Cref{thm:qma_random_restriction})? Unlike the switching lemma for DNF formulas (\Cref{lem:dnf_random_restriction}), our result has a quantitative dependence on the number of inputs $N$. Thus, whereas a $\polylog(N)$-width DNF simplifies (to a low-depth decision tree, with high probability) under a random restriction with $\Pr[*] = \frac{1}{\polylog(N)}$, we can only show that a $\polylog(N)$-query $\mathsf{QMA}$ algorithm simplifies under a random restriction with $\Pr[*] = \frac{1}{\sqrt{N}\polylog(N)}$, which leaves much fewer unrestricted variables. We see no reason why such a dependence on $N$ should be necessary, and we conjecture that a $\polylog(N)$-query $\mathsf{QMA}$ algorithm should simplify greatly under a random restriction with $\Pr[*] = \frac{1}{\polylog(N)}$. It would be interesting to see whether one could prove this even without a bound on the $\mathsf{QMA}$ witness length, as we do in our proofs.

It is also worth exploring whether our random restriction lemma could be generalized to other classes of functions. Our argument works for functions of low quantum query complexity, so it is natural to ask: is there a comparable random restriction lemma for bounded low-degree polynomials, and thus functions of low \textit{approximate degree}? Kabanets, Kane, and Lu \cite{KKL17} exhibit a random restriction lemma for \textit{polynomial threshold functions}, an even stronger class of functions, though their bounds become very weak when the degree is much larger than $\sqrt{\log N}$. We conjecture that an analogue of \Cref{thm:qma_random_restriction} should hold if we replace low $\mathsf{QMA}$ query complexity by low approximate degree, perhaps even with better quantitative parameters.\footnote{One could conceivably even show this by simply proving that \textit{every} partial function with low approximate degree also has low $\mathsf{QMA}$ query complexity, made easier by the fact that our definition of $\mathsf{QMA}$ query complexity allows for unbounded witness length. This is an easier version of the problem of showing whether approximate degree and quantum query complexity are polynomially related for all partial functions, which remains an open problem.}

\subsection{Collapsing \texorpdfstring{$\mathsf{QMAH}$}{QMAH} to \texorpdfstring{$\mathsf{P}$}{P}}
In \Cref{cor:pp_not_in_qmah}, we gave an oracle relative to which $\mathsf{PP} \not\subset \mathsf{QMAH}$ (indeed, we showed that this holds even for a random oracle). Can one generalize this to an oracle relative to which $\mathsf{P}=\mathsf{QMA}=\mathsf{QMAH}\neq\mathsf{PP}$? A priori, it might seem that one could use techniques similar to the ones we used in \Cref{thm:p=np_bqp=pp} to set $\mathsf{P} = \mathsf{NP}$ while still keeping $\mathsf{P} \neq \mathsf{P^{\# P}}$. That is, the idea would be to start with a random oracle $A$, then inductively construct an oracle $B$, recursively encoding into $B$ answers to all $\mathsf{QMA}$ machines that query earlier parts of $A$ and $B$. One would then hope to prove an analogue of \Cref{lem:recursive_np_ac0}, showing that the bits of $B$ can be computed by small low-depth circuits where the gates are functions of low $\mathsf{QMA}$ query complexity, and the inputs are in $A$. Finally, one could appeal to \Cref{cor:qmah_parity_disagreement} to argue that such a circuit cannot compute $\PARITY$.

The main issue is that $\mathsf{QMA}$ is a semantic complexity class, in contrast to $\mathsf{NP}$, which is a syntactic complexity class. This is to say that every $\mathsf{NP}$ machine defines a language, whereas a $\mathsf{QMA}$ machine only defines a promise problem. Hence, it is not clear how $B$ should answer on machines that fail to satisfy the $\mathsf{QMA}$ promise without ``leaking'' information that would otherwise be difficult to compute. Even if we, say, assign those bits of $B$ randomly, we can no longer argue that those bits are computable by a $\mathsf{QMA}$ query algorithm, which would break our idea for generalizing \Cref{lem:recursive_np_ac0}.

To illustrate the difficulty in constructing such an oracle, we describe an example of an oracle $\mathcal{O} = (A, B)$ that \textit{fails} to put $\mathsf{PP}$ outside $\mathsf{QMAH}$. We start by taking a random oracle $A$. Then, we inductively construct $B$, where each bit of $B$ encodes the behavior of a $\mathsf{QMA}^\mathcal{O}$ verifier $\langle M, x \rangle$, where $M$ can query the previously constructed parts of the oracle, as follows. We let $p \coloneqq \max_{\ket{\psi}} \Pr[M(x, \ket{\psi})] = 1$, and then we randomly choose the encoded bit to be $1$ with probability $p$ and $0$ with probability $1-p$. This is to say that we set the bit to $1$ with probability equaling the acceptance probability of the $\mathsf{QMA}$ verifier, maximized over all possible witness states $\ket{\psi}$.

Unfortunately, while one can easily show that $\mathsf{BPP}^\mathcal{O} = \mathsf{QMA}^\mathcal{O}$, $\mathcal{O}$ also allows an algorithm to ``pull the randomness out'' of a quantum algorithm, which makes $\mathcal{O}$ much more powerful than it seems! By padding $\langle M, x \rangle$ with extra bits, one can obtain from the oracle arbitrarily many independent bits sampled with bias $p$. Because $\mathsf{PH}^\mathcal{O} \subseteq \mathsf{BPP}^\mathcal{O}$, a $\mathsf{BPP}^\mathcal{O}$ machine can run Stockmeyer's algorithm \cite{Sto83} on these samples to obtain a multiplicative approximation of any such $p$. In particular, this implies that the quantum approximate counting problem, defined in \Cref{sec:results}, is in $\mathsf{BPP}^\mathcal{O}$. But the quantum approximate counting problem is $\mathsf{PP}^\mathcal{O}$-hard \cite{Kup15}, so we also have $\mathsf{BPP}^\mathcal{O} = \mathsf{PP}^\mathcal{O}$. Hence, any oracle that makes $\mathsf{P} = \mathsf{QMA} \neq \mathsf{PP}$ would have to choose a more careful encoding of the answers to $\mathsf{QMA}$ problems than the one described here.

\section{Acknowledgments}

We thank Lance Fortnow, Greg Kuperberg, Patrick Rall, and Avishay Tal for helpful conversations. We are especially grateful to Avishay Tal for providing us with a proof of \Cref{cor:ac0_block_sensitivity_tail_bound}. We further thank Chinmay Nirkhe for finding an error in the proof of \Cref{lem:sparse_aaronson_ambainis_states}.

\bibliographystyle{alphaurl}
\bibliography{MainBibliographyOld}

\newcommand{\etalchar}[1]{$^{#1}$}
\begin{thebibliography}{BCGW21}

\bibitem[AA13]{AA13}
Scott Aaronson and Alex Arkhipov.
\newblock The computational complexity of linear optics.
\newblock {\em Theory of Computing}, 9(4):143--252, 2013.
\newblock \href {https://doi.org/10.4086/toc.2013.v009a004}
  {\path{doi:10.4086/toc.2013.v009a004}}.

\bibitem[AA14]{AA14}
Scott Aaronson and Andris Ambainis.
\newblock The need for structure in quantum speedups.
\newblock {\em Theory of Computing}, 10(6):133--166, 2014.
\newblock \href {https://doi.org/10.4086/toc.2014.v010a006}
  {\path{doi:10.4086/toc.2014.v010a006}}.

\bibitem[AA18]{AA18}
Scott Aaronson and Andris Ambainis.
\newblock Forrelation: A problem that optimally separates quantum from
  classical computing.
\newblock {\em SIAM Journal on Computing}, 47(3):982--1038, 2018.
\newblock \href {https://doi.org/10.1137/15M1050902}
  {\path{doi:10.1137/15M1050902}}.

\bibitem[AAB{\etalchar{+}}19]{AAB+19}
Frank Arute, Kunal Arya, Ryan Babbush, et~al.
\newblock Quantum supremacy using a programmable superconducting processor.
\newblock {\em Nature}, 574(7779):505--510, 2019.
\newblock \href {https://doi.org/10.1038/s41586-019-1666-5}
  {\path{doi:10.1038/s41586-019-1666-5}}.

\bibitem[Aar05]{Aar05}
Scott Aaronson.
\newblock Quantum computing, postselection, and probabilistic polynomial-time.
\newblock {\em Proceedings of the Royal Society A}, 461:3473--3482, 2005.
\newblock \href {https://doi.org/10.1098/rspa.2005.1546}
  {\path{doi:10.1098/rspa.2005.1546}}.

\bibitem[Aar08]{Aar08}
Scott Aaronson.
\newblock Quantum certificate complexity.
\newblock {\em Journal of Computer and System Sciences}, 74(3):313--322, 2008.
\newblock Computational Complexity 2003.
\newblock \href {https://doi.org/10.1016/j.jcss.2007.06.020}
  {\path{doi:10.1016/j.jcss.2007.06.020}}.

\bibitem[Aar10]{Aar10}
Scott Aaronson.
\newblock {BQP} and the polynomial hierarchy.
\newblock In {\em Proceedings of the Forty-Second ACM Symposium on Theory of
  Computing}, STOC '10, pages 141--150, New York, NY, USA, 2010. Association
  for Computing Machinery.
\newblock \href {https://doi.org/10.1145/1806689.1806711}
  {\path{doi:10.1145/1806689.1806711}}.

\bibitem[AC17]{AC17}
Scott Aaronson and Lijie Chen.
\newblock {Complexity-Theoretic Foundations of Quantum Supremacy Experiments}.
\newblock In Ryan O'Donnell, editor, {\em 32nd Computational Complexity
  Conference (CCC 2017)}, volume~79 of {\em Leibniz International Proceedings
  in Informatics (LIPIcs)}, pages 22:1--22:67, Dagstuhl, Germany, 2017. Schloss
  Dagstuhl--Leibniz-Zentrum fuer Informatik.
\newblock \href {https://doi.org/10.4230/LIPIcs.CCC.2017.22}
  {\path{doi:10.4230/LIPIcs.CCC.2017.22}}.

\bibitem[ACGK19]{ACGK19}
Scott Aaronson, Alexandru Cojocaru, Alexandru Gheorghiu, and Elham Kashefi.
\newblock {Complexity-Theoretic Limitations on Blind Delegated Quantum
  Computation}.
\newblock In Christel Baier, Ioannis Chatzigiannakis, Paola Flocchini, and
  Stefano Leonardi, editors, {\em 46th International Colloquium on Automata,
  Languages, and Programming (ICALP 2019)}, volume 132 of {\em Leibniz
  International Proceedings in Informatics (LIPIcs)}, pages 6:1--6:13,
  Dagstuhl, Germany, 2019. Schloss Dagstuhl--Leibniz-Zentrum fuer Informatik.
\newblock \href {https://doi.org/10.4230/LIPIcs.ICALP.2019.6}
  {\path{doi:10.4230/LIPIcs.ICALP.2019.6}}.

\bibitem[AD14]{AD14}
Scott Aaronson and Andrew Drucker.
\newblock A full characterization of quantum advice.
\newblock {\em SIAM Journal on Computing}, 43(3):1131--1183, 2014.
\newblock \href {https://doi.org/10.1137/110856939}
  {\path{doi:10.1137/110856939}}.

\bibitem[ADH97]{ADH97}
Leonard~M. Adleman, Jonathan DeMarrais, and Ming-Deh~A. Huang.
\newblock Quantum computability.
\newblock {\em SIAM Journal on Computing}, 26(5):1524--1540, 1997.
\newblock \href {https://doi.org/10.1137/S0097539795293639}
  {\path{doi:10.1137/S0097539795293639}}.

\bibitem[Adl78]{Adl78}
Leonard Adleman.
\newblock Two theorems on random polynomial time.
\newblock In {\em 19th Annual Symposium on Foundations of Computer Science
  (SFCS 1978)}, pages 75--83, 1978.
\newblock \href {https://doi.org/10.1109/SFCS.1978.37}
  {\path{doi:10.1109/SFCS.1978.37}}.

\bibitem[AKKT20]{AKKT20}
Scott Aaronson, Robin Kothari, William Kretschmer, and Justin Thaler.
\newblock {Quantum Lower Bounds for Approximate Counting via Laurent
  Polynomials}.
\newblock In Shubhangi Saraf, editor, {\em 35th Computational Complexity
  Conference (CCC 2020)}, volume 169 of {\em Leibniz International Proceedings
  in Informatics (LIPIcs)}, pages 7:1--7:47, Dagstuhl, Germany, 2020. Schloss
  Dagstuhl--Leibniz-Zentrum f{\"u}r Informatik.
\newblock \href {https://doi.org/10.4230/LIPIcs.CCC.2020.7}
  {\path{doi:10.4230/LIPIcs.CCC.2020.7}}.

\bibitem[Amb18]{Amb18}
Andris Ambainis.
\newblock Understanding quantum algorithms via query complexity.
\newblock In {\em Proceedings of the 2018 International Congress of
  Mathematicians}, volume~3, pages 3249--3270, 2018.

\bibitem[BBBV97]{BBBV97}
Charles~H. Bennett, Ethan Bernstein, Gilles Brassard, and Umesh Vazirani.
\newblock Strengths and weaknesses of quantum computing.
\newblock {\em SIAM Journal on Computing}, 26(5):1510--1523, 1997.
\newblock \href {https://doi.org/10.1137/S0097539796300933}
  {\path{doi:10.1137/S0097539796300933}}.

\bibitem[BCGW21]{BCGW21}
Sergey Bravyi, Anirban Chowdhury, David Gosset, and Pawel Wocjan.
\newblock On the complexity of quantum partition functions, 2021.
\newblock \href {https://arxiv.org/abs/2110.15466} {\path{arXiv:2110.15466}}.

\bibitem[BFL91]{BFL91}
L\'{a}szl\'{o} Babai, Lance Fortnow, and Carsten Lund.
\newblock Non-deterministic exponential time has two-prover interactive
  protocols.
\newblock {\em computational complexity}, 1(1):3--40, 1991.
\newblock \href {https://doi.org/10.1007/BF01200056}
  {\path{doi:10.1007/BF01200056}}.

\bibitem[BGM06]{BGM05}
Elmar B{\"o}hler, Christian Gla{\ss}er, and Daniel Meister.
\newblock Error-bounded probabilistic computations between {MA} and {AM}.
\newblock {\em Journal of Computer and System Sciences}, 72(6):1043--1076,
  2006.
\newblock \href {https://doi.org/10.1016/j.jcss.2006.05.001}
  {\path{doi:10.1016/j.jcss.2006.05.001}}.

\bibitem[BGS75]{BGS75}
Theodore Baker, John Gill, and Robert Solovay.
\newblock Relativizations of the {P=?NP} question.
\newblock {\em SIAM Journal on Computing}, 4(4):431--442, 1975.
\newblock \href {https://doi.org/10.1137/0204037} {\path{doi:10.1137/0204037}}.

\bibitem[BHZ87]{BHZ87}
Ravi~B. Boppana, Johan H\r{a}stad, and Stathis Zachos.
\newblock Does co-{NP} have short interactive proofs?
\newblock {\em Inf. Process. Lett.}, 25(2):127--132, May 1987.
\newblock \href {https://doi.org/10.1016/0020-0190(87)90232-8}
  {\path{doi:10.1016/0020-0190(87)90232-8}}.

\bibitem[BJS10]{BJS10}
Michael~J. Bremner, Richard Jozsa, and Dan~J. Shepherd.
\newblock Classical simulation of commuting quantum computations implies
  collapse of the polynomial hierarchy.
\newblock {\em Proceedings of the Royal Society A}, 467:459--472, 2010.
\newblock \href {https://doi.org/10.1098/rspa.2010.0301}
  {\path{doi:10.1098/rspa.2010.0301}}.

\bibitem[BM99]{BM99}
Richard Beigel and Alexis Maciel.
\newblock Circuit lower bounds collapse relativized complexity classes.
\newblock In {\em Proceedings. Fourteenth Annual IEEE Conference on
  Computational Complexity (Formerly: Structure in Complexity Theory
  Conference) (Cat.No.99CB36317)}, pages 222--226, 1999.
\newblock \href {https://doi.org/10.1109/CCC.1999.766280}
  {\path{doi:10.1109/CCC.1999.766280}}.

\bibitem[BV97]{BV97}
Ethan Bernstein and Umesh Vazirani.
\newblock Quantum complexity theory.
\newblock {\em SIAM Journal on Computing}, 26(5):1411--1473, 1997.
\newblock \href {https://doi.org/10.1137/S0097539796300921}
  {\path{doi:10.1137/S0097539796300921}}.

\bibitem[For05]{For05}
Lance Fortnow.
\newblock Pulling out the quantumness [online].
\newblock December 2005.
\newblock URL:
  \url{https://blog.computationalcomplexity.org/2005/12/pulling-out-quantumness.html}.

\bibitem[FR98]{FR98}
Lance Fortnow and John Rogers.
\newblock Complexity limitations on quantum computation.
\newblock In {\em Proceedings. Thirteenth Annual IEEE Conference on
  Computational Complexity (Formerly: Structure in Complexity Theory
  Conference) (Cat. No.98CB36247)}, pages 202--209, 1998.
\newblock \href {https://doi.org/10.1109/CCC.1998.694606}
  {\path{doi:10.1109/CCC.1998.694606}}.

\bibitem[FSS84]{FSS84}
Merrick Furst, James~B. Saxe, and Michael Sipser.
\newblock Parity, circuits, and the polynomial-time hierarchy.
\newblock {\em Mathematical Systems Theory}, 17(1):13--27, 1984.
\newblock \href {https://doi.org/10.1007/BF01744431}
  {\path{doi:10.1007/BF01744431}}.

\bibitem[Gro96]{Gro96}
Lov~K. Grover.
\newblock A fast quantum mechanical algorithm for database search.
\newblock In {\em Proceedings of the Twenty-Eighth Annual ACM Symposium on
  Theory of Computing}, STOC '96, pages 212--219, New York, NY, USA, 1996.
  Association for Computing Machinery.
\newblock \href {https://doi.org/10.1145/237814.237866}
  {\path{doi:10.1145/237814.237866}}.

\bibitem[GSS{\etalchar{+}}18]{GSS+18}
Sevag Gharibian, Miklos Santha, Jamie Sikora, Aarthi Sundaram, and Justin
  Yirka.
\newblock {Quantum Generalizations of the Polynomial Hierarchy with
  Applications to QMA(2)}.
\newblock In Igor Potapov, Paul Spirakis, and James Worrell, editors, {\em 43rd
  International Symposium on Mathematical Foundations of Computer Science (MFCS
  2018)}, volume 117 of {\em Leibniz International Proceedings in Informatics
  (LIPIcs)}, pages 58:1--58:16, Dagstuhl, Germany, 2018. Schloss
  Dagstuhl--Leibniz-Zentrum fuer Informatik.
\newblock \href {https://doi.org/10.4230/LIPIcs.MFCS.2018.58}
  {\path{doi:10.4230/LIPIcs.MFCS.2018.58}}.

\bibitem[GSTW16]{GSTW16}
Parikshit Gopalan, Rocco Servedio, Avishay Tal, and Avi Wigderson.
\newblock Degree and sensitivity: tails of two distributions, 2016.
\newblock Earlier version in CCC 2016.
\newblock \href {https://arxiv.org/abs/1604.07432} {\path{arXiv:1604.07432}}.

\bibitem[H{\r{a}}s87]{Has87}
Johan H{\r{a}}stad.
\newblock {\em Computational Limitations of Small-Depth Circuits}.
\newblock MIT Press, Cambridge, MA, USA, 1987.

\bibitem[Hel86]{Hel86}
Hans Heller.
\newblock On relativized exponential and probabilistic complexity classes.
\newblock {\em Information and Control}, 71(3):231--243, 1986.
\newblock \href {https://doi.org/10.1016/S0019-9958(86)80012-2}
  {\path{doi:10.1016/S0019-9958(86)80012-2}}.

\bibitem[HHT97]{HHT97}
Yenjo Han, Lane~A. Hemaspaandra, and Thomas Thierauf.
\newblock Threshold computation and cryptographic security.
\newblock {\em SIAM Journal on Computing}, 26(1):59--78, 1997.
\newblock \href {https://doi.org/10.1137/S0097539792240467}
  {\path{doi:10.1137/S0097539792240467}}.

\bibitem[HRST17]{HRST17}
Johan H\r{a}stad, Benjamin Rossman, Rocco~A. Servedio, and Li-Yang Tan.
\newblock An average-case depth hierarchy theorem for {B}oolean circuits.
\newblock {\em J. ACM}, 64(5), August 2017.
\newblock \href {https://doi.org/10.1145/3095799} {\path{doi:10.1145/3095799}}.

\bibitem[Hua19]{Hua19}
Hao Huang.
\newblock Induced subgraphs of hypercubes and a proof of the sensitivity
  conjecture.
\newblock {\em Annals of Mathematics}, 190(3):949--955, 2019.
\newblock \href {https://doi.org/10.4007/annals.2019.190.3.6}
  {\path{doi:10.4007/annals.2019.190.3.6}}.

\bibitem[JNV{\etalchar{+}}20]{JNVWY20}
Zhengfeng Ji, Anand Natarajan, Thomas Vidick, John Wright, and Henry Yuen.
\newblock {MIP*=RE}, 2020.
\newblock \href {https://arxiv.org/abs/2001.04383} {\path{arXiv:2001.04383}}.

\bibitem[KKL17]{KKL17}
Valentine Kabanets, Daniel~M. Kane, and Zhenjian Lu.
\newblock A polynomial restriction lemma with applications.
\newblock In {\em Proceedings of the 49th Annual ACM SIGACT Symposium on Theory
  of Computing}, STOC 2017, pages 615--628, New York, NY, USA, 2017.
  Association for Computing Machinery.
\newblock \href {https://doi.org/10.1145/3055399.3055470}
  {\path{doi:10.1145/3055399.3055470}}.

\bibitem[KL80]{KL80}
Richard~M. Karp and Richard~J. Lipton.
\newblock Some connections between nonuniform and uniform complexity classes.
\newblock In {\em Proceedings of the Twelfth Annual ACM Symposium on Theory of
  Computing}, STOC '80, pages 302--309, New York, NY, USA, 1980. Association
  for Computing Machinery.
\newblock \href {https://doi.org/10.1145/800141.804678}
  {\path{doi:10.1145/800141.804678}}.

\bibitem[KM03]{KM02}
Hirotada Kobayashi and Keiji Matsumoto.
\newblock Quantum multi-prover interactive proof systems with limited prior
  entanglement.
\newblock {\em Journal of Computer and System Sciences}, 66(3):429--450, 2003.
\newblock \href {https://doi.org/10.1016/S0022-0000(03)00035-7}
  {\path{doi:10.1016/S0022-0000(03)00035-7}}.

\bibitem[KMY03]{KMY03}
Hirotada Kobayashi, Keiji Matsumoto, and Tomoyuki Yamakami.
\newblock Quantum {M}erlin-{A}rthur proof systems: Are multiple {M}erlins more
  helpful to {A}rthur?
\newblock In Toshihide Ibaraki, Naoki Katoh, and Hirotaka Ono, editors, {\em
  Algorithms and Computation}, pages 189--198, Berlin, Heidelberg, 2003.
  Springer Berlin Heidelberg.
\newblock \href {https://doi.org/10.1007/978-3-540-24587-2_21}
  {\path{doi:10.1007/978-3-540-24587-2_21}}.

\bibitem[Kre21]{Kre21c}
William Kretschmer.
\newblock {Quantum Pseudorandomness and Classical Complexity}.
\newblock In Min-Hsiu Hsieh, editor, {\em 16th Conference on the Theory of
  Quantum Computation, Communication and Cryptography (TQC 2021)}, volume 197
  of {\em Leibniz International Proceedings in Informatics (LIPIcs)}, pages
  2:1--2:20, Dagstuhl, Germany, 2021. Schloss Dagstuhl -- Leibniz-Zentrum
  f{\"u}r Informatik.
\newblock \href {https://doi.org/10.4230/LIPIcs.TQC.2021.2}
  {\path{doi:10.4230/LIPIcs.TQC.2021.2}}.

\bibitem[Kup15]{Kup15}
Greg Kuperberg.
\newblock How hard is it to approximate the {J}ones polynomial?
\newblock {\em Theory of Computing}, 11(6):183--219, 2015.
\newblock \href {https://doi.org/10.4086/toc.2015.v011a006}
  {\path{doi:10.4086/toc.2015.v011a006}}.

\bibitem[Lau83]{Lau83}
Clemens Lautemann.
\newblock {BPP} and the polynomial hierarchy.
\newblock {\em Information Processing Letters}, 17(4):215--217, 1983.
\newblock \href {https://doi.org/10.1016/0020-0190(83)90044-3}
  {\path{doi:10.1016/0020-0190(83)90044-3}}.

\bibitem[LMN93]{LMN93}
Nathan Linial, Yishay Mansour, and Noam Nisan.
\newblock Constant depth circuits, {F}ourier transform, and learnability.
\newblock {\em J. ACM}, 40(3):607--620, July 1993.
\newblock \href {https://doi.org/10.1145/174130.174138}
  {\path{doi:10.1145/174130.174138}}.

\bibitem[Mon12]{Mon12}
Ashley Montanaro.
\newblock Some applications of hypercontractive inequalities in quantum
  information theory.
\newblock {\em Journal of Mathematical Physics}, 53(12):122206, 2012.
\newblock \href {https://doi.org/10.1063/1.4769269}
  {\path{doi:10.1063/1.4769269}}.

\bibitem[OZ16]{OZ16}
Ryan O'Donnell and Yu~Zhao.
\newblock {Polynomial Bounds for Decoupling, with Applications}.
\newblock In Ran Raz, editor, {\em 31st Conference on Computational Complexity
  (CCC 2016)}, volume~50 of {\em Leibniz International Proceedings in
  Informatics (LIPIcs)}, pages 24:1--24:18, Dagstuhl, Germany, 2016. Schloss
  Dagstuhl--Leibniz-Zentrum fuer Informatik.
\newblock \href {https://doi.org/10.4230/LIPIcs.CCC.2016.24}
  {\path{doi:10.4230/LIPIcs.CCC.2016.24}}.

\bibitem[Ros17]{Ros17}
Benjamin Rossman.
\newblock An entropy proof of the switching lemma and tight bounds on the
  decision-tree size of {AC0}.
\newblock Manuscript, 2017.
\newblock URL: \url{https://users.cs.duke.edu/~br148/logsize.pdf}.

\bibitem[RS04]{RS04}
Ran Raz and Amir Shpilka.
\newblock On the power of quantum proofs.
\newblock In {\em Proceedings. 19th IEEE Annual Conference on Computational
  Complexity, 2004.}, pages 260--274, 2004.
\newblock \href {https://doi.org/10.1109/CCC.2004.1313849}
  {\path{doi:10.1109/CCC.2004.1313849}}.

\bibitem[RST15]{RST15}
Benjamin Rossman, Rocco~A. Servedio, and Li-Yang Tan.
\newblock Complexity theory column 89: The polynomial hierarchy, random
  oracles, and {B}oolean circuits.
\newblock {\em SIGACT News}, 46(4):50--68, December 2015.
\newblock \href {https://doi.org/10.1145/2852040.2852052}
  {\path{doi:10.1145/2852040.2852052}}.

\bibitem[RT19]{RT19}
Ran Raz and Avishay Tal.
\newblock Oracle separation of {BQP} and {PH}.
\newblock In {\em Proceedings of the 51st Annual ACM SIGACT Symposium on Theory
  of Computing}, STOC 2019, pages 13--23, New York, NY, USA, 2019. Association
  for Computing Machinery.
\newblock \href {https://doi.org/10.1145/3313276.3316315}
  {\path{doi:10.1145/3313276.3316315}}.

\bibitem[RUV13]{RUV13}
Ben~W. Reichardt, Falk Unger, and Umesh Vazirani.
\newblock A classical leash for a quantum system: Command of quantum systems
  via rigidity of {CHSH} games.
\newblock In {\em Proceedings of the 4th Conference on Innovations in
  Theoretical Computer Science}, ITCS '13, pages 321--322, New York, NY, USA,
  2013. Association for Computing Machinery.
\newblock \href {https://doi.org/10.1145/2422436.2422473}
  {\path{doi:10.1145/2422436.2422473}}.

\bibitem[Sha92]{Sha92}
Adi Shamir.
\newblock {IP = PSPACE}.
\newblock {\em J. ACM}, 39(4):869--877, October 1992.
\newblock \href {https://doi.org/10.1145/146585.146609}
  {\path{doi:10.1145/146585.146609}}.

\bibitem[Sho97]{Sho99}
Peter~W. Shor.
\newblock Polynomial-time algorithms for prime factorization and discrete
  logarithms on a quantum computer.
\newblock {\em SIAM Journal on Computing}, 26(5):1484--1509, Oct 1997.
\newblock \href {https://doi.org/10.1137/S0097539795293172}
  {\path{doi:10.1137/S0097539795293172}}.

\bibitem[Sim97]{Sim97}
Daniel~R. Simon.
\newblock On the power of quantum computation.
\newblock {\em SIAM Journal on Computing}, 26(5):1474--1483, 1997.
\newblock \href {https://doi.org/10.1137/S0097539796298637}
  {\path{doi:10.1137/S0097539796298637}}.

\bibitem[Sip83]{Sip83}
Michael Sipser.
\newblock A complexity theoretic approach to randomness.
\newblock In {\em Proceedings of the Fifteenth Annual ACM Symposium on Theory
  of Computing}, STOC '83, pages 330--335, New York, NY, USA, 1983. Association
  for Computing Machinery.
\newblock \href {https://doi.org/10.1145/800061.808762}
  {\path{doi:10.1145/800061.808762}}.

\bibitem[ST19]{ST19}
Alexander~A. Sherstov and Justin Thaler.
\newblock Vanishing-error approximate degree and {QMA} complexity, 2019.
\newblock \href {https://arxiv.org/abs/1909.07498} {\path{arXiv:1909.07498}}.

\bibitem[Sto83]{Sto83}
Larry Stockmeyer.
\newblock The complexity of approximate counting.
\newblock In {\em Proceedings of the Fifteenth Annual ACM Symposium on Theory
  of Computing}, STOC '83, pages 118--126, New York, NY, USA, 1983. Association
  for Computing Machinery.
\newblock \href {https://doi.org/10.1145/800061.808740}
  {\path{doi:10.1145/800061.808740}}.

\bibitem[Tha09]{Tha09}
Neil Thapen.
\newblock Notes on switching lemmas.
\newblock {\em Unpublished manuscript}, 2009.
\newblock URL: \url{https://users.math.cas.cz/~thapen/switching.pdf}, \href
  {https://arxiv.org/abs/2202.05651} {\path{arXiv:2202.05651}}.

\bibitem[Tod91]{Tod91}
Seinosuke Toda.
\newblock {PP} is as hard as the polynomial-time hierarchy.
\newblock {\em SIAM Journal on Computing}, 20(5):865--877, 1991.
\newblock \href {https://doi.org/10.1137/0220053} {\path{doi:10.1137/0220053}}.

\bibitem[Vin18]{Vin18}
Lieuwe Vinkhuijzen.
\newblock A quantum polynomial hierarchy and a simple proof of {V}yalyi's
  theorem.
\newblock Master's thesis, Leiden University, 2018.
\newblock URL:
  \url{https://theses.liacs.nl/pdf/2017-2018-VinkhuijzenLieuwe.pdf}.

\bibitem[Wat00]{Wat00}
John Watrous.
\newblock Succinct quantum proofs for properties of finite groups.
\newblock In {\em Proceedings 41st Annual Symposium on Foundations of Computer
  Science}, pages 537--546. IEEE, 2000.
\newblock \href {https://doi.org/10.1109/SFCS.2000.892005}
  {\path{doi:10.1109/SFCS.2000.892005}}.

\bibitem[Wat02]{Wat02}
John Watrous.
\newblock Limits on the power of quantum statistical zero-knowledge.
\newblock In {\em The 43rd Annual IEEE Symposium on Foundations of Computer
  Science, 2002. Proceedings.}, pages 459--468, 2002.
\newblock \href {https://doi.org/10.1109/SFCS.2002.1181970}
  {\path{doi:10.1109/SFCS.2002.1181970}}.

\bibitem[Yap83]{Yap83}
Chee{-}Keng Yap.
\newblock Some consequences of non-uniform conditions on uniform classes.
\newblock {\em Theor. Comput. Sci.}, 26:287--300, 1983.
\newblock \href {https://doi.org/10.1016/0304-3975(83)90020-8}
  {\path{doi:10.1016/0304-3975(83)90020-8}}.

\bibitem[ZWD{\etalchar{+}}20]{ZWD+20}
Han-Sen Zhong, Hui Wang, Yu-Hao Deng, Ming-Cheng Chen, Li-Chao Peng, Yi-Han
  Luo, Jian Qin, Dian Wu, Xing Ding, Yi~Hu, Peng Hu, Xiao-Yan Yang, Wei-Jun
  Zhang, Hao Li, Yuxuan Li, Xiao Jiang, Lin Gan, Guangwen Yang, Lixing You,
  Zhen Wang, Li~Li, Nai-Le Liu, Chao-Yang Lu, and Jian-Wei Pan.
\newblock Quantum computational advantage using photons.
\newblock {\em Science}, 370(6523):1460--1463, 2020.
\newblock \href {https://doi.org/10.1126/science.abe8770}
  {\path{doi:10.1126/science.abe8770}}.

\end{thebibliography}

\end{document}